\newtheorem{assumption}{Assumption}[section]
\newtheorem{theorem}{Theorem}[section]
\newtheorem{lemma}{Lemma}[section]
\newtheorem{corollary}{Corollary}[section]
\newcounter{appendixcounter}
\newtheorem{theoremappendix}{Theorem}
\newtheorem{lemmaappendix}{Lemma}
\newtheorem{corollaryappendix}{Corollary}
\newtheorem{assumptionappendix}{Assumption}[appendixcounter]
\newtheorem*{remark}{Remark}
\begin{document}

\title{Analyticity of Entropy Rates of Continuous-State
Hidden Markov Models}

\author{Vladislav Z. B. Tadi\'{c} and
Arnaud Doucet
\thanks{
Vladislav Z. B. Tadi\'{c} is with School of Mathematics, University of Bristol,
Bristol BS8 1TW, United Kingdom
(e-mail: v.b.tadic@bristol.ac.uk).

Arnaud Doucet is with Department of Statistics,
University of Oxford, Oxford OX1 3LB, United Kingdom
(e-mail: doucet@stats.ox.ac.uk).
}}

\date{}

\maketitle
\maketitle

\begin{abstract}
The analyticity of the entropy and relative entropy rates of continuous-state hidden Markov models is studied here.
Using the analytic continuation principle
and the stability properties of the optimal filter,
the analyticity of these rates is established for analytically
parameterized models.
The obtained results hold under relatively mild conditions
and cover several useful classes of hidden Markov models.
These results are relevant for several theoretically and practically important problems arising in statistical inference, system identification
and information theory.
\end{abstract}

\begin{IEEEkeywords}
Hidden Markov Models, Entropy Rate, Relative Entropy Rate, Log-Likelihood,
Optimal Filter, Analytic Continuation.
\end{IEEEkeywords}

\section{Introduction}
Hidden Markov models are a powerful and versatile tool for statistical modeling
of complex time-series data and stochastic dynamic systems.
They can be described as a discrete-time Markov chain
observed through imperfect, noisy observations of its states.
Proposed in the seminal paper \cite{baum&petrie} over five decades ago,
hidden Markov models have found many applications in very diverse areas such as
acoustics and signal processing, image analysis and computer vision, automatic control,
economics and finance, computational biology, genetics and bioinformatics.
Owing to their theoretical and practical importance,
various aspects of hidden Markov models have been thoroughly studied in a number of papers and books
--- see, e.g.,
\cite{cappe&moulines&ryden}, \cite{{douc&moulines&stoffer}}, \cite{ephraim&merhav} and references therein.

The entropy and relative entropy rates of hidden Markov models can
be considered as an information-theoretic characterization
of the asymptotic properties of these models.
The entropy rate of a hidden Markov model can be interpreted as a measure of the average information
revealed by the model through noisy observations of the states.
The relative entropy rate between two hidden Markov models can be viewed as a measure of
discrepancy between these models.
The entropy rates of hidden Markov models and their analytical properties
have recently gained significant attention in the information theory community.
These properties and their links with statistical inference, system identification,
stochastic optimization and information theory have been studied extensively
in several papers \cite{han&marcus1} -- \cite{han}, \cite{holliday&goldsmith&glynn},
\cite{ordentlitch&weissman}, \cite{peres}, \cite{schoenhuht}, \cite{tadic2}.
However, to the best of our knowledge, the existing results on the analytical properties
of the entropy rates of hidden Markov models apply exclusively to scenarios
where the hidden Markov chain takes values in a finite state-space.
We establish here analytical properties of the entropy rates of
continuous-state hidden Markov models.
As indicated in \cite{tadic&doucet2}, 
such results can be very useful when analyzing algorithms for statistical inference in hidden Markov models.

In many applications, a hidden Markov model depends on an unknown parameter whose value
needs to be inferred from a set of state-observations. 
In online settings, the unknown parameter is typically estimated 
using the recursive maximum likelihood method \cite{poyiadjis&doucet&singh}, \cite{tadic2}.
In \cite{tadic2}, it has been shown that the convergence and convergence rate of recursive maximum likelihood estimation in finite-state hidden Markov models
is closely linked to the analyticity of the underlying (average) log-likelihood, i.e. of the underlying relative entropy rate.
In view of recent results on stochastic gradient search \cite{tadic&doucet3},
a similar link is expected to hold for continuous-state hidden Markov models.
However, to apply the results of \cite{tadic&doucet3} to
recursive maximum likelihood estimation in continuous-state hidden Markov models,
it is necessary to establish the analyticity of the average log-likelihood for these models.
Hence, one of the first and most important steps to carry out the asymptotic analysis of
recursive maximum likelihood estimation in continuous-state hidden Markov models
is to show the analyticity of the entropy rates of such models. 
The results presented here should provide a theoretical basis for this step.

In this paper, we study analytically parameterized continuous-state hidden Markov models
(i.e., the models whose state transition kernel and the observation conditional distribution
are analytic in the model parameters).
Using mixing conditions on the model dynamics,
we construct a geometrically ergodic analytic continuation of the state transition kernel and
an exponentially stable analytic continuation of the optimal filter.
Relying on these continuations and their asymptotic properties,
we demonstrate that the entropy and relative entropy rates are analytic in
the model parameters.
The obtained results hold under relatively mild conditions
and cover a broad and common class of state-space and continuous-state hidden Markov models.
Moreover, these results generalize
the existing results on the analyticity of entropy rates of finite-state hidden Markov models.
Additionally, the results presented here are relevant for several important
problems related to statistical inference, system identification and information theory.

The rest of this paper is organized as follows.
In Section \ref{section1}, the entropy rates of hidden Markov models are specified and the main 
results are presented.
Examples illustrating the main results are provided in Sections
\ref{section2} and \ref{section3}.
In Sections \ref{section2*} -- \ref{section3*},
the main results are proved.

\section{Main Results} \label{section1}

To define hidden Markov models and their entropy rates,
we use the following notations.
$(\Omega,{\cal F}, P)$ is a probability space.
$d_{x}\geq 1$ and $d_{y}\geq 1$ are integers,
while ${\cal X}\subseteq\mathbb{R}^{d_{x} }$ and ${\cal Y}\subseteq\mathbb{R}^{d_{y} }$
are Borel sets.
$P(x,{\rm d}x')$ is a transition kernel on ${\cal X}$,
while $Q(x,{\rm d}y)$ is a conditional probability measure on ${\cal Y}$ given $x\in{\cal X}$.
A hidden Markov model can be defined as
the ${\cal X}\times{\cal Y}$-valued stochastic process
$\{ (X_{n}, Y_{n} ) \}_{n\geq 0}$
which is defined on $(\Omega,{\cal F}, P)$ and satisfies
\begin{align*}
	&
	P\left( (X_{n+1}, Y_{n+1} )\in B
	|X_{0:n}, Y_{0:n} \right)
	\\
	&
	=
	\int I_{B}(x,y) Q(x,{\rm d}y) P(X_{n}, {\rm d}x )
\end{align*}
almost surely for $n\geq 0$ and any Borel set $B\subseteq{\cal X}\times{\cal Y}$.
$\{X_{n} \}_{n\geq 0}$ are the unobservable states,
while $\{Y_{n} \}_{n\geq 0}$ are the observations.
$Y_{n}$ can be interpreted as
a noisy measurement of state $X_{n}$.
States $\{X_{n} \}_{n\geq 0}$ form a Markov chain,
while $P(x,{\rm d}x')$ is their transition kernel.
Conditionally on $\{X_{n} \}_{n\geq 0}$, state-observations $\{Y_{n} \}_{n\geq 0}$ are mutually independent,
while $Q(X_{n}, {\rm d}y )$ is the conditional distribution of $Y_{n}$ given $X_{0:n}$.
For more details on hidden Markov models, see
\cite{cappe&moulines&ryden}, \cite{douc&moulines&stoffer} and references therein.

In addition to the model $\{ (X_{n}, Y_{n} ) \}_{n\geq 0}$,, we also consider a parameterized family of hidden Markov models.
To specify such a family,
we rely on the following notations.
Let $d\geq 1$ be an integer, while $\Theta\subset\mathbb{R}^{d}$ is an open set.
${\cal P}({\cal X})$ is the set of probability measures on ${\cal X}$.
$\mu({\rm d}x)$ and $\nu({\rm d}y)$ are measures on ${\cal X}$ and ${\cal Y}$ (respectively), while
$p_{\theta}(x'|x)$ and $q_{\theta}(y|x)$
are functions which map
$\theta\in\Theta$, $x,x'\in{\cal X}$, $y\in{\cal Y}$
to $[0,\infty )$ and satisfy
\begin{align*}
	\int_{\cal X} p_{\theta}(x'|x) \mu({\rm d}x')
	=
	\int_{\cal Y} q_{\theta}(y|x) \nu({\rm d}y)
	=
	1
\end{align*}
for all $\theta\in\Theta$, $x\in{\cal X}$.
A family of hidden Markov models can then be defined as a collection of
${\cal X}\times{\cal Y}$-valued stochastic processes
$\left\{ (X_{n}^{\theta,\lambda}, Y_{n}^{\theta,\lambda} ) \right\}_{n\geq 0}$
on $(\Omega, {\cal F}, P)$,
parameterized by $\theta\in\Theta$, $\lambda\in{\cal P}({\cal X})$ and satisfying
\begin{align*}
	&
	P\left( (X_{0}^{\theta,\lambda}, Y_{0}^{\theta,\lambda} ) \in B \right)
	=
	\int\int I_{B}(x,y) q_{\theta}(y|x) \lambda({\rm d}x),
	\\
	&
\begin{aligned}[b]
	&
	P\left(\left. (X_{n+1}^{\theta,\lambda}, Y_{n+1}^{\theta,\lambda} ) \in B\right|
	X_{0:n}^{\theta,\lambda}, Y_{0:n}^{\theta,\lambda} \right)
	\\
	&
	=
	\int\int I_{B}(x,y)
	q_{\theta}(y|x) p_{\theta}(x|X_{n}^{\theta,\lambda} ) \mu({\rm d}x) \nu({\rm d}y)
\end{aligned} 	
\end{align*}almost surely for $n\geq 0$ and any Borel set $B\subseteq{\cal X}\times{\cal Y}$.
$\{ X_{n}^{\theta,\lambda} \}_{n\geq 0}$ are the hidden states of this model, while $\{ Y_{n}^{\theta,\lambda} \}_{n\geq 0}$ are the corresponding observations.
$p_{\theta}(x'|x)$ is the transition density of the Markov chain
$\{ X_{n}^{\theta,\lambda} \}_{n\geq 0}$,
while $q_{\theta}(y|X_{n}^{\theta,\lambda} )$ is the conditional density
of $Y_{n}^{\theta,\lambda}$ given $X_{0:n}^{\theta,\lambda}$.
In the context of the identification of stochastic dynamical systems
and parameter estimation in time-series models,
$\{(X_{n},Y_{n})\}_{n\geq 0}$ is interpreted as the true system
(or true model),
while $\big\{ (X_{n}^{\theta,\lambda}, Y_{n}^{\theta,\lambda} ) \big\}_{n\geq 0}$
is viewed as a candidate model for $\{(X_{n},Y_{n})\}_{n\geq 0}$.

To define the entropy rates of hidden Markov models,
we introduce further notations.
$r_{\theta}(y,x'|x)$ is
the transition density  of
$\left\{ (X_{n}^{\theta,\lambda}, Y_{n}^{\theta,\lambda} ) \right\}_{n\geq 0}$, i.e.,
\begin{align*}
	r_{\theta}(y,x'|x)
	=
	q_{\theta}(y|x') p_{\theta}(x'|x)
\end{align*}
for $\theta\in\Theta$, $x,x'\in{\cal X}$, $y\in{\cal Y}$.
$q_{\theta}^{n}(y_{1:n}|\lambda)$ is the density  of
$Y_{1:n}^{\theta,\lambda}$, i.e.,
\begin{align*}
	q_{\theta}^{n}(y_{1:n}|\lambda)
	=&
	\int\cdots\int\int
	\left( \prod_{k=1}^{n} r_{\theta}(y_{k},x_{k}|x_{k-1} ) \right)
	\\
	&\cdot
	\mu({\rm d}x_{n})\cdots\mu({\rm d}x_{1})\lambda({\rm d}x_{0}),  	
\end{align*}
where $\lambda\in{\cal P}({\cal X})$,
$y_{1:n}=(y_{1},\dots,y_{n})\in{\cal Y}^{n}$, $n\geq 1$.
The (average) entropy
$h_{n}(\theta,\lambda)$ of $Y_{1:n}^{\theta,\lambda}$ is given by
\begin{align}\label{1.1}
	h_{n}(\theta,\lambda)
	=
	-E\left(\frac{1}{n} \log q_{\theta}^{n}\big(Y_{1:n}^{\theta,\lambda}\big|\lambda \big)\right).
\end{align}
The expected (average) log-likelihood $l_{n}(\theta,\lambda)$ of $Y_{1:n}$
given the model $\left\{ (X_{n}^{\theta,\lambda}, Y_{n}^{\theta,\lambda} ) \right\}_{n\geq 0}$ is specified as
\begin{align}\label{1.1'}
	l_{n}(\theta,\lambda)
	=
	E\left(\frac{1}{n} \log q_{\theta}^{n}\left(Y_{1:n}|\lambda \right)\right).
\end{align}
The entropy rate of model
$\left\{ (X_{n}^{\theta,\lambda}, Y_{n}^{\theta,\lambda} ) \right\}_{n\geq 0}$
(i.e., the entropy rate of stochastic process
$\left\{ Y_{n}^{\theta,\lambda} \right\}_{n\geq 0}$)
can then be defined as the limit
\begin{align*}
	\lim_{n\rightarrow\infty} h_{n}(\theta,\lambda).
\end{align*}
Similarly, the relative entropy rate between models
$\left\{ (X_{n}^{\theta,\lambda}, Y_{n}^{\theta,\lambda} ) \right\}_{n\geq 0}$
and
$\{(X_{n},Y_{n})\}_{n\geq 0}$
(i.e., the relative entropy rate between stochastic processes
$\left\{  Y_{n}^{\theta,\lambda} \right\}_{n\geq 0}$
and $\left\{  Y_{n} \right\}_{n\geq 0}$)
can be defined as the limit
\begin{align*}
	-\lim_{n\rightarrow\infty} (l_{n}(\theta,\lambda) + h ),
\end{align*}
where $h$ is the entropy rate of $\{Y_{n} \}_{n\geq 0}$
(provided $h$ exists).
In this context, the limit
\begin{align*}
	\lim_{n\rightarrow\infty } l_{n}(\theta,\lambda)
\end{align*}
can be viewed/referred to as the log-likelihood rate of
$\{Y_{n} \}_{n\geq 0}$ given the model $\left\{ (X_{n}^{\theta,\lambda}, Y_{n}^{\theta,\lambda} ) \right\}_{n\geq 0}$.
Entropy rate $\lim_{n\rightarrow\infty} h_{n}(\theta,\lambda)$ can be considered as a measure of the information
revealed by the model $\left\{ (X_{n}^{\theta,\lambda}, Y_{n}^{\theta,\lambda} ) \right\}_{n\geq 0}$
through its state-observations $\left\{ Y_{n}^{\theta,\lambda} \right\}_{n\geq 0}$.
Relative entropy rate $-\lim_{n\rightarrow\infty} (l_{n}(\theta,\lambda) + h )$ can be interpreted as a measure
of discrepancy between the models $\left\{ (X_{n}^{\theta,\lambda}, Y_{n}^{\theta,\lambda} ) \right\}_{n\geq 0}$
and $\{(X_{n},Y_{n})\}_{n\geq 0}$.
The entropy rates of hidden Markov models are closely related to a number of important problems
arising in engineering and statistics
such as system identification, parameter estimation, model reduction and data compression.
For example, in the recursive maximum likelihood approach to the identification of stochastic dynamical systems
and parameter estimation in time-series models,
the candidate model $\left\{ (X_{n}^{\theta,\lambda}, Y_{n}^{\theta,\lambda} ) \right\}_{n\geq 0}$
providing the best approximation to the true model $\{(X_{n},Y_{n})\}_{n\geq 0}$
is selected through the minimization of $-\lim_{n\rightarrow\infty} (l_{n}(\theta,\lambda) + h )$
(i.e., through the maximization of $\lim_{n\rightarrow\infty} l_{n}(\theta,\lambda)$).
For more details on the entropy rates and their applications, see \cite{cover&thomas}, \cite{gray}
and references therein.

We study here the rates
$\lim_{n\rightarrow\infty}h_{n}(\theta,\lambda)$,
$\lim_{n\rightarrow\infty}l_{n}(\theta,\lambda)$
and their analytical properties.
To formulate the assumptions under which these rates are analyzed,
we rely on the following notations.
For $\eta\in\mathbb{C}^{d}$, $\|\eta\|$ denotes the Euclidean norm of $\eta$.
For $\gamma\in(0,1)$, $V_{\gamma}(\Theta)$ is the open $\gamma$-vicinity of $\Theta$
in $\mathbb{C}^{d}$, i.e.,
\begin{align*}
	V_{\gamma}(\Theta)
	=
	\{\eta\in\mathbb{C}^{d}: \exists\theta\in\Theta, \|\eta-\theta\|<\gamma\}.
\end{align*}
Our analysis is based on the following assumptions.

\begin{assumption}\label{a11}
There exists a real number $\varepsilon\in(0,1)$
and for each $\theta\in\Theta$, $y\in{\cal Y}$,
there exists a finite measure $\lambda_{\theta}({\rm d}x|y)$ on ${\cal X}$
such that
\begin{align*}
	\varepsilon\lambda_{\theta}(B|y)
	\leq
	\int_{B} r_{\theta}(y,x'|x) \mu({\rm d}x')
	\leq
	\frac{\lambda_{\theta}(B|y)}{\varepsilon}
\end{align*}
for all $x\in{\cal X}$ and any Borel set $B\subseteq{\cal X}$.
\end{assumption}

\begin{assumption}\label{a1}
$r_{\theta}(y,x'|x)$ is real-analytic in $\theta$ for each $\theta\in\Theta$,
$x,x'\in{\cal X}$, $y\in{\cal Y}$.
Moreover, $r_{\theta}(y,x'|x)$ has a complex-valued continuation $\hat{r}_{\eta}(y,x'|x)$
with the following properties:

(i)
$\hat{r}_{\eta}(y,x'|x)$ maps
$\eta\in\mathbb{C}^{d}$, $x,x'\in{\cal X}$, $y\in{\cal Y}$
to $\mathbb{C}$.

(ii)
$\hat{r}_{\theta}(y,x'|x)=r_{\theta}(y,x'|x)$ for all $\theta\in\Theta$,
$x,x'\in{\cal X}$, $y\in{\cal Y}$.

(iii)
There exists a real number $\delta\in(0,1)$ such that
$\hat{r}_{\eta}(y,x'|x)$ is analytic in $\eta$ for
each $\eta\in V_{\delta}(\Theta)$, $x,x'\in{\cal X}$, $y\in{\cal Y}$.

(iv)
There exists a function $\varphi_{\eta}(y)$
which maps $\eta\in\mathbb{C}^{d}$, $y\in{\cal Y}$ to $\mathbb{C}$,
is analytic in $\eta$ for each $\eta\in V_{\delta}(\Theta)$, $y\in{\cal Y}$
and satisfies
\begin{align*}
	\varphi_{\eta}(y)\neq 0,
	\;\;\;\;\;\;\;
	|\hat{r}_{\eta}(y,x'|x)|\leq|\varphi_{\eta}(y)|
\end{align*}
for all $\eta\in V_{\delta}(\Theta)$, $x,x'\in{\cal X}$, $y\in{\cal Y}$.

(v)
There exist functions
$\phi, \psi:{\cal Y}\rightarrow(0,\infty)$
such that
$\int \phi(y) \nu({\rm d}y) < \infty$
and
\begin{align*}
	|\varphi_{\eta}(y)|\leq\phi(y),
	\;\;\;\;\;\;\;
	|\log|\varphi_{\eta}(y)||
	\leq
	\psi(y)
\end{align*}
for all $\eta\in V_{\delta}(\Theta)$, $y\in{\cal Y}$.
\end{assumption}

\begin{assumption}\label{a12}
There exists a real number $\gamma\in(0,1)$ such that
\begin{align*}
	\int r_{\theta}(y,x'|x)\mu({\rm d}x')\geq\gamma|\varphi_{\theta}(y)|
\end{align*}
for all $\theta\in\Theta$, $x\in{\cal X}$, $y\in{\cal Y}$.
\end{assumption}

\begin{assumption}\label{a3}
$\int\psi(y)\phi(y)\nu({\rm d}y)<\infty$.
\end{assumption}

\begin{assumption}\label{a2}
There exists a real number $K\in[1,\infty)$ such that
\begin{align*}
	\int \psi(y) Q(x,{\rm d}y) \leq K
\end{align*}
for all $x\in{\cal X}$.
Moreover, there exist a probability measure $\pi({\rm d}x)$ on ${\cal X}$ and a real number
$\rho\in(0,1)$ such that
\begin{align}\label{a.a2.1*}
	|P^{n}(x,B)-\pi(B)|\leq K\rho^{n}
\end{align}
for all $x\in{\cal X}$, $n\geq 0$ and any Borel-set $B\subseteq{\cal X}$.
\end{assumption}

Assumption \ref{a11} is related to the stability of the hidden Markov model
$\left\{(X_{n}^{\theta,\lambda},Y_{n}^{\theta,\lambda})\right\}_{n\geq 0}$
and its optimal filter.
This assumption ensures that the Markov chain
$\left\{(X_{n}^{\theta,\lambda},Y_{n}^{\theta,\lambda})\right\}_{n\geq 0}$
is geometrically ergodic
(see Lemma \ref{lemma2.1}) and that the optimal filter for the model
$\left\{(X_{n}^{\theta,\lambda},Y_{n}^{\theta,\lambda})\right\}_{n\geq 0}$
forgets initial conditions at an exponential rate
(see Lemma \ref{lemma1.1}).
In this or similar form, Assumption \ref{a11} is an ingredient of a number of asymptotic results on optimal filtering and maximum likelihood estimation in
hidden Markov models
(see \cite{delmoral&guionnet}, \cite{douc&moulines&ryden}, \cite{legland&mevel}, \cite{legland&oudjane}).

Assumption \ref{a1} is a condition on the parameterization of the model
$\left\{(X_{n}^{\theta,\lambda},Y_{n}^{\theta,\lambda})\right\}_{n\geq 0}$.
It requires the transition kernel and density of the chain
$\left\{(X_{n}^{\theta,\lambda},Y_{n}^{\theta,\lambda})\right\}_{n\geq 0}$
to be real-analytic in parameter $\theta$.
Together with Assumption \ref{a11}, Assumption \ref{a1} ensures
that an analytic continuation of this kernel exists and is geometrically ergodic
(see Lemma \ref{lemma2.1}).

Assumption \ref{a12} is also related to the parameterization
of the model $\left\{(X_{n}^{\theta,\lambda},Y_{n}^{\theta,\lambda})\right\}_{n\geq 0}$.
This assumption ensures that the ratio
\begin{align*}
	\frac{r_{\theta}(y,x'|x) }{\int r_{\theta}(y,x''|x)\mu({\rm d}x'') }
\end{align*}
is uniformly bounded in $\theta,x,x'$.
Together with Assumptions \ref{a11} and \ref{a1}, Assumption \ref{a12} ensures
that an analytic continuation of the optimal filter for the model
$\left\{(X_{n}^{\theta,\lambda},Y_{n}^{\theta,\lambda})\right\}_{n\geq 0}$
exists and forgets initial conditions at an exponential rate
(see Lemma \ref{lemma1.6}).

Assumption \ref{a3} requires the product of the bounding functions $\phi(y)$, $\psi(y)$ 
to be integrable with respect to
the measure $\nu({\rm d}y)$.
Together with Assumption \ref{a1}, Assumption \ref{a3} ensures that the entropy
$h_{n}(\theta,\lambda)$  
exists and has an analytic continuation
in $\theta$ (see Lemma \ref{lemma3.2}).

Assumption \ref{a2} ensures that the Markov chain $\left\{(X_{n},Y_{n})\right\}_{n\geq 0}$
is geometrically ergodic
(see Lemma \ref{lemma2.3}).
Together with Assumption \ref{a2}, Assumption \ref{a1} also ensures that
the log-likelihood $l_{n}(\theta,\lambda)$ defined in (\ref{1.1'}) exists
and admits an analytic continuation.

The following two theorems are the main results of the paper.

\begin{theorem}\label{theorem1}
Let Assumptions \ref{a11} -- \ref{a12} and \ref{a2} hold.
Then, there exists a function $l:\Theta\rightarrow\mathbb{R}$
such that $l(\theta)$ is real-analytic for each $\theta\in\Theta$
and $l(\theta)=\lim_{n\rightarrow\infty}l_{n}(\theta,\lambda)$
for all $\theta\in\Theta$, $\lambda\in{\cal P}({\cal X} )$.
\end{theorem}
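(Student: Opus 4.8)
\emph{Proof of Theorem \ref{theorem1} (proposal).}
The plan is to write the averaged log-likelihood $l_{n}(\theta,\lambda)$ as a Ces\`aro average of one-step prediction increments generated by the optimal filter of the candidate model, to obtain convergence to a limit $l(\theta)$ (the same for every $\lambda$) from the exponential stability of that filter (Lemma \ref{lemma1.1}) together with the geometric ergodicity of the true chain $\{(X_{n},Y_{n})\}_{n\geq 0}$ (Lemma \ref{lemma2.3}), and then to deduce real-analyticity by rerunning the \emph{same} argument on the complex-valued analytic continuation of the filter (Lemma \ref{lemma1.6}) and invoking the analytic continuation principle, i.e.\ the classical fact that a locally uniform limit of analytic functions is analytic.

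First I would record the telescoping identity
\[
\log q_{\theta}^{n}(y_{1:n}|\lambda)=\sum_{k=1}^{n}\log\!\left(\int\!\int q_{\theta}(y_{k}|x')\,p_{\theta}(x'|x)\,F_{k-1}^{\theta,\lambda}({\rm d}x)\,\mu({\rm d}x')\right),
\]
where $F_{k-1}^{\theta,\lambda}=F_{k-1}^{\theta,\lambda}(y_{1:k-1})$ is the optimal filtering distribution given $y_{1:k-1}$ (with $F_{0}^{\theta,\lambda}=\lambda$), so that $l_{n}(\theta,\lambda)=\tfrac{1}{n}\sum_{k=1}^{n}E[\Delta_{k}^{\theta,\lambda}]$ with $\Delta_{k}^{\theta,\lambda}$ the $k$-th increment evaluated along the true observations $Y_{1:k}$. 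By Lemma \ref{lemma1.1} the dependence of $\Delta_{k}^{\theta,\lambda}$ on $\lambda$ and on the remote past of the observations decays geometrically in $k$, so $\Delta_{k}^{\theta,\lambda}$ can be coupled, up to an $O(\rho^{k})$ error, with a functional of a fixed-length window $Y_{k-m:k}$; the geometric ergodicity of $\{(X_{n},Y_{n})\}_{n\geq 0}$ (Lemma \ref{lemma2.3}) then forces the law of that window to converge. The moment bounds needed to push these couplings through the expectation and to control $|\Delta_{k}^{\theta,\lambda}|$ are provided by Assumption \ref{a11}, the domination $|\log|\varphi_{\theta}(y)||\leq\psi(y)$ of Assumption \ref{a1}(v), and the bound $\int\psi(y)\,Q(x,{\rm d}y)\leq K$ of Assumption \ref{a2}. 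Combining these, $E[\Delta_{k}^{\theta,\lambda}]$ converges as $k\to\infty$ to a limit independent of $\lambda$, whence $l_{n}(\theta,\lambda)\to l(\theta)$.

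For the analyticity, I would repeat the construction with $r_{\theta}$ replaced by its continuation $\hat{r}_{\eta}$ (Assumption \ref{a1}) and $F_{k-1}^{\theta,\lambda}$ replaced by the complex-valued analytic continuation $\hat{F}_{k-1}^{\eta,\lambda}$ supplied by Lemma \ref{lemma1.6}. Assumptions \ref{a11} and \ref{a12} guarantee that the corresponding continuation of the normalising constants stays bounded away from $0$ on some $V_{\delta'}(\Theta)$, so that $\log\hat{q}_{\eta}^{n}(Y_{1:n}|\lambda)$ is unambiguously defined there, and Assumptions \ref{a1}(iv)--(v) and \ref{a3} dominate it by an integrable function uniformly on compact subsets (this is the construction underlying Lemma \ref{lemma3.2}, adapted from $h_{n}$ to $l_{n}$). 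A dominated-convergence (or Morera/Cauchy-formula) argument then shows each $\hat{l}_{n}(\cdot,\lambda)$ is analytic on $V_{\delta'}(\Theta)$ and the family $\{\hat{l}_{n}(\cdot,\lambda)\}_{n\geq 1}$ is uniformly bounded on compact subsets; since the stability and ergodicity estimates of the previous paragraph hold uniformly on such subsets, $\hat{l}_{n}(\cdot,\lambda)$ converges on compact subsets to a function $\hat{l}$, which is therefore analytic. Restricting $\hat{l}$ to $\Theta$ yields the real-analytic function $l$ asserted in the theorem.

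The main obstacle is the analyticity half, and it is essentially delegated to Lemma \ref{lemma1.6}: because the Bayes recursion involves a division, the analytic continuation of the filter is genuinely nonlinear in the observations and need not be a positive (let alone probability) measure, so one must control both the non-vanishing of its normalisation on a complex neighbourhood of $\Theta$ and the total variation of $\hat{F}_{k-1}^{\eta,\lambda}$, together with its exponential stability, uniformly there. Granting that lemma, the residual difficulty in the present proof is bookkeeping: justifying the interchange of expectation, limit and analytic continuation through a uniform integrable bound for complex-valued integrands, handling the branch of the complex logarithm, and ensuring the stability constants are uniform over the compact set on which convergence is claimed. $\hfill\square$
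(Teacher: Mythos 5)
Your proposal is correct and follows essentially the same route as the paper: the same telescoping of $\log q_{\theta}^{n}$ into one-step predictive increments driven by the filter (the paper's identity (\ref{3.1*})), convergence of the increments via exponential filter forgetting (Lemmas \ref{lemma1.1}, \ref{lemma1.6}) combined with geometric ergodicity of $\{(X_{n},Y_{n})\}_{n\geq 0}$ (Lemma \ref{lemma2.3}), and analyticity of the limit via dominated analytic integrands (Lemma \ref{lemma3.2}) plus the uniform-limit-of-analytic-functions principle. The only difference is organizational: the paper telescopes in the window length inside Lemma \ref{lemma3.1} to get a single explicit rate $C_{7}n\gamma_{4}^{n}$, whereas you run a fixed-window truncation followed by a double limit, which yields the same conclusion.
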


\begin{theorem}\label{theorem2}
Let Assumptions \ref{a11} -- \ref{a3} hold.
Then, there exists a function $h:\Theta\rightarrow\mathbb{R}$
such that $h(\theta)$ is real-analytic for each $\theta\in\Theta$
and $h(\theta)=\lim_{n\rightarrow\infty}h_{n}(\theta,\lambda)$
for all $\theta\in\Theta$, $\lambda\in{\cal P}({\cal X} )$.
\end{theorem}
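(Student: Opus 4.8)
\emph{Proof plan for Theorem \ref{theorem2}.}
The plan is to express the negative normalized log-density of the model's own observations as a Cesàro average of one-step predictive log-densities written through the optimal filter, to combine the exponential stability of the filter with the geometric ergodicity of $\{(X_{n}^{\theta,\lambda},Y_{n}^{\theta,\lambda})\}$ to show that this average converges in mean to a limit independent of $\lambda$, and then to transfer the convergence to the complex analytic continuations supplied by Lemma \ref{lemma3.2}, so that the limit is analytic by the analytic continuation principle.

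The starting point is the identity
\begin{align*}
	-\frac{1}{n}\log q_{\theta}^{n}\big(Y_{1:n}^{\theta,\lambda}\big|\lambda\big)
	=
	-\frac{1}{n}\sum_{k=1}^{n}\log\frac{q_{\theta}^{k}\big(Y_{1:k}^{\theta,\lambda}\big|\lambda\big)}{q_{\theta}^{k-1}\big(Y_{1:k-1}^{\theta,\lambda}\big|\lambda\big)},
\end{align*}
with the convention $q_{\theta}^{0}\equiv1$, where the $k$-th ratio equals $\int\!\int r_{\theta}(Y_{k}^{\theta,\lambda},x'|x)\,\mu({\rm d}x')\,F_{k-1}^{\theta,\lambda}({\rm d}x)$ and $F_{k-1}^{\theta,\lambda}$ is the time-$(k-1)$ optimal filter of the model, started at $\lambda$ and driven by $Y_{1:k-1}^{\theta,\lambda}$. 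By Assumptions \ref{a11} and \ref{a12}, this ratio is sandwiched between $x$- and $\lambda$-independent constant multiples of $|\varphi_{\theta}(Y_{k}^{\theta,\lambda})|$, so its logarithm is dominated, up to an additive constant, by $\psi(Y_{k}^{\theta,\lambda})$; together with Assumptions \ref{a1}(v) and \ref{a3} this shows $h_{n}(\theta,\lambda)$ is finite and that it suffices to prove convergence of $E\big[\log(q_{\theta}^{k}(Y_{1:k}^{\theta,\lambda}|\lambda)/q_{\theta}^{k-1}(Y_{1:k-1}^{\theta,\lambda}|\lambda))\big]$ as $k\to\infty$.

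That convergence I would obtain by passing to a two-sided stationary, ergodic extension of $\{(X_{n}^{\theta,\lambda},Y_{n}^{\theta,\lambda})\}$ (which exists by the geometric ergodicity of Lemma \ref{lemma2.1}) and coupling $F_{k-1}^{\theta,\lambda}$ with the filter initialized in the remote past via its exponential forgetting (Lemma \ref{lemma1.1}); the limit is then the expectation of a fixed functional of the stationary filter process and is independent of $\lambda$, so it defines $h(\theta)$ with $h_{n}(\theta,\lambda)\to h(\theta)$ for all $\theta$ and $\lambda$. For analyticity, Lemma \ref{lemma3.2} gives analytic continuations $\hat{h}_{n}(\cdot,\lambda)$ of $h_{n}(\cdot,\lambda)$ on some complex vicinity $V_{\delta'}(\Theta)$ with $0<\delta'\le\delta$; using Lemma \ref{lemma1.6} (analytic continuation of the filter and its exponential forgetting, uniform in the parameter on compact sets) and the bounds in Assumptions \ref{a1}(iv)--(v), \ref{a12} and \ref{a3}, I would show that $\{\hat{h}_{n}(\cdot,\lambda)\}_{n\ge1}$ is locally uniformly bounded on $V_{\delta'}(\Theta)$ and locally uniformly Cauchy there, the Cauchy bound repeating the coupling argument at complex parameters. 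By Vitali's theorem (or, given the uniform bound, the Weierstrass convergence theorem) the limit $\hat{h}(\eta)=\lim_{n}\hat{h}_{n}(\eta,\lambda)$ is analytic on $V_{\delta'}(\Theta)$; as it coincides with $h(\theta)=\lim_{n}h_{n}(\theta,\lambda)$ on $\Theta$, the function $h$ is real-analytic on $\Theta$.

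The main obstacle is the analytic-continuation step. The one-step predictive densities are genuinely positive, but their complexifications can approach, or wind around, the origin, so the crux is to keep the normalized complexified predictive density bounded away from zero uniformly in $n$ and locally uniformly in $\eta$; this is exactly where Assumption \ref{a12} (the lower bound by $\gamma|\varphi_{\theta}(y)|$) and the contraction/forgetting structure of Lemma \ref{lemma1.6} are indispensable. One must simultaneously control the $\log|\varphi_{\eta}(y)|$ contributions in mean, which is what forces the integrability conditions in Assumptions \ref{a1}(v) and \ref{a3}. Upgrading the forgetting of the continued filter to a bound uniform in $\eta$ on compacta --- so as to get the locally uniform Cauchy property rather than mere pointwise convergence --- is the delicate technical point.
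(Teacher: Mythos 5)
Your proposal follows essentially the same route as the paper: the same decomposition of $\log q_{\theta}^{n}$ into one-step predictive log-densities evaluated at the optimal filter (the paper's identity (\ref{3.1*})), the same combination of exponential filter forgetting (Lemma \ref{lemma1.6}) with geometric ergodicity of the continued kernel $S_{\eta}$ (Lemma \ref{lemma2.1}) to get locally uniform geometric convergence of the Ces\`{a}ro summands on a complex neighbourhood of $\Theta$, and the same Weierstrass/Vitali step to transfer analyticity to the limit. The only cosmetic difference is that you phrase the convergence via a two-sided stationary coupling, whereas the paper runs a deterministic telescoping of $\Phi_{\eta}^{n+1}-\Phi_{\eta}^{n}$ against the total-variation bounds on $S_{\eta}^{k}-\sigma_{\eta}$ (Lemma \ref{lemma3.1}) so that the real and complex cases are handled by one argument --- a substitution you already anticipate when you note the Cauchy bound must be repeated at complex parameters.
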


\begin{remark}
As $\Theta$ can be represented as a union of open balls,
it is sufficient to show Theorems \ref{theorem1} and \ref{theorem2}
for the case where $\Theta$ is convex and bounded.
Therefore, throughout the analysis carried out in Sections \ref{section2*} -- \ref{section4*},
we assume that $\Theta$ is a bounded open convex set.
\end{remark}

Theorems \ref{theorem1} and \ref{theorem2} are proved in Section \ref{section3*}.
According to these theorems, for all $\theta\in\Theta$, $\lambda\in{\cal P}({\cal X})$,
rates
$\lim_{n\rightarrow\infty}h_{n}(\theta,\lambda)$ and
$\lim_{n\rightarrow\infty}l_{n}(\theta,\lambda)$ are well-defined.
Moreover, for each $\theta\in\Theta$, the rates
$\lim_{n\rightarrow\infty}h_{n}(\theta,\lambda)$ and
$\lim_{n\rightarrow\infty}l_{n}(\theta,\lambda)$ are independent of $\lambda$
and real-analytic in $\theta$.

The analytical properties of the entropy rates
of hidden Markov models have already been extensively studied in several papers
\cite{han&marcus1} -- 
\cite{holliday&goldsmith&glynn},
\cite{ordentlitch&weissman}, \cite{peres}, \cite{schoenhuht}, \cite{tadic2}.
However, the results presented therein apply exclusively to models with finite state-spaces. 
To the best of our knowledge, Theorems \ref{theorem1} and \ref{theorem2} are the first results on
the analyticity of the entropy rates of continuous-state hidden Markov models.
These theorems also generalize
the existing results on the analyticity of the entropy rates of finite-state hidden Markov models.
More specifically, \cite{han&marcus3} can be considered as the strongest existing result
of this kind.
Theorem \ref{theorem2} includes, as a particular case, the results of \cite{han&marcus3}
and simplifies the conditions under which these results hold
(see Appendix \ref{appendix2} for details).
Theorems \ref{theorem1} and \ref{theorem2} are relevant
for several theoretically and practically important problems arising in statistical inference and system identification. 
In \cite{tadic&doucet2}, we rely on these theorems to analyze recursive maximum likelihood estimation in non-linear state-space models.
The same theorems can also be used to study the higher-order statistical asymptotics for maximum likelihood estimation in time-series models
(for details on such asymptotics, see \cite{taniguchi&kakizawa}).

\section{Example: Mixture of Densities}\label{section2}

In this section, the main results are applied to the case when
$p_{\theta}(x'|x)$ and $q_{\theta}(y|x)$ are mixtures of probability densities, i.e.,
\begin{align}
	&\label{2.1'}
	p_{\theta}(x'|x)
	=
	\sum_{i=1}^{N_{x}} a_{\theta}^{i}(x) v_{i}(x'),
	\\
	&\label{2.1''}
	q_{\theta}(y|x)
	=
	\sum_{j=1}^{N_{y}} b_{\theta}^{j}(x) w_{j}(y)
\end{align}
for $\theta\in\Theta$, $x,x'\in{\cal X}$, $y\in{\cal Y}$ and integers $N_{x}>1$ and $N_{y}>1$. 
Here $\Theta$, ${\cal X}$, ${\cal Y}$ have the same meaning as in the previous section. $\{v_{i}(x)\}_{1\leq i\leq N_{x}}$ and
$\{w_{j}(y)\}_{1\leq j\leq N_{y}}$ are functions which map
$x\in{\cal X}$, $y\in{\cal Y}$ to $[0,\infty)$
and satisfy
\begin{align*}
	\int v_{i}(x) \mu({\rm d}x)
	=
	\int w_{j}(y) \nu({\rm d}y)
	=
	1
\end{align*}
for each $1\leq i\leq N_{x}$, $1\leq j\leq N_{y}$
($\mu({\rm d}x)$, $\nu({\rm d}y)$ have the same meaning as in the previous section).
$\{a_{\theta}^{i}(x)\}_{1\leq i\leq N_{x}}$ and
$\{b_{\theta}^{j}(x)\}_{1\leq j\leq N_{y}}$ are functions which map
$\theta\in\Theta$, $x\in{\cal X}$ to $[0,\infty)$
and satisfy
\begin{align*}
	\sum_{i=1}^{N_{x}} a_{\theta}^{i}(x)
	=
	\sum_{j=1}^{N_{y}} b_{\theta}^{j}(x)
	=
	1.
\end{align*}
Under these conditions,
$v_{i}(x)$ and $w_{j}(y)$ are probability densities on ${\cal X}$ and ${\cal Y}$ (respectively),
while $a_{\theta}^{i}(x)$ and $b_{\theta}^{j}(x)$ are probability masses
in $i$ and $j$ (respectively).
Hence, in $x'$, $y$,
$p_{\theta}(x'|x)$ and $q_{\theta}(y|x)$
are mixtures of probability densities.
$v_{i}(x)$ and $w_{j}(y)$ are the components of these mixtures,
while $a_{\theta}^{i}(x)$ and $b_{\theta}^{j}(x)$ are the corresponding weights.

The entropy rates of hidden Markov model specified in (\ref{2.1'}), (\ref{2.1''}) are studied under
the following assumptions.

\begin{assumption}\label{b1}
${\cal X}$ is a compact set.
\end{assumption}

\begin{assumption}\label{b2}
$a_{\theta}^{i}(x)>0$ and $b_{\theta}^{j}(x)>0$
for all $\theta\in\Theta$, $x\in{\cal X}$, $1\leq i\leq N_{x}$, $1\leq j\leq N_{y}$.
Moreover, $a_{\theta}^{i}(x)$ and $b_{\theta}^{j}(x)$ are real-analytic in $(\theta,x)$
for each $\theta\in\Theta$, $x\in{\cal X}$, $1\leq i\leq N_{x}$, $1\leq j\leq N_{y}$.
\end{assumption}

\begin{assumption}\label{b3}
There exists a real number $\varepsilon\in(0,1)$ such that
$\varepsilon\leq v_{i}(x)\leq 1/\varepsilon$ for all $x\in{\cal X}$,
$1\leq i\leq N_{x}$.
\end{assumption}

\begin{assumption}\label{b4}
$\int |\log w_{k}(y)| w_{j}(y)\nu({\rm d}y) < \infty$ for each $1\leq j,k\leq N_{y}$.
\end{assumption}

\begin{assumption}\label{b5}
There exists a real number $K\in[1,\infty )$ such that
\begin{align*}
	\int|\log w_{k}(y)| Q(x,{\rm d}y) \leq K
\end{align*}
for all $x\in{\cal X}$, $1\leq k\leq N_{y}$.
Moreover, there exist a probability measure $\pi({\rm d}x)$ on ${\cal X}$
and a real number $\rho\in(0,1)$ such that (\ref{a.a2.1*}) holds for all $x\in{\cal X}$, $n\geq 0$
and any Borel-measurable set $B\subseteq{\cal X}$.
\end{assumption}

Assumptions \ref{b1} -- \ref{b5} cover several classes of hidden Markov models met
in practice. These assumptions indeed hold if $q_{\theta}(y|x)$ is a mixture of Gamma, Gaussian, Pareto and
logistic distributions, and if $p_{\theta}(x'|x)$ is a mixture of the same distributions truncated to a compact domain.

Using Theorem \ref{theorem1} and Theorem \ref{theorem2}, we obtain the following results.

\begin{corollary}\label{corollary1b}
Let Assumptions \ref{b1} -- \ref{b3} and \ref{b5} hold.
Then, all conclusions of Theorem \ref{theorem1} are true.
\end{corollary}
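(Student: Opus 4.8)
The plan is to derive Corollary \ref{corollary1b} from Theorem \ref{theorem1} by verifying that, in the mixture setting \eqref{2.1'}--\eqref{2.1''}, Assumptions \ref{b1}--\ref{b3} and \ref{b5} imply Assumptions \ref{a11}--\ref{a12} and \ref{a2}. First I would write out the transition density of the joint chain explicitly: combining \eqref{2.1'}, \eqref{2.1''} and the definition $r_{\theta}(y,x'|x)=q_{\theta}(y|x')p_{\theta}(x'|x)$ gives
\begin{align*}
	r_{\theta}(y,x'|x)
	=
	\left( \sum_{j=1}^{N_{y}} b_{\theta}^{j}(x') w_{j}(y) \right)
	\left( \sum_{i=1}^{N_{x}} a_{\theta}^{i}(x) v_{i}(x') \right).
\end{align*}
Since ${\cal X}$ is compact (Assumption \ref{b1}) and $a_{\theta}^{i}$, $b_{\theta}^{j}$ are continuous and strictly positive (Assumption \ref{b2}), together with the bounds $\varepsilon\leq v_{i}(x)\leq 1/\varepsilon$ (Assumption \ref{b3}), the factor $\sum_{i}a_{\theta}^{i}(x)v_{i}(x')$ is bounded above and below by positive constants uniformly in $\theta,x,x'$ (one must be a little careful here: $a_{\theta}^{i}(x)$ is not uniformly bounded below on all of $\Theta$, but by the Remark we may restrict to a bounded convex $\Theta$, and strict positivity plus compactness still does not suffice unless we also use that $\Theta$ is relatively compact — so I would instead work locally, replacing $\Theta$ by a small open ball whose closure lies in the original parameter set, which is exactly what the proof of Theorems \ref{theorem1}--\ref{theorem2} allows). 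On such a local set $a_{\theta}^{i}(x)\in[\underline{a},\overline{a}]\subset(0,\infty)$ and $b_{\theta}^{j}(x)\in[\underline{b},\overline{b}]\subset(0,\infty)$.

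With these uniform bounds in hand, I would take $\varphi_{\eta}(y)$ to be (a complex-analytic continuation of) a suitable multiple of $\sum_{j} w_{j}(y)$, and set $\phi(y)=c_{1}\sum_{j}w_{j}(y)$, $\psi(y)=c_{2}(1+\sum_{k}|\log w_{k}(y)|)$ for appropriate constants; then $\int\phi\,{\rm d}\nu<\infty$ because each $w_{j}$ integrates to $1$, and $|\log|\varphi_{\eta}(y)||\leq\psi(y)$ follows from $\underline{b}\sum_j w_j(y)\le \sum_j b^j_\theta(x')w_j(y)\le \overline b\sum_j w_j(y)$ and an elementary estimate of $|\log(\sum_j w_j(y))|$ by $\sum_k|\log w_k(y)|$ plus a constant. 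Assumption \ref{a11} then follows by defining $\lambda_{\theta}({\rm d}x'|y)=\bigl(\sum_{j}b_{\theta}^{j}(x')w_{j}(y)\bigr)\mu({\rm d}x')$ and using the two-sided bound on $\sum_i a^i_\theta(x)v_i(x')$; Assumption \ref{a12} follows similarly with $\gamma$ determined by $\underline a\varepsilon$ and $\underline b/\overline b$. Assumption \ref{a1} is the analytic-continuation requirement: since $a_{\theta}^{i}$ and $b_{\theta}^{j}$ are real-analytic in $(\theta,x)$ (Assumption \ref{b2}) and ${\cal X}$ is compact, standard facts about real-analytic functions on compact sets give a uniform radius of convergence, hence a common complex neighbourhood $V_{\delta}(\Theta)$ on which $\hat a^i_\eta(x)$, $\hat b^j_\eta(x)$ are defined, bounded, and nonvanishing (shrinking $\delta$ as needed); properties (i)--(v) of Assumption \ref{a1} then transfer directly, with $\varphi_{\eta}(y)$, $\phi$, $\psi$ as above. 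Finally, Assumption \ref{a2} is literally the content of Assumption \ref{b5} once we note $\int\psi(y)Q(x,{\rm d}y)\le c_2(1+\sum_k\int|\log w_k(y)|Q(x,{\rm d}y))\le c_2(1+N_y K)$, and the geometric ergodicity clause \eqref{a.a2.1*} is assumed verbatim.

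The main obstacle I anticipate is the analytic-continuation bookkeeping underlying Assumption \ref{a1}: one must pass from real-analyticity of $a_{\theta}^{i}(x)$, $b_{\theta}^{j}(x)$ jointly in $(\theta,x)$ on the compact set $\overline{\cal X}$ to a single $\delta>0$ and genuine \emph{analyticity in $\eta$ alone} (for each fixed $x,x',y$) on $V_{\delta}(\Theta)$, while simultaneously keeping the continuations of $b_{\theta}^{j}$ bounded away from $0$ and $\infty$ so that the dominating function $\varphi_{\eta}$ can be chosen nonvanishing and so that $|\hat r_\eta(y,x'|x)|\le|\varphi_\eta(y)|$ holds on the complex neighbourhood. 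This requires a compactness/covering argument: cover $\overline{\cal X}$ by finitely many balls on each of which the power series of $a_{\theta}^{i}$ and $b_{\theta}^{j}$ (in $(\theta,x)$) converges, extract a uniform lower bound on the radii, and then restrict the complex displacement to $\theta$ only; continuity and the strict positivity/compactness then yield the uniform upper and lower bounds on the moduli of the continuations after possibly shrinking $\delta$. Once this is set up, everything else is a routine substitution into Assumptions \ref{a11}--\ref{a12}, \ref{a2} and an application of Theorem \ref{theorem1}, noting that since the argument is carried out on an arbitrary small ball inside the original $\Theta$ and real-analyticity is a local property, the conclusion extends to all of $\Theta$.
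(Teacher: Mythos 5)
Your proposal is correct and follows essentially the same route as the paper: restrict to a bounded open $\tilde\Theta$ with $\mathrm{cl}\,\tilde\Theta\subset\Theta$, use compactness of $\mathrm{cl}\,\tilde\Theta\times{\cal X}$ and real-analyticity to obtain analytic continuations $\hat a^i_\eta$, $\hat b^j_\eta$ with moduli bounded above and with real parts bounded away from zero (the covering argument you anticipate is exactly the paper's Lemma A1.2), then take $\varphi_\eta(y)$, $\phi(y)$ proportional to $\sum_j w_j(y)$ and $\psi(y)$ proportional to $1+\sum_k|\log w_k(y)|$, verify Assumptions 2.1--2.3 and 2.5, apply Theorem 2.1 on $\tilde\Theta$, and patch over a countable union of such balls. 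No gaps.
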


\begin{corollary}\label{corollary2b}
Let Assumptions  \ref{b1} -- \ref{b4} hold.
Then, all conclusions of Theorem \ref{theorem2} are true.
\end{corollary}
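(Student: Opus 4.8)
The plan is to derive Corollary \ref{corollary2b} (and similarly Corollary \ref{corollary1b}) by verifying that Assumptions \ref{b1}--\ref{b4} imply Assumptions \ref{a11}--\ref{a3}, so that Theorem \ref{theorem2} applies directly. Since the mixture structure (\ref{2.1'})--(\ref{2.1''}) gives
\begin{align*}
	r_{\theta}(y,x'|x)
	=
	q_{\theta}(y|x') p_{\theta}(x'|x)
	=
	\left( \sum_{j=1}^{N_{y}} b_{\theta}^{j}(x') w_{j}(y) \right)
	\left( \sum_{i=1}^{N_{x}} a_{\theta}^{i}(x) v_{i}(x') \right),
\end{align*}
all the required structure is essentially explicit. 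First I would introduce the natural candidate for the dominating function, namely $\varphi_{\eta}(y) = \sum_{j=1}^{N_{y}} w_{j}(y)$ (which has an obvious constant-in-$\eta$ analytic continuation that never vanishes since $w_{j}\geq 0$ and they cannot all be zero a.e.), together with $\phi(y) = \varphi_{\cdot}(y) = \sum_{j} w_{j}(y)$ and $\psi(y) = \sum_{k=1}^{N_{y}} |\log w_{k}(y)| + C$ for a suitable constant $C$ controlling $|\log \varphi(y)|$ via the elementary bound $\min_k w_k(y) \le \varphi(y) \le N_y \max_k w_k(y)$.

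Next I would check the analyticity assumption \ref{a1}. By Assumption \ref{b2}, $a_{\theta}^{i}(x)$ and $b_{\theta}^{j}(x)$ are real-analytic in $(\theta,x)$ on the (by the Remark, bounded) set $\Theta$ with ${\cal X}$ compact; a standard compactness argument gives a uniform radius of convergence, hence complex-analytic continuations $\hat{a}_{\eta}^{i}(x)$, $\hat{b}_{\eta}^{j}(x)$ on some $V_{\delta}(\Theta)$, and one sets $\hat{r}_{\eta}(y,x'|x)$ to be the product of the corresponding continuations of the two mixture sums. Properties (i)--(iii) are then immediate, and (iv) follows from boundedness of $|\hat{a}_{\eta}^{i}|$ and $|\hat{b}_{\eta}^{j}|$ on $V_{\delta}(\Theta)\times{\cal X}$ (again by compactness of ${\cal X}$ and continuity, after shrinking $\delta$), which yields $|\hat{r}_{\eta}(y,x'|x)| \le M \sum_{j} w_{j}(y) = M|\varphi_{\eta}(y)|$ for a constant $M$; absorbing $M$ into $\varphi$ (or allowing the bound with the extra constant, which the proofs tolerate) gives (iv). Property (v) is exactly Assumption \ref{b4} applied term-by-term: $\int \phi\,\nu({\rm d}y) = N_y < \infty$ trivially, and $|\log|\varphi_{\eta}(y)|| \le \psi(y)$ by construction, with $\int \psi \phi\, \nu({\rm d}y) < \infty$ (Assumption \ref{a3}) reducing to the finiteness of $\int |\log w_k(y)| w_j(y)\,\nu({\rm d}y)$ for all $j,k$, i.e. Assumption \ref{b4} again.

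Then I would verify the mixing conditions \ref{a11} and \ref{a12}. Using Assumption \ref{b3}, $\varepsilon \le v_i(x') \le 1/\varepsilon$, so $\varepsilon \sum_i a_{\theta}^{i}(x) \le p_{\theta}(x'|x) \le (1/\varepsilon)\sum_i a_{\theta}^{i}(x)$, i.e. $\varepsilon \le p_{\theta}(x'|x) \le 1/\varepsilon$ (since the weights sum to one); multiplying by $q_{\theta}(y|x')$ and integrating, the measure $\lambda_{\theta}({\rm d}x'|y) := q_{\theta}(y|x')\mu({\rm d}x')$ sandwiches $\int_B r_{\theta}(y,x'|x)\mu({\rm d}x')$ between $\varepsilon'\lambda_{\theta}(B|y)$ and $\lambda_{\theta}(B|y)/\varepsilon'$ for a suitable $\varepsilon' \in (0,1)$ depending only on $\varepsilon$, giving \ref{a11}. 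For \ref{a12}, positivity of the $b_{\theta}^{j}$ on the compact set ${\cal X}$ and continuity (Assumption \ref{b2}) yield a uniform lower bound $b_{\theta}^{j}(x') \ge \beta > 0$, whence $q_{\theta}(y|x') \ge \beta \sum_j w_j(y) = \beta|\varphi_{\theta}(y)|$, and combined with $p_{\theta}(x'|x)\ge\varepsilon$ and $\int\mu$-mass considerations one extracts $\int r_{\theta}(y,x''|x)\mu({\rm d}x'') \ge \gamma|\varphi_{\theta}(y)|$. Assumption \ref{a2} is not needed for Corollary \ref{corollary2b} (for Corollary \ref{corollary1b} it comes straight from Assumption \ref{b5} together with the bound $|\log w_k| \le \psi$). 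The main obstacle, and the step deserving the most care, is the passage from pointwise real-analyticity in Assumption \ref{b2} to a complex-analytic continuation with the uniform bounds demanded by Assumption \ref{a1}(iii)--(v): this requires the compactness of ${\cal X}$ (Assumption \ref{b1}) and the convexity/boundedness reduction in the Remark to get a common $\delta$ and common bound $M$ over $V_{\delta}(\Theta)\times{\cal X}$, rather than radii that shrink to zero; everything else is bookkeeping with the explicit mixture formula.
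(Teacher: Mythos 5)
Your proposal follows essentially the same route as the paper: verify Assumptions \ref{a11}--\ref{a3} from Assumptions \ref{b1}--\ref{b4} using the explicit mixture structure, with $\phi(y)\propto\sum_{j}w_{j}(y)$, $\psi(y)\propto 1+\sum_{k}|\log w_{k}(y)|$, a two-sided bound $\gamma\phi(y)\leq|\hat{r}_{\eta}(y,x'|x)|\leq\phi(y)/\gamma$ obtained from uniform positive lower and upper bounds on the continued weights, and then an appeal to Theorem \ref{theorem2}. The identifications of \ref{a11} via the sandwich $\varepsilon\leq p_{\theta}(x'|x)\leq 1/\varepsilon$, of \ref{a12} via a uniform lower bound on the $b_{\theta}^{j}$, and of \ref{a3} as a restatement of Assumption \ref{b4} all match the paper's computations.

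The one step where your argument as written does not go through is the construction of the continuations with a \emph{uniform} $\delta$ and uniform bounds. You invoke boundedness of $\Theta$ (from the Remark) together with compactness of ${\cal X}$ to claim a uniform radius of convergence over $V_{\delta}(\Theta)\times{\cal X}$. But $\Theta$ is open and real-analyticity is only assumed \emph{on} $\Theta$, so the local radii of convergence of $a_{\theta}^{i}$, $b_{\theta}^{j}$ (and the positive lower bounds on them) can degenerate as $\theta$ approaches $\partial\Theta$; boundedness alone does not prevent this. The paper's fix is to work on a bounded open $\tilde{\Theta}$ with $\text{cl}\,\tilde{\Theta}\subset\Theta$, so that $\text{cl}\,\tilde{\Theta}\times{\cal X}$ is compact and Lemma \ref{lemmaa2} yields a single $\alpha$ and uniform bounds on $V_{\alpha}(\tilde{\Theta})\times V_{\alpha}({\cal X})$; Theorems \ref{theorem1} and \ref{theorem2} are then applied with $\Theta$ replaced by $\tilde{\Theta}$, and the conclusion for all of $\Theta$ follows by writing $\Theta=\bigcup_{n}\tilde{\Theta}_{n}$ with $\text{cl}\,\tilde{\Theta}_{n}\subset\Theta$, since analyticity is a local property and the limit function is independent of $\lambda$. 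You correctly flag this as the delicate step, but the mechanism you name (boundedness plus the convexity reduction) is not the one that closes it; the exhaustion by compactly contained subsets is needed.
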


Corollaries \ref{corollary1b} and \ref{corollary2b} are proved in Section \ref{section4*}.

\section{Example: Non-Linear State-Space Models}\label{section3}

In this section, the main results are used to study the entropy rates of
non-linear state-space models.
We consider the following parameterized state-space model:
\begin{align}
	\label{3.1'}
	X_{n+1}^{\theta,\lambda}
	=&
	A_{\theta}(X_{n}^{\theta,\lambda} ) + B_{\theta}(X_{n}^{\theta,\lambda} ) V_{n},
	\\
	\label{3.1''}
	Y_{n}^{\theta,\lambda}
	=&
	C_{\theta}(X_{n}^{\theta,\lambda} ) + D_{\theta}(X_{n}^{\theta,\lambda} ) W_{n},
	\;\;\;\;\;
	n\geq 0.
\end{align}
Here $\theta\in\Theta$, $\lambda\in{\cal P}({\cal X})$ are the parameters indexing the state-space model
(\ref{3.1'}), (\ref{3.1''})
($\Theta$, ${\cal P}({\cal X})$ have the same meaning as in Section \ref{section1}).
$A_{\theta}(x)$ and $B_{\theta}(x)$
are functions which map $\theta\in\Theta$, $x\in\mathbb{R}^{d_{x}}$ (respectively) to
$\mathbb{R}^{d_{x}}$ and $\mathbb{R}^{d_{x}\times d_{x}}$
($d_{x}$ has the same meaning as in Section \ref{section1}).
$C_{\theta}(x)$ and $D_{\theta}(x)$
are functions which map $\theta\in\Theta$, $x\in\mathbb{R}^{d_{x}}$ (respectively) to
$\mathbb{R}^{d_{y}}$ and $\mathbb{R}^{d_{y}\times d_{y}}$
($d_{y}$ has the same meaning as in Section \ref{section1}).
$X_{0}^{\theta,\lambda}$ is an $\mathbb{R}^{d_{x}}$-valued random variable
defined on a probability space $(\Omega,{\cal F},P)$
and distributed according to $\lambda$.
$\{V_{n}\}_{n\geq 0}$ are $\mathbb{R}^{d_{x}}$-valued i.i.d. random variables
which are defined on $(\Omega, {\cal F}, P)$ and have (marginal) probability
density $v(x)$ with respect to the Lebesgue measure.
$\{W_{n}\}_{n\geq 0}$ are $\mathbb{R}^{d_{y}}$-valued i.i.d. random variables
which are defined on $(\Omega, {\cal F}, P)$ and have (marginal) probability
density $w(y)$ with respect to the Lebesgue measure.
We also assume that $X_{0}^{\theta,\lambda}$,
$\{V_{n} \}_{n\geq 0}$ and $\{W_{n} \}_{n\geq 0}$ are (jointly) independent.

We use here the following notations. For $\theta\in\Theta$, $x,x'\in\mathbb{R}^{d_{x}}$, $y\in\mathbb{R}^{d_{y}}$,
$\tilde{p}_{\theta}(x'|x)$ and $\tilde{q}_{\theta}(y|x)$
are the functions defined by
\begin{align*}
	\tilde{p}_{\theta}(x'|x)
	=&
	\frac{v\left(B_{\theta}^{-1}(x)(x'-A_{\theta}(x) ) \right) }
	{|\text{det}B_{\theta}(x)|},
	\\
	\tilde{q}_{\theta}(y|x)
	=&
	\frac{w\left(D_{\theta}^{-1}(x)(y-C_{\theta}(x) ) \right) }
	{|\text{det}D_{\theta}(x)|}
\end{align*}
(provided that $B_{\theta}(x)$, $D_{\theta}(x)$ are invertible), while
$p_{\theta}(x'|x)$ and $q_{\theta}(y|x)$
are defined by
\begin{align}
	\label{3.3'}
	p_{\theta}(x'|x)
	=&
	\frac{v\left(B_{\theta}^{-1}(x)(x'-A_{\theta}(x) ) \right) 1_{\cal X}(x') }
	{\int_{\cal X} v\left(B_{\theta}^{-1}(x)(x''-A_{\theta}(x) ) \right) {\rm d}x'' },
	\\
	\label{3.3''}
	q_{\theta}(y|x)
	=&
	\frac{w\left(D_{\theta}^{-1}(x)(y-C_{\theta}(x) ) \right) 1_{\cal Y}(y) }
	{\int_{\cal Y} w\left(D_{\theta}^{-1}(x)(y'-C_{\theta}(x) ) \right) {\rm d}y' }.
\end{align}
It is straightforward to show that
$\tilde{p}_{\theta}(x'|x)$  and $\tilde{q}_{\theta}(y|x)$ are the conditional densities
of $X_{n+1}^{\theta,\lambda}$ and $Y_{n}^{\theta,\lambda}$ (respectively)
given $X_{n}^{\theta,\lambda}=x$.
$p_{\theta}(x'|x)$ and $q_{\theta}(y|x)$ can be interpreted as
truncations of $\tilde{p}_{\theta}(x'|x)$ and $\tilde{q}_{\theta}(y|x)$
to domains ${\cal X}$ and ${\cal Y}$
(i.e., the hidden Markov model specified in (\ref{3.3'}), (\ref{3.3''}) can be viewed as a truncated version
of the original model (\ref{3.1'}), (\ref{3.1''})).
$p_{\theta}(x'|x)$ and $q_{\theta}(y|x)$
accurately approximate $\tilde{p}_{\theta}(x'|x)$ and $\tilde{q}_{\theta}(y|x)$
when domains ${\cal X}$ and ${\cal Y}$ are sufficiently large
(i.e., when ${\cal X}$, ${\cal Y}$ contain balls of sufficiently large radius).
This kind of approximation is involved (implicitly or explicitly) in any
numerical implementation of the optimal filter
for state-space model (\ref{3.1'}), (\ref{3.1''})
(for details see e.g., \cite{cappe&moulines&ryden}, \cite{crisan&rozovskii}, \cite{douc&moulines&stoffer}).

The entropy rates of the hidden Markov model (\ref{3.3'}), (\ref{3.3''})
are studied under the following assumptions.

\begin{assumption}\label{c1}
${\cal X}$ and ${\cal Y}$ are compact sets with non-empty interiors.
\end{assumption}

\begin{assumption}\label{c2}
$v(x)>0$ and $w(y)>0$ for all $x\in\mathbb{R}^{d_{x}}$,
$y\in\mathbb{R}^{d_{y}}$.
Moreover, $v(x)$ and $w(y)$ are real-analytic for each $x\in\mathbb{R}^{d_{x}}$,
$y\in\mathbb{R}^{d_{y}}$.
\end{assumption}

\begin{assumption}\label{c3}
$B_{\theta}(x)$ and $D_{\theta}(x)$ are invertible for all $\theta\in\Theta$,
$x\in\mathbb{R}^{d_{x}}$.
Moreover, $A_{\theta}(x)$, $B_{\theta}(x)$, $C_{\theta}(x)$ and $D_{\theta}(x)$
are real-analytic in $(\theta,x)$ for each $\theta\in\Theta$,
$x\in\mathbb{R}^{d_{x}}$.
\end{assumption}

\begin{assumption}\label{c4}
There exist a probability measure $\pi({\rm d}x)$ on ${\cal X}$
and real numbers $\rho\in(0,1)$, $K\in[1,\infty )$ such that (\ref{a.a2.1*}) holds for all $x\in{\cal X}$, $n\geq 0$
and any Borel-measurable set $B\subseteq{\cal X}$.
\end{assumption}

Assumptions \ref{c1} -- \ref{c3} are relevant for several practically important
classes of non-linear state-space models.
E.g., these assumptions cover stochastic volatility and dynamic probit models
and their truncated versions.
For other models satisfying (\ref{3.1'}), (\ref{3.1''}) and Assumptions \ref{c1} -- \ref{c3},
see \cite{cappe&moulines&ryden}, \cite{crisan&rozovskii}, \cite{douc&moulines&stoffer}
and references cited therein.

Using Theorems \ref{theorem1} and \ref{theorem2}, we get the following results.

\begin{corollary}\label{corollary1c}
Let Assumptions \ref{c1} -- \ref{c4} hold.
Then, all conclusions of Theorem \ref{theorem1} are true.
\end{corollary}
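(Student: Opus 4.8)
The plan is to derive Corollary~\ref{corollary1c} from Theorem~\ref{theorem1} by verifying that Assumptions~\ref{a11}, \ref{a1}, \ref{a12} and \ref{a2} hold for the hidden Markov model (\ref{3.3'}), (\ref{3.3''}) whenever Assumptions~\ref{c1}--\ref{c4} are in force. By the Remark following Theorem~\ref{theorem2} it suffices to treat the case in which $\Theta$ is an open ball whose closure $\overline{\Theta}$ is contained in the open set on which $A_{\theta},B_{\theta},C_{\theta},D_{\theta}$ are defined and real-analytic; then, by Assumptions~\ref{c1}--\ref{c3}, these maps together with their first $x$-derivatives are continuous on the compact set $\overline{\Theta}\times{\cal X}$, $\det B_{\theta}(x)$ and $\det D_{\theta}(x)$ are bounded away from $0$ there, and $v,w$ are bounded away from $0$ and $\infty$ on the compact sets of arguments they receive; here $\mu$ and $\nu$ are Lebesgue measure on ${\cal X}$ and ${\cal Y}$, which are finite and positive since ${\cal X},{\cal Y}$ are compact with non-empty interior.

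The first step is to build the continuation required by Assumption~\ref{a1}. Being real-analytic on $\mathbb{R}^{d_{x}}$ and $\mathbb{R}^{d_{y}}$, the densities $v,w$ extend to holomorphic functions $\hat v,\hat w$ on open neighbourhoods, in $\mathbb{C}^{d_{x}}$ and $\mathbb{C}^{d_{y}}$, of the compact sets $\{B_{\theta}^{-1}(x)(x'-A_{\theta}(x)):\theta\in\overline\Theta,\,x,x'\in{\cal X}\}$ and $\{D_{\theta}^{-1}(x)(y-C_{\theta}(x)):\theta\in\overline\Theta,\,x\in{\cal X},\,y\in{\cal Y}\}$. By Assumption~\ref{c3} and Cramer's rule, $A_{\eta},B_{\eta},C_{\eta},D_{\eta}$ have holomorphic continuations in $\eta$ near $\overline\Theta$ with $\det B_{\eta}(x),\det D_{\eta}(x)$ nonvanishing, so for a small enough $\delta\in(0,1)$ the maps $(\eta,x,x')\mapsto B_{\eta}^{-1}(x)(x'-A_{\eta}(x))$ and $(\eta,x,y)\mapsto D_{\eta}^{-1}(x)(y-C_{\eta}(x))$ carry $V_{\delta}(\Theta)\times{\cal X}\times{\cal X}$ and $V_{\delta}(\Theta)\times{\cal X}\times{\cal Y}$ into the domains of $\hat v,\hat w$; since ${\cal X},{\cal Y}$ have non-empty interior and $v,w>0$, after shrinking $\delta$ the normalizing integrals $\int_{\cal X}\hat v(B_{\eta}^{-1}(x)(x''-A_{\eta}(x)))\,{\rm d}x''$ and $\int_{\cal Y}\hat w(D_{\eta}^{-1}(x)(y'-C_{\eta}(x)))\,{\rm d}y'$ are holomorphic in $\eta$ and bounded away from $0$ on $V_{\delta}(\Theta)$. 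Composing gives holomorphic continuations $\hat p_{\eta}(x'|x),\hat q_{\eta}(y|x)$ of (\ref{3.3'}), (\ref{3.3''}), hence $\hat r_{\eta}(y,x'|x):=\hat q_{\eta}(y|x')\hat p_{\eta}(x'|x)$ of $r_{\theta}(y,x'|x)$ (set $\hat r_{\eta}\equiv 0$ off $V_{\delta}(\Theta)$), which verifies parts (i)--(iii) of Assumption~\ref{a1} and the real-analyticity of $r_{\theta}$ in $\theta$. Because $\hat r$ is jointly continuous on $\overline{V_{\delta}(\Theta)}\times{\cal X}\times{\cal X}\times{\cal Y}$, a compact set, there is a finite $C\geq 1$ with $|\hat r_{\eta}(y,x'|x)|\leq C$ on $V_{\delta}(\Theta)\times{\cal X}\times{\cal X}\times{\cal Y}$; this step is where the compactness of \emph{both} ${\cal X}$ and ${\cal Y}$ (Assumption~\ref{c1}) does the essential work and makes the state-space case far easier than the general one.

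With the continuation at hand, the remaining checks use only constant bounding functions. Take $\varphi_{\eta}(y)\equiv C$, $\phi(y)\equiv C$, $\psi(y)\equiv\log C$; then Assumption~\ref{a1}(iv),(v) is immediate (using $\nu({\cal Y})<\infty$), and Assumption~\ref{a12} holds with any $\gamma\in(0,1)$ small enough that $\gamma C\leq c_{1}\mu({\cal X})$, where $0<c_{1}\leq r_{\theta}(y,x'|x)\leq c_{2}<\infty$ on $\overline\Theta\times{\cal X}\times{\cal X}\times{\cal Y}$ (finite and positive since $r_{\theta}$ is continuous and positive on this compact set). For Assumption~\ref{a11}, put $\lambda_{\theta}(B|y):=\mu(B)$ and pick $\varepsilon\in(0,\min\{1,c_{1},1/c_{2}\})$, so $\varepsilon\mu(B)\leq\int_{B}r_{\theta}(y,x'|x)\mu({\rm d}x')\leq\mu(B)/\varepsilon$. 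For Assumption~\ref{a2}, the geometric-ergodicity bound (\ref{a.a2.1*}) for the true chain is exactly Assumption~\ref{c4}, while $\int\psi(y)Q(x,{\rm d}y)=\log C\leq K$ with $K:=\max\{1,\log C\}$ since $Q(x,\cdot)$ is a probability measure. Theorem~\ref{theorem1} then applies on the ball $\Theta$, and patching the resulting real-analytic functions $l$ over a cover of the original parameter set by such balls (they agree on overlaps, being equal to $\lim_{n}l_{n}(\theta,\lambda)$) yields the claimed conclusions.

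The only real obstacle is bookkeeping in the second step: one must choose $\delta$ small enough that the holomorphic extensions of $v$ and $w$ cover all arguments $B_{\eta}^{-1}(x)(x'-A_{\eta}(x))$ and $D_{\eta}^{-1}(x)(y-C_{\eta}(x))$ produced by $\eta\in V_{\delta}(\Theta)$, and that both normalizing integrals stay away from $0$ --- a finite chain of compactness-and-continuity arguments with no analytic content. Once the continuation is set up, every quantity entering Assumptions~\ref{a11}, \ref{a1}, \ref{a12}, \ref{a2} is a continuous function on a compact set and is dominated by a constant, so the verification is routine.
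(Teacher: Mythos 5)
Your proposal is correct and follows essentially the same route as the paper: restrict to a subdomain $\tilde\Theta$ with $\mathrm{cl}\,\tilde\Theta\subset\Theta$, use compactness of $\mathrm{cl}\,\tilde\Theta\times{\cal X}\times{\cal Y}$ to build holomorphic continuations of $v$, $w$, $A_\theta$, $B_\theta$, $C_\theta$, $D_\theta$ and hence of $\hat r_\eta$, bound $|\hat r_\eta|$ above and the normalizing integrals away from zero (using the non-empty interiors of ${\cal X}$, ${\cal Y}$), verify Assumptions \ref{a11}--\ref{a12} and \ref{a2} with constant dominating functions, and patch the resulting analytic limits over a cover of $\Theta$ by such balls. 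The only cosmetic point is that you should take the bound $C$ strictly greater than $1$ so that $\psi\equiv\log C$ maps into $(0,\infty)$ as Assumption \ref{a1}(v) requires.
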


\begin{corollary}\label{corollary2c}
Let Assumptions  \ref{c1} -- \ref{c3} hold.
Then, all conclusions of Theorem \ref{theorem2} are true.
\end{corollary}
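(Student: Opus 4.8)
The plan is to deduce Corollary~\ref{corollary2c} from Theorem~\ref{theorem2} by verifying that Assumptions~\ref{a11}, \ref{a1}, \ref{a12} and \ref{a3} hold for the truncated state-space model (\ref{3.3'}), (\ref{3.3''}). By the Remark following Theorem~\ref{theorem2} it suffices to treat the case where $\Theta$ is an open ball whose closure $\overline{\Theta}$ lies in the open set on which $A_{\theta}(x)$, $B_{\theta}(x)$, $C_{\theta}(x)$, $D_{\theta}(x)$ are real-analytic. Recall that here $r_{\theta}(y,x'|x)=q_{\theta}(y|x')p_{\theta}(x'|x)$ and that $\mu$, $\nu$ are Lebesgue measure on the compact sets ${\cal X}$, ${\cal Y}$, which have finite positive volume by Assumption~\ref{c1}.

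The substantive step is the construction of the analytic continuation required by Assumption~\ref{a1}. By Assumption~\ref{c3}, $A_{\theta}(x)$, $B_{\theta}(x)$, $C_{\theta}(x)$, $D_{\theta}(x)$ extend to $\hat{A}_{\eta}(x)$, $\hat{B}_{\eta}(x)$, $\hat{C}_{\eta}(x)$, $\hat{D}_{\eta}(x)$ analytic in $\eta$ on some $V_{\delta}(\Theta)$; since $\det B_{\theta}(x)$ and $\det D_{\theta}(x)$ are bounded away from $0$ on the compact set $\overline{\Theta}\times{\cal X}$, after shrinking $\delta$ the matrices $\hat{B}_{\eta}(x)$, $\hat{D}_{\eta}(x)$ stay invertible for $\eta\in\overline{V_{\delta}(\Theta)}$, $x\in{\cal X}$. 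By Assumption~\ref{c2}, $v$ and $w$ are real-analytic, hence extend complex-analytically to a neighbourhood of any compact set; as the arguments $B_{\theta}^{-1}(x)(x'-A_{\theta}(x))$ and $D_{\theta}^{-1}(x)(y-C_{\theta}(x))$ range over compact subsets of $\mathbb{R}^{d_{x}}$, $\mathbb{R}^{d_{y}}$ when $(\theta,x,x',y)\in\overline{\Theta}\times{\cal X}\times{\cal X}\times{\cal Y}$, for $\delta$ small the compositions $\hat{v}\bigl(\hat{B}_{\eta}^{-1}(x)(x'-\hat{A}_{\eta}(x))\bigr)$ and $\hat{w}\bigl(\hat{D}_{\eta}^{-1}(x)(y-\hat{C}_{\eta}(x))\bigr)$ are well-defined and analytic in $\eta$. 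The normalizing integrals in (\ref{3.3'}), (\ref{3.3''}) have analytic continuations in $\eta$ (differentiation under the integral sign, using compactness of ${\cal X}$, ${\cal Y}$) which, being positive and bounded below on $\overline{\Theta}\times{\cal X}$, stay nonzero on $\overline{V_{\delta}(\Theta)}\times{\cal X}$ after a further shrink of $\delta$. Defining $\hat{p}_{\eta}(x'|x)$, $\hat{q}_{\eta}(y|x)$ by the hatted versions of (\ref{3.3'}), (\ref{3.3''}) and $\hat{r}_{\eta}(y,x'|x)=\hat{q}_{\eta}(y|x')\hat{p}_{\eta}(x'|x)$ gives a continuation satisfying parts (i)--(iii) of Assumption~\ref{a1}. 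Since $\hat{r}_{\eta}$ is continuous on the compact set $\overline{V_{\delta}(\Theta)}\times{\cal X}\times{\cal X}\times{\cal Y}$, it is bounded by a constant $C\geq 1$, so taking $\varphi_{\eta}(y)\equiv C$, $\phi(y)\equiv C$, $\psi(y)\equiv|\log C|$ verifies parts (iv)--(v), because $\int\phi\,{\rm d}\nu=C\,\nu({\cal Y})<\infty$; the same choice gives Assumption~\ref{a3}, since $\int\psi\phi\,{\rm d}\nu=C|\log C|\,\nu({\cal Y})<\infty$.

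Assumptions~\ref{a11} and \ref{a12} then follow from compactness and positivity. Indeed $v\bigl(B_{\theta}^{-1}(x)(x'-A_{\theta}(x))\bigr)\in[c_{1},c_{2}]$ with $0<c_{1}\leq c_{2}<\infty$ for all $\theta\in\Theta$, $x,x'\in{\cal X}$, so the normalizing integral lies in $[c_{1}|{\cal X}|,c_{2}|{\cal X}|]$ and hence $\alpha\leq p_{\theta}(x'|x)\leq\beta$ for $x'\in{\cal X}$, uniformly in $x$, with $\alpha=c_{1}/(c_{2}|{\cal X}|)$, $\beta=c_{2}/(c_{1}|{\cal X}|)$; likewise $q_{\theta}(y|x')\in[\alpha',\beta']$ for $y\in{\cal Y}$, $x'\in{\cal X}$. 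Putting $\lambda_{\theta}(B|y)=\int_{B\cap{\cal X}}q_{\theta}(y|x')\,{\rm d}x'$ (a finite measure on ${\cal X}$), one gets $\alpha\lambda_{\theta}(B|y)\leq\int_{B}r_{\theta}(y,x'|x)\mu({\rm d}x')\leq\beta\lambda_{\theta}(B|y)$, so any $\varepsilon\in(0,1)$ with $\varepsilon\leq\min\{\alpha,1/\beta\}$ gives Assumption~\ref{a11}; and $\int r_{\theta}(y,x'|x)\mu({\rm d}x')\geq\alpha\lambda_{\theta}({\cal X}|y)\geq\alpha\alpha'|{\cal X}|\geq\gamma C=\gamma|\varphi_{\theta}(y)|$ for $\gamma\in(0,1)$ small enough, which is Assumption~\ref{a12}. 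With Assumptions~\ref{a11}, \ref{a1}, \ref{a12}, \ref{a3} in force, Theorem~\ref{theorem2} yields the claim.

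The only delicate point is the first step: producing a single complex vicinity $V_{\delta}(\Theta)$ on which $v$, $w$ extend analytically, $\hat{B}_{\eta}$, $\hat{D}_{\eta}$ remain invertible and the truncation denominators remain non-vanishing, all uniformly over the compact spaces ${\cal X}$, ${\cal Y}$, together with the routine but necessary verification that those denominators are analytic in $\eta$. Everything afterwards is forced by compactness of ${\cal X}$ and ${\cal Y}$ — this is exactly why $\varphi_{\eta}$, $\phi$, $\psi$ may be taken constant and why no integrability condition beyond finiteness of $\nu({\cal Y})$ is needed, and in particular why Assumption~\ref{c4} (the analogue of the mixing clause of Assumption~\ref{a2}, which is needed only for Theorem~\ref{theorem1}) does not enter the hypotheses of Corollary~\ref{corollary2c}.
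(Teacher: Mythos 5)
Your proposal is correct and follows essentially the same route as the paper: restrict to a ball $\tilde{\Theta}$ with $\text{cl}\,\tilde{\Theta}\subset\Theta$, analytically continue $A,B,C,D,v,w$ to a complex vicinity using compactness (the paper's Lemma \ref{lemmaa2}), check that $\hat{B}_{\eta},\hat{D}_{\eta}$ stay invertible and the truncation denominators stay non-vanishing, and then exploit the resulting two-sided bound $\gamma\leq|\hat{r}_{\eta}|\leq 1/\gamma$ to take $\varphi_{\eta},\phi,\psi$ constant, which makes Assumptions \ref{a11}--\ref{a3} immediate and explains why Assumption \ref{c4} is not needed. The only (trivial) adjustment is to ensure $\psi$ takes values in $(0,\infty)$ as required by Assumption \ref{a1}(v), e.g.\ by taking $\psi\equiv 1+|\log C|$, since $|\log C|$ vanishes when $C=1$.
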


Corollaries \ref{corollary1c} and \ref{corollary2c} are proved in Section \ref{section4*}.

\section{Results Related to Kernels of
$\{ (X_{n},Y_{n}) \}_{n\geq 0}$ and
$\left\{ (X_{n}^{\theta,\lambda},Y_{n}^{\theta,\lambda} ) \right\}_{n\geq 0}$}\label{section2*}

In this section, an analytical (complex-valued) continuation of the transition kernel
of $\left\{ (X_{n}^{\theta,\lambda},Y_{n}^{\theta,\lambda} ) \right\}_{n\geq 0}$
is constructed, and its asymptotic properties (geometric ergodicity) are studied.
The same properties of the transition kernel of $\{ (X_{n},Y_{n}) \}_{n\geq 0}$
are studied, too.
Throughout this and later sections, the following notations is used.
Let ${\cal W}$ be any Borel set in $\mathbb{R}^{d_{w} }$,
where $d_{w}$ is any positive integer.
Then, ${\cal B}({\cal W} )$ denotes the collection of Borel sets in ${\cal W}$.
${\cal P}({\cal W} )$ is the collection of probability measures on ${\cal W}$,
while ${\cal M}_{p}({\cal W})$ is the set of positive measures on ${\cal W}$.
${\cal M}_{c}({\cal W})$ is the collection of complex measures on ${\cal W}$,
while ${\cal P}_{c}({\cal W})$ is the set defined by
\begin{align*}
	{\cal P}_{c}({\cal W})
	=
	\{\zeta\in{\cal M}_{c}({\cal W}): \zeta({\cal W})=1 \}.
\end{align*}
For $\zeta\in{\cal M}_{c}({\cal W})$,
$\|\zeta\|$ denotes the total variation norm of $\zeta$,
while $|\zeta|(dw)$ is the total variation of $\zeta(dw)$.
For $w\in{\cal W}$, $\delta_{w}(dw')$ is the Dirac measure centered at $w$
(i.e., $\delta_{w}(B)=I_{B}(w)$ for $B\in{\cal B}({\cal W})$).

We rely here on the following notations, too. ${\cal Z}$ is the set defined by ${\cal Z}={\cal Y}\times{\cal X}$.
$\hat{s}_{\eta}(x)$ and
$\tilde{r}_{\eta}(y,x'|x)$ are the functions defined by
\begin{align}
	&
	\hat{s}_{\eta}(x)
	=
	\int\int \hat{r}_{\eta}(y',x''|x)\nu({\rm d}y')\mu({\rm d}x''),
	\\
	&\label{5.51}
	\tilde{r}_{\eta}(y,x'|x)
	=
	\begin{cases}
	\hat{r}_{\eta}(y,x'|x)/\hat{s}_{\eta}(x),
	&\text{ if } \hat{s}_{\eta}(x)\neq 0
	\\
	0, &\text{ otherwise }
	\end{cases}
\end{align}
for $\eta\in\mathbb{C}^{d}$, $x,x'\in{\cal X}$, $y\in{\cal Y}$.
$\tilde{\psi}(z)$ is the function defined by
\begin{align}\label{5.5}
	\tilde{\psi}(z)
	=
	1+\psi(y),
\end{align}
where $z=(y,x)$.
$u_{\eta}^{n}(x_{0:n},y_{1:n})$ is the function defined by
\begin{align}\label{5.3}
	u_{\eta}^{n}(x_{0:n},y_{1:n})
	=
	\prod_{k=1}^{n} \tilde{r}_{\eta}(y_{k},x_{k}|x_{k-1}),
\end{align}
where $x_{0},\dots,x_{n}\in{\cal X}$,
$y_{1},\dots,y_{n}\in{\cal Y}$, $n\geq 1$.
$\sigma({\rm d}z)$ is the measure defined by
\begin{align*}
	\sigma(B)
	=
	\int\int I_{B}(y,x) Q(x,{\rm d}y) \pi({\rm d}x)
\end{align*}
for $B\in{\cal B}({\cal Z})$.
$S(z,{\rm d}z')$, $S_{\eta}(z,{\rm d}z')$ are the kernels defined by
\begin{align}
	\label{5.1'}
	S(z,B)
	=&
	\int\int I_{B}(y',x') Q(x',{\rm d}y') P(x,{\rm d}x'),
	\\
	\label{5.1''}
	S_{\eta}(z,B)
	=&
	\int\int I_{B}(y',x') \tilde{r}_{\eta}(y',x'|x) \nu({\rm d}y')\mu({\rm d}x')
\end{align}
(as in (\ref{5.5}), $z$ denotes $(y,z)$). 
$\{S^{n}(z,{\rm d}z') \}_{n\geq 0}$, $\{S_{\eta}^{n}(z,{\rm d}z') \}_{n\geq 0}$ are the
kernels recursively defined by
$S^{0}(z,B)=S_{\eta}^{0}(z,B)=\delta_{z}(B)$ and
\begin{align*}
	S^{n+1}(z,B)
	=&
	\int S^{n}(z',B) S(z,{\rm d}z'),
	\\
	S_{\eta}^{n+1}(z,B)
	=&
	\int S_{\eta}^{n}(z',B) S_{\eta}(z,{\rm d}z').
\end{align*}
$\{(S_{\eta}^{n}\zeta)({\rm d}z) \}_{n\geq 0}$ are the
measures defined by
\begin{align*}
	(S_{\eta}^{n}\zeta)(B)
	=
	\int S_{\eta}^{n}(z,B) \zeta({\rm d}z),
\end{align*}
where $\zeta\in{\cal M}_{c}({\cal Z})$.

\begin{remark}
$S(z,{\rm d}z')$ and $\sigma({\rm d}z)$ are the transition kernel and the invariant distribution
of $\{ (X_{n},Y_{n} ) \}_{n\geq 0}$.
When $\theta\in\Theta$, $S_{\theta}(z,{\rm d}z')$ boils down to the transition kernel of
$\left\{ (X_{n}^{\theta,\lambda}, Y_{n}^{\theta,\lambda} ) \right\}_{n\geq 0}$.
Hence, for $\eta\in\mathbb{C}^{d}$, $S_{\eta}(z,{\rm d}z')$ can be considered as
a complex-valued continuation of the transition kernel of
$\left\{ (X_{n}^{\theta,\lambda}, Y_{n}^{\theta,\lambda} ) \right\}_{n\geq 0}$.
The kernel $S_{\eta}^{n}(z,{\rm d}z')$ admits the representation
\begin{align}\label{5.71}
	(S_{\eta}^{n}\zeta)(B)	
	=&
	\int\cdots\int\int
	I_{B}(y_{n},x_{n}) u_{\eta}^{n}(x_{0:n},y_{1:n})
	\nonumber\\
	&\cdot
	(\nu\times\mu)({\rm d}y_{n},{\rm d}x_{n}) \cdots (\nu\times\mu)({\rm d}y_{1},{\rm d}x_{1})
	\nonumber\\
	&\cdot
	\zeta({\rm d}y_{0},{\rm d}x_{0}).
\end{align}
This representation is used to show that
$S_{\eta}(z,{\rm d}z')$ is geometrically ergodic
(see Lemma \ref{lemma2.1} and its proof).
It is also used to show the analyticity of integral (\ref{l3.2.3*})
(see Lemma \ref{lemma3.2} and its proof).
\end{remark}

\begin{remark}
Throughout this section and later sections, the following convention is applied.
Diacritic $\tilde{}$ is used to denote a locally defined quantity,
i.e., a quantity whose definition holds only within the proof where
the quantity appears.
\end{remark}

\begin{lemma}\label{lemma2.3}
Let Assumption \ref{a2} hold.
Then, there exists a real number $C_{1}\in[1,\infty )$ such that
\begin{align*}
	&
	\int \tilde{\psi}(z') S(z,{\rm d}z') \leq C_{1},
	\\
	&
	|S^{n}-\sigma|(z,B)
	\leq
	C_{1}\rho^{n}
\end{align*}
for all $z\in{\cal Z}$, $B\in{\cal B}({\cal Z})$, $n\geq 0$
(here, $|S^{n}-\sigma|(z,{\rm d}z')$ denotes the total variation of
$S^{n}(z,{\rm d}z')-\sigma({\rm d}z')$,
while $\rho$ is specified in Assumption \ref{a2}).
\end{lemma}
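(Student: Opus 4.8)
The statement is essentially a restatement of geometric ergodicity for the joint chain $\{(X_n,Y_n)\}_{n\geq 0}$, which is driven by the already-assumed geometric ergodicity of the $X$-marginal (Assumption \ref{a2}) together with the integrability bound $\int \psi(y)\,Q(x,{\rm d}y)\leq K$. The plan is to verify the two displayed inequalities separately, starting from the explicit forms of $S$, $\sigma$, and $\tilde\psi$.

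For the first inequality, I would simply unfold the definitions: since $\tilde\psi(z')=1+\psi(y')$ and $S(z,{\rm d}z')$ integrates a function of $(y',x')$ against $Q(x',{\rm d}y')P(x,{\rm d}x')$, we get
\begin{align*}
\int \tilde\psi(z') S(z,{\rm d}z')
= \int\!\!\int \bigl(1+\psi(y')\bigr) Q(x',{\rm d}y') P(x,{\rm d}x')
= 1 + \int\!\Bigl(\int \psi(y')Q(x',{\rm d}y')\Bigr) P(x,{\rm d}x')
\leq 1+K,
\end{align*}
using $\int\psi(y')Q(x',{\rm d}y')\leq K$ from Assumption \ref{a2} and the fact that $P(x,\cdot)$ is a probability measure. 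So $C_1\geq 1+K$ suffices for this part.

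For the second inequality, the key observation is that $S^n$ factorizes through $P^n$: for any bounded Borel $g$ on ${\cal Z}$, one checks by induction that $\int g(z') S^n(z,{\rm d}z') = \int\!\int g(y',x') Q(x',{\rm d}y') P^n(x,{\rm d}x')$, because the $Q$-step only attaches a fresh observation to the current hidden state and does not feed back into the dynamics; likewise $\sigma$ is exactly $Q$ composed with $\pi$. Hence for $B\in{\cal B}({\cal Z})$,
\begin{align*}
S^n(z,B)-\sigma(B)
= \int \Bigl(\int I_B(y',x') Q(x',{\rm d}y')\Bigr)\bigl(P^n(x,{\rm d}x')-\pi({\rm d}x')\bigr).
\end{align*}
The inner integral $x'\mapsto \int I_B(y',x')Q(x',{\rm d}y') = Q(x',B_{x'})$ (the $x'$-section of $B$) is a Borel function of $x'$ bounded by $1$, so taking total variation gives $|S^n-\sigma|(z,B)\leq \|P^n(x,\cdot)-\pi\|_{\rm TV}$. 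The bound (\ref{a.a2.1*}) controls $|P^n(x,B')-\pi(B')|\leq K\rho^n$ setwise; passing to the total variation norm costs at most a factor $2$ (supremum over a Hahn decomposition pair), so $|S^n-\sigma|(z,B)\leq 2K\rho^n$. Taking $C_1=\max\{1+K,\,2K\}$ — equivalently $C_1=2K$ since $K\geq 1$ — finishes the proof.

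The only mildly delicate point is the measurability and total-variation bookkeeping in the second step: one must be careful that $(y',x')\mapsto I_B(y',x')$ integrated against $Q(x',{\rm d}y')$ is jointly measurable in the right variables (standard, by a monotone-class argument on $B$), and that the factor-of-two loss in converting the setwise bound (\ref{a.a2.1*}) into a total-variation bound is acknowledged; absorbing it into $C_1$ is harmless. Everything else is a direct computation from the definitions, so I do not anticipate a genuine obstacle here — this lemma is a bridge result whose role is to transfer Assumption \ref{a2} into the form needed later (e.g.\ in Lemma \ref{lemma2.3}'s use downstream for the log-likelihood $l_n$).
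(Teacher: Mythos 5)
Your proposal is correct and follows essentially the same route as the paper: the first bound is the identical direct computation giving $1+K$, and the second uses the same factorization $S^{n}(z,B)=\int\!\int I_{B}(y',x')\,Q(x',{\rm d}y')\,P^{n}(x,{\rm d}x')$ followed by the total-variation bound $|P^{n}-\pi|(x,{\cal X})\leq 2K\rho^{n}$, with the paper likewise taking $C_{1}=2K$. Your explicit remarks on the induction behind the factorization and the factor-of-two loss are just slightly more careful bookkeeping of steps the paper leaves implicit.
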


\begin{proof}
Let $C_{1}=2K$ ($K$ is specified in Assumption \ref{a2}).
Moreover, let $x$, $y$ be any elements of ${\cal X}$, ${\cal Y}$ (respectively),
while $z=(y,x)$ (notice that $z$ can be any element of ${\cal Z}$).
Then, we have
\begin{align*}
	\int\tilde{\psi}(z')S(z,{\rm d}z')
	=&
	\int\int \left(1+\psi(y')\right) Q(x',{\rm d}y') P(x,{\rm d}x')
	\\
	\leq &
	1+K
	\leq
	C_{1}.
\end{align*}
We also have
\begin{align*}
	&
	|S^{n}(z,B)-\sigma(B)|
	\\
	&
	=
	\left|
	\int\int I_{B}(y',x') Q(x',{\rm d}y')
	(P^{n}-\pi)(x,{\rm d}x')
	\right|
	\\
	&\leq
	\int\int I_{B}(y',x') Q(x',{\rm d}y')
	|P^{n}-\pi|(x,{\rm d}x')
	\\
	&\leq
	2K\rho^{n}
	\leq 
	C_{1}\rho^{n}
\end{align*}
for $B\in{\cal B}({\cal Z})$, $n\geq 0$.
\end{proof}

\begin{lemma}\label{lemma2.4}
Let Assumption \ref{a1} hold. Then, the following is true:

(i) $\tilde{r}_{\theta}(y,x'|x)=r_{\theta}(y,x'|x)$
for all $\theta\in\Theta$, $x,x'\in{\cal X}$, $y\in{\cal Y}$.

(ii) There exists a real number $\delta_{1}\in(0,\delta]$
such that $\tilde{r}_{\eta}(y,x'|x)$ is analytic in $\eta$
and satisfies
\begin{align*}
	&
	\left|\tilde{r}_{\eta}(y,x'|x) \right|
	\leq
	2|\varphi_{\eta}(y)|,
	\\
	&
	\int\int \tilde{r}_{\eta}(y',x''|x)\nu({\rm d}y')\mu({\rm d}x'') = 1
\end{align*}
for all $\eta\in V_{\delta_{1}}(\Theta)$, $x,x'\in{\cal X}$,
$y\in{\cal Y}$
($\delta$ is specified in Assumption \ref{a1}).
\end{lemma}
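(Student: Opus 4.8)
The plan is to treat parts (i) and (ii) by first analyzing the normalizing function $\hat{s}_{\eta}(x)$ defined by $\hat{s}_{\eta}(x)=\int\int\hat{r}_{\eta}(y',x''|x)\nu({\rm d}y')\mu({\rm d}x'')$, since $\tilde{r}_{\eta}$ is obtained from $\hat{r}_{\eta}$ by dividing by $\hat{s}_{\eta}$. For part (i): when $\eta=\theta\in\Theta$, Assumption \ref{a1}(ii) gives $\hat{r}_{\theta}=r_{\theta}$, and because $r_{\theta}(y,x'|x)=q_{\theta}(y|x')p_{\theta}(x'|x)$ with $\int q_{\theta}(y|x')\nu({\rm d}y)=1$ and $\int p_{\theta}(x'|x)\mu({\rm d}x')=1$, Fubini yields $\hat{s}_{\theta}(x)=1$. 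In particular $\hat{s}_{\theta}(x)\neq 0$, so by the definition \eqref{5.51} we get $\tilde{r}_{\theta}(y,x'|x)=\hat{r}_{\theta}(y,x'|x)/\hat{s}_{\theta}(x)=r_{\theta}(y,x'|x)$, proving (i). The second identity in (ii) is then immediate once division is legitimate, since $\int\int\tilde{r}_{\eta}(y',x''|x)\nu({\rm d}y')\mu({\rm d}x'')=\hat{s}_{\eta}(x)/\hat{s}_{\eta}(x)=1$ whenever $\hat{s}_{\eta}(x)\neq 0$.

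The substance of (ii) is producing $\delta_{1}\in(0,\delta]$ on which $\hat{s}_{\eta}(x)$ stays bounded away from zero, uniformly in $x\in{\cal X}$ and $\eta\in V_{\delta_{1}}(\Theta)$, together with the bound $|\tilde{r}_{\eta}|\le 2|\varphi_{\eta}|$. First I would check that $\hat{s}_{\eta}(x)$ is well-defined and analytic in $\eta$ on $V_{\delta}(\Theta)$: the integrand is analytic in $\eta$ by Assumption \ref{a1}(iii) and dominated by $|\varphi_{\eta}(y')|\le\phi(y')$ by \ref{a1}(iv)--(v), which is $\nu$-integrable; hence by the standard dominated-convergence/Morera argument (or Fubini plus Cauchy's integral formula) the integral is analytic in $\eta$ on $V_{\delta}(\Theta)$. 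Next, for fixed $\theta\in\Theta$ and $x\in{\cal X}$, analyticity gives continuity, so $\hat{s}_{\eta}(x)\to\hat{s}_{\theta}(x)=1$ as $\eta\to\theta$; the task is to make this uniform. For this I would exploit a Cauchy-estimate / Lipschitz bound: for $\eta\in V_{\delta}(\Theta)$ and $\theta$ a nearby point of $\Theta$, write $|\hat{s}_{\eta}(x)-\hat{s}_{\theta}(x)|\le\int\int|\hat{r}_{\eta}(y',x''|x)-\hat{r}_{\theta}(y',x''|x)|\,\nu({\rm d}y')\mu({\rm d}x'')$, and bound the inner difference via the integral representation of the derivative on the disc of radius $\delta$ around $\theta$, which produces a bound of the form $C\|\eta-\theta\|\,\phi(y')$ with $C$ depending only on $\delta$ and the domination constant — here one uses $\Theta$ bounded (hence $V_{\delta}(\Theta)$ has a positive-radius interior cushion) from the Remark following the theorems. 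Integrating gives $|\hat{s}_{\eta}(x)-1|\le C'\|\eta-\theta\|$ uniformly in $x$, so choosing $\delta_{1}\in(0,\delta]$ with $C'\delta_{1}\le 1/2$ forces $|\hat{s}_{\eta}(x)|\ge 1/2$ for all $x\in{\cal X}$, $\eta\in V_{\delta_{1}}(\Theta)$. On this set $\tilde{r}_{\eta}=\hat{r}_{\eta}/\hat{s}_{\eta}$ is a ratio of analytic functions with non-vanishing denominator, hence analytic in $\eta$, and $|\tilde{r}_{\eta}(y,x'|x)|=|\hat{r}_{\eta}(y,x'|x)|/|\hat{s}_{\eta}(x)|\le|\varphi_{\eta}(y)|/(1/2)=2|\varphi_{\eta}(y)|$ by Assumption \ref{a1}(iv).

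The main obstacle is the uniformity in $x\in{\cal X}$ when ${\cal X}$ is non-compact: one cannot simply invoke continuity of $\hat{s}_{\eta}(x)$ in $(\eta,x)$ and compactness. The resolution, as sketched above, is that the Cauchy-estimate bound on $|\hat{r}_{\eta}-\hat{r}_{\theta}|$ is controlled by $\phi(y')$ alone, with a constant independent of $x$ — this is exactly what Assumption \ref{a1}(iv)--(v) is designed to deliver, since $|\varphi_{\eta}(y)|$ dominates $|\hat{r}_{\eta}(y,x'|x)|$ uniformly in $x,x'$ and $\phi$ dominates $|\varphi_{\eta}|$ uniformly in $\eta$. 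One slightly delicate point is that the Cauchy-estimate for $\hat{r}_{\eta}-\hat{r}_{\theta}$ should be applied on a disc around $\theta$ of radius slightly less than $\delta$ to keep the bound finite; shrinking to $\delta_{1}$ at the end absorbs this. The remaining assertions — that the normalization integral equals $1$ and that $\tilde{r}_{\theta}=r_{\theta}$ — are then formal consequences of the non-vanishing of $\hat{s}_{\eta}$ already established.
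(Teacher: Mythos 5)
Your proposal is correct and follows essentially the same route as the paper: establish analyticity of $\hat{s}_{\eta}(x)$ via the dominated integrand $\phi$, derive an $x$-uniform Lipschitz bound $|\hat{r}_{\eta}-\hat{r}_{\theta}|\le C\|\eta-\theta\|\phi(y)$ from Cauchy estimates, integrate to get $|\hat{s}_{\eta}(x)-1|\le C'\|\eta-\theta\|$, and shrink to $\delta_{1}$ so that $|\hat{s}_{\eta}(x)|\ge 1/2$, from which the analyticity of the quotient, the bound $2|\varphi_{\eta}(y)|$, and the normalization all follow. This matches the paper's argument (which packages the Cauchy-estimate step as Lemma A1.1), including your correct identification of the uniformity in $x$ as the key point handled by Assumption 2.2(iv)--(v).
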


\begin{remark}
As a direct consequence of Lemma \ref{lemma2.4} (Part (ii)),
we have $S_{\eta}^{n}\zeta\in{\cal P}_{c}({\cal Z})$
(i.e., $(S_{\eta}^{n}\zeta)({\cal Z})=1$)
for $\eta\in V_{\delta_{1}}(\Theta)$,
$\zeta\in{\cal P}_{c}({\cal Z})$, $n\geq 1$.
\end{remark}

\begin{proof}
Due to Assumption \ref{a1}, we have
\begin{align*}
	\int\int\phi(y)\nu({\rm d}y)\mu({\rm d}x)
	=
	\|\mu\|\int\phi(y)\nu({\rm d}y)
	<\infty.
\end{align*}
Then, using Assumption \ref{a1} and Lemma \ref{lemmaa1} (see Appendix \ref{appendix1}),
we conclude that $\hat{s}_{\eta}(x)$
is analytic in $\eta$
for each $\eta\in V_{\delta}(\Theta)$, $x\in{\cal X}$.
Relying on the same arguments, we deduce
\begin{align}\label{l2.4.1}
	\left|
	\hat{r}_{\eta'}(y,x'|x) - \hat{r}_{\eta''}(y,x'|x)
	\right|
	\leq &
	\frac{d\:\phi(y)\|\eta'-\eta''\|}{\delta}
\end{align}
for $\eta',\eta''\in V_{\delta}(\Theta)$, $x,x'\in{\cal X}$, $y\in{\cal Y}$
(here, $d$ denotes the dimension of vectors in $\Theta$, $V_{\delta}(\Theta)$).

Throughout the rest of the proof,
the following notations is used.
$\tilde{C}$, $\delta_{1}$ are the real numbers defined by
\begin{align*}
	&
	\tilde{C}
	=
	\frac{d\|\mu\|}{\delta}
	\int \phi(y)\nu({\rm d}y),
	\\
	&
	\delta_{1}
	=
	\min\left\{\delta, \frac{1}{2\tilde{C} } \right\}.
\end{align*}
$\eta$, $\eta'$, $\eta''$ are any elements in $V_{\delta_{1} }(\Theta)$,
while $\theta$ is any element of $\Theta$ satisfying
$\|\eta-\theta\|<\delta_{1}$.
$x$, $x'$ are any elements of ${\cal X}$,
while $y$ is any element in ${\cal Y}$.

Using (\ref{l2.4.1}), we conclude
\begin{align*}
	&
	\left|
	\int\int
	\left(\hat{r}_{\eta'}(y,x'|x) - \hat{r}_{\eta''}(y,x'|x) \right)
	\nu({\rm d}y)\mu({\rm d}x')
	\right|
	\\
	&\leq
	\int\int
	\left|\hat{r}_{\eta'}(y,x'|x) - \hat{r}_{\eta''}(y,x'|x) \right|
	\nu({\rm d}y)\mu({\rm d}x')
	\\
	&\leq
	\frac{d\|\mu\| \|\eta'-\eta''\|}{\delta}
	\int\phi(y)\nu({\rm d}y)
	=
	\tilde{C}\|\eta'-\eta''\|.
\end{align*}
Consequently, we have
\begin{align*}
	\left|\hat{s}_{\eta}(x)\right|
	=&
	\left|
	\int\!\int \hat{r}_{\eta}(y,x'|x) \nu({\rm d}y)\mu({\rm d}x')
	\right|
	\\
	\geq &
	\int\!\int \hat{r}_{\theta}(y,x'|x) \nu({\rm d}y)\mu({\rm d}x')
	\\
	&
	-
	\left|
	\int\!\int
	\left(\hat{r}_{\eta}(y,x'|x) - \hat{r}_{\theta}(y,x'|x) \right)
	\nu({\rm d}y)\mu({\rm d}x')
	\right|
	\\
	\geq &
	1-\tilde{C}\|\eta-\theta\|
	\geq 
	\frac{1}{2}.
\end{align*}
Hence, we get
\begin{align}\label{l2.4.5}
	\left|\hat{s}_{\eta}(x)\right|
	=
	\left|
	\int\!\int \hat{r}_{\eta}(y,x'|x) \nu({\rm d}y)\mu({\rm d}x')
	\right|
	\geq
	\frac{1}{2}.
\end{align}
Therefore, we have
\begin{align}\label{l2.4.7}
	\tilde{r}_{\eta}(y,x'|x)
	=
	\frac{\hat{r}_{\eta}(y,x'|x) }
	{\hat{s}_{\eta}(x) }.
\end{align}
As $\hat{s}_{\eta}(x)$
is analytic in $\eta$
for each $\eta\in V_{\delta_{1}}(\Theta)$,
we conclude from Assumption \ref{a1} and (\ref{l2.4.5}), (\ref{l2.4.7})
that (i), (ii) are true.
\end{proof}

\begin{lemma}\label{lemma2.2}
Let Assumption \ref{a1} hold. Then, the following is true:

(i) $u_{\eta}^{n}(x_{0:n},y_{1:n} )$ is analytic in $\eta$ for all $\eta\in V_{\delta_{1}}(\Theta)$,
$x_{0},\dots,x_{n}\in{\cal X}$, $y_{1},\dots,y_{n}\in{\cal Y}$, $n\geq 1$
($\delta_{1}$ is specified in Lemma \ref{lemma2.4}).

(ii) There exists a non-decreasing sequence
$\{K_{n}\}_{n\geq 1}$ in $[1,\infty)$ such that
\begin{align*}
	&
	\left|
	u_{\eta}^{n}(x_{0:n},y_{1:n} )
	\right|
	\leq
	K_{n}
	\left( \prod_{k=1}^{n} \phi(y_{k} ) \right),
	\nonumber\\
	&
	\left|
	u_{\eta'}^{n}(x_{0:n},y_{1:n} )
	-
	u_{\eta''}^{n}(x_{0:n},y_{1:n} )
	\right|
	\\
	&
	\leq
	K_{n} \|\eta'-\eta''\|
	\left( \prod_{k=1}^{n} \phi(y_{k} ) \right)
	\nonumber
\end{align*}
for all $\eta,\eta',\eta''\in V_{\delta_{1}}(\Theta)$,
$x_{0},\dots,x_{n}\in{\cal X}$,
$y_{1},\dots,y_{n}\in{\cal Y}$,
$n\geq 1$.
\end{lemma}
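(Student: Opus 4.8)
The plan is to deduce everything from Lemma~\ref{lemma2.4} together with the estimates already appearing in its proof. Part (i) is immediate: by Lemma~\ref{lemma2.4}(ii) each factor $\tilde{r}_{\eta}(y_{k},x_{k}|x_{k-1})$ is analytic in $\eta$ on $V_{\delta_{1}}(\Theta)$, and $u_{\eta}^{n}$, being a finite product of such factors, is analytic there as well. The first bound in Part (ii) is equally quick: by Lemma~\ref{lemma2.4}(ii) and Assumption~\ref{a1}(v) we have $|\tilde{r}_{\eta}(y,x'|x)|\leq 2|\varphi_{\eta}(y)|\leq 2\phi(y)$ on $V_{\delta_{1}}(\Theta)$, so $|u_{\eta}^{n}(x_{0:n},y_{1:n})|\leq 2^{n}\prod_{k=1}^{n}\phi(y_{k})$.

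The substantive step is a Lipschitz-in-$\eta$ estimate for the single-step density: I would show that there is a constant $\tilde{L}\in[1,\infty)$ with $|\tilde{r}_{\eta'}(y,x'|x)-\tilde{r}_{\eta''}(y,x'|x)|\leq\tilde{L}\,\phi(y)\,\|\eta'-\eta''\|$ for all $\eta',\eta''\in V_{\delta_{1}}(\Theta)$, $x,x'\in{\cal X}$, $y\in{\cal Y}$. The route is the quotient representation $\tilde{r}_{\eta}=\hat{r}_{\eta}/\hat{s}_{\eta}$, which holds on $V_{\delta_{1}}(\Theta)$ by (\ref{l2.4.5})--(\ref{l2.4.7}). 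Writing the difference of quotients over the common denominator $\hat{s}_{\eta'}(x)\hat{s}_{\eta''}(x)$, I would bound the numerator by
\[
	|\hat{r}_{\eta'}-\hat{r}_{\eta''}||\hat{s}_{\eta''}|+|\hat{r}_{\eta''}||\hat{s}_{\eta''}-\hat{s}_{\eta'}|,
\]
using $|\hat{r}_{\eta}(y,x'|x)|\leq\phi(y)$ (Assumption~\ref{a1}(iv)--(v)), the Lipschitz bound (\ref{l2.4.1}) on $\hat{r}_{\eta}$, its integrated consequence $|\hat{s}_{\eta'}(x)-\hat{s}_{\eta''}(x)|\leq(d\|\mu\|/\delta)\|\eta'-\eta''\|\int\phi(y)\nu({\rm d}y)$, and $|\hat{s}_{\eta}(x)|\leq\|\mu\|\int\phi(y)\nu({\rm d}y)$; the denominator is bounded below by $1/4$ via (\ref{l2.4.5}). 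All of these estimates are already available on the full $V_{\delta_{1}}(\Theta)$, which is the point: the main obstacle one might anticipate --- getting a uniform $\eta$-Lipschitz bound rather than one valid only on a shrunken neighbourhood --- is sidestepped by working through this explicit quotient rather than a Cauchy estimate on a ball, and it is only mild bookkeeping.

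Finally, I would close Part (ii) with the telescoping identity
\begin{align*}
	u_{\eta'}^{n}-u_{\eta''}^{n}
	&=
	\sum_{j=1}^{n}
	\left(\prod_{k=1}^{j-1}\tilde{r}_{\eta'}(y_{k},x_{k}|x_{k-1})\right)
	\\
	&\quad\cdot
	\left(\tilde{r}_{\eta'}(y_{j},x_{j}|x_{j-1})-\tilde{r}_{\eta''}(y_{j},x_{j}|x_{j-1})\right)
	\left(\prod_{k=j+1}^{n}\tilde{r}_{\eta''}(y_{k},x_{k}|x_{k-1})\right),
\end{align*}
with empty products read as $1$. Bounding the $j$-th summand by $2^{j-1}\cdot\tilde{L}\phi(y_{j})\|\eta'-\eta''\|\cdot 2^{n-j}\prod_{k\neq j}\phi(y_{k})=2^{n-1}\tilde{L}\|\eta'-\eta''\|\prod_{k=1}^{n}\phi(y_{k})$ and summing over $j$ yields $|u_{\eta'}^{n}-u_{\eta''}^{n}|\leq n\,2^{n-1}\tilde{L}\|\eta'-\eta''\|\prod_{k=1}^{n}\phi(y_{k})$. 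Taking $K_{n}=n\,2^{n}\tilde{L}$ (which is non-decreasing in $n$ and lies in $[1,\infty)$ since $\tilde{L}\geq1$) simultaneously dominates the constant $2^{n}$ in the first bound and the constant $n\,2^{n-1}\tilde{L}$ in the second, completing the proof.
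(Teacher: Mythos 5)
Your proof is correct, but the key step --- the Lipschitz-in-$\eta$ estimate --- is handled by a genuinely different route from the paper's. The paper's proof is shorter: having established that $u_{\eta}^{n}$ is analytic on $V_{\delta_{1}}(\Theta)$ and bounded there by $2^{n}\prod_{k=1}^{n}\phi(y_{k})$, it applies Lemma \ref{lemmaa1} \emph{directly to the product} $u_{\eta}^{n}$ (with $A=\Theta$, $\delta=\delta_{1}$ and bounding function $2^{n}\prod_{k}\phi(y_{k})$), which immediately yields the Lipschitz bound with constant $K_{n}=2^{n}d/\delta_{1}$ on all of $V_{\delta_{1}}(\Theta)$ --- so the ``shrunken neighbourhood'' obstacle you anticipate does not arise in the paper's framework, since Lemma \ref{lemmaa1} is stated to give the Lipschitz estimate on the full vicinity. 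You instead derive a one-step Lipschitz bound for $\tilde{r}_{\eta}$ via the quotient representation (\ref{l2.4.7}) together with (\ref{l2.4.1}), (\ref{l2.4.5}), and then telescope over the $n$ factors; this is sound (your constant $\tilde{L}=8d\|\mu\|\delta^{-1}\int\phi\,{\rm d}\nu$ checks out, and the extra factor of $n$ in $K_{n}=n2^{n}\tilde{L}$ is harmless since $\{K_{n}\}$ need only be non-decreasing). Note, though, that your approach is not fully independent of the Cauchy-type estimate: the single-factor bound (\ref{l2.4.1}) that you feed into the quotient rule is itself obtained from Lemma \ref{lemmaa1} applied to $\hat{r}_{\eta}$. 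The real trade-off is that the paper treats the product as one analytic function and invokes the gradient bound once, whereas you propagate a one-step estimate through the product by hand; the former is more economical, the latter makes the dependence on the single-step kernel explicit and would survive even if Lemma \ref{lemmaa1} were only available for the individual factors.
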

	
\begin{proof}
Throughout the proof, the following notations is used.
$\{ K_{n} \}_{n\geq 1}$ are the real numbers defined by
$K_{n}=2^{n}d/\delta_{1} $ for $n\geq 1$
(here, $d$ denotes the dimension of vectors in $\Theta$, $V_{\delta}(\Theta)$).
$\eta$, $\eta'$, $\eta''$ are any elements of $V_{\delta_{1} }(\Theta)$.
$\{ x_{n} \}_{n\geq 0}$, $\{y_{n} \}_{n\geq 1}$ are any sequences in ${\cal X}$,
${\cal Y}$ (respectively).

Owing to Lemma \ref{lemma2.4},
$u_{\eta}^{n}(x_{0:n},y_{1:n} )$ is analytic in $\eta$
for each $\eta\in V_{\delta_{1} }(\Theta)$.
Due to Assumption \ref{a1} and the same lemma, we have
\begin{align*}
	&
	\left|
	u_{\eta}^{n}(x_{0:n},y_{1:n} )
	\right|
	\leq
	2^{n} \left( \prod_{k=1}^{n} |\varphi_{\eta}|(y_{k} ) \right)
	\leq
	K_{n}
	\left( \prod_{k=1}^{n} \phi(y_{k} ) \right)
\end{align*}
for $n\geq 1$.
Consequently, Lemma \ref{lemmaa1} (see Appendix \ref{appendix1}) yields
\begin{align*}
	&
	\left|
	u_{\eta'}^{n}(x_{0:n},y_{1:n} )
	-
	u_{\eta''}^{n}(x_{0:n},y_{1:n} )
	\right|
	\\
	&
	\leq
	\frac{2^{n}d \|\eta'-\eta''\| }{\delta_{1} }
	\left( \prod_{k=1}^{n} \phi(y_{k} ) \right)
	\!=\! 
	K_{n} \|\eta'-\eta''\|
	\left( \prod_{k=1}^{n} \phi(y_{k} ) \right)
\end{align*}
for $n\geq 1$.
\end{proof}

\begin{lemma}\label{lemma2.1}
Let Assumptions \ref{a11}, \ref{a1} and \ref{a3} hold. Then, the following is true:

(i) There exist real numbers $\delta_{2}\in(0,\delta_{1}]$, $C_{2}\in[1,\infty)$ such that
\begin{align*}
	&
	\left|
	S_{\eta'} - S_{\eta''}
	\right|(z,B)
	\leq
	C_{2} \|\eta'-\eta''\|,
	\\
	&
	\int \tilde{\psi}(z') \left|S_{\eta}\right|(z,{\rm d}z')
	\leq
	C_{2}
\end{align*}
for all $\eta,\eta',\eta''\in V_{\delta_{2}}(\Theta)$,
$z\in{\cal Z}$, $B\in{\cal B}({\cal Z})$
(here, $\left|S_{\eta'}-S_{\eta''}\right|(z,{\rm d}z')$
denotes the total variation of $S_{\eta'}(z,{\rm d}z')-S_{\eta''}(z,{\rm d}z')$,
while $\delta_{1}$ is specified in Lemma \ref{lemma2.4}).

(ii) For each $\eta\!\in\! V_{\delta_{2}}(\Theta)$,
there exists a complex measure $\sigma_{\eta}({\rm d}z)$ on ${\cal Z}$ such that
$\sigma_{\eta}(B)\!=\lim_{n\rightarrow\infty} S_{\eta}^{n}(z,B)$
for all $z\in{\cal Z}$, $B\in{\cal B}({\cal Z})$.

(iii) There exists a real number $\gamma_{1} \in (0,1)$,
such that
\begin{align*}
	\left|
	S_{\eta}^{n}-\sigma_{\eta}
	\right|(z,B)
	\leq
	C_{2}\gamma_{1}^{n}
\end{align*}
for all $\eta\in V_{\delta_{2}}(\Theta)$, $z\in{\cal Z}$, $B\in{\cal B}({\cal Z})$, $n\geq 0$
(here, $\left|S_{\eta}^{n}-\sigma_{\eta}\right|(z,{\rm d}z')$
stands for the total variation of
$S_{\eta}^{n}(z,{\rm d}z')-\sigma_{\eta}({\rm d}z')$).
\end{lemma}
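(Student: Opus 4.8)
I would establish the three parts in turn, using throughout the representation $S_\eta(z,{\rm d}z')=\tilde r_\eta(y',x'|x)\,\nu({\rm d}y')\mu({\rm d}x')$ from (\ref{5.1''}) (with $z=(y,x)$, $z'=(y',x')$) and the bounds for $\tilde r_\eta$ in Lemma \ref{lemma2.4}. For part (i), observe that $(S_{\eta'}-S_{\eta''})(z,\cdot)$ has density $\tilde r_{\eta'}(\cdot|x)-\tilde r_{\eta''}(\cdot|x)$ with respect to $\nu\times\mu$, whence $|S_{\eta'}-S_{\eta''}|(z,B)=\int\int I_B(y',x')\,|\tilde r_{\eta'}(y',x'|x)-\tilde r_{\eta''}(y',x'|x)|\,\nu({\rm d}y')\mu({\rm d}x')$ and $\int\tilde\psi(z')\,|S_\eta|(z,{\rm d}z')=\int\int(1+\psi(y'))\,|\tilde r_\eta(y',x'|x)|\,\nu({\rm d}y')\mu({\rm d}x')$. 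By Lemma \ref{lemma2.4}(ii) and Assumption \ref{a1}(v) one has $|\tilde r_\eta(y',x'|x)|\le 2|\varphi_\eta(y')|\le 2\phi(y')$ on $V_{\delta_1}(\Theta)$, while a Cauchy estimate (Lemma \ref{lemmaa1}) applied to the analytic maps $\hat r_\eta$ and $\hat s_\eta$ (the latter bounded away from $0$ on $V_{\delta_1}(\Theta)$ by Lemma \ref{lemma2.4}) gives $|\tilde r_{\eta'}(y',x'|x)-\tilde r_{\eta''}(y',x'|x)|\le \tilde C\,\phi(y')\,\|\eta'-\eta''\|$ on a smaller vicinity $V_{\delta_2}(\Theta)$ with $\delta_2\le\delta_1$; integrating these, and using $\|\mu\|<\infty$, $\int\phi\,{\rm d}\nu<\infty$ (Assumption \ref{a1}(v)) and $\int\psi\phi\,{\rm d}\nu<\infty$ (Assumption \ref{a3}), yields the two bounds of (i). The value of $C_2$ will be fixed (and, if needed, enlarged) at the end.

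The core of the proof, and the step I expect to be the main obstacle, is a uniform geometric contraction of the \emph{complex} kernels on complex measures of total mass zero: for a suitable $\gamma_1\in(0,1)$, $\|S_\eta^n\tau\|\le 2\gamma_1^n\|\tau\|$ for all $\eta\in V_{\delta_2}(\Theta)$, $n\ge0$ and $\tau\in{\cal M}_c({\cal Z})$ with $\tau({\cal Z})=0$. To obtain this I would first extract from Assumption \ref{a11} a uniform Doeblin minorization of the real kernels: integrating the two-sided bound over $y'$ and using $\int q_\theta(y'|x')\nu({\rm d}y')=1$ (so that $\int\int r_\theta(y',x'|x)\nu({\rm d}y')\mu({\rm d}x')=1$) shows $S_\theta(z,\cdot)\ge\varepsilon\,\mu_\theta(\cdot)$ for some positive measure $\mu_\theta$ independent of $z$ with $\varepsilon\le\mu_\theta({\cal Z})\le\varepsilon^{-1}$; thus $S_\theta(z,\cdot)$ dominates a fixed positive measure of mass at least $\varepsilon^2$, and since $S_\theta(z,{\cal Z})=1$ for real $\theta$ (Lemma \ref{lemma2.4}), its Dobrushin coefficient is $\le 1-\varepsilon^2$, uniformly in $\theta\in\Theta$. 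Classical Dobrushin contraction, applied to real and imaginary parts separately, then gives $\|S_\theta^n\tau\|\le 2(1-\varepsilon^2)^n\|\tau\|$ for complex mass-zero $\tau$. For $\eta\in V_{\delta_2}(\Theta)$ I would choose $\theta\in\Theta$ with $\|\eta-\theta\|<\delta_2$ and expand $S_\eta^n=(S_\theta+(S_\eta-S_\theta))^n$: each term is a string of $S_\theta$-blocks, which contract mass-zero measures as above, separated by copies of $S_\eta-S_\theta$, which — being the difference of two mass-preserving kernels (Lemma \ref{lemma2.4} and the remark following it) — map every measure to a mass-zero one and have norm $\le C_2\delta_2$ by (i); collecting terms gives $\|S_\eta^n\tau\|\le 2(1-\varepsilon^2+2C_2\delta_2)^n\|\tau\|$, so after shrinking $\delta_2$ so that $2C_2\delta_2\le\varepsilon^2/2$ the contraction holds with $\gamma_1=1-\varepsilon^2/2$.

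Granting this contraction, parts (ii) and (iii) follow routinely. Since $S_\eta^{n+1}(z,\cdot)-S_\eta^n(z,\cdot)=S_\eta^n\big(S_\eta(z,\cdot)-\delta_z\big)$ and $S_\eta(z,\cdot)-\delta_z$ has total mass zero and norm $\le\|S_\eta(z,\cdot)\|+1\le C_2+1$ (by (i)), the contraction gives $\|S_\eta^{n+1}(z,\cdot)-S_\eta^n(z,\cdot)\|\le 2(C_2+1)\gamma_1^n$; hence the telescoping series for $S_\eta^n(z,\cdot)$ converges in total variation to a complex measure $\sigma_\eta$. Applying the contraction to $\tau=\delta_z-\delta_{z'}$ shows $\sigma_\eta$ does not depend on $z$, while $\sigma_\eta({\cal Z})=\lim_n S_\eta^n(z,{\cal Z})=1$ by Lemma \ref{lemma2.4}; this is (ii). Summing the tail of the same series yields $\|S_\eta^n(z,\cdot)-\sigma_\eta\|\le 2(C_2+1)\gamma_1^n/(1-\gamma_1)$, and since $|S_\eta^n-\sigma_\eta|(z,B)\le\|S_\eta^n(z,\cdot)-\sigma_\eta\|$ for every $B\in{\cal B}({\cal Z})$, part (iii) follows once $C_2$ is enlarged to be at least $1$ and to absorb the constant $2(C_2+1)/(1-\gamma_1)$. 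The one genuinely delicate point is thus the passage from the positivity-based ergodicity of the real kernels to a contraction for the non-positive complex continuations, which the perturbation expansion above — powered by the Lipschitz bound of (i) and the mass-preservation property $S_\eta^n\zeta\in{\cal P}_c({\cal Z})$ — is designed to accomplish.
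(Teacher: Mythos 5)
Your proposal is correct, and for the central step it takes a genuinely different route from the paper. Part (i) is handled essentially as in the paper (density representation of $S_\eta$, the bound $|\tilde r_\eta|\le 2|\varphi_\eta|\le 2\phi$, a Cauchy estimate for the Lipschitz bound, and Assumptions \ref{a1}(v), \ref{a3} for integrability), and parts (ii)--(iii) follow in both arguments by telescoping $S_\eta^{n+1}(z,\cdot)-S_\eta^{n}(z,\cdot)=S_\eta^{n}\bigl(S_\eta(z,\cdot)-\delta_z\bigr)$ once a geometric contraction on mass-zero measures is in hand. The difference lies in how that contraction is obtained for complex $\eta$. The paper compares $S_\eta^{n}$ with $S_\theta^{n}$ at the level of the $n$-step kernels, using the representation (\ref{5.71}) and the product bounds of Lemma \ref{lemma2.2} to get $\|S_\eta^{n}\zeta-S_\theta^{n}\zeta\|\le\tilde K_n\|\zeta\|\|\eta-\theta\|$ with $n$-dependent constants $\tilde K_n=(1+\tilde C_2)^nK_n$; it then fixes $n_0$ with $(1-\varepsilon^2)^{n_0}\le 1/4$, shrinks $\delta_2$ relative to $\tilde K_{n_0}$, and iterates a contraction by $1/2$ over blocks of length $n_0$. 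You instead expand $(S_\theta+\Delta)^n$ letter by letter, using the one-step Dobrushin coefficient $\le 1-\varepsilon^2$ (which Assumption \ref{a11} does deliver directly, by the same minorization $S_\theta(z,\cdot)\ge\varepsilon\tau_\theta(\cdot)$, $\tau_\theta({\cal Z})\ge\varepsilon$ that the paper feeds into \cite[Theorem 16.0.2]{meyn&tweedie}) together with $\|\Delta(z,\cdot)\|\le C_2\delta_2$ and the fact that $\Delta$ annihilates total mass. This yields a per-step rate $\gamma_1=1-\varepsilon^2/2$ and dispenses with the block structure and with the exponentially growing constants $\tilde K_n$ (which the paper can afford only because $n\le n_0$ is fixed). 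Two points deserve explicit care in a full writeup: first, the real/imaginary bookkeeping in the complex Dobrushin step — you should work with $N(\tau)=\|\mathrm{Re}\,\tau\|+\|\mathrm{Im}\,\tau\|$, noting that $S_\theta$ is a real kernel and hence commutes with taking real and imaginary parts, so that the factor $2$ from the norm equivalence appears only once at the ends rather than compounding over the $n$ letters; second, the apparent circularity in choosing $\delta_2$ after $C_2$ is harmless (fix the Lipschitz constant on $V_{\delta_1}(\Theta)$ first, then shrink $\delta_2$, then enlarge $C_2$ at the very end to absorb $2(C_2+1)/(1-\gamma_1)$), but should be stated in that order.
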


\begin{proof}
Throughout the proof, the following notations is used.
$\tilde{C}_{1}$, $\tilde{C}_{2}$ are the real numbers defined by
\begin{align*}
	&
	\tilde{C}_{1}
	=
	2\|\mu\|\int\psi(y)\phi(y)\nu({\rm d}y),
	\\
	&
	\tilde{C}_{2}
	=
	2\|\mu\|\int\phi(y)\nu({\rm d}y),
\end{align*}
while $n_{0}$ is the integer defined as
\begin{align*}
	n_{0}
	=
	\left\lceil \frac{\log4}{|\log(1-\varepsilon^{2} ) |} \right\rceil
\end{align*}
($\varepsilon$, $\phi(y)$, $\psi(y)$,  are specified in Assumptions \ref{a11}, \ref{a12}).
$\{\tilde{K}_{n} \}_{n\geq 1}$ are the real numbers defined by
$\tilde{K}_{n}=(1+\tilde{C}_{2})^{n}K_{n}$ for $n\geq 1$,
while $\delta_{2}$, $\gamma_{1}$, $\tilde{C}_{3}$, $C_{2}$ are the real numbers
defined as $\delta_{2}=\delta_{1}/(4\tilde{K}_{n_{0}} )$, $\gamma_{1}=2^{-1/n_{0}}$,
$\tilde{C}_{3}=\tilde{K}_{1}+\tilde{C}_{1}+\tilde{C}_{2}$,
$C_{2}=16\tilde{C}_{3}\gamma_{1}^{-n_{0}}$
($\delta_{1}$, $K_{n}$ are specified in
Lemmas \ref{lemma2.4}, \ref{lemma2.2}).
$\eta$, $\eta'$, $\eta''$ are any elements in $V_{\delta_{2} }(\Theta)$,
while
$\theta$ is any element of $\Theta$ satisfying $\|\eta-\theta\|<\delta_{2}$.
$x$, $y$ are any elements of ${\cal X}$, ${\cal Y}$ (respectively),
while $z=(y,x)$.
$\zeta$, $\zeta'$, $\zeta''$ are any elements of ${\cal P}_{c}({\cal Z} )$,
while $B$ is any element of ${\cal B}({\cal Z})$.
$n\geq 1$, $k\geq 0$ are any integers.

Relying on Assumption \ref{a3} and Lemma \ref{lemma2.4}, we deduce
\begin{align*}
	\int\tilde{\psi}(z')\left|S_{\eta}\right|(z,{\rm d}z')
	\leq &
	\int\int
	\left(1+\psi(y') \right)
	\left|\tilde{r}_{\eta}(y',x'|x) \right|
	\\
	&\cdot
	\nu({\rm d}y') \mu({\rm d}x')
	\\
	\leq &
	2\|\mu\|
	\int\left(1+\psi(y') \right) \varphi(y') \nu({\rm d}y')
	\\
	= &
	\tilde{C}_{1}+\tilde{C}_{2}
	\leq 
	C_{2}
\end{align*}
as $\tilde{C}_{1}+\tilde{C}_{2} \leq \tilde{C}_{3} \leq C_{2}$.
Moreover, using Lemma \ref{lemma2.2}, we conclude
\begin{align*}
	&
	\left|
	(S_{\eta'}^{n}\zeta )(B)
	-
	(S_{\eta''}^{n}\zeta )(B)
	\right|
	\\
	&
	\begin{aligned}
	\leq
	&
	\int\cdots\int\int
	I_{B}(y_{n},x_{n})
	\left|
	u_{\eta'}^{n}(x_{0:n},y_{1:n})
	-
	u_{\eta''}^{n}(x_{0:n},y_{1:n})
	\right|
	\\
	&\cdot
	(\nu\times\mu)({\rm d}y_{n},{\rm d}x_{n}) \cdots (\nu\times\mu)({\rm d}y_{1},{\rm d}x_{1})
	|\zeta|({\rm d}y_{0},{\rm d}x_{0})
	\end{aligned}
	\\
	&\leq
	K_{n} \|\mu\|^{n} \|\zeta\| \|\eta'-\eta''\|
	\left(
	\prod_{k=1}^{n}
	\int\phi(y_{k})\nu({\rm d}y_{k})
	\right)
	\\
	&\leq
	\tilde{K}_{n} \|\zeta\| \|\eta'-\eta''\|.
\end{align*}
Therefore, we get
\begin{align}\label{l2.1.1}
	\left\|
	S_{\eta'}^{n}\zeta - S_{\eta''}^{n}\zeta
	\right\|
	\leq
	\tilde{K}_{n}\|\zeta\|\|\eta'-\eta''\|.
\end{align}
Hence, we have
\begin{align*}
	\left|
	S_{\eta'}
	-
	S_{\eta''}
	\right|(z,B)
	=&
	\left|
	S_{\eta'}\delta_{z}
	-
	S_{\eta''}\delta_{z}
	\right|(B)
	\\
	\leq &
	\tilde{K}_{1}\|\delta_{z}\|\|\eta'-\eta''\|
	\\
	\leq &
	C_{2}\|\eta'-\eta''\| 
\end{align*}
as $\tilde{K}_{1}\leq\tilde{C}_{3}\leq C_{2}$.

Let $\tau_{\theta}({\rm d}z)$ be the measure defined by
\begin{align*}
	\tau_{\theta}(B)
	=
	\int\int I_{B}(y,x) \lambda_{\theta}({\rm d}x|y)\nu({\rm d}y).
\end{align*}
Owing to Assumption \ref{a11}, we have
\begin{align*}
	1
	=
	\int\int r_{\theta}(y,x'|x)\nu({\rm d}y)\mu({\rm d}x')
	\leq
	\frac{1}{\varepsilon}
	\int\lambda_{\theta}({\cal X}|y)\nu({\rm d}y).
\end{align*}
Hence, we get
\begin{align*}
	\tau_{\theta}({\cal Z} )
	=
	\int\lambda_{\theta}({\cal X}|y)\nu({\rm d}y)
	\geq
	\varepsilon.
\end{align*}
Moreover, due to Assumption \ref{a11} and Lemma \ref{lemma2.4}, we have
\begin{align*}
	S_{\theta}(z,B)
	=&
	\int\int I_{B}(y',x') r_{\theta}(y',x'|x) \nu({\rm d}y') \mu({\rm d}x')
	\\
	\geq &
	\varepsilon \int\int I_{B}(y',x') \lambda_{\theta}({\rm d}x'|y') \nu({\rm d}y')
	=
	\varepsilon \tau_{\theta}(B). 
\end{align*}
Then, standard results in Markov chain theory
(see e.g., \cite[Theorem 16.0.2]{meyn&tweedie})
imply that there exists
a probability measure $\sigma_{\theta}({\rm d}z)$ on ${\cal Z}$
such that
\begin{align*}
	\left|
	S_{\theta}^{n}(z,B)-\sigma_{\theta}(B)
	\right|
	\leq
	\left(
	1 - \varepsilon \tau_{\theta}({\cal Z})
	\right)^{n}
	\leq
	(1-\varepsilon^{2} )^{n}.
\end{align*}
As $\sigma_{\theta}(B)(\zeta'({\cal Z})-\zeta''({\cal Z}) )=0$, we get
\begin{align*}
	&
	\left|
	(S_{\theta}^{n}\zeta')(B) - (S_{\theta}^{n}\zeta'')(B)
	\right|
	\\
	&=
	\left|
	\int
	\left(S_{\theta}^{n} - \sigma_{\theta} \right)(z,B)
	(\zeta'-\zeta'')({\rm d}z)
	\right|
	\\
	&\leq
	\int
	\left|S_{\theta}^{n}-\sigma_{\theta} \right|(z,B)
	|\zeta'-\zeta''|({\rm d}z)
	\\
	&\leq
	(1-\varepsilon^{2} )^{n} \|\zeta'-\zeta''\|.
\end{align*}
Hence, we have
\begin{align}\label{l2.1.3}
	\left\|
	S_{\theta}^{n}\zeta' - S_{\theta}^{n}\zeta''
	\right\|
	\leq
	(1-\varepsilon^{2} )^{n} \|\zeta'-\zeta''\|.
\end{align}

Since $S_{\theta}^{n}(z,{\rm d}z')$ is an element of ${\cal P}({\cal Z} )$, we conclude
$\|S_{\theta}^{n}\zeta\|\leq\|\zeta\|$.
Then, owing to (\ref{l2.1.1}), we have
\begin{align}\label{l2.1.5}
	\left\|S_{\eta}^{n}\zeta \right\|
	\leq &
	\left\|S_{\theta}^{n}\zeta\right\|
	+
	\left\|\left(S_{\eta}^{n}-S_{\theta}^{n} \right) \zeta \right\|
	\nonumber\\
	\leq &
	\left(1 + \tilde{K}_{n}\|\eta-\theta\| \right) \|\zeta\|
	\nonumber\\
	\leq &
	(1 + \tilde{K}_{n}\delta_{2} ) \|\zeta\|
	\leq
	2\|\zeta\|
\end{align}
when $n\leq n_{0}$, 
as $\|\eta-\theta\|<\delta_{2}$,
$\tilde{K}_{n}\delta_{2}\leq\tilde{K}_{n_{0}}\delta_{2}=\delta_{1}/4\leq 1/4$.
Moreover, due to (\ref{l2.1.1}), (\ref{l2.1.3}), we have
\begin{align*}
	\left\|
	S_{\eta}^{n}\zeta' - S_{\eta}^{n}\zeta''
	\right\|
	\leq &
	\left\|
	S_{\theta}^{n}\zeta' - S_{\theta}^{n}\zeta''
	\right\|
	+
	\left\|
	(S_{\eta}^{n}-S_{\theta}^{n} ) (\zeta'-\zeta'')
	\right\|
	\\
	\leq &
	\left((1-\varepsilon^{2} )^{n} + \tilde{K}_{n}\|\eta-\theta\| \right)
	\|\zeta'-\zeta''\|
	\\
	\leq &
	\left(
	(1-\varepsilon^{2} )^{n} + \frac{1}{4}
	\right)
	\|\zeta'-\zeta''\|
\end{align*}
when $n\leq n_{0}$.
Setting $n=n_{0}$, we conclude
\begin{align*}
	\left\|
	S_{\eta}^{n_{0}}\zeta' - S_{\eta}^{n_{0}}\zeta''
	\right\|
	\leq
	\frac{\|\zeta'-\zeta''\|}{2}
\end{align*}
as $(1-\varepsilon^{2} )^{n_{0}}\leq 1/4$.
Since $S_{\eta}^{n}\zeta\in {\cal P}_{c}({\cal Z})$
(see Lemma \ref{lemma2.4} and the remark immediately after its statement),
we have
\begin{align}\label{l2.1.7}
	\left\|
	S_{\eta}^{(k+1)n_{0}} (\zeta'-\zeta'')
	\right\|
	=&
	\left\|
	S_{\eta}^{n_{0}}
	\left(
	S_{\eta}^{kn_{0}}\zeta' - S_{\eta}^{kn_{0}}\zeta''
	\right)
	\right\|
	\nonumber\\
	\leq &
	\frac{1}{2}
	\left\|
	S_{\eta}^{kn_{0}} (\zeta'-\zeta'')
	\right\|.
\end{align}
Iterating (\ref{l2.1.7}), we get
\begin{align}\label{l2.1.9}
	\left\|
	S_{\eta}^{kn_{0}} (\zeta'-\zeta'')
	\right\|
	\leq
	\frac{1}{2^{k}} \|\zeta'-\zeta''\|.
\end{align}
Using (\ref{l2.1.5}), (\ref{l2.1.9}), we conclude
\begin{align}\label{l2.1.21}
	\left\|
	S_{\eta}^{(k+1)n_{0}}\zeta - S_{\eta}^{kn_{0}}\zeta
	\right\|
	=&
	\left\|
	S_{\eta}^{kn_{0}}
	\left(S_{\eta}^{n_{0}}\zeta - \zeta
	\right)
	\right\|
	\nonumber\\
	\leq &
	\frac{1}{2^{k}}
	\left\|
	S_{\eta}^{n_{0}}\zeta - \zeta
	\right\|
	\nonumber\\
	\leq &
	\frac{1}{2^{k}}
	\left(
	\left\|
	S_{\eta}^{n_{0}}\zeta
	\right\|
	+
	\|\zeta\|
	\right)
	\nonumber\\
	\leq &
	\frac{\|\zeta\|}{2^{k-2}}.
\end{align}
Hence, we get
\begin{align}\label{l2.1.23}
	\sum_{k=0}^{\infty}
	\left\|
	S_{\eta}^{(k+1)n_{0}}\zeta - S_{\eta}^{kn_{0}}\zeta
	\right\|
	\leq
	\sum_{k=0}^{\infty} \frac{\|\zeta\|}{2^{k-2} }
	=
	8\|\zeta\|
	<
	\infty.
\end{align}

Let $(S_{\eta}^{\infty}\zeta )({\rm d}z)$ be the measure defined by
\begin{align*}
	(S_{\eta}^{\infty}\zeta )(B)
	=
	\zeta(B)
	+
	\sum_{k=0}^{\infty}
	\left(
	(S_{\eta}^{(k+1)n_{0}}\zeta)(B) - (S_{\eta}^{kn_{0}}\zeta)(B)
	\right).
\end{align*}
Then, due to (\ref{l2.1.23}),
$(S_{\eta}^{\infty}\zeta )({\rm d}z)$ is well-defined and satisfies
$S_{\eta}^{\infty}\zeta\in{\cal P}_{c}({\cal Z})$.
Moreover, owing to (\ref{l2.1.21}), (\ref{l2.1.23}), we have
\begin{align}\label{l2.1.25}
	\left\|
	S_{\eta}^{kn_{0}}\zeta - S_{\eta}^{\infty}\zeta
	\right\|
	=&
	\left\|
	\sum_{j=k}^{\infty}
	\left(
	S_{\eta}^{(j+1)n_{0}}\zeta - S_{\eta}^{jn_{0}}\zeta
	\right)
	\right\|
	\nonumber\\
	\leq &
	\sum_{j=k}^{\infty}
	\left\|
	S_{\eta}^{(j+1)n_{0}}\zeta - S_{\eta}^{jn_{0}}\zeta
	\right\|
	\nonumber\\
	\leq &
	\sum_{j=k}^{\infty} \frac{\|\zeta\|}{2^{j-2} }
	=
	\frac{\|\zeta\|}{2^{k-3}}.
\end{align}
Combining this with (\ref{l2.1.9}), we get
\begin{align*}
	\left\|
	S_{\eta}^{\infty}\zeta'
	-
	S_{\eta}^{\infty}\zeta''
	\right\|
	\leq &
	\left\|
	S_{\eta}^{kn_{0}}\zeta'
	-
	S_{\eta}^{\infty}\zeta'
	\right\|
	+
	\left\|
	S_{\eta}^{kn_{0}}\zeta''
	-
	S_{\eta}^{\infty}\zeta''
	\right\|
	\\
	&+
	\left\|
	S_{\eta}^{kn_{0}}\zeta'
	-
	S_{\eta}^{kn_{0}}\zeta''
	\right\|
	\\
	\leq &
	\frac{\|\zeta'\|+\|\zeta''\|+\|\zeta'-\zeta''\|}{2^{k-3}}.
\end{align*}
Therefore,
$S_{\eta}^{\infty}\zeta'=S_{\eta}^{\infty}\zeta''$
for any $\zeta',\zeta''\in{\cal P}_{c}({\cal Z})$.
Consequently, there exists $\sigma_{\eta}\in{\cal P}_{c}({\cal Z})$ such that
$S_{\eta}^{\infty}\zeta=\sigma_{\eta}$ for any $\zeta\in{\cal P}_{c}({\cal Z})$.
Hence, $S_{\eta}^{\infty}(S_{\eta}^{n}\zeta)=\sigma_{\eta}$,
as $S_{\eta}^{n}\zeta\in{\cal P}_{c}({\cal Z})$.
Then, (\ref{l2.1.5}), (\ref{l2.1.25}) imply
\begin{align}\label{l2.1.27}
	\left\|
	S_{\eta}^{n}\zeta
	-
	\sigma_{\eta}
	\right\|
	=&
	\left\|
	S_{\eta}^{kn_{0}}(S_{\eta}^{n-kn_{0}}\zeta )
	-
	S_{\eta}^{\infty}(S_{\eta}^{n-kn_{0}}\zeta )
	\right\|
	\nonumber\\
	\leq &
	\frac{1}{2^{k-3}}
	\left\|
	S_{\eta}^{n-kn_{0}}\zeta
	\right\|
	\nonumber\\
	\leq &
	\frac{\|\zeta\|}{2^{k-4}}
	\leq
	C_{2}\gamma_{1}^{n} \|\zeta\|
\end{align}
when $(k+1)n_{0} \geq n>kn_{0}$, 
as $2^{-(k-4)}=16\gamma_{1}^{kn_{0}}
\leq (16\gamma_{1}^{-n_{0}} ) \gamma_{1}^{n}
\leq C_{2}\gamma_{1}^{n}$.
Thus, we get
\begin{align*}
	\left|
	S_{\eta}^{n}
	-
	\sigma_{\eta}
	\right|(z,B)
	=&
	\left|
	S_{\eta}^{n}\delta_{z}
	-
	\sigma_{\eta}
	\right|(B)
	\\
	\leq &
	\left\|
	S_{\eta}^{n}\delta_{z}
	-
	\sigma_{\eta}
	\right\|
	\\
	\leq &
	C_{2}\gamma_{1}^{n} \|\delta_{z}\|
	=
	C_{2}\gamma_{1}^{n} 
\end{align*}
by setting $k=\lfloor (n-m)/n_{0} \rfloor$ in (\ref{l2.1.27}). 
\end{proof}

\section{Results Related to Optimal Filter}\label{section1*}

In this section, an analytic (complex-valued) continuation of the optimal filter
is constructed, and its asymptotic properties (exponential forgetting) are studied.
Here, we rely on the following notations.
${\cal B}({\cal X})$, ${\cal P}({\cal X})$, ${\cal M}_{p}({\cal X})$ and
${\cal M}_{c}({\cal X})$ have been defined at the beginning of Section \ref{section2*}.
For $x\in{\cal X}$, $\xi\in{\cal M}_{c}({\cal X})$,
$\|\xi\|$, $|\xi|({\rm d}x')$ and $\delta_{x}({\rm d}x')$
are the norm and measures specified at the beginning of Section \ref{section2*}.
For $\gamma\in(0,1)$, $V_{\gamma}({\cal P}({\cal X}) )$ is
the open $\gamma$-vicinity of ${\cal P}({\cal X})$, i.e.,
\begin{align*}
	V_{\gamma}({\cal P}({\cal X}) )
	=
	\{\xi\in{\cal M}_{c}({\cal X}): \exists\lambda\in{\cal P}({\cal X}), \|\xi-\lambda\|<\gamma \}.
\end{align*}
$R_{\eta,y}({\rm d}x|\xi)$ is the measure defined by
\begin{align}\label{2.51*}
	R_{\eta,y}(B|\xi)
	=
	\int\int I_{B}(x') \tilde{r}_{\eta}(y,x'|x)\mu({\rm d}x')\xi({\rm d}x)
\end{align}
for $\eta\in\mathbb{C}^{d}$, $\xi\in{\cal M}_{c}({\cal X})$, $B\in{\cal B}({\cal X})$,
$y\in{\cal Y}$ ($\tilde{r}_{\eta}(y,x'|x)$ is specified in Lemma \ref{lemma2.4}).
$\Phi_{\eta,y}(\xi)$ is the function defined by
\begin{align}\label{2.7*}
	\Phi_{\eta,y}(\xi)
	=
	\begin{cases}
	\log R_{\eta,y}({\cal X}|\xi),
	&\text{ if } R_{\eta,y}({\cal X}|\xi)\neq 0
	\\
	0, &\text{ otherwise }
	\end{cases}.
\end{align}
$v_{\eta,\boldsymbol y}^{m:n}(x_{m:n})$ and $\varphi_{\eta,\boldsymbol y}^{m:n}$ are the functions
defined by
\begin{align}
	\label{2.53'*}
	&
	v_{\eta,\boldsymbol y}^{m:n}(x_{m:n})
	=
	\prod_{k=m+1}^{n} \tilde{r}_{\eta}(y_{k},x_{k}|x_{k-1} ),
	\\
	&
	\varphi_{\eta,\boldsymbol y}^{m:n}
	=
	\prod_{k=m+1}^{n} \varphi_{\eta}(y_{k} ),
\end{align}
where $x_{m},\dots,x_{n}\in{\cal X}$, $n>m\geq 0$
and $\boldsymbol y = \{y_{n} \}_{n\geq 1}$ is any sequence in ${\cal Y}$.
$r_{\eta,\boldsymbol y}^{m:n}(x'|x)$ is the function defined by
\begin{align}\label{2.53*}
	r_{\eta,\boldsymbol y}^{m:n}(x'|x)
	=&
	\int\int\cdots\int\int
	v_{\eta,\boldsymbol y}^{m:n}(x_{m:n} )
	\nonumber\\
	&\cdot
	\delta_{x'}({\rm d}x_{n})\mu({\rm d}x_{n-1})\cdots\mu({\rm d}x_{m+1})\delta_{x}({\rm d}x_{m}),
\end{align}
where $x,x'\in{\cal X}$.
$R_{\eta,\boldsymbol y}^{m:m}({\rm d}x|\xi)$ and $R_{\eta,\boldsymbol y}^{m:n}({\rm d}x|\xi)$ are the measures defined by
$R_{\eta,\boldsymbol y}^{m:m}(B|\xi)=\xi(B)$
and
\begin{align}\label{2.55*}
	R_{\eta,\boldsymbol y}^{m:n}(B|\xi)
	=
	\int\int I_{B}(x') r_{\eta,\boldsymbol y}^{m:n}(x'|x)
	\mu({\rm d}x')\xi({\rm d}x).
\end{align}
$f_{\eta,\boldsymbol y}^{m:n}(x|\xi)$,
$g_{\eta,\boldsymbol y}^{m:n}(x'|x,\xi)$,
$h_{\eta,\boldsymbol y}^{m:n}(x|x',\xi)$
are the functions defined by
\begin{align}
	&\label{2.57*}
	g_{\eta,\boldsymbol y}^{m:n}(x'|x,\xi)
	\!=\!\!
	\begin{cases}
	r_{\eta,\boldsymbol y}^{m:n}(x'|x) \big/
	R_{\eta,\boldsymbol y}^{m:n}({\cal X}|\xi),
	&\!\!\!\!\text{if } R_{\eta,\boldsymbol y}^{m:n}({\cal X}|\xi)\!\neq\! 0
	\\
	0, &\!\!\!\!\text{otherwise }
	\end{cases}\!,
	\\	
	&\label{2.59*}
	f_{\eta,\boldsymbol y}^{m:n}(x|\xi)
	=
	\int g_{\eta,\boldsymbol y}^{m:n}(x|x'',\xi)\xi({\rm d}x''),
	\\
	&\label{2.61*}
\begin{aligned}[b]	
	h_{\eta,\boldsymbol y}^{m:n}(x'|x,\xi)
	=&
	-
	f_{\eta,\boldsymbol y}^{m:n}(x'|\xi)
	\int g_{\eta,\boldsymbol y}^{m:n}(x''|x,\xi)\mu({\rm d}x'')
	\\
	&+
	g_{\eta,\boldsymbol y}^{m:n}(x'|x,\xi).
\end{aligned}	
\end{align}
$F_{\eta,\boldsymbol y}^{m:m}({\rm d}x|\xi)$ and $F_{\eta,\boldsymbol y}^{m:n}({\rm d}x|\xi)$ are the measures
defined by $F_{\eta,\boldsymbol y}^{m:m}(B|\xi)=\xi(B)$ and
\begin{align}\label{2.5*}
	F_{\eta,\boldsymbol y}^{m:n}(B|\xi)
	=
	\int I_{B}(x) f_{\eta,\boldsymbol y}^{m:n}(x|\xi)\mu({\rm d}x).
\end{align}
Throughout this and later sections,
measures $R_{\eta,\boldsymbol y}^{m:n}({\rm d}x|\xi)$,
$F_{\eta,\boldsymbol y}^{m:n}({\rm d}x|\xi)$
are also denoted by $R_{\eta,\boldsymbol y}^{m:n}(\xi)$,
$F_{\eta,\boldsymbol y}^{m:n}(\xi)$ (short-hand notations),
while $\left\langle R_{\eta,\boldsymbol y}^{m:n}(\xi) \right\rangle$,
$\left\langle F_{\eta,\boldsymbol y}^{m:n}(\xi) \right\rangle$
are defined by
\begin{align}
	&\label{2.5''*}
	\left\langle R_{\eta,\boldsymbol y}^{m:n}(\xi) \right\rangle
	=
	R_{\eta,\boldsymbol y}^{m:n}({\cal X}|\xi),
	\\
	&
	\left\langle F_{\eta,\boldsymbol y}^{m:n}(\xi) \right\rangle
	=
	F_{\eta,\boldsymbol y}^{m:n}({\cal X}|\xi).
	\nonumber
\end{align}

\begin{remark}
When $\theta\!\in\!\Theta$, $\lambda\!\in\!{\cal P}({\cal X})$,
$F_{\theta,\boldsymbol y}^{m:n}(\lambda)$
is the optimal filter for the model
$\left\{ (X_{n}^{\theta,\lambda}, \! Y_{n}^{\theta,\lambda} ) \right\}_{n\geq 0}$, i.e.,
\begin{align*}
	F_{\theta,\boldsymbol y}^{1:n}(B|\lambda)
	=
	P\left(\left. X_{n}^{\theta,\lambda}\in B \right|Y_{1:n}^{\theta,\lambda}=y_{1:n} \right).
\end{align*}
Hence, for $\eta\in\mathbb{C}^{d}$, $\xi\in{\cal M}_{c}({\cal X})$,
$F_{\eta,\boldsymbol y}^{m:n}(\xi)$
can be considered as a complex-valued continuation of the optimal filter.
Consequently, $f_{\theta,\boldsymbol y}^{m:n}(x|\xi)$
can be viewed as a complex-valued continuation of the optimal filtering density.
$h_{\theta,\boldsymbol y}^{m:n}(x'|x,\xi)$ can be described as
the Gateaux derivative of $f_{\theta,\boldsymbol y}^{m:n}(x|\xi)$
with respect to $\xi$
(see (\ref{l1.4.701'}) -- (\ref{l1.4.701'''})).
$h_{\theta,\boldsymbol y}^{m:n}(x'|x,\xi)$ is used to show that
$F_{\eta,\boldsymbol y}^{m:n}(\xi)$ forgets initial condition $\xi$
at an exponential rate (see Lemmas \ref{lemma1.4}, \ref{lemma1.6} and their proofs).
\end{remark}

\begin{lemma}\label{lemma1.21}
Let $\eta$, $\xi$ be any elements of $\mathbb{C}^{d}$, ${\cal M}_{c}({\cal X} )$ (respectively),
while $\boldsymbol y = \{y_{n} \}_{n\geq 1}$ is any sequence in ${\cal Y}$.
Moreover, let $n$, $m$, $k$ be any integers satisfying $n\geq k\geq m$.
Then, the following is true:

(i) $R_{\eta,\boldsymbol y}^{m:n}(\xi) =
R_{\eta,\boldsymbol y}^{k:n}\left(R_{\eta,\boldsymbol y}^{m:k}(\xi) \right)$.
	
(ii) $\left\langle R_{\eta,\boldsymbol y}^{m:n}(\xi) \right\rangle =
\left\langle R_{\eta,\boldsymbol y}^{k:n}\left( F_{\eta,\boldsymbol y}^{m:k}(\xi) \right) \right\rangle
\left\langle R_{\eta,\boldsymbol y}^{m:k}(\xi) \right\rangle$
when $\left\langle R_{\eta,\boldsymbol y}^{m:k}(\xi) \right\rangle \neq 0$.

(iii) $F_{\eta,\boldsymbol y}^{m:n}(\xi) =
F_{\eta,\boldsymbol y}^{k:n}\left( F_{\eta,\boldsymbol y}^{m:k}(\xi) \right)$
when $\left\langle R_{\eta,\boldsymbol y}^{m:k}(\xi) \right\rangle \neq 0$ and
$\left\langle R_{\eta,\boldsymbol y}^{m:n}(\xi) \right\rangle \neq 0$.
\end{lemma}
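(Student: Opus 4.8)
The plan is to establish a Chapman--Kolmogorov type factorisation of the unnormalised transition densities $r_{\eta,\boldsymbol y}^{m:n}(x'|x)$, to transport it to the measures $R_{\eta,\boldsymbol y}^{m:n}(\xi)$ so as to obtain (i), and then to deduce (ii) and (iii) from (i) together with the fact that $F_{\eta,\boldsymbol y}^{m:n}(\xi)$ is merely the normalisation of $R_{\eta,\boldsymbol y}^{m:n}(\xi)$.

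First I would exploit the multiplicative structure of the functions $v_{\eta,\boldsymbol y}^{\cdot:\cdot}$ in (\ref{2.53'*}): for $n>k>m$,
\begin{align*}
	v_{\eta,\boldsymbol y}^{m:n}(x_{m:n})
	=
	v_{\eta,\boldsymbol y}^{m:k}(x_{m:k})\,v_{\eta,\boldsymbol y}^{k:n}(x_{k:n}),
\end{align*}
since $\prod_{j=m+1}^{n}=\big(\prod_{j=m+1}^{k}\big)\big(\prod_{j=k+1}^{n}\big)$. Inserting this into the definition (\ref{2.53*}) of $r_{\eta,\boldsymbol y}^{m:n}(x'|x)$ and integrating out the intermediate states $x_{m+1},\dots,x_{k-1},x_{k+1},\dots,x_{n-1}$ by Fubini's theorem gives
\begin{align*}
	r_{\eta,\boldsymbol y}^{m:n}(x'|x)
	=
	\int r_{\eta,\boldsymbol y}^{k:n}(x'|x'')\,r_{\eta,\boldsymbol y}^{m:k}(x''|x)\,\mu({\rm d}x'')
\end{align*}
for all $x,x'\in{\cal X}$. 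Integrating this identity against $\mu({\rm d}x')$ and $\xi({\rm d}x)$ as in (\ref{2.55*}) and interchanging the order of integration once more, so as to recognise $\int r_{\eta,\boldsymbol y}^{m:k}(x''|x)\,\xi({\rm d}x)$ as the $\mu$-density of $R_{\eta,\boldsymbol y}^{m:k}(\xi)$ (again via (\ref{2.55*})), yields part (i). The degenerate cases $k=m$ and $k=n$ follow directly from the conventions $R_{\eta,\boldsymbol y}^{j:j}(B|\xi)=F_{\eta,\boldsymbol y}^{j:j}(B|\xi)=\xi(B)$.

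For (ii) and (iii) the key preliminary observation is that (\ref{2.57*})--(\ref{2.5*}) and (\ref{2.5''*}) give
\begin{align*}
	F_{\eta,\boldsymbol y}^{m:n}(\xi)
	=
	\frac{R_{\eta,\boldsymbol y}^{m:n}(\xi)}{\left\langle R_{\eta,\boldsymbol y}^{m:n}(\xi)\right\rangle}
	\qquad\text{whenever}\quad\left\langle R_{\eta,\boldsymbol y}^{m:n}(\xi)\right\rangle\neq 0,
\end{align*}
equivalently $R_{\eta,\boldsymbol y}^{m:n}(\xi)=\left\langle R_{\eta,\boldsymbol y}^{m:n}(\xi)\right\rangle F_{\eta,\boldsymbol y}^{m:n}(\xi)$, and that $\xi\mapsto R_{\eta,\boldsymbol y}^{k:n}(\xi)$ and $\xi\mapsto\left\langle R_{\eta,\boldsymbol y}^{k:n}(\xi)\right\rangle$ are linear (clear from (\ref{2.55*})). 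Part (ii) then follows by setting $B={\cal X}$ in (i), substituting $R_{\eta,\boldsymbol y}^{m:k}(\xi)=\left\langle R_{\eta,\boldsymbol y}^{m:k}(\xi)\right\rangle F_{\eta,\boldsymbol y}^{m:k}(\xi)$ (legitimate because $\left\langle R_{\eta,\boldsymbol y}^{m:k}(\xi)\right\rangle\neq 0$) and pulling the scalar through $R_{\eta,\boldsymbol y}^{k:n}$ and $\left\langle\cdot\right\rangle$ by linearity. For (iii) I would start from $F_{\eta,\boldsymbol y}^{m:n}(\xi)=R_{\eta,\boldsymbol y}^{m:n}(\xi)\big/\left\langle R_{\eta,\boldsymbol y}^{m:n}(\xi)\right\rangle$, rewrite the numerator by (i) and by $R_{\eta,\boldsymbol y}^{m:k}(\xi)=\left\langle R_{\eta,\boldsymbol y}^{m:k}(\xi)\right\rangle F_{\eta,\boldsymbol y}^{m:k}(\xi)$, and rewrite the denominator by (ii); the factor $\left\langle R_{\eta,\boldsymbol y}^{m:k}(\xi)\right\rangle$ then cancels, leaving $R_{\eta,\boldsymbol y}^{k:n}\!\big(F_{\eta,\boldsymbol y}^{m:k}(\xi)\big)\big/\big\langle R_{\eta,\boldsymbol y}^{k:n}(F_{\eta,\boldsymbol y}^{m:k}(\xi))\big\rangle$, which is exactly $F_{\eta,\boldsymbol y}^{k:n}\!\big(F_{\eta,\boldsymbol y}^{m:k}(\xi)\big)$ — here (ii) is used a second time, in the form $\big\langle R_{\eta,\boldsymbol y}^{k:n}(F_{\eta,\boldsymbol y}^{m:k}(\xi))\big\rangle=\left\langle R_{\eta,\boldsymbol y}^{m:n}(\xi)\right\rangle\big/\left\langle R_{\eta,\boldsymbol y}^{m:k}(\xi)\right\rangle\neq 0$, so the right-hand side genuinely lands in the nonzero branch of the definition of $F_{\eta,\boldsymbol y}^{k:n}$.

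I expect the one genuinely technical point — the main, if modest, obstacle — to be the repeated appeals to Fubini's theorem for the complex measure $\xi$ and the iterated $\mu$-integrals, which require absolute integrability of the finite products of $\tilde{r}_{\eta}$ entering $v_{\eta,\boldsymbol y}^{m:n}$. On the range of $\eta$ relevant in the sequel this is supplied by Lemma \ref{lemma2.4} (the bound $|\tilde{r}_{\eta}(y,x'|x)|\leq 2|\varphi_{\eta}(y)|$) together with the finiteness of $\mu$ and of $|\xi|$; granting this, every interchange above is justified, and everything else is routine bookkeeping with the definitions (\ref{2.53'*})--(\ref{2.5*}) and with the linearity of $R_{\eta,\boldsymbol y}^{k:n}$ in its measure argument.
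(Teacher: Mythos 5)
Your proposal is correct and follows essentially the same route as the paper: the multiplicative splitting of $v_{\eta,\boldsymbol y}^{m:n}$ giving the Chapman--Kolmogorov identity for $r_{\eta,\boldsymbol y}^{m:n}$, hence (i); then the normalisation identity $F_{\eta,\boldsymbol y}^{m:k}(\xi)=R_{\eta,\boldsymbol y}^{m:k}(\xi)/\langle R_{\eta,\boldsymbol y}^{m:k}(\xi)\rangle$ together with linearity of $R_{\eta,\boldsymbol y}^{k:n}$ and $\langle R_{\eta,\boldsymbol y}^{k:n}(\cdot)\rangle$ in the measure argument to get (ii) and (iii), including the observation that (ii) forces $\langle R_{\eta,\boldsymbol y}^{k:n}(F_{\eta,\boldsymbol y}^{m:k}(\xi))\rangle\neq 0$ so that the nonzero branch of the definition applies. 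The added remark on justifying Fubini is a reasonable refinement but does not change the argument.
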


\begin{proof}
(i) When $k=m$ or $k=n$, (i) is trivially satisfied.
In what follows in this part of the proof, we assume $n>k>m$.

Owing to (\ref{2.51*}), we have
\begin{align*}
	v_{\eta,\boldsymbol y}^{m:n}(x_{m:n})
	=
	v_{\eta,\boldsymbol y}^{k:n}(x_{k:n})
	v_{\eta,\boldsymbol y}^{m:k}(x_{m:k})
\end{align*}
for $x_{m},\dots,x_{n}\in{\cal X}$.
Combining this with (\ref{2.53*}), it is easy to show
\begin{align*}
	r_{\eta,\boldsymbol y}^{m:n}(x'|x)
	=
	\int r_{\eta,\boldsymbol y}^{k:n}(x'|x'')
	r_{\eta,\boldsymbol y}^{m:k}(x''|x) \mu({\rm d}x'')
\end{align*}
for $x,x'\in{\cal X}$.
Then, using (\ref{2.55*}), we conclude
\begin{align*}
	R_{\eta, \boldsymbol y}^{m:n}(B|\xi)
	=&
	\int\int\int I_{B}(x') r_{\eta,\boldsymbol y}^{k:n}(x'|x'')
	r_{\eta,\boldsymbol y}^{m:k}(x''|x)
	\\
	&\cdot
	\mu({\rm d}x'')\mu({\rm d}x')\xi({\rm d}x)
	\\
	=&
	\int\int I_{B}(x') r_{\eta,\boldsymbol y}^{k:n}(x'|x'')
	R_{\eta,\boldsymbol y}^{m:k}({\rm d}x''|\xi) \mu({\rm d}x')
	\\
	=&
	R_{\eta,\boldsymbol y}^{k:n}\left( B| R_{\eta,\boldsymbol y}^{m:k}(\xi) \right)
\end{align*}
for $B\in{\cal B}({\cal X} )$.
Hence, (i) holds when $n>k>m$.

(ii) We assume $\left\langle R_{\eta,\boldsymbol y}^{m:k}(\xi) \right\rangle \neq 0$
(i.e., $R_{\eta,\boldsymbol y}^{m:k}({\cal X}|\xi) \neq 0$).
Then, using (\ref{2.57*}), (\ref{2.59*}), (\ref{2.5*}), we conclude
\begin{align}\label{l1.21.1}
	F_{\eta,\boldsymbol y}^{m:k}(\xi)
	=
	\frac{R_{\eta,\boldsymbol y}^{m:k}(\xi)}{\left\langle R_{\eta,\boldsymbol y}^{m:k}(\xi) \right\rangle }.
\end{align}
Since $\left\langle R_{\eta,\boldsymbol y}^{k:n}(\xi) \right\rangle$ is linear in $\xi$,
we deduce
\begin{align*}
	\left\langle R_{\eta,\boldsymbol y}^{k:n}\left( F_{\eta,\boldsymbol y}^{m:k}(\xi) \right) \right\rangle
	=
	\frac{\left\langle R_{\eta,\boldsymbol y}^{k:n}\left( R_{\eta,\boldsymbol y}^{m:k}(\xi) \right) \right\rangle }
	{\left\langle R_{\eta,\boldsymbol y}^{m:k}(\xi) \right\rangle }.
\end{align*}
Combining this with (i), we get
\begin{align}\label{l1.21.3}
	\left\langle R_{\eta,\boldsymbol y}^{m:n}(\xi) \right\rangle
	=&
	\left\langle R_{\eta,\boldsymbol y}^{k:n}\left( R_{\eta,\boldsymbol y}^{m:k}(\xi) \right) \right\rangle
	\nonumber\\
	=&
	\left\langle R_{\eta,\boldsymbol y}^{k:n}\left( F_{\eta,\boldsymbol y}^{m:k}(\xi) \right) \right\rangle
	\left\langle R_{\eta,\boldsymbol y}^{m:k}(\xi) \right\rangle.
\end{align}
Thus, (ii) is true.

(iii) We assume $\left\langle R_{\eta,\boldsymbol y}^{m:k}(\xi) \right\rangle \neq 0$,
$\left\langle R_{\eta,\boldsymbol y}^{m:n}(\xi) \right\rangle \neq 0$.
Therefore, (ii) implies
$\left\langle R_{\eta,\boldsymbol y}^{k:n}\left( F_{\eta,\boldsymbol y}^{m:k}(\xi) \right) \right\rangle \neq 0$.
Then, using the same arguments as in (ii), we deduce
\begin{align*}
	&
	F_{\eta,\boldsymbol y}^{m:n}(\xi)
	=
	\frac{R_{\eta,\boldsymbol y}^{m:n}(\xi)}{\left\langle R_{\eta,\boldsymbol y}^{m:n}(\xi) \right\rangle },
	\\
	&
	F_{\eta,\boldsymbol y}^{k:n}\left( F_{\eta,\boldsymbol y}^{m:k}(\xi) \right)
	=
	\frac{R_{\eta,\boldsymbol y}^{k:n}\left( F_{\eta,\boldsymbol y}^{m:k}(\xi) \right) }
	{\left\langle R_{\eta,\boldsymbol y}^{k:n}\left( F_{\eta,\boldsymbol y}^{m:k}(\xi) \right) \right\rangle }.
\end{align*}
Combining this with (i) and (\ref{l1.21.1}), (\ref{l1.21.3}), we get
\begin{align*}
	F_{\eta,\boldsymbol y}^{m:n}(\xi)
	=&
	\frac{R_{\eta,\boldsymbol y}^{k:n}\left(R_{\eta,\boldsymbol y}^{m:k}(\xi) \right)}
	{\left\langle R_{\eta,\boldsymbol y}^{k:n}\left( F_{\eta,\boldsymbol y}^{m:k}(\xi) \right) \right\rangle
	\left\langle R_{\eta,\boldsymbol y}^{m:k}(\xi) \right\rangle }
	\\
	=&
	\frac{R_{\eta,\boldsymbol y}^{k:n}\left(F_{\eta,\boldsymbol y}^{m:k}(\xi) \right)}
	{\left\langle R_{\eta,\boldsymbol y}^{k:n}\left( F_{\eta,\boldsymbol y}^{m:k}(\xi) \right) \right\rangle }
	=
	F_{\eta,\boldsymbol y}^{k:n}\left( F_{\eta,\boldsymbol y}^{m:k}(\xi) \right)
\end{align*}
by using again the fact that $R_{\eta,\boldsymbol y}^{k:n}(\xi)$ is linear in $\xi$.
Hence, (iii) holds.
\end{proof}

\begin{lemma}\label{lemma1.1}
Let Assumption \ref{a11} hold.
Then, there exist real numbers $\delta_{3}\in(0,\delta_{1}]$,
$\gamma_{2}\in(0,1)$, $C_{3}\in[1,\infty)$ such that
\begin{align*}
	\left\|
	F_{\theta,\boldsymbol y}^{m:n}(\lambda' )
	-
	F_{\theta,\boldsymbol y}^{m:n}(\lambda'' )
	\right\|
	\leq
	C_{3}\gamma_{2}^{n-m}
	\left\|\lambda'-\lambda''\right\|
\end{align*}
for all $\theta\in\Theta$,
$\lambda',\lambda''\in V_{\delta_{3} }({\cal P}({\cal X}) )\cap {\cal M}_{p}({\cal X})$,
$n\geq m\geq 0$
and any sequence $\boldsymbol y = \{y_{n} \}_{n\geq 1}$ in ${\cal Y}$
($\delta_{1}$ is specified in Lemma \ref{lemma2.4}).
\end{lemma}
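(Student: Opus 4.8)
The plan is to establish the exponential forgetting of the optimal filter $F_{\theta,\boldsymbol y}^{m:n}$ using the classical Dobrushin-type contraction argument adapted to the mixing condition of Assumption~\ref{a11}. First I would observe that it suffices to treat the case $m=0$ and then relabel indices, since for general $m$ one simply shifts the observation sequence. The key is to show that one step of the (unnormalized) filter recursion, followed by normalization, is a strict contraction in total variation norm, with a contraction constant bounded away from $1$ uniformly in $\theta$, $\boldsymbol y$ and in the complex perturbation allowed by the neighborhood $V_{\delta_3}(\mathcal{P}(\mathcal{X}))$.

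The main technical device is the mixing inequality inherited from Assumption~\ref{a11} together with Lemma~\ref{lemma2.4}: writing $\tilde r_\theta = r_\theta$ for real $\theta$, Assumption~\ref{a11} gives
\begin{align*}
	\varepsilon\,\lambda_\theta(B|y) \le \int_B r_\theta(y,x'|x)\,\mu(\mathrm{d}x') \le \frac{1}{\varepsilon}\,\lambda_\theta(B|y)
\end{align*}
for all $x$, so the kernel $x\mapsto R_{\theta,y}(\cdot|\delta_x)$ has a minorization by $\varepsilon^2$ times the probability measure obtained by normalizing $\lambda_\theta(\cdot|y)$, which yields a one-step Dobrushin coefficient at most $1-\varepsilon^2$. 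Iterating over the $n-m$ observation steps gives the geometric factor $(1-\varepsilon^2)^{n-m}$ for probability-measure initial conditions. The normalization does not spoil this because, by Lemma~\ref{lemma1.21}(iii), the composition structure $F_{\eta,\boldsymbol y}^{m:n} = F_{\eta,\boldsymbol y}^{k:n}\circ F_{\eta,\boldsymbol y}^{m:k}$ lets us apply the contraction step by step. I would then set $\gamma_2$ to be an appropriate root of $1-\varepsilon^2$ (to absorb the fact that each single step may not contract but blocks of fixed length do) and $C_3$ to absorb the constants from the first few steps.

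The genuinely delicate point is that the initial conditions $\lambda',\lambda''$ are allowed to lie in $V_{\delta_3}(\mathcal{P}(\mathcal{X}))\cap\mathcal{M}_p(\mathcal{X})$ rather than in $\mathcal{P}(\mathcal{X})$ itself: they are \emph{positive} measures that are merely close to probability measures, so their total masses need not equal $1$ and need not be equal to each other. The standard contraction-of-projective-metric argument must therefore be run carefully. The plan here is to decompose $\lambda'-\lambda'' = (\lambda'-\lambda''(\mathcal X)^{-1}\lambda'(\mathcal X)\,\lambda'') + (\text{mass-correction term})$, or more cleanly, to use the positivity to write $\lambda' = a\mu_1 + \nu_1$ type splittings; but the cleanest route is: for positive measures, $\|F_{\theta,\boldsymbol y}^{m:n}(\lambda')-F_{\theta,\boldsymbol y}^{m:n}(\lambda'')\|$ can be bounded by combining (a) the contraction for the \emph{normalized} versions $\lambda'/\lambda'(\mathcal X)$ and $\lambda''/\lambda''(\mathcal X)$, which are genuine probability measures and to which the $(1-\varepsilon^2)^{n-m}$ bound applies, with (b) a bound on $|\lambda'(\mathcal X)-\lambda''(\mathcal X)| \le \|\lambda'-\lambda''\|$ controlling the discrepancy between the two normalizations. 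Choosing $\delta_3$ small enough that $\lambda'(\mathcal X),\lambda''(\mathcal X)\in(1-\delta_3,1+\delta_3)$ keeps all these ratios bounded, so both contributions are $O(\|\lambda'-\lambda''\|)$ with the geometric factor attached to the dominant one. I expect step (a)—verifying that the Dobrushin minorization survives and that $F_{\theta,\boldsymbol y}^{m:n}$ of a probability measure is again (proportional to) a probability measure so that Lemma~\ref{lemma1.21}(iii) can be invoked at each stage—to be routine given Assumption~\ref{a11} and Lemma~\ref{lemma2.4}, while the bookkeeping in step (b) to handle the non-normalized, possibly-unequal masses is where the real care is needed.
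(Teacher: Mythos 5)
Your proposal is correct in outline and its second half coincides exactly with the paper's proof: the paper also reduces to the normalized measures $\lambda'/\|\lambda'\|$, $\lambda''/\|\lambda''\|$, uses the identity $F_{\theta,\boldsymbol y}^{m:n}(\lambda)=F_{\theta,\boldsymbol y}^{m:n}(\lambda/\|\lambda\|)$ for positive $\lambda$, takes $\delta_{3}=\min\{1/2,\delta_{1}\}$ so that $\|\lambda\|\geq 1/2$ on $V_{\delta_{3}}({\cal P}({\cal X}))\cap{\cal M}_{p}({\cal X})$, and bounds $\bigl\|\lambda'/\|\lambda'\|-\lambda''/\|\lambda''\|\bigr\|$ by a constant times $\|\lambda'-\lambda''\|$ via $|\|\lambda'\|-\|\lambda''\||\leq\|\lambda'-\lambda''\|$ --- precisely your step (b). Where you diverge is the contraction core: the paper does not prove it at all, but imports it from Le Gland--Oudjane (Proposition 4.1, Corollary 4.2) or Tadi\'c--Doucet (2005), which establish exponential stability of the filter via the Hilbert projective metric and Birkhoff's contraction coefficient; you instead sketch a self-contained Dobrushin-minorization argument. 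That route is also standard and workable, but one step of your sketch is looser than you acknowledge: the filter $F_{\theta,\boldsymbol y}^{0:n}(\lambda)$ is \emph{not} the initial law pushed through the naively normalized one-step kernels $x\mapsto R_{\theta,y}(\cdot|\delta_{x})/R_{\theta,y}({\cal X}|\delta_{x})$, because the global normalizer does not factor into the one-step normalizers, and the composition identity of Lemma \ref{lemma1.21}(iii) does not by itself convert a Dobrushin coefficient for those kernels into a total-variation Lipschitz bound for the nonlinear maps $F_{\theta,\boldsymbol y}^{k:n}$. To make your step (a) rigorous you must either pass to the Hilbert metric (where normalization is invisible), or use the backward decomposition of Del Moral--Guionnet, writing $F_{\theta,\boldsymbol y}^{0:n}(\lambda)$ as $\lambda$ propagated through observation-dependent \emph{twisted} Markov kernels, each of which inherits the $\varepsilon^{2}$ minorization from Assumption \ref{a11}. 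With that repair your argument buys a self-contained proof at the cost of redoing known work; the paper's citation is shorter but leans on external results.
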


\begin{proof}
Due to \cite[Proposition 4.1, Corollary 4.2]{legland&oudjane}
(or \cite[Theorem 3.1]{tadic&doucet1}) and Lemma \ref{lemma2.4},
there exist real numbers
$\gamma_{2}\in(0,1)$, $C_{3}\in[1,\infty)$ such that
\begin{align}\label{l1.1.1}
	\left\|
	F_{\theta,\boldsymbol y}^{m:n}(\lambda')
	-
	F_{\theta,\boldsymbol y}^{m:n}(\lambda'')
	\right\|
	\leq
	\frac{C_{3}\gamma_{2}^{n-m} }{4}
	\left\|
	\frac{\lambda'}{\|\lambda'\|}
	-
	\frac{\lambda''}{\|\lambda''\|}
	\right\|
\end{align}
for all $\theta\in\Theta$, $\lambda',\lambda''\in{\cal M}_{p}({\cal X})$,
$n\geq m\geq 0$ and any sequence $\boldsymbol y = \{y_{n} \}_{n\geq 1}$ in ${\cal Y}$. 
We have used here the identity $F_{\theta,\boldsymbol y}^{m:n}(\lambda)=
F_{\theta,\boldsymbol y}^{m:n}(\lambda/\|\lambda\|)$ for
$\lambda\in{\cal M}_{p}({\cal X})$.

Let $\delta_{3}=\min\{1/2,\delta_{1}\}$,
while $\boldsymbol y = \{y_{n} \}_{n\geq 1}$ is any sequence in ${\cal Y}$.
Then, we have
$\|\lambda\|\geq 1-\delta_{3}\geq 1/2$
for $\lambda\in V_{\delta_{3} }({\cal P}({\cal X}) ) \cap {\cal M}_{p}({\cal X})$.
Consequently, (\ref{l1.1.1}) implies
\begin{align*}
	&
	\left\|
	F_{\theta,\boldsymbol y}^{m:n}(\lambda')
	-
	F_{\theta,\boldsymbol y}^{m:n}(\lambda'')
	\right\|
	\\
	&\leq
	\frac{C_{3}\gamma_{2}^{n-m} }{4}
	\left\|
	\frac{\lambda'-\lambda''}{\|\lambda'\|}
	-
	\frac{\lambda'' (\|\lambda'\|-\|\lambda''\| ) }{\|\lambda'\| \|\lambda''\|}
	\right\|
	\\
	&\leq
	\frac{C_{3}\gamma_{2}^{n-m} \|\lambda'-\lambda''\| }{2\|\lambda'\| }
	\\
	&\leq
	C_{3}\gamma_{2}^{n-m} \|\lambda'-\lambda''\|
\end{align*}
for $\theta\in\Theta$,
$\lambda',\lambda''\in V_{\delta_{3} }({\cal P}({\cal X}) ) \cap {\cal M}_{p}({\cal X})$,
$n\geq m\geq 0$, 
as $2\|\lambda'\|\geq 1$, $\|\lambda''/\|\lambda''\|\|=1$,
$|\|\lambda'\|-\|\lambda''\||\leq\|\lambda'-\lambda''\|$.
\end{proof}

\begin{lemma}\label{lemma1.2}
Let Assumptions \ref{a1} and \ref{a12} hold.
Then, the following is true:
\\
(i) $\left\langle R_{\eta,\boldsymbol y}^{m:n}(\xi) \right\rangle$ is analytic
in $\eta$
for all $\eta\in V_{\delta_{1}}(\Theta)$, $\xi\in V_{\delta_{1}}({\cal P}({\cal X} ) )$,
$n\geq m\geq 0$ and any sequence
$\boldsymbol y = \{y_{n} \}_{n\geq 1}$ in ${\cal Y}$
($\delta_{1}$ is specified in Lemma \ref{lemma2.4}).
\\
(ii) There exists a non-decreasing sequence
$\{L_{n} \}_{n\geq 1}$ in $[1,\infty)$ such that
\begin{align*}
	&
	\left|
	\frac{\left\langle R_{\eta,\boldsymbol y}^{m:n}(\xi) \right\rangle }
	{\varphi_{\eta,\boldsymbol y}^{m:n} }
	\right|
	\leq
	L_{n-m},
	\\
	&
	\left|
	\frac{\left\langle R_{\eta',\boldsymbol y}^{m:n}(\xi') \right\rangle }
	{\varphi_{\eta',\boldsymbol y}^{m:n} }
	-
	\frac{\left\langle R_{\eta'',\boldsymbol y}^{m:n}(\xi'') \right\rangle }
	{\varphi_{\eta'',\boldsymbol y}^{m:n} }
	\right|
	\\
	&\leq
	L_{n-m}
	\left( \|\eta'-\eta''\| + \|\xi'-\xi''\| \right)
\end{align*}
for all $\eta,\eta',\eta''\in V_{\delta_{1}}(\Theta)$,
$\xi,\xi',\xi''\in V_{\delta_{1}}({\cal P}({\cal X} ) )$,
$n>m\geq 0$ and any sequence $\boldsymbol y = \{y_{n} \}_{n\geq 1}$
in ${\cal Y}$.
\\
(iii) There exists a non-increasing sequence
$\{\alpha_{n} \}_{n\geq 1}$ in $(0,\delta_{1}]$
such that
\begin{align*}
	\frac{\text{\rm Re}\left\{
	\left\langle R_{\eta,\boldsymbol y}^{m:n}(\xi) \right\rangle
	\right\} }
	{|\varphi_{\theta,\boldsymbol y}^{m:n} | }
	\geq
	\frac{1}{L_{n-m} }
\end{align*}
for all $\eta\in V_{\alpha_{n-m} }(\Theta)$,
$\xi\in V_{\alpha_{n-m} }({\cal P}({\cal X}) )$,
$n>m\geq 0$
and any sequence $\boldsymbol y = \{y_{n} \}_{n\geq 1}$
in ${\cal Y}$.
\\
(iv) There exists a non-decreasing sequence $\{M_{n} \}_{n\geq 1}$ in $[1,\infty)$
such that
\begin{align*}
	&
	\max\left\{
	\left|f_{\eta,\boldsymbol y}^{m:n}(x|\xi) \right|,
	\left|h_{\eta,\boldsymbol y}^{m:n}(x'|x,\xi) \right|
	\right\}
	\leq
	M_{n-m},
	\\
	&
	\left|
	f_{\eta',\boldsymbol y}^{m:n}(x|\xi')
	-
	f_{\eta'',\boldsymbol y}^{m:n}(x|\xi'')
	\right|
	\\
	&\leq
	M_{n-m}
	(\|\eta'-\eta''\|+\|\xi'-\xi''\| ), 
	\\
	&
	\left|
	h_{\eta',\boldsymbol y}^{m:n}(x'|x,\xi')
	-
	h_{\eta'',\boldsymbol y}^{m:n}(x'|x,\xi'')
	\right|
	\\
	&\leq
	M_{n-m}
	(\|\eta'-\eta''\|+\|\xi'-\xi''\| )
\end{align*}
for all $\eta,\eta',\eta''\in V_{\alpha_{n-m} }(\Theta)$,
$\xi,\xi',\xi''\in V_{\alpha_{n-m} }({\cal P}({\cal X}) )$,
$x,x'\in{\cal X}$,
$n>m\geq 0$
and any sequence $\boldsymbol y = \{y_{n} \}_{n\geq 1}$
in ${\cal Y}$.
\end{lemma}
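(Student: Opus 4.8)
The plan is to express every quantity in terms of the single $\varphi$-normalized factor $\kappa_\eta(y,x'|x):=\tilde r_\eta(y,x'|x)/\varphi_\eta(y)$, which by Lemma~\ref{lemma2.4}(ii) and Assumption~\ref{a1}(iv) is analytic in $\eta$ on $V_{\delta_1}(\Theta)$ with $|\kappa_\eta(y,x'|x)|\le 2$ there, uniformly in $(y,x,x')$; Lemma~\ref{lemmaa1} then gives a uniform Lipschitz bound $|\kappa_{\eta'}(y,x'|x)-\kappa_{\eta''}(y,x'|x)|\le(2d/\delta_1)\|\eta'-\eta''\|$. Writing $G_{\eta,\boldsymbol y}^{m:n}(\xi):=\langle R_{\eta,\boldsymbol y}^{m:n}(\xi)\rangle/\varphi_{\eta,\boldsymbol y}^{m:n}$, the definitions (\ref{2.53'*})--(\ref{2.55*}) give
\[
	G_{\eta,\boldsymbol y}^{m:n}(\xi)
	=\int\!\cdots\!\int\prod_{k=m+1}^{n}\kappa_\eta(y_k,x_k|x_{k-1})\,
	\mu({\rm d}x_n)\cdots\mu({\rm d}x_{m+1})\,\xi({\rm d}x_m),
\]
a finite integral, against a product measure of total variation at most $\|\mu\|^{n-m}\|\xi\|$, of a function which --- a product of $n-m$ factors, each analytic in $\eta$ and bounded by $2$ --- is analytic in $\eta$, bounded by $2^{n-m}$, and Lipschitz in $\eta$ with constant at most $(n-m)2^{n-m}d/\delta_1$, all uniformly in $(x_{m:n},\boldsymbol y)$. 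Lemma~\ref{lemmaa1} now yields (i), and, since $\|\xi\|\le 1+\delta_1$ on $V_{\delta_1}({\cal P}({\cal X}))$, the modulus bound of (ii) with $L_{n-m}$ of order $2^{n-m}\|\mu\|^{n-m}$. Linearity of $R_{\eta,\boldsymbol y}^{m:n}(\cdot)$ in its measure argument gives $G_{\eta,\boldsymbol y}^{m:n}(\xi')-G_{\eta,\boldsymbol y}^{m:n}(\xi'')=G_{\eta,\boldsymbol y}^{m:n}(\xi'-\xi'')$, which is bounded by $L_{n-m}\|\xi'-\xi''\|$; combining this with the Lipschitz-in-$\eta$ estimate (at fixed $\xi$) through the triangle inequality gives the increment bound of (ii) after enlarging $L_{n-m}$. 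One takes $\{L_n\}$ non-decreasing.

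For (iii), start at $\eta=\theta\in\Theta$, $\xi=\lambda\in{\cal P}({\cal X})$: here $\tilde r_\theta=r_\theta\ge 0$ (Lemma~\ref{lemma2.4}(i)), so $\langle R_{\theta,\boldsymbol y}^{m:n}(\lambda)\rangle\ge 0$; integrating (\ref{2.53*}) over $x_n,\dots,x_{m+1}$ one index at a time and applying Assumption~\ref{a12} ($\int r_\theta(y,x'|x)\mu({\rm d}x')\ge\gamma|\varphi_\theta(y)|$) at each step gives $\int r_{\theta,\boldsymbol y}^{m:n}(x'|x)\mu({\rm d}x')\ge\gamma^{n-m}|\varphi_{\theta,\boldsymbol y}^{m:n}|$, hence $\mathrm{Re}\{\langle R_{\theta,\boldsymbol y}^{m:n}(\lambda)\rangle\}/|\varphi_{\theta,\boldsymbol y}^{m:n}|=\langle R_{\theta,\boldsymbol y}^{m:n}(\lambda)\rangle/|\varphi_{\theta,\boldsymbol y}^{m:n}|\ge\gamma^{n-m}$. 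For $\eta\in V_\alpha(\Theta)$, $\xi\in V_\alpha({\cal P}({\cal X}))$ with $\|\eta-\theta\|<\alpha$, $\|\xi-\lambda\|<\alpha$, one then bounds $|\langle R_{\eta,\boldsymbol y}^{m:n}(\xi)\rangle-\langle R_{\theta,\boldsymbol y}^{m:n}(\lambda)\rangle|$ and wants it to be at most $\tfrac{1}{2}\gamma^{n-m}|\varphi_{\theta,\boldsymbol y}^{m:n}|$ uniformly in $\boldsymbol y$; granting such an estimate of the form $C_{n-m}\alpha|\varphi_{\theta,\boldsymbol y}^{m:n}|$, one takes $\alpha_{n-m}=\min\{\delta_1,\gamma^{n-m}/(2C_{n-m})\}$ (non-increasing in $n$) and enlarges $L_{n-m}$ so that $\gamma^{n-m}/2\ge 1/L_{n-m}$. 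Obtaining that perturbation estimate --- keeping $\mathrm{Re}\{\langle R_{\eta,\boldsymbol y}^{m:n}(\xi)\rangle\}$ bounded below by $|\varphi_{\theta,\boldsymbol y}^{m:n}|/L_{n-m}$ uniformly over all sequences $\boldsymbol y$ once $\eta$ is complex, when a complex perturbation can in principle both rotate $\langle R_{\eta,\boldsymbol y}^{m:n}(\xi)\rangle$ off the positive real axis and change $|\varphi_{\eta,\boldsymbol y}^{m:n}|$ relative to $|\varphi_{\theta,\boldsymbol y}^{m:n}|$ by a $\boldsymbol y$-dependent amount --- is the main obstacle; it is what forces the $\boldsymbol y$-independent but $n$-shrinking radius $\alpha_{n-m}$, and it is where the bound $|\log|\varphi_\eta(y)||\le\psi(y)$ of Assumption~\ref{a1}(v) must be exploited (e.g.\ through a Borel--Carath\'eodory/Schwarz estimate for $\log\varphi_\eta(y)$) to keep the change in $|\varphi_{\eta,\boldsymbol y}^{m:n}|$ under control.

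Part (iv) is then a consequence of (iii). On $V_{\alpha_{n-m}}(\Theta)\times V_{\alpha_{n-m}}({\cal P}({\cal X}))$, (iii) gives $|R_{\eta,\boldsymbol y}^{m:n}({\cal X}|\xi)|=|\langle R_{\eta,\boldsymbol y}^{m:n}(\xi)\rangle|\ge|\varphi_{\theta,\boldsymbol y}^{m:n}|/L_{n-m}>0$, so the degenerate alternatives in (\ref{2.57*}), (\ref{2.59*}), (\ref{2.61*}) never occur and $g_{\eta,\boldsymbol y}^{m:n}(x'|x,\xi)=r_{\eta,\boldsymbol y}^{m:n}(x'|x)/\langle R_{\eta,\boldsymbol y}^{m:n}(\xi)\rangle$, with $f_{\eta,\boldsymbol y}^{m:n}$ and $h_{\eta,\boldsymbol y}^{m:n}$ then read off from $g_{\eta,\boldsymbol y}^{m:n}$ via (\ref{2.59*}), (\ref{2.61*}). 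In this ratio the $\varphi_{\eta,\boldsymbol y}^{m:n}$ factors of numerator and denominator cancel, leaving $g_{\eta,\boldsymbol y}^{m:n}(x'|x,\xi)=(r_{\eta,\boldsymbol y}^{m:n}(x'|x)/\varphi_{\eta,\boldsymbol y}^{m:n})/G_{\eta,\boldsymbol y}^{m:n}(\xi)$, whose numerator is analytic, bounded by $2^{n-m}\|\mu\|^{n-m-1}$, and uniformly Lipschitz by the $\kappa$-estimates exactly as in the first paragraph, and whose denominator $G_{\eta,\boldsymbol y}^{m:n}(\xi)$ is analytic, uniformly Lipschitz, and (using (iii) together with the $|\log|\varphi_\eta(y)||\le\psi(y)$ control again) bounded below in modulus; the quotient rule, Lemma~\ref{lemmaa1}, the triangle inequality, and $\|\xi\|\le 1+\delta_1$ then deliver the bounds on $|f_{\eta,\boldsymbol y}^{m:n}(x|\xi)|$ and $|h_{\eta,\boldsymbol y}^{m:n}(x'|x,\xi)|$ and their increments with a non-decreasing $\{M_n\}$, the extra product term $f_{\eta,\boldsymbol y}^{m:n}(x'|\xi)\int g_{\eta,\boldsymbol y}^{m:n}(x''|x,\xi)\mu({\rm d}x'')$ in (\ref{2.61*}) being handled by the same product-and-quotient bookkeeping.
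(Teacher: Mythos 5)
Parts (i), (ii) and the outline of (iv) follow the paper's own route: the paper likewise represents $\left\langle R_{\eta,\boldsymbol y}^{m:n}(\xi)\right\rangle/\varphi_{\eta,\boldsymbol y}^{m:n}$ as an integral of the uniformly bounded analytic product $v_{\eta,\boldsymbol y}^{m:n}(x_{m:n})/\varphi_{\eta,\boldsymbol y}^{m:n}$ (your $\kappa$-factorization), applies Lemma \ref{lemmaa1}, and in (iv) observes that once the bracket is non-vanishing the degenerate branches of (\ref{2.57*})--(\ref{2.61*}) never occur, so that $g_{\eta,\boldsymbol y}^{m:n}$, $f_{\eta,\boldsymbol y}^{m:n}$, $h_{\eta,\boldsymbol y}^{m:n}$ are controlled by quotient-rule bookkeeping.

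The genuine gap is in (iii). Your real-parameter lower bound $\big\langle R_{\theta,\boldsymbol y}^{m:n}(\lambda)\big\rangle\ge\gamma^{n-m}\big|\varphi_{\theta,\boldsymbol y}^{m:n}\big|$ is correct, but the perturbation step is only ``granted,'' and the route you propose for it --- comparing $\big|\varphi_{\eta,\boldsymbol y}^{m:n}\big|$ with $\big|\varphi_{\theta,\boldsymbol y}^{m:n}\big|$ via $|\log|\varphi_{\eta}(y)||\le\psi(y)$ and a Borel--Carath\'eodory estimate for $\log\varphi_{\eta}(y)$ --- does not close: $\psi$ is only integrable, not bounded, so no bound on the ratio $\big|\varphi_{\eta,\boldsymbol y}^{m:n}\big|/\big|\varphi_{\theta,\boldsymbol y}^{m:n}\big|$ uniform over all sequences $\boldsymbol y$ is available from the assumptions. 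The obstacle disappears if you never compare $\varphi_{\eta}$ with $\varphi_{\theta}$: perturb the normalized bracket $G_{\eta,\boldsymbol y}^{m:n}(\xi)=\left\langle R_{\eta,\boldsymbol y}^{m:n}(\xi)\right\rangle/\varphi_{\eta,\boldsymbol y}^{m:n}$, for which your own part (ii) already supplies the $\boldsymbol y$-uniform Lipschitz bound $\big|G_{\eta,\boldsymbol y}^{m:n}(\xi)-G_{\theta,\boldsymbol y}^{m:n}(\lambda)\big|\le L_{n-m}\left(\|\eta-\theta\|+\|\xi-\lambda\|\right)$. Since $G_{\theta,\boldsymbol y}^{m:n}(\lambda)\ge\gamma^{n-m}$ is real, this gives $\text{\rm Re}\big\{G_{\eta,\boldsymbol y}^{m:n}(\xi)\big\}\ge\gamma^{n-m}-2L_{n-m}\alpha_{n-m}$, and taking $\alpha_{n-m}$ of order $\gamma^{n-m}/L_{n-m}$ (the paper takes $\alpha_{l}=\delta_{1}/(4\tilde{L}_{l}^{2})$ with $\tilde{L}_{l}\ge\gamma^{-l}$) yields the claim after enlarging $L_{n-m}$; this is precisely the paper's display (\ref{l1.2.27}). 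With (iii) repaired this way, your part (iv) goes through as sketched, and the lower bound on $|G_{\eta,\boldsymbol y}^{m:n}(\xi)|$ needed there also follows directly from (iii), with no further appeal to the $\psi$ control.
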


\begin{proof}
(i) and (ii)
Throughout these parts of the proof, the following notations is used.
$\{\tilde{L}_{l} \}_{l\geq 1}$, $\{L_{l} \}_{l\geq 1}$ are the real numbers
defined by
\begin{align}\label{l1.2.77}
	\tilde{L}_{l}
	=
	\frac{2^{l+1}d }{\delta_{1} }
	\left(\|\mu\| + \frac{1}{\gamma} \right)^{l},
	\;\;\;\;\;\;\;
	L_{l}
	=
	2\tilde{L}_{l}^{2}
\end{align}
for $l\geq 1$, where $\gamma$, $K_{l}$ are specified in Assumption \ref{a12} and Lemma \ref{lemma2.2}.
$m$, $n$ are any integers satisfying $n>m\geq 0$.
In what follows in the proof of (i), (ii), both $m$, $n$ are kept fixed.
$\eta$, $\eta'$, $\eta''$ are any elements in $V_{\delta_{1} }(\Theta)$.
$\xi$, $\xi'$, $\xi''$ are any elements of $V_{\delta_{1} }({\cal P}({\cal X} ) )$.
$x$, $x'$ are any elements of ${\cal X}$,
while $\boldsymbol y = \{y_{n} \}_{n\geq 0}$ is any sequence in ${\cal Y}$.

Using (\ref{2.53*}), (\ref{2.55*}), (\ref{2.5''*}), it is straightforward to verify
\begin{align}
	\label{l1.2.7}
	\frac{\left\langle R_{\eta,\boldsymbol y}^{m:n}(\xi) \right\rangle }
	{\varphi_{\eta,\boldsymbol y}^{m:n} }
	=&
	\int\cdots\int\int
	\frac{v_{\eta,\boldsymbol y}^{m:n}(x_{m:n}) }
	{\varphi_{\eta,\boldsymbol y}^{m:n} }
	\nonumber\\
	&\cdot
	\mu({\rm d}x_{n})\cdots\mu({\rm d}x_{m+1} )\xi({\rm d}x_{m}).
\end{align}
Moreover, Lemma \ref{lemma2.4} yields
\begin{align}\label{l1.2.7'}
	\left|
	\frac{v_{\eta,\boldsymbol y}^{m:n}(x_{m:n}) }
	{\varphi_{\eta,\boldsymbol y}^{m:n} }
	\right|
	=
	\prod_{k=m+1}^{n}
	\left|
	\frac{\tilde{r}_{\eta}(y_{k},x_{k}|x_{k-1} ) }{\varphi_{\eta}(y_{k} ) }
	\right|
	\leq
	2^{n-m}.
\end{align}
Since $v_{\eta,\boldsymbol y}^{m:n}(x_{m:n}) / \varphi_{\eta,\boldsymbol y}^{m:n}$ is analytic in $\eta$
for each $\eta\in V_{\delta_{1}}(\Theta)$, $x_{m},\dots,x_{n}\in{\cal X}$
(due to Assumption \ref{a1}),
Lemma \ref{lemmaa1} (see Appendix \ref{appendix1}) and (\ref{l1.2.7}), (\ref{l1.2.7'})
imply that
$\left\langle R_{\eta,\boldsymbol y}^{m:n}(\xi) \right\rangle / \varphi_{\eta,\boldsymbol y}^{m:n}$
is analytic in $\eta$ for all $\eta\in V_{\delta_{1}}(\Theta)$.
Consequently, $\left\langle R_{\eta,\boldsymbol y}^{m:n}(\xi) \right\rangle$
is analytic in $\eta$ for each $\eta\in V_{\delta_{1}}(\Theta)$.
Hence, (i) holds.

Owing to (\ref{l1.2.7}), (\ref{l1.2.7'}), we have
\begin{align}\label{l1.2.75}
	\left|
	\frac{\left\langle R_{\eta,\boldsymbol y}^{m:n}(\xi) \right\rangle }
	{\varphi_{\eta,\boldsymbol y}^{m:n} }
	\right|
	\leq &
	\int\cdots\int\int
	\left|
	\frac{v_{\eta,\boldsymbol y}^{m:n}(x_{m:n}) }
	{\varphi_{\eta,\boldsymbol y}^{m:n} }
	\right|
	\nonumber\\
	&\cdot
	\mu({\rm d}x_{n})\cdots\mu({\rm d}x_{m+1} )|\xi|({\rm d}x_{m})
	\nonumber\\
	\leq &
	2^{n-m} \|\mu\|^{n-m} \|\xi\|
	\nonumber\\
	\leq &
	\tilde{L}_{n-m}
	\leq 
	L_{n-m}
\end{align}
as $\xi\in V_{\delta_{1}}({\cal P}({\cal X} ) )$ results in $\|\xi\|\leq 1+\delta_{1} \leq 2$.
Using similar arguments, we get
\begin{align}\label{l1.2.75'}
	\left|
	\frac{\left\langle R_{\eta,\boldsymbol y}^{m:n}(\xi') \right\rangle }
	{\varphi_{\eta,\boldsymbol y}^{m:n} }
	-
	\frac{\left\langle R_{\eta,\boldsymbol y}^{m:n}(\xi'') \right\rangle }
	{\varphi_{\eta,\boldsymbol y}^{m:n} }
	\right|
	= &
	\left|
	\frac{\left\langle R_{\eta,\boldsymbol y}^{m:n}(\xi'-\xi'') \right\rangle }
	{\varphi_{\eta,\boldsymbol y}^{m:n} }
	\right|
	\nonumber\\
	\leq &
	2^{n-m} \|\mu\|^{n-m} \|\xi'-\xi''\|
	\nonumber\\
	\leq &
	\tilde{L}_{n-m} \|\xi'-\xi''\|.
\end{align}
Since $\left\langle R_{\eta,\boldsymbol y}^{m:n}(\xi) \right\rangle / \varphi_{\eta,\boldsymbol y}^{m:n}$ is analytic in $\eta$
for each $\eta\in V_{\delta_{1}}(\Theta)$,
Lemma \ref{lemmaa1} and (\ref{l1.2.75}) imply
\begin{align*}
	\left|
	\frac{\left\langle R_{\eta',\boldsymbol y}^{m:n}(\xi) \right\rangle }
	{\varphi_{\eta'',\boldsymbol y}^{m:n} }
	-
	\frac{\left\langle R_{\eta'',\boldsymbol y}^{m:n}(\xi) \right\rangle }
	{\varphi_{\eta'',\boldsymbol y}^{m:n} }
	\right|
	\leq &
	\frac{2^{n-m} d\|\mu\|^{n-m} \|\xi\| \|\eta'-\eta''\| }{\delta_{1} }
	\nonumber\\
	\leq &
	\tilde{L}_{n-m} \|\eta'-\eta''\|.
\end{align*}
Then, we have
\begin{align}\label{l1.2.25}
	&
	\left|
	\frac{\left\langle R_{\eta',\boldsymbol y}^{m:n}(\xi') \right\rangle }
	{\varphi_{\eta'',\boldsymbol y}^{m:n} }
	-
	\frac{\left\langle R_{\eta'',\boldsymbol y}^{m:n}(\xi'') \right\rangle }
	{\varphi_{\eta'',\boldsymbol y}^{m:n} }
	\right|
	\nonumber\\
	&
	\begin{aligned}
	\leq &
	\left|
	\frac{\left\langle R_{\eta',\boldsymbol y}^{m:n}(\xi') \right\rangle }
	{\varphi_{\eta'',\boldsymbol y}^{m:n} }
	-
	\frac{\left\langle R_{\eta'',\boldsymbol y}^{m:n}(\xi') \right\rangle }
	{\varphi_{\eta'',\boldsymbol y}^{m:n} }
	\right|
	\\
	&+
	\left|
	\frac{\left\langle R_{\eta'',\boldsymbol y}^{m:n}(\xi') \right\rangle }
	{\varphi_{\eta'',\boldsymbol y}^{m:n} }
	-
	\frac{\left\langle R_{\eta'',\boldsymbol y}^{m:n}(\xi'') \right\rangle }
	{\varphi_{\eta'',\boldsymbol y}^{m:n} }
	\right|
	\end{aligned}
	\nonumber\\
	\nonumber\\
	&\leq
	\tilde{L}_{n-m}\left( \|\eta'-\eta''\| + \|\xi'-\xi''\| \right).
\end{align}
Using (\ref{l1.2.75}), (\ref{l1.2.25}), we conclude that (ii) is true.

(iii) and (iv)
Throughout these parts of the proof, we use the following notations.
$\{\tilde{L}_{l} \}_{l\geq 1}$ has the same meaning as in (\ref{l1.2.77}),
while $\{\alpha_{l} \}_{l\geq 1}$,
$\{\tilde{M}_{l} \}_{l\geq 1}$, $\{M_{l} \}_{l\geq 1}$ are the numbers defined by
\begin{align*}
	\alpha_{l}
	=
	\frac{\delta_{1}}{4\tilde{L}_{l}^{2} },
	\;\;\;\;\;\;\;
	\tilde{M}_{l}
	=
	10\tilde{L}_{l}^{4},
	\;\;\;\;\;\;\;
	M_{l}=5\tilde{M}_{l}^{2}(\|\mu\|+1).
\end{align*}
$m$, $n$ are any integers satisfying $n>m\geq 0$.
In what follows in the proof of (iii), (iv), both $m$, $n$ are kept fixed.
$\eta$, $\eta'$, $\eta''$ are any elements of $V_{\alpha_{n-m} }(\Theta)$,
while $\theta$ is any element of $\Theta$ satisfying
$\|\eta-\theta\|<\alpha_{n-m}$.
$\xi$, $\xi'$, $\xi''$ are any elements of $V_{\alpha_{n-m} }({\cal P}({\cal X} ) )$,
while $\lambda$ is any element of ${\cal P}({\cal X} )$ satisfying
$\|\xi-\lambda\|<\alpha_{n-m}$.
$x$, $x'$ are any elements of ${\cal X}$,
while $\boldsymbol y = \{y_{n} \}_{n\geq 1}$ is any sequence in ${\cal Y}$.

Using Lemma \ref{lemma2.4} and (\ref{2.53*}), it is straightforward to verify
\begin{align*}
	\big\langle
	R_{\theta,\boldsymbol y}^{m:k+1}(\lambda)
	\big\rangle
	\!=\!\!
	\begin{aligned}[t]
	&
	\int\!\cdots\!\int\!\int\!
	\left(
	\int\! r_{\theta}(y_{k+1},x_{k+1}|x_{k}) \mu({\rm d}x_{k+1})
	\right)
	\\
	&\cdot
	v_{\theta,\boldsymbol y}^{m:k}(x_{m:k}) \mu({\rm d}x_{k})\cdots\mu({\rm d}x_{m+1})\lambda({\rm d}x_{m})
	\end{aligned}
\end{align*}
for $k>m$.
Consequently, Assumption \ref{a12} yields
\begin{align}\label{l1.2.901}
	\big\langle
	R_{\theta,\boldsymbol y}^{m:k+1}(\lambda)
	\big\rangle
	\geq &
	\begin{aligned}[t]
	&
	\gamma|\varphi_{\theta}(y_{k+1}) |
	\int\cdots\int\int
	v_{\theta,\boldsymbol y}^{m:k}(x_{m:k})
	\\
	&\;\;\;\cdot
	\mu({\rm d}x_{k})\cdots\mu({\rm d}x_{m+1})\lambda({\rm d}x_{m})
	\end{aligned}
	\nonumber\\
	=&
	\gamma|\varphi_{\theta}(y_{k+1}) |
	\left\langle
	R_{\theta,\boldsymbol y}^{m:k}(\lambda)
	\right\rangle.
\end{align}
The same arguments also imply
\begin{align*}
	\big\langle
	R_{\theta,\boldsymbol y}^{m:m+1}(\lambda)
	\big\rangle
	\!=\!\! &
	\int\!\!\left(\!\int\! r_{\theta}(y_{m+1},x_{m+1}|x_{m})\mu({\rm d}x_{m+1})\!\right)
	\!\!\lambda({\rm d}x_{m})
	\\
	\geq &
	\gamma|\varphi_{\theta}(y_{m+1}) |
	\|\lambda\|
	=
	\gamma|\varphi_{\theta}(y_{m+1}) |.
\end{align*}
Then, iterating (\ref{l1.2.901}), we get
\begin{align*}
	\big\langle
	R_{\theta,\boldsymbol y}^{m:k+1}(\lambda)
	\big\rangle
	\geq &
	\gamma^{k-m-1}
	\left(\prod_{l=m+2}^{k+1}|\varphi_{\theta}(y_{l}) | \right)
	\big\langle
	R_{\theta,\boldsymbol y}^{m:m+1}(\lambda)
	\big\rangle
	\\
	\geq &
	\gamma^{k-m}
	\left(\prod_{l=m+1}^{k+1}|\varphi_{\theta}(y_{l}) | \right)
	=
	\gamma^{k-m}|\varphi_{\theta,\boldsymbol y}^{m:n}|.
\end{align*}
Hence, we have
\begin{align*}
	\frac{\big\langle
	R_{\theta,\boldsymbol y}^{m:n}(\lambda)
	\big\rangle }
	{|\varphi_{\theta,\boldsymbol y}^{m:n}|}
	\geq
	\frac{1}{\tilde{L}_{n-m}}
\end{align*}
as $\tilde{L}_{n-m}\geq\gamma^{-(n-m)}$.
Combining this with (\ref{l1.2.25}), we get
\begin{align}\label{l1.2.27}
	\frac{\text{\rm Re}\left\{
	\left\langle R_{\eta,\boldsymbol y}^{m:n}(\xi) \right\rangle
	\right\} }
	{|\varphi_{\eta,\boldsymbol y}^{m:n}|}
	\!\geq\! &
	\frac{\big\langle
	R_{\theta,\boldsymbol y}^{m:n}(\lambda)
	\big\rangle }
	{|\varphi_{\theta,\boldsymbol y}^{m:n}|}
	\!-\!
	\left|
	\frac{\left\langle R_{\eta,\boldsymbol y}^{m:n}(\xi) \right\rangle }
	{\varphi_{\eta,\boldsymbol y}^{m:n}}
	\!-\!
	\frac{\big\langle R_{\theta,\boldsymbol y}^{m:n}(\lambda) \big\rangle }
	{\varphi_{\theta,\boldsymbol y}^{m:n} }
	\right|
	\nonumber\\
	\geq &
	\frac{1}{\tilde{L}_{n-m} }
	-
	\tilde{L}_{n-m} \left(\|\eta-\theta\| + \|\xi-\lambda \| \right)
	\nonumber\\
	\geq &
	\frac{1}{\tilde{L}_{n-m} }
	-
	2\tilde{L}_{n-m} \alpha_{n-m}
	\nonumber\\
	\geq &
	\frac{1}{2\tilde{L}_{n-m} }
	\geq
	\frac{1}{2\tilde{L}_{n-m} }
\end{align}
as $\|\eta-\theta\|<\alpha_{n-m}$, $\|\xi-\lambda\|<\alpha_{n-m}$.

Using (\ref{2.53*}), it is straightforward to verify
\begin{align}
	\label{l1.2.501'}
	\frac{r_{\eta,\boldsymbol y}^{m:n}(x'|x) }
	{\varphi_{\eta,\boldsymbol y}^{m:n} }
	\!=\!&
	\int\!\int\!\cdots\!\int\!\int
	\frac{v_{\eta,\boldsymbol y}^{m:n}(x_{m:n}) }
	{\varphi_{\eta,\boldsymbol y}^{m:n} }
	\nonumber\\
	&\cdot
	\delta_{x'}({\rm d}x_{n})\mu({\rm d}x_{n-1})\cdots\mu({\rm d}x_{m+1} )\delta_{x}({\rm d}x_{m}).
\end{align}
Consequently, Assumption \ref{a1}, Lemma \ref{lemmaa1} and (\ref{l1.2.7'})
imply that $r_{\eta,\boldsymbol y}^{m:n}(x'|x) / \varphi_{\eta,\boldsymbol y}^{m:n}$
is analytic in $\eta$ for each $\eta\in V_{\delta_{1}}(\Theta)$.
Therefore, $r_{\eta,\boldsymbol y}^{m:n}(x'|x)$
is analytic in $\eta$ for all $\eta\in V_{\delta_{1}}(\Theta)$.
Since $\left\langle R_{\eta,\boldsymbol y}^{m:n}(\xi) \right\rangle$
is non-zero and analytic in $\eta$ for all $\eta\in V_{\delta_{1}}(\Theta)$,
we then conclude from (\ref{2.57*}), (\ref{l1.2.27}) that $g_{\eta,\boldsymbol y}^{m:n}(x'|x,\xi)$
is analytic in $\eta$ for all $\eta\in V_{\delta_{1}}(\Theta)$.

Owing to (\ref{l1.2.7'}), (\ref{l1.2.501'}), we have
\begin{align}\label{l1.2.501}
	\left|
	\frac{r_{\eta,\boldsymbol y}^{m:n}(x'|x) }
	{\varphi_{\eta,\boldsymbol y}^{m:n} }
	\right|
	\!\leq\! &
	\int\!\int\!\cdots\!\int\!\int
	\left|
	\frac{v_{\eta,\boldsymbol y}^{m:n}(x_{m:n}) }
	{\varphi_{\eta,\boldsymbol y}^{m:n} }
	\right|
	\nonumber\\
	&\cdot
	\delta_{x'}({\rm d}x_{n})\mu({\rm d}x_{n-1})\cdots\mu({\rm d}x_{m+1} )\delta_{x}({\rm d}x_{m})
	\nonumber\\
	\leq &
	2^{n-m}\|\delta_{x}\|\|\delta_{x'}\|\|\mu\|^{n-m-1}
	\leq 
	\tilde{L}_{n-m}.
\end{align}
Then, (\ref{2.57*}), (\ref{l1.2.27}) imply
\begin{align}\label{l1.2.41}
	\left|
	g_{\eta,\boldsymbol y}^{m:n}(x'|x,\xi)
	\right|
	=
	\left|
	\frac{r_{\eta,\boldsymbol y}^{m:n}(x'|x) }
	{\left\langle R_{\eta,\boldsymbol y}^{m:n}(\xi) \right\rangle }
	\right|
	\leq
	2 \tilde{L}_{n-m}^{2}
	\leq
	\tilde{M}_{n-m}
\end{align}
as $\left|\left\langle R_{\eta,\boldsymbol y}^{m:n}(\xi) \right\rangle \right| \geq
\text{\rm Re}\left(\left\langle R_{\eta,\boldsymbol y}^{m:n}(\xi) \right\rangle \right)
>0$.
Consequently, (\ref{2.59*}) yields
\begin{align}\label{l1.2.43}
	\left|
	f_{\eta,\boldsymbol y}^{m:n}(x|\xi)
	\right|
	\leq &
	\int
	\left|
	g_{\eta,\boldsymbol y}^{m:n}(x|x',\xi)
	\right|
	|\xi|({\rm d}x')
	\nonumber\\
	\leq &
	\tilde{M}_{n-m} \|\xi\|
	\leq
	2 \tilde{M}_{n-m}
	\leq
	M_{n-m}
\end{align}
as $\xi\in V_{\alpha_{n-m} }({\cal P}({\cal X}) )$ results in $\|\xi\|\leq 1+\alpha_{n-m}\leq 2$.
Similarly, we have
\begin{align}\label{l1.2.45}
	\int
	\left|
	g_{\eta,\boldsymbol y}^{m:n}(x'|x,\xi)
	\right|
	\mu({\rm d}x')
	\leq
	\tilde{M}_{n-m} \|\mu\|.
\end{align}
Combining this with (\ref{2.61*}), (\ref{l1.2.41}), (\ref{l1.2.43}), we get
\begin{align}\label{l1.2.71}
	\left|
	h_{\eta,\boldsymbol y}^{m:n}(x'|x,\xi)
	\right|
	\leq &
	\left|
	f_{\eta,\boldsymbol y}^{m:n}(x'|\xi)
	\right|
	\int
	\left|
	g_{\eta,\boldsymbol y}^{m:n}(x''|x,\xi)
	\right|
	\mu({\rm d}x'')
	\nonumber\\
	&+
	\left|
	g_{\eta,\boldsymbol y}^{m:n}(x'|x,\xi)
	\right|
	\nonumber\\
	\leq &
	\tilde{M}_{n-m} + 2\tilde{M}_{n-m}^{2}\|\mu\|
	\leq 
	M_{n-m}.
\end{align}

Since $g_{\eta,\boldsymbol y}^{m:n}(x'|x,\xi)$ is analytic in $\eta$
for each $\eta\in V_{\alpha_{n-m}}(\Theta)$,
Lemma \ref{lemmaa1} and (\ref{l1.2.41}) imply
\begin{align}\label{l1.2.525'}
	\left|
	g_{\eta',\boldsymbol y}^{m:n}(x'|x,\xi)
	-
	g_{\eta'',\boldsymbol y}^{m:n}(x'|x,\xi)
	\right|
	\leq &
	\frac{2d\tilde{L}_{n-m}^{2}\|\eta'-\eta''\|}{\alpha_{n-m}}
	\nonumber\\
	\leq &
	\tilde{M}_{n-m}\|\eta'-\eta''\|.
\end{align}
Moreover, (\ref{l1.2.75'}), (\ref{l1.2.27}), (\ref{l1.2.41}) yield
\begin{align}\label{l1.2.525''}
	&
	\left|
	g_{\eta,\boldsymbol y}^{m:n}(x'|x,\xi')
	-
	g_{\eta,\boldsymbol y}^{m:n}(x'|x,\xi'')
	\right|
	\nonumber\\
	&=
	\left|
	g_{\eta,\boldsymbol y}^{m:n}(x'|x,\xi')
	\right|
	\left|
	\frac{\left\langle R_{\eta,\boldsymbol y}^{m:n}(\xi') \right\rangle
	-
	\left\langle R_{\eta,\boldsymbol y}^{m:n}(\xi'') \right\rangle }
	{\left\langle R_{\eta,\boldsymbol y}^{m:n}(\xi'') \right\rangle }
	\right|
	\nonumber\\
	&\leq
	2\tilde{L}_{n-m}^{4}\|\xi'-\xi''\|
	\leq 
	\tilde{M}_{n-m}\|\xi'-\xi''\|.
\end{align}
Combining (\ref{l1.2.525'}), (\ref{l1.2.525''}), we get
\begin{align}\label{l1.2.525}
	&
	\left|
	g_{\eta',\boldsymbol y}^{m:n}(x'|x,\xi')
	-
	g_{\eta'',\boldsymbol y}^{m:n}(x'|x,\xi'')
	\right|
	\nonumber\\
	&
	\begin{aligned}
	\leq &
	\left|
	g_{\eta',\boldsymbol y}^{m:n}(x'|x,\xi')
	-
	g_{\eta'',\boldsymbol y}^{m:n}(x'|x,\xi')
	\right|
	\\
	&+
	\left|
	g_{\eta'',\boldsymbol y}^{m:n}(x'|x,\xi')
	-
	g_{\eta'',\boldsymbol y}^{m:n}(x'|x,\xi'')
	\right|
	\end{aligned}
	\nonumber\\
	&\leq
	\tilde{M}_{n-m} (\|\eta'-\eta''\| + \|\xi'-\xi''\| ).
\end{align}
Consequently, (\ref{2.59*}), (\ref{l1.2.41}) imply
\begin{align}\label{l1.2.73}
	&
	\left|
	f_{\eta',\boldsymbol y}^{m:n}(x|\xi')
	-
	f_{\eta'',\boldsymbol y}^{m:n}(x|\xi'')
	\right|
	\nonumber\\
	&
	\begin{aligned}
	\leq &
	\int
	\left|
	g_{\eta',\boldsymbol y}^{m:n}(x|x',\xi')
	-
	g_{\eta'',\boldsymbol y}^{m:n}(x|x',\xi'')
	\right|
	|\xi'|({\rm d}x')
	\\
	&+
	\int
	\left|
	g_{\eta'',\boldsymbol y}^{m:n}(x|x',\xi'')
	\right|
	|\xi'-\xi''|({\rm d}x')
	\end{aligned}
	\nonumber\\
	&\leq
	\tilde{M}_{n-m} \|\xi'\| (\|\eta'-\eta''\|+\|\xi'-\xi''\|)
	+
	\tilde{M}_{n-m} \|\xi'-\xi''\|
	\nonumber\\
	&\leq
	3\tilde{M}_{n-m} (\|\eta'-\eta''\|+\|\xi'-\xi''\|)
	\nonumber\\
	&\leq
	M_{n-m} (\|\eta'-\eta''\|+\|\xi'-\xi''\|)
\end{align}
as $\|\xi'\|\leq 1+\alpha_{n-m}\leq 2$.
Similarly, we get
\begin{align*}
	&
	\int
	\left|
	g_{\eta',\boldsymbol y}^{m:n}(x'|x,\xi')
	-
	g_{\eta'',\boldsymbol y}^{m:n}(x'|x,\xi'')
	\right|
	\mu({\rm d}x')
	\\
	&\leq
	\tilde{M}_{n-m} \|\mu\| (\|\eta'-\eta''\|+\|\xi'-\xi''\|).
\end{align*}
Combining this with (\ref{2.61*}), (\ref{l1.2.43}), (\ref{l1.2.45}), (\ref{l1.2.525}), (\ref{l1.2.73}),
we get
\begin{align}\label{l1.2.91}
	&
	\left|
	h_{\eta',\boldsymbol y}^{m:n}(x'|x,\xi')
	-
	h_{\eta'',\boldsymbol y}^{m:n}(x'|x,\xi'')
	\right|
	\nonumber\\
	&
	\begin{aligned}
	\leq &
	\left|
	g_{\eta',\boldsymbol y}^{m:n}(x'|x,\xi')
	-
	g_{\eta'',\boldsymbol y}^{m:n}(x'|x,\xi'')
	\right|
	\\
	&\!+\!\!
	\left|
	f_{\eta',\boldsymbol y}^{m:n}(x'|\xi')
	-
	f_{\eta'',\boldsymbol y}^{m:n}(x'|\xi'')
	\right|
	\int
	\left|
	g_{\eta',\boldsymbol y}^{m:n}(x''|x,\xi')
	\right|
	\mu({\rm d}x'')
	\\
	&\!+\!\!
	\left|
	f_{\eta'',\boldsymbol y}^{m:n}(x'|\xi'')
	\right|
	\!\int\!
	\left|
	g_{\eta',\boldsymbol y}^{m:n}(x''|x,\xi')
	\!-\!
	g_{\eta'',\boldsymbol y}^{m:n}(x''|x,\xi'')
	\right|
	\mu({\rm d}x'')
	\end{aligned}
	\nonumber\\
	&\leq
	(\tilde{M}_{n-m} + 5\tilde{M}_{n-m}^{2}\|\mu\| )
	(\|\eta'-\eta''\|+\|\xi'-\xi''\| )
	\nonumber\\
	&\leq
	M_{n-m} (\|\eta'-\eta''\|+\|\xi'-\xi''\| ).
\end{align}
Using (\ref{l1.2.27}), (\ref{l1.2.43}), (\ref{l1.2.71}) -- (\ref{l1.2.91}),
we conclude that (iii), (iv) hold.
\end{proof}

\begin{lemma}\label{lemma1.3}
Let Assumptions \ref{a1} and \ref{a12} hold. Then, the following is true:

(i) There exists a real number $\delta_{4}\in(0,\delta_{1}]$ such that
$\text{\rm Re}\left\{ R_{\eta,y}({\cal X}|\xi) \right\}> 0$
for all $\eta\in V_{\delta_{4}}(\Theta)$,
$\xi\in V_{\delta_{4}}({\cal P}({\cal X}) )$,
$y\in{\cal Y}$ ($\delta_{1}$ is specified in Lemma \ref{lemma2.4}).

(ii) There exists a real number $C_{4}\in[1,\infty)$ such that
\begin{align*}
	&
	\left|\Phi_{\eta,y}(\xi) \right|
	\leq
	C_{4}\left(1+\psi(y) \right),
	\\
	&
	\left|
	\Phi_{\eta',y}(\xi')
	-
	\Phi_{\eta'',y}(\xi'')
	\right|
	\\
	&\leq
	C_{4} \left(1+\psi(y) \right) \left(\|\eta'-\eta''\| + \|\xi'-\xi''\| \right)
\end{align*}
for all $\eta,\eta',\eta''\in V_{\delta_{4}}(\Theta)$,
$\xi,\xi',\xi''\in V_{\delta_{4}}({\cal P}({\cal X}) )$,
$y\in{\cal Y}$.
\end{lemma}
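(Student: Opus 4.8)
The plan is to obtain Lemma~\ref{lemma1.3} from Lemmas~\ref{lemma2.4} and~\ref{lemma1.2} specialised to $n-m=1$, together with elementary properties of the principal logarithm. Since $R_{\eta,y}({\cal X}|\xi)=\langle R_{\eta,\boldsymbol y}^{0:1}(\xi)\rangle$ for any $\boldsymbol y=\{y_{n}\}_{n\ge1}$ with $y_{1}=y$ (immediate from (\ref{2.51*}), (\ref{2.53*}), (\ref{2.55*}), as $r_{\eta,\boldsymbol y}^{0:1}(x'|x)=\tilde{r}_{\eta}(y_{1},x'|x)$), part~(i) follows from Lemma~\ref{lemma1.2}(iii) with $n-m=1$: it supplies $\alpha_{1}\in(0,\delta_{1}]$ with
\begin{align*}
	\text{\rm Re}\{R_{\eta,y}({\cal X}|\xi)\}
	\ge|\varphi_{\theta}(y)|/L_{1}
	\ge e^{-\psi(y)}/L_{1}>0
\end{align*}
for $\eta\in V_{\alpha_{1}}(\Theta)$, $\xi\in V_{\alpha_{1}}({\cal P}({\cal X}))$, $y\in{\cal Y}$, where $\theta\in\Theta$ is any point with $\|\eta-\theta\|<\alpha_{1}$ (we used $|\varphi_{\theta}(y)|\ge e^{-\psi(y)}$ and $\varphi_{\theta}(y)\ne0$ from Assumption~\ref{a1}(iv),(v)). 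We take $\delta_{4}\le\alpha_{1}$, shrinking it further below.

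For part~(ii) we work on the convex set $V_{\delta_{4}}(\Theta)\times V_{\delta_{4}}({\cal P}({\cal X}))$, on which, by~(i), $R_{\eta,y}({\cal X}|\xi)$ lies in the open right half-plane $H=\{z\in\mathbb{C}:\text{\rm Re}\,z>0\}$, so $\Phi_{\eta,y}(\xi)=\log R_{\eta,y}({\cal X}|\xi)$ is the holomorphic principal logarithm. The bound on $|\Phi_{\eta,y}(\xi)|$ follows from $|\log z|\le|\log|z||+\pi/2$ on $H$: by Lemma~\ref{lemma2.4}(ii) and Assumption~\ref{a1}(v), $|R_{\eta,y}({\cal X}|\xi)|\le2\|\mu\|\,\|\xi\|\,|\varphi_{\eta}(y)|\le4\|\mu\|e^{\psi(y)}$ (using $\|\xi\|\le1+\delta_{4}\le2$ and $|\varphi_{\eta}(y)|\le e^{\psi(y)}$), while $|R_{\eta,y}({\cal X}|\xi)|\ge\text{\rm Re}\{R_{\eta,y}({\cal X}|\xi)\}\ge e^{-\psi(y)}/L_{1}$; hence $|\log|R_{\eta,y}({\cal X}|\xi)||\le\log(4\|\mu\|L_{1})+\psi(y)$ and $|\Phi_{\eta,y}(\xi)|\le C_{4}(1+\psi(y))$ for a suitable $C_{4}$.

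For the Lipschitz estimate, set $G_{\eta,y}(\xi)=R_{\eta,y}({\cal X}|\xi)/\varphi_{\eta}(y)=\int\!\int(\tilde{r}_{\eta}(y,x'|x)/\varphi_{\eta}(y))\,\mu({\rm d}x')\,\xi({\rm d}x)$, which by Lemma~\ref{lemma2.4}(ii) and Assumption~\ref{a1} is analytic in $\eta$ on $V_{\delta_{1}}(\Theta)$, linear in $\xi$, and satisfies $|G_{\eta,y}(\xi)|\le2\|\mu\|\,\|\xi\|$; moreover $|G_{\theta,y}(\lambda)|=R_{\theta,y}({\cal X}|\lambda)/|\varphi_{\theta}(y)|\ge\gamma$ for $\theta\in\Theta$, $\lambda\in{\cal P}({\cal X})$ by Lemma~\ref{lemma2.4}(i) and Assumption~\ref{a12}, so (using the Lipschitz bound of Lemma~\ref{lemma1.2}(ii) and shrinking $\delta_{4}\le\gamma/(4L_{1})$) $|G_{\eta,y}(\xi)|\ge\gamma/2$ on $V_{\delta_{4}}(\Theta)\times V_{\delta_{4}}({\cal P}({\cal X}))$. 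Given $\eta',\eta''\in V_{\delta_{4}}(\Theta)$, $\xi',\xi''\in V_{\delta_{4}}({\cal P}({\cal X}))$, put $(\eta(t),\xi(t))=((1-t)\eta''+t\eta',(1-t)\xi''+t\xi')$; since this segment lies in the domain (where $R_{\eta,y}({\cal X}|\xi)$ is $H$-valued, analytic in $\eta$ and affine in $\xi$), $\Phi$ is $C^{1}$ along it and
\begin{align*}
	\Phi_{\eta',y}(\xi')-\Phi_{\eta'',y}(\xi'')
	=\int_{0}^{1}
	\frac{\nabla_{\eta}R_{\eta(t),y}({\cal X}|\xi(t))\cdot(\eta'-\eta'')+R_{\eta(t),y}({\cal X}|\xi'-\xi'')}
	{R_{\eta(t),y}({\cal X}|\xi(t))}\,{\rm d}t.
\end{align*}
The $\xi$-term equals $G_{\eta(t),y}(\xi'-\xi'')/G_{\eta(t),y}(\xi(t))$, of modulus $\le(4\|\mu\|/\gamma)\|\xi'-\xi''\|$ (the factor $\varphi_{\eta(t)}(y)$ cancels). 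Writing $R_{\eta,y}({\cal X}|\xi)=\varphi_{\eta}(y)G_{\eta,y}(\xi)$ we have $\partial_{\eta_{j}}R_{\eta,y}({\cal X}|\xi)/R_{\eta,y}({\cal X}|\xi)=\partial_{\eta_{j}}\varphi_{\eta}(y)/\varphi_{\eta}(y)+\partial_{\eta_{j}}G_{\eta,y}(\xi)/G_{\eta,y}(\xi)$; here $|\partial_{\eta_{j}}G_{\eta,y}(\xi)/G_{\eta,y}(\xi)|$ is bounded by a constant via a Cauchy estimate (Lemma~\ref{lemmaa1}, using $|G_{\eta,y}(\xi)|\le2\|\mu\|\,\|\xi\|$ on $V_{\delta_{1}}(\Theta)$) and $|G_{\eta,y}(\xi)|\ge\gamma/2$, while $|\partial_{\eta_{j}}\varphi_{\eta}(y)/\varphi_{\eta}(y)|=|\partial_{\eta_{j}}\log\varphi_{\eta}(y)|\le C(d)\psi(y)/(\delta-\delta_{4})$ by the interior gradient estimate for the harmonic function $\log|\varphi_{\eta}(y)|$ on balls contained in $V_{\delta}(\Theta)$ (on which $|\log|\varphi_{\eta}(y)||\le\psi(y)$ by Assumption~\ref{a1}(v)), combined with the Cauchy--Riemann bound $|\partial_{\eta_{j}}\log\varphi_{\eta}(y)|\le|\nabla_{\eta}\log|\varphi_{\eta}(y)||$. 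Collecting these estimates gives $\bigl|\tfrac{{\rm d}}{{\rm d}t}\Phi_{\eta(t),y}(\xi(t))\bigr|\le C_{4}(1+\psi(y))(\|\eta'-\eta''\|+\|\xi'-\xi''\|)$, and integrating in $t$ completes the proof.

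The delicate step --- the main obstacle --- is the control of $\partial_{\eta_{j}}\log\varphi_{\eta}(y)$. The hypotheses pin down $\varphi_{\eta}(y)$ only through $e^{-\psi(y)}\le|\varphi_{\eta}(y)|\le e^{\psi(y)}$, so estimating $\varphi_{\eta'}(y)/\varphi_{\eta''}(y)$ through moduli alone carries a factor $e^{O(\psi(y))}$, far too large for the required $O(1+\psi(y))$ Lipschitz constant. What saves the argument is that $\log|\varphi_{\eta}(y)|$, being the real part of a local holomorphic logarithm of the non-vanishing analytic function $\varphi_{\eta}(y)$, is harmonic in $\eta$ with the a priori two-sided bound $\psi(y)$; hence its gradient --- equivalently $|\partial_{\eta_{j}}\log\varphi_{\eta}(y)|$ --- is $O(\psi(y))$ on the smaller vicinity, linear in $\psi(y)$. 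The remaining quantities, $G_{\eta,y}(\xi)$ and its $\eta$-derivative, are uniformly bounded above and (thanks to Assumption~\ref{a12}) away from zero, so they contribute only constants; everything else is bookkeeping with the principal logarithm on $H$.
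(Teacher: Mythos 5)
Your proof is correct, and for most of the lemma it follows the same route as the paper: part (i) and the bound $|\Phi_{\eta,y}(\xi)|\leq C_{4}(1+\psi(y))$ are obtained exactly as in the paper (specialise Lemma \ref{lemma1.2} to $n-m=1$, then use $|\log z|\leq|\log|z||+\pi$ on the right half-plane), and your $\xi$-Lipschitz estimate is the paper's argument in only slightly different clothing (integrate along a segment and exploit the cancellation of $\varphi_{\eta}(y)$ between $R_{\eta,y}({\cal X}|\xi'-\xi'')$ and the lower bound on the denominator). The one genuinely different step is the Lipschitz bound in $\eta$. The paper gets it in one line: since $\Phi_{\eta,y}(\xi)$ is analytic in $\eta$ and already bounded by $C_{4}(1+\psi(y))$ on $V_{\delta_{4}}(\Theta)$, the Cauchy estimate of Lemma \ref{lemmaa1} converts that sup bound directly into a Lipschitz constant proportional to $1+\psi(y)$. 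You instead differentiate $\log R$ and decompose the logarithmic derivative as $\partial_{\eta_{j}}\log\varphi_{\eta}(y)+\partial_{\eta_{j}}G_{\eta,y}(\xi)/G_{\eta,y}(\xi)$, controlling the first term by the interior gradient estimate for the pluriharmonic function $\log|\varphi_{\eta}(y)|$ (two-sidedly bounded by $\psi(y)$ via Assumption \ref{a1}(v)). That estimate is valid, and it is a nice observation that the harmonicity of $\log|\varphi_{\eta}(y)|$ rescues what a naive modulus bound on $\varphi_{\eta'}(y)/\varphi_{\eta''}(y)$ would ruin; but the ``main obstacle'' you identify is self-imposed --- applying the Cauchy estimate to $\Phi$ itself (rather than to the factors of $R$) makes the whole difficulty disappear and subsumes the $\log\varphi$ term automatically. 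Your route costs more work and an extra shrinking of $\delta_{4}$ (to keep $|G_{\eta,y}(\xi)|\geq\gamma/2$), but buys nothing beyond what the sup bound plus analyticity already give.
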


\begin{proof}
Throughout the proof, the following notations is used.
$\delta_{4}$, $C_{4}$ are the real numbers defined by
$\delta_{4}=\alpha_{1}$, $C_{4}=4L_{1}^{2}$
($\alpha_{1}$, $L_{1}$ are specified in Lemma \ref{lemma1.2}).
$\eta$, $\eta'$, $\eta''$ are any elements of $V_{\delta_{4}}(\Theta)$,
while $\xi$, $\xi'$, $\xi''$ are any elements in $V_{\delta_{4}}({\cal P}({\cal X}) )$.
$y$ is any element of ${\cal Y}$.

Since $R_{\eta,y}({\cal X}|y) = \left\langle R_{\eta,\boldsymbol y}^{0:1}(\xi) \right\rangle$
for any sequence $\boldsymbol y = \{y_{n} \}_{n\geq 1}$ in ${\cal Y}$
satisfying $y=y_{1}$,
Lemma \ref{lemma1.2} yields
\begin{align}
	&\label{l1.3.1}
	\text{\rm Re}\left\{
	R_{\eta,y}({\cal X}|\xi)
	\right\}
	\geq
	\frac{|\varphi_{\eta}(y)|}{L_{1} },
	\\
	&\label{l1.3.3}
	\left|
	R_{\eta,y}({\cal X}|\xi)
	\right|
	\leq
	L_{1} |\varphi_{\eta}(y)|,
	\\
	&\label{l1.3.5}
	\left|
	R_{\eta,y}({\cal X}|\xi')
	-
	R_{\eta,y}({\cal X}|\xi'')
	\right|
	\leq
	L_{1} |\varphi_{\eta}(y)|
	\|\xi'-\xi''\|.
\end{align}
Due to the same arguments, $R_{\eta,y}({\cal X}|\xi)$ is analytic in $\eta$
for each $\eta\in V_{\delta_{4}}(\Theta)$.
As $\varphi_{\eta}(y)\neq 0$ (owing to Assumption \ref{a1}), (\ref{l1.3.1}) implies that (i) holds.
Consequently, (\ref{2.7*}) yields that
$\Phi_{\eta,y}(\xi)$ is analytic in $\eta$ for all $\eta\in V_{\delta_{4}}(\Theta)$.
Moreover, due to (\ref{l1.3.1}), (\ref{l1.3.3}), we have
\begin{align*}
	&
	\log\left|
	R_{\eta,y}({\cal X}|\xi)
	\right|
	\leq
	\log L_{1} + \log|\varphi_{\eta}(y)|
	\leq
	L_{1}\left( 1+\psi(y) \right),
	\\
	&
	\log\left|
	R_{\eta,y}({\cal X}|\xi)
	\right|
	\geq
	-\log L_{1} + \log|\varphi_{\eta}(y)|
	\geq
	-L_{1}\left( 1+\psi(y) \right).
\end{align*}
Therefore, we get
\begin{align}\label{l1.3.51}
	\left|
	\Phi_{\eta,y}(\xi)
	\right|
	=
	\left|
	\log R_{\eta,y}({\cal X}|\xi)
	\right|
	\leq &
	\left|
	\log\left|
	R_{\eta,y}({\cal X}|\xi)
	\right|
	\right|
	+
	\pi
	\nonumber\\
	\leq &
	4L_{1}\left( 1+\psi(y) \right)
	\nonumber\\
	\leq &
	C_{4}\left( 1+\psi(y) \right).
\end{align}
Then, Lemma \ref{lemmaa1} implies
\begin{align}\label{l1.3.51'}
	\left|
	\Phi_{\eta',y}(\xi) - \Phi_{\eta'',y}(\xi)
	\right|
	\leq &
	\frac{4dL_{1}(1+\psi(y) )\|\eta'-\eta''\| }{\delta_{4} }
	\nonumber\\
	\leq &
	C_{4}(1+\psi(y) )\|\eta'-\eta''\|.
\end{align}

Let $\phi_{\eta,y}(t|\xi',\xi'')$ be the function defined by
\begin{align*}
	\phi_{\eta,y}(t|\xi',\xi'')
	=
	\log\left(
	t R_{\eta,y}({\cal X}|\xi')
	+
	(1-t) R_{\eta,y}({\cal X}|\xi'')
	\right)
\end{align*}
for $t\in[0,1]$.
Due to Assumption \ref{a1} and (\ref{l1.3.1}), we have
\begin{align}\label{l1.3.7}
	&
	\left|
	t R_{\eta,y}({\cal X}|\xi')
	+
	(1-t) R_{\eta,y}({\cal X}|\xi'')
	\right|
	\nonumber\\
	&\geq
	t \text{\rm Re}\left\{ R_{\eta,y}({\cal X}|\xi') \right\}
	+
	(1-t) \text{\rm Re}\left\{ R_{\eta,y}({\cal X}|\xi'') \right\}
	\nonumber\\
	&\geq
	\frac{|\varphi_{\eta}(y)| }{L_{1} }
	>0
\end{align}
for $t\in[0,1]$.
Hence,
$\phi_{\eta,y}(t|\xi',\xi'')$ is well-defined
and differentiable in $t$ for each $t\in[0,1]$.
We also have
\begin{align*}
	\phi'_{\eta,y}(t|\xi',\xi'')
	=&
	\frac{\partial}{\partial t}
	\phi_{\eta,y}(t|\xi',\xi'')
	\\
	=&
	\frac{\text{\rm Re}\left\{ R_{\eta,y}({\cal X}|\xi') \right\}
	-
	\text{\rm Re}\left\{ R_{\eta,y}({\cal X}|\xi'') \right\} }
	{t \text{\rm Re}\left\{ R_{\eta,y}({\cal X}|\xi') \right\}
	+
	(1-t) \text{\rm Re}\left\{ R_{\eta,y}({\cal X}|\xi'') \right\} }. 
\end{align*}
Consequently, (\ref{l1.3.5}), (\ref{l1.3.7}) yield
\begin{align*}
	\left|
	\phi'_{\eta,y}(t|\xi',\xi'')
	\right|
	\leq
	L_{1}^{2} \|\xi'-\xi''\|.
\end{align*}
Thus, we get
\begin{align*}
	\left|
	\Phi_{\eta,y}(\xi')
	-
	\Phi_{\eta,y}(\xi'')
	\right|
	=&
	\left|
	\phi_{\eta,y}(1|\xi',\xi'')
	-
	\phi_{\eta,y}(0|\xi',\xi'')
	\right|
	\\
	=&
	\left|
	\int_{0}^{1}
	\phi'_{\eta,y}(t|\xi',\xi'') {\rm d}t
	\right|
	\nonumber\\
	\leq&
	L_{1}^{2}\|\xi'-\xi''\|
	\leq 
	C_{4}\|\xi'-\xi''\|.
\end{align*}
Consequently, (\ref{l1.3.51'}) implies
\begin{align}\label{l1.3.53}
	\left|
	\Phi_{\eta',y}(\xi')
	\!-\!
	\Phi_{\eta'',y}(\xi'')
	\right|
	\!\leq\! &
	\left|
	\Phi_{\eta',y}(\xi')
	\!-\!
	\Phi_{\eta'',y}(\xi')
	\right|
	\nonumber\\
	&+
	\left|
	\Phi_{\eta'',y}(\xi')
	-
	\Phi_{\eta'',y}(\xi'')
	\right|
	\nonumber\\
	\!\leq &
	C_{4}(1\!+\!\psi(y) )\!\left(\|\eta'\!-\!\eta''\|\!+\!\|\xi'\!-\!\xi''\| \right).
\end{align}
Using (\ref{l1.3.51}), (\ref{l1.3.53}), we deduce that (ii) is true.
\end{proof}

\begin{lemma}\label{lemma1.4}
Let Assumptions \ref{a11} -- \ref{a12} hold. Then, the following is true:

(i) There exist real numbers
$\delta_{5}, \delta_{6} \in (0,\delta_{4}]$,
$C_{5}\in[1,\infty)$ and an integer $n_{0}\geq 1$ such that
$\text{\rm Re}\left\{ \left\langle R_{\eta,\boldsymbol y}^{m:n}(\xi) \right\rangle
\right\}>0$,
$F_{\eta,\boldsymbol y}^{m:n}(\xi)\in	V_{\delta_{4} }({\cal P}({\cal X}) )$,
$F_{\eta,\boldsymbol y}^{m:m+n_{0}}(\xi)\in	V_{\delta_{6} }({\cal P}({\cal X}) )$ and
\begin{align}
	&\label{l1.4.1*}
	\left\|
	F_{\eta,\boldsymbol y}^{m:n}(\xi')
	-
	F_{\eta,\boldsymbol y}^{m:n}(\xi'')
	\right\|
	\leq
	C_{5} \|\xi'-\xi''\|,
	\\
	&\label{l1.4.3*}
	\left\|
	F_{\eta,\boldsymbol y}^{m:m+n_{0}}(\xi')
	-
	F_{\eta,\boldsymbol y}^{m:m+n_{0}}(\xi'')
	\right\|
	\leq
	\frac{\|\xi'-\xi''\| }{2}
\end{align}
for all $\eta\in V_{\delta_{5} }(\Theta)$,
$\xi,\xi',\xi''\in V_{\delta_{6}}({\cal P}({\cal X}) )$,
$m+n_{0}\geq n\geq m\geq 0$ and any sequence
$\boldsymbol y = \{y_{n} \}_{n\geq 1}$ in ${\cal Y}$
($\delta_{4}$ is specified in Lemma \ref{lemma1.3}).

(ii) There exist real numbers $\delta_{7}\in(0,\delta_{5}]$,
$\delta_{8}\in(0,\delta_{6}]$ such that
$F_{\eta,\boldsymbol y}^{m:n}(\xi)\in V_{\delta_{6}}({\cal P}({\cal X}) )$
for all $\eta\in V_{\delta_{7} }(\Theta)$,
$\xi\in V_{\delta_{8}}({\cal P}({\cal X}) )$,
$m+n_{0}\geq n\geq m\geq 0$ and any sequence
$\boldsymbol y = \{y_{n} \}_{n\geq 1}$ in ${\cal Y}$.
\end{lemma}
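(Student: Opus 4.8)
The plan is to read everything off the pointwise and Lipschitz bounds for $f_{\eta,\boldsymbol y}^{m:n}$, $h_{\eta,\boldsymbol y}^{m:n}$ and $\langle R_{\eta,\boldsymbol y}^{m:n}(\xi)\rangle$ in Lemma \ref{lemma1.2}, combined with the exponential forgetting of the \emph{real} optimal filter in Lemma \ref{lemma1.1}, via a perturbation argument around the real parameters. First I fix $n_{0}$ as the least integer with $2C_{3}\gamma_{2}^{n_{0}}\le\tfrac14$ ($C_{3},\gamma_{2}$ from Lemma \ref{lemma1.1}), then choose $\delta_{6}\in(0,\min\{\alpha_{n_{0}},\delta_{4}\}]$ so small that $8\|\mu\|M_{n_{0}}\delta_{6}\le\min\{1,\delta_{4}\}$ ($\alpha_{n_{0}},M_{n_{0}}$ from Lemma \ref{lemma1.2}, $\delta_{4}$ from Lemma \ref{lemma1.3}), and finally set $\delta_{5}=\delta_{6}/(4\|\mu\|M_{n_{0}})$ and $C_{5}=\max\{1,\|\mu\|M_{n_{0}}\}$. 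Since $\delta_{5},\delta_{6}\le\alpha_{n_{0}}\le\alpha_{n-m}$ for $n-m\le n_{0}$, Lemma \ref{lemma1.2}(iii) and $\varphi_{\eta}(y)\ne0$ (Assumption \ref{a1}) give $\text{\rm Re}\{\langle R_{\eta,\boldsymbol y}^{m:n}(\xi)\rangle\}\ge|\varphi_{\theta,\boldsymbol y}^{m:n}|/L_{n-m}>0$ for all $\eta\in V_{\delta_{5}}(\Theta)$, $\xi\in V_{\delta_{6}}({\cal P}({\cal X}))$ and $n>m$ (the case $n=m$ being trivial, as $\langle R_{\eta,\boldsymbol y}^{m:m}(\xi)\rangle=\xi({\cal X})$); in particular $\langle R_{\eta,\boldsymbol y}^{m:n}(\xi)\rangle\ne0$, so Lemma \ref{lemma1.21} yields $F_{\eta,\boldsymbol y}^{m:n}(\xi)=R_{\eta,\boldsymbol y}^{m:n}(\xi)/\langle R_{\eta,\boldsymbol y}^{m:n}(\xi)\rangle$ for $n>m$, with well-defined $\mu$-density $f_{\eta,\boldsymbol y}^{m:n}(\cdot|\xi)$. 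Bound (\ref{l1.4.1*}) is then immediate from Lemma \ref{lemma1.2}(iv): $\|F_{\eta,\boldsymbol y}^{m:n}(\xi')-F_{\eta,\boldsymbol y}^{m:n}(\xi'')\|=\int|f_{\eta,\boldsymbol y}^{m:n}(x|\xi')-f_{\eta,\boldsymbol y}^{m:n}(x|\xi'')|\,\mu({\rm d}x)\le\|\mu\|M_{n-m}\|\xi'-\xi''\|\le C_{5}\|\xi'-\xi''\|$ (an equality when $n=m$).

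The second ingredient is the Gateaux-derivative representation of the filter anticipated in Section \ref{section1*}. Writing $f_{\eta,\boldsymbol y}^{m:n}(x|\xi)=\big(\int r_{\eta,\boldsymbol y}^{m:n}(x|x'')\,\xi({\rm d}x'')\big)\big/\langle R_{\eta,\boldsymbol y}^{m:n}(\xi)\rangle$ as a ratio of two functionals linear in $\xi$ whose denominator has strictly positive real part on the \emph{convex} set $V_{\delta_{6}}({\cal P}({\cal X}))$, the quotient rule shows that along the segment $\xi_{t}=\xi''+t(\xi'-\xi'')$, $t\in[0,1]$ (which lies in $V_{\delta_{6}}({\cal P}({\cal X}))$ by convexity),
\[
\tfrac{d}{dt}f_{\eta,\boldsymbol y}^{m:n}(x|\xi_{t})=\int h_{\eta,\boldsymbol y}^{m:n}(x|x'',\xi_{t})(\xi'-\xi'')({\rm d}x'')
\]
with $h_{\eta,\boldsymbol y}^{m:n}$ exactly as in (\ref{2.61*}); integrating in $t$ and then in $x$ against $\mu$ gives $F_{\eta,\boldsymbol y}^{m:n}(\xi')-F_{\eta,\boldsymbol y}^{m:n}(\xi'')=\int_{0}^{1}D_{\xi}F_{\eta,\boldsymbol y}^{m:n}(\xi_{t})[\xi'-\xi'']\,{\rm d}t$, where $D_{\xi}F_{\eta,\boldsymbol y}^{m:n}(\xi)$ sends $\zeta\in{\cal M}_{c}({\cal X})$ to the measure with $\mu$-density $x\mapsto\int h_{\eta,\boldsymbol y}^{m:n}(x|x'',\xi)\,\zeta({\rm d}x'')$. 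Specialising to a real $\theta\in\Theta$ and real $\lambda\in{\cal P}({\cal X})$ and feeding into Lemma \ref{lemma1.1} the pairs $(\lambda,\lambda+s\zeta_{j}^{\pm})$, where $\zeta_{j}^{\pm}\ge0$ are the Jordan parts of the real and imaginary components of $\zeta$ and keep $\lambda+s\zeta_{j}^{\pm}$ in ${\cal M}_{p}({\cal X})\cap V_{\delta_{3}}({\cal P}({\cal X}))$ for small $s>0$, then letting $s\downarrow0$, I obtain the bound $\|D_{\xi}F_{\theta,\boldsymbol y}^{m:m+n_{0}}(\lambda)\|\le2C_{3}\gamma_{2}^{n_{0}}\le\tfrac14$ on the total-variation operator norm.

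The key step is the contraction (\ref{l1.4.3*}). By Lemma \ref{lemma1.2}(iv), $\|D_{\xi}F_{\eta',\boldsymbol y}^{m:m+n_{0}}(\xi')-D_{\xi}F_{\eta'',\boldsymbol y}^{m:m+n_{0}}(\xi'')\|\le\|\mu\|M_{n_{0}}(\|\eta'-\eta''\|+\|\xi'-\xi''\|)$ in operator norm. Hence for $\eta\in V_{\delta_{5}}(\Theta)$ and $\xi\in V_{\delta_{6}}({\cal P}({\cal X}))$, picking $\theta\in\Theta$, $\lambda\in{\cal P}({\cal X})$ with $\|\eta-\theta\|<\delta_{5}$, $\|\xi-\lambda\|<\delta_{6}$,
\[
\|D_{\xi}F_{\eta,\boldsymbol y}^{m:m+n_{0}}(\xi)\|\le\|D_{\xi}F_{\theta,\boldsymbol y}^{m:m+n_{0}}(\lambda)\|+\|\mu\|M_{n_{0}}(\delta_{5}+\delta_{6})\le\tfrac14+\tfrac18<\tfrac12 ,
\]
and integrating along the segment between $\xi'$ and $\xi''$ proves (\ref{l1.4.3*}). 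The two inclusions in part (i) follow by comparison with the honest filter $F_{\theta,\boldsymbol y}^{m:n}(\lambda)\in{\cal P}({\cal X})$: one bounds $\|F_{\eta,\boldsymbol y}^{m:n}(\xi)-F_{\theta,\boldsymbol y}^{m:n}(\lambda)\|\le\|F_{\eta,\boldsymbol y}^{m:n}(\xi)-F_{\eta,\boldsymbol y}^{m:n}(\lambda)\|+\|F_{\eta,\boldsymbol y}^{m:n}(\lambda)-F_{\theta,\boldsymbol y}^{m:n}(\lambda)\|$, where the first term is $\le C_{5}\|\xi-\lambda\|$ in general and $\le\tfrac12\|\xi-\lambda\|$ when $n=m+n_{0}$ (by (\ref{l1.4.3*})), and the second is $\le\|\mu\|M_{n_{0}}\|\eta-\theta\|$ (Lemma \ref{lemma1.2}(iv)); the choice of $\delta_{5},\delta_{6}$ makes this $<\delta_{4}$ in all cases and $<\delta_{6}$ when $n=m+n_{0}$, yielding $F_{\eta,\boldsymbol y}^{m:n}(\xi)\in V_{\delta_{4}}({\cal P}({\cal X}))$ and $F_{\eta,\boldsymbol y}^{m:m+n_{0}}(\xi)\in V_{\delta_{6}}({\cal P}({\cal X}))$.

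For part (ii) I would repeat the same comparison with shrunken radii: with $\delta_{7}=\delta_{8}=\delta_{6}/(2C_{5}+2\|\mu\|M_{n_{0}})$, for $\eta\in V_{\delta_{7}}(\Theta)$, $\xi\in V_{\delta_{8}}({\cal P}({\cal X}))$ and every $m+n_{0}\ge n\ge m$ one gets $\|F_{\eta,\boldsymbol y}^{m:n}(\xi)-F_{\theta,\boldsymbol y}^{m:n}(\lambda)\|\le C_{5}\|\xi-\lambda\|+\|\mu\|M_{n_{0}}\|\eta-\theta\|<\delta_{6}$, i.e.\ $F_{\eta,\boldsymbol y}^{m:n}(\xi)\in V_{\delta_{6}}({\cal P}({\cal X}))$. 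The only real obstacle is the contraction (\ref{l1.4.3*}): positivity, boundedness and the Lipschitz estimates are bookkeeping with Lemma \ref{lemma1.2}, whereas (\ref{l1.4.3*}) must transport the nonlinear forgetting of Lemma \ref{lemma1.1}, available only for real $\theta$ and genuine probability measures, into the complex regime — which forces the linearisation through $h_{\eta,\boldsymbol y}^{m:n}$ and a calibration of $\delta_{5},\delta_{6}$ against $n_{0}$ so that the perturbation from $(\theta,\lambda)$ to $(\eta,\xi)$ cannot overcome the gain $2C_{3}\gamma_{2}^{n_{0}}\le\tfrac14$.
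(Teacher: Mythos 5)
Your proposal is correct and follows essentially the same route as the paper: establish positivity and Lipschitz bounds from Lemma \ref{lemma1.2}, represent $F_{\eta,\boldsymbol y}^{m:n}(\xi')-F_{\eta,\boldsymbol y}^{m:n}(\xi'')$ through the Gateaux kernel $h_{\eta,\boldsymbol y}^{m:n}$, transfer the forgetting bound of Lemma \ref{lemma1.1} to that kernel at real $(\theta,\lambda)$ by a limiting perturbation argument (the paper perturbs by $\alpha\delta_{x}$ and bounds $\int I_{B}(x')h_{\theta,\boldsymbol y}^{m:n}(x'|x,\lambda)\mu({\rm d}x')$ pointwise in $x$, whereas you perturb by Jordan components and absorb a factor $2$ into $n_{0}$ --- a cosmetic difference), and then perturb to complex $(\eta,\xi)$. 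The only blemish is constant bookkeeping: with your stated choices, $\|\mu\|M_{n_{0}}(\delta_{5}+\delta_{6})=\delta_{6}/4+\|\mu\|M_{n_{0}}\delta_{6}$ need not be $\le 1/8$ (and $C_{5}\delta_{6}<\delta_{4}$ can fail when $C_{5}=1$), but since $\delta_{5},\delta_{6}$ are free to be shrunk this is trivially repaired.
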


\begin{proof}
(i) Throughout this part of the proof, the following notations is used.
$n_{0}$ is the integer defined by
\begin{align*}
	n_{0}
	=
	\left\lceil \frac{\log(4C_{3} ) }{|\log\gamma_{2} |} \right\rceil,
\end{align*}
while $C_{5}$, $\delta_{5}$, $\delta_{6}$ are the real numbers defined by
$C_{5}=M_{n_{0} } (1+\|\mu\|)$ and
\begin{align}\label{l1.4.101}
	\delta_{5}
	=
	\frac{\min\{\alpha_{n_{0} },\delta_{4} \} }{16C_{5}^{2} },
	\;\;\;\;\;\;\;
	\delta_{6}=2C_{5}\delta_{5}
\end{align}
($\gamma_{2}$, $C_{3}$, $\alpha_{n}$, $M_{n}$ are specified
in Lemmas \ref{lemma1.1}, \ref{lemma1.2}).
$\eta$, $\eta'$, $\eta''$ are any elements in $V_{\delta_{5}}(\Theta)$,
while $\theta$ is any element of $\Theta$ satisfying
$\|\eta-\theta\|<\delta_{5}$.
$\xi$, $\xi'$, $\xi''$ are any elements in $V_{\delta_{6}}({\cal P}({\cal X}) )$,
while $\lambda$ is any element of ${\cal P}({\cal X} )$ satisfying
$\|\xi-\lambda\|<\delta_{6}$.
$x$ is any element of ${\cal X}$,
while $\boldsymbol y = \{y_{n} \}_{n\geq 1}$ is any sequence in ${\cal Y}$.
$B$ is any element of ${\cal B}({\cal X})$.
$m$, $n$ are any integers satisfying $m+n_{0}\geq n>m>0$.

Owing to Lemma \ref{lemma1.2}, we have
$\text{\rm Re}\left\{
\left\langle R_{\eta,\boldsymbol y}^{m:n} (\xi) \right\rangle \right\} > 0$, 
as $\delta_{5}\leq\delta_{6}\leq\alpha_{n_{0}}\leq\alpha_{n-m}$
results from $n-m\leq n_{0}$.
Hence, we get
\begin{align}\label{l1.4.701}
	&
	\text{\rm Re}\left\{
	\left\langle R_{\eta,\boldsymbol y}^{m:n} (t\xi'+(1-t)\xi'') \right\rangle
	\right\}
	\nonumber\\
	&=
	t\text{\rm Re}\left\{
	\left\langle R_{\eta,\boldsymbol y}^{m:n} (\xi') \right\rangle
	\right\}
	+
	(1-t)\text{\rm Re}\left\{
	\left\langle R_{\eta,\boldsymbol y}^{m:n} (\xi'') \right\rangle
	\right\}
	>
	0
\end{align}
for $t\in[0,1]$.
Moreover,
using Lemma \ref{lemma1.2}, we conclude
\begin{align}\label{l1.4.1}
	&
	\left|
	\int I_{B}(x')
	\left(
	h_{\eta',\boldsymbol y}^{m:n}(x'|x,\xi')
	-
	h_{\eta'',\boldsymbol y}^{m:n}(x'|x,\xi'')
	\right)
	\mu({\rm d}x')
	\right|
	\nonumber\\
	&\leq
	\int I_{B}(x')
	\left|
	h_{\eta',\boldsymbol y}^{m:n}(x'|x,\xi')
	-
	h_{\eta'',\boldsymbol y}^{m:n}(x'|x,\xi'')
	\right|
	\mu({\rm d}x')
	\nonumber\\
	&\leq
	M_{n-m} \|\mu\|
	\left(\|\eta'-\eta''\| + \|\xi'-\xi''\| \right)
	\nonumber\\
	&\leq
	C_{5} \left(\|\eta'-\eta''\| + \|\xi'-\xi''\| \right)
\end{align}
as $C_{5}\geq M_{n_{0}}\geq M_{n-m}$ results from $n-m\leq n_{0}$.
Relying on the same lemma, we deduce
\begin{align*}
	&
	\left|
	F_{\eta',\boldsymbol y}^{m:n}(B|\xi')
	-
	F_{\eta'',\boldsymbol y}^{m:n}(B|\xi'')
	\right|
	\\
	&\leq
	\int I_{B}(x)
	\left|
	f_{\eta',\boldsymbol y}^{m:n}(x|\xi')
	-
	f_{\eta'',\boldsymbol y}^{m:n}(x|\xi'')
	\right|
	\mu({\rm d}x)
	\\
	&\leq
	M_{n-m}\|\mu\|
	\left(\|\eta'-\eta''\|+\|\xi'-\xi''\| \right)
	\\
	&\leq
	C_{5} \left(\|\eta'-\eta''\|+\|\xi'-\xi''\| \right).
\end{align*}
Hence, we have
\begin{align}\label{l1.4.3}
	\left\|
	F_{\eta',\boldsymbol y}^{m:n}(\xi')
	-
	F_{\eta'',\boldsymbol y}^{m:n}(\xi'')
	\right\|
	\leq
	C_{5} \left(\|\eta'-\eta''\|+\|\xi'-\xi''\| \right)
\end{align}
as $\|\eta-\theta\|<\delta_{5}$, $\|\xi-\lambda\|<\delta_{6}$.
Setting $\eta'=\eta$, $\eta''=\eta$ in (\ref{l1.4.3}), 
we get (\ref{l1.4.1*}).
We also get
\begin{align*}
	\left\|
	F_{\eta,\boldsymbol y}^{m:n}(\xi)
	-
	F_{\theta,\boldsymbol y}^{m:n}(\lambda)
	\right\|
	\leq &
	C_{5} \left(\|\eta-\theta\|+\|\xi-\lambda\| \right)
	\\
	<&
	C_{5} (\delta_{5}+\delta_{6} )
	\leq
	\delta_{4}.
\end{align*}
Therefore,
$F_{\eta,\boldsymbol y}^{m:n}(\xi)\in V_{\delta_{4}}({\cal P}({\cal X}) )$
for $m+n_{0}\geq n>m\geq 0$, 
as $F_{\theta,\boldsymbol y}^{m:n}(\lambda)\in{\cal P}({\cal X})$.

Let $\phi_{\eta,\boldsymbol y}^{m:n}(t,x|\xi',\xi'')$ be the function
defined by
\begin{align}\label{l1.4.701'}
	\phi_{\eta,\boldsymbol y}^{m:n}(t,x|\xi',\xi'')
	=
	f_{\eta,\boldsymbol y}^{m:n}(x|t\xi'+(1-t)\xi'' )
\end{align}
for $t\in[0,1]$, $m+n_{0}\geq n>m\geq 0$.
Then, due to (\ref{l1.4.701}), we have
\begin{align*}
	&
	\phi_{\eta,\boldsymbol y}^{m:n}(t,x|\xi',\xi'')
	\\
	&=
	\frac{
	\int r_{\eta,\boldsymbol y}^{m:n}(x|x') (t\xi'+(1-t)\xi'')({\rm d}x') }
	{\left\langle R_{\eta,\boldsymbol y}^{m:n}(t\xi'+(1-t)\xi'') \right\rangle }
	\\
	&=
	\frac{
	t\int r_{\eta,\boldsymbol y}^{m:n}(x|x') \xi'({\rm d}x')
	+
	(1-t)\int r_{\eta,\boldsymbol y}^{m:n}(x|x') \xi''({\rm d}x') }
	{t \left\langle R_{\eta,\boldsymbol y}^{m:n}(\xi') \right\rangle
	+
	(1-t) \left\langle R_{\eta,\boldsymbol y}^{m:n}(\xi'') \right\rangle }.
\end{align*}
Thus, we get
\begin{align*}
	\frac{\partial}{\partial t}
	\phi_{\eta,\boldsymbol y}^{m:n}(t,x|\xi',\xi'')
	=&
	\frac{
	\int r_{\eta,\boldsymbol y}^{m:n}(x|x') (\xi'-\xi'')({\rm d}x') }
	{\left\langle R_{\eta,\boldsymbol y}^{m:n}(t\xi'+(1-t)\xi'' ) \right\rangle }
	\nonumber\\
	&-
	f_{\eta,\boldsymbol y}^{m:n}(x|t\xi'+(1-t)\xi'' )
	\nonumber\\
	&\cdot
	\frac{
	\left\langle R_{\eta,\boldsymbol y}^{m:n}(\xi') \right\rangle
	-
	\left\langle R_{\eta,\boldsymbol y}^{m:n}(\xi'') \right\rangle  }
	{\left\langle R_{\eta,\boldsymbol y}^{m:n}(t\xi'+(1-t)\xi'' ) \right\rangle }
	\nonumber\\
	=&
	\frac{
	\int r_{\eta,\boldsymbol y}^{m:n}(x|x') (\xi'-\xi'')({\rm d}x') }
	{\left\langle R_{\eta,\boldsymbol y}^{m:n}(t\xi'+(1-t)\xi'' ) \right\rangle }
	\nonumber\\
	&-
	f_{\eta,\boldsymbol y}^{m:n}(x|t\xi'+(1-t)\xi'' )
	\nonumber\\
	&\cdot
	\frac{\int\int r_{\eta,\boldsymbol y}^{m:n}(x''|x')\mu({\rm d}x'') (\xi'\!-\!\xi'')({\rm d}x') }
	{\left\langle R_{\eta,\boldsymbol y}^{m:n}(t\xi'+(1-t)\xi'' ) \right\rangle }.
\end{align*}
Consequently, (\ref{2.57*}) -- (\ref{2.61*}) imply
\begin{align}\label{l1.4.701''}
	&
	\frac{\partial}{\partial t}
	\phi_{\eta,\boldsymbol y}^{m:n}(t,x|\xi',\xi'')
	\nonumber\\
	&
	\begin{aligned}
	=&
	\int g_{\eta,\boldsymbol y}^{m:n}(x|x',t\xi'+(1-t)\xi'')
	(\xi'-\xi'')({\rm d}x')
	\\
	&-
	f_{\eta,\boldsymbol y}^{m:n}(x|t\xi'+(1-t)\xi'' )
	\\
	&\cdot
	\int\int g_{\eta,\boldsymbol y}^{m:n}(x''|x',t\xi'+(1-t)\xi'')
	\mu({\rm d}x'') (\xi'-\xi'')({\rm d}x')
	\end{aligned}
	\nonumber\\
	&=
	\int h_{\eta,\boldsymbol y}^{m:n}(x|x',t\xi'+(1-t)\xi'')
	(\xi'-\xi'')({\rm d}x').
\end{align}
Hence, we have
\begin{align}\label{l1.4.701'''}
	&
	f_{\eta,\boldsymbol y}^{m:n}(x|\xi')
	-
	f_{\eta,\boldsymbol y}^{m:n}(x|\xi'')
	\nonumber\\
	&=
	\phi_{\eta,\boldsymbol y}^{m:n}(1,x|\xi',\xi'')
	-
	\phi_{\eta,\boldsymbol y}^{m:n}(0,x|\xi',\xi'')
	\nonumber\\
	&=
	\int\int_{0}^{1} h_{\eta,\boldsymbol y}^{m:n}(x|x',t\xi'+(1-t)\xi'')
	(\xi'-\xi'')({\rm d}x') {\rm d}t.
\end{align}
Therefore, (\ref{2.5*}) yields
\begin{align}\label{l1.4.7}
	&
	F_{\eta,\boldsymbol y}^{m:n}(B|\xi')
	-
	F_{\eta,\boldsymbol y}^{m:n}(B|\xi'')
	\nonumber\\
	&=
	\int I_{B}(x)
	\left(
	f_{\eta,\boldsymbol y}^{m:n}(x|\xi')
	-
	f_{\eta,\boldsymbol y}^{m:n}(x|\xi'')
	\right)
	\mu({\rm d}x)
	\nonumber\\
	&
	\begin{aligned}[b]
	=&
	\int\int\int_{0}^{1} I_{B}(x)
	h_{\eta,\boldsymbol y}^{m:n}(x|x',t\xi'+(1-t)\xi'')
	\\
	&\cdot
	\mu({\rm d}x) (\xi'-\xi'')({\rm d}x') {\rm d}t.
	\end{aligned}
\end{align}
Since $\lambda+\alpha t\delta_{x}\in V_{\delta_{6}}({\cal P}({\cal X}) )$
for $\alpha\in(0,\delta_{6})$, $t\in[0,1]$,
we then get
\begin{align}\label{l1.4.9}
	&
	F_{\theta,\boldsymbol y}^{m:n}(B|\lambda+\alpha\delta_{x} )
	-
	F_{\theta,\boldsymbol y}^{m:n}(B|\lambda)
	\nonumber\\
	&
	=
	\alpha\int\int_{0}^{1}  I_{B}(x')
	h_{\theta,\boldsymbol y}^{m:n}(x'|x,\lambda+\alpha t\delta_{x} )
	\mu({\rm d}x') {\rm d}t
\end{align}
for the same $\alpha$.
Moreover, Lemma \ref{lemma1.1} yields
\begin{align}\label{l1.4.11}
	&
	\left|
	F_{\theta,\boldsymbol y}^{m:n}(B|\lambda+\alpha\delta_{x} )
	-
	F_{\theta,\boldsymbol y}^{m:n}(B|\lambda)
	\right|
	\nonumber\\
	&\leq
	C_{3}\gamma_{2}^{n-m} \|\alpha\delta_{x}\|
	=
	\alpha C_{3}\gamma_{2}^{n-m}
\end{align}
for $\alpha\in(0,\delta_{6})$.
Combining (\ref{l1.4.9}), (\ref{l1.4.11}), we get
\begin{align}\label{l1.4.21}
	\left|
	\int\int_{0}^{1}  I_{B}(x')
	h_{\theta,\boldsymbol y}^{m:n}(x'|x,\lambda+\alpha t\delta_{x} )
	\mu({\rm d}x') {\rm d}t
	\right|
	\leq
	C_{3}\gamma_{2}^{n-m}.
\end{align}
Using (\ref{l1.4.1}), (\ref{l1.4.21}), we conclude
\begin{align*}
	&
	\left|
	\int I_{B}(x')
	h_{\theta,\boldsymbol y}^{m:n}(x'|x,\lambda) \mu({\rm d}x')
	\right|
	\\
	&
	\begin{aligned}
	\leq&
	\left|
	\int\int_{0}^{1}  I_{B}(x')
	h_{\theta,\boldsymbol y}^{m:n}(x'|x,\lambda+\alpha t\delta_{x} )
	\mu({\rm d}x') {\rm d}t
	\right|
	\\
	&\!\!\!+\!\!\!
	\int_{0}^{1}\!\!
	\left|
	\int\!\!\! I_{B}(x'\!)\!
	\left(
	h_{\theta,\boldsymbol y}^{m:n}(x'|x,\!\lambda\!+\!\alpha t\delta_{x} )
	\!-\!
	h_{\theta,\boldsymbol y}^{m:n}(x'|x,\!\lambda)
	\right) \!\mu({\rm d}x'\!)
	\right|\! {\rm d}t
	\end{aligned}
	\\
	&\leq
	C_{3}\gamma_{2}^{n-m} + C_{5}\alpha
\end{align*}
for $\alpha\in(0,\delta_{6})$, 
as $\|\alpha t\delta_{x}\|\leq\alpha$.
Letting $\alpha\rightarrow 0$, we deduce
\begin{align*}
	\left|
	\int I_{B}(x')
	h_{\theta,\boldsymbol y}^{m:n}(x'|x,\lambda) \mu({\rm d}x')
	\right|
	\leq&
	C_{3}\gamma_{2}^{n-m}.
\end{align*}
Consequently, (\ref{l1.4.1}) yields
\begin{align*}
	&
	\left|
	\int I_{B}(x')
	h_{\eta,\boldsymbol y}^{m:n}(x'|x,\xi) \mu({\rm d}x')
	\right|
	\\
	&
	\begin{aligned}
	\leq &
	\left|
	\int I_{B}(x')
	h_{\theta,\boldsymbol y}^{m:n}(x'|x,\lambda) \mu({\rm d}x')
	\right|
	\\
	&
	+
	\left|
	\int I_{B}(x')
	\left(
	h_{\eta,\boldsymbol y}^{m:n}(x'|x,\xi)
	-
	h_{\theta,\boldsymbol y}^{m:n}(x'|x,\lambda)
	\right) \mu({\rm d}x')
	\right|
	\end{aligned}
	\\
	&\leq
	C_{3}\gamma_{2}^{n-m}
	+
	C_{5}\left(\|\eta-\theta \| + \|\xi-\lambda \| \right)
	\\
	&\leq
	C_{3}\gamma_{2}^{n-m} + C_{5}(\delta_{5}+\delta_{6} )
	\leq 
	C_{3}\gamma_{2}^{n-m} + \frac{1}{4}
\end{align*}
as $\|\eta-\theta\|<\delta_{5}$, $\|\xi-\lambda\|<\delta_{6}$,
$C_{5}\delta_{5}\leq C_{5}\delta_{6}\leq 1/8$.
Combining this with (\ref{l1.4.7}), we get
\begin{align*}
	&
	\left|
	F_{\eta,\boldsymbol y}^{m:n}(B|\xi')
	-
	F_{\eta,\boldsymbol y}^{m:n}(B|\xi'')
	\right|
	\\
	&
	\begin{aligned}
	\leq &
	\int\!\int_{0}^{1}
	\left|
	\int I_{B}(x')
	h_{\eta,\boldsymbol y}^{m:n}(x'|x,t\xi'+(1-t)\xi'')
	\mu({\rm d}x')
	\right|
	\\
	&\cdot
	|\xi'-\xi''|({\rm d}x) {\rm d}t
	\end{aligned}
	\\
	&\leq
	\left(C_{3}\gamma_{2}^{n-m} + \frac{1}{4} \right)
	\|\xi'-\xi''\|
\end{align*}
as $t\xi'+(1-t)\xi''\in V_{\delta_{6}}({\cal P}({\cal X}) )$ results
from $t\in[0,1]$
and the convexity of $V_{\delta_{6}}({\cal P}({\cal X}) )$.
Therefore, we have
\begin{align*}
	\left\|
	F_{\eta,\boldsymbol y}^{m:n}(\xi')
	-
	F_{\eta,\boldsymbol y}^{m:n}(\xi'')
	\right\|
	\leq
	\left(C_{3}\gamma_{2}^{n-m} + \frac{1}{4} \right)
	\|\xi'-\xi''\|.
\end{align*}
Hence, we get
\begin{align}\label{l1.4.23}
	\left\|
	F_{\eta,\boldsymbol y}^{m:m+n_{0}}(\xi')
	-
	F_{\eta,\boldsymbol y}^{m:m+n_{0}}(\xi'')
	\right\|
	\leq &
	\left(C_{3}\gamma_{2}^{n_{0}} + \frac{1}{4} \right)
	\|\xi'-\xi''\|
	\nonumber\\
	\leq &
	\frac{\|\xi'-\xi''\|}{2}
\end{align}
as $C_{3}\gamma_{2}^{n_{0}}\leq 1/4$.
Consequently, (\ref{l1.4.3*}) holds.
Moreover, (\ref{l1.4.3}) implies
\begin{align*}
	&
	\left\|
	F_{\eta,\boldsymbol y}^{m:m+n_{0}}(\xi)
	-
	F_{\theta,\boldsymbol y}^{m:m+n_{0}}(\lambda)
	\right\|
	\\
	&\begin{aligned}
	\leq &
	\left\|
	F_{\eta,\boldsymbol y}^{m:m+n_{0}}(\xi)
	-
	F_{\eta,\boldsymbol y}^{m:m+n_{0}}(\lambda)
	\right\|
	\\
	&+
	\left\|
	F_{\eta,\boldsymbol y}^{m:m+n_{0}}(\lambda)
	-
	F_{\theta,\boldsymbol y}^{m:m+n_{0}}(\lambda)
	\right\|
	\end{aligned}
	\\
	&\leq
	\frac{\|\xi-\lambda \|}{2}
	+
	C_{5} \|\eta-\theta \|
	\\
	&<
	\frac{\delta_{6}}{2} + C_{5}\delta_{5}
	= 
	\delta_{6}
\end{align*}
as $\|\eta-\theta\|<\delta_{5}$, $\|\xi-\lambda\|<\delta_{6}$.
Thus, $F_{\eta,\boldsymbol y}^{m:m+n_{0}}(\xi)\in V_{\delta_{6}}({\cal P}({\cal X}) )$
for $m\geq 0$, as $F_{\theta,\boldsymbol y}^{m:m+n_{0}}(\lambda)\in {\cal P}({\cal X})$.

(ii) Let $\delta_{7}$, $\delta_{8}$ be the real numbers defined by $\delta_{7}=\delta_{5}$, $\delta_{8}=\delta_{5}$
($\delta_{5}$ is specified in (\ref{l1.4.101})).
Moreover, let $\theta$, $\lambda$, $\boldsymbol y$ have the same meaning as in (i),
while $\eta$, $\xi$ are any elements of $V_{\delta_{6}}(\Theta)$, $V_{\delta_{7}}({\cal P}({\cal X} ) )$
(respectively).
Consequently, when $\|\eta-\theta\|<\delta_{7}$,
$\|\xi-\lambda\|<\delta_{8}$, (\ref{l1.4.3}) yields
\begin{align*}
	\left\|
	F_{\eta,\boldsymbol y}^{m:n}(\xi)
	-
	F_{\theta,\boldsymbol y}^{m:n}(\lambda)
	\right\|
	\leq &
	C_{5} \left(\|\eta-\theta\|+\|\xi-\lambda\| \right)
	\\
	< &
	C_{5} (\delta_{7}+\delta_{8} )
	\leq
	\delta_{6}
\end{align*}
for $m+n_{0}\geq n>m\geq 0$.
Therefore,
$F_{\eta,\boldsymbol y}^{m:n}(\xi)\in V_{\delta_{6}}({\cal P}({\cal X}) )$
for $m+n_{0}\geq n>m\geq 0$.
\end{proof}

\begin{lemma}\label{lemma1.6}
Let Assumptions \ref{a11} -- \ref{a12} hold. Then, the following is true:

(i) $\text{\rm Re}\left\{ \left\langle R_{\eta,\boldsymbol y}^{m:n}(\xi) \right\rangle
\right\}\neq 0$,
$F_{\eta,\boldsymbol y}^{m:n}(\xi)\in	V_{\delta_{4} }({\cal P}({\cal X}) )$
for all $\eta\in V_{\delta_{5} }(\Theta)$,
$\xi\in V_{\delta_{6}}({\cal P}({\cal X}) )$,
$n\geq m\geq 0$ and any sequence
$\boldsymbol y = \{y_{n} \}_{n\geq 1}$ in ${\cal Y}$
($\delta_{4}$, $\delta_{5}$, $\delta_{6}$ are specified in Lemmas
\ref{lemma1.3}, \ref{lemma1.4}).

(ii) There exist real numbers $\gamma_{3}\in(0,1)$, $C_{6}\in[1,\infty)$
such that
\begin{align}\label{l1.6.1*}
	\left\|
	F_{\eta,\boldsymbol y}^{m:n}(\xi')
	-
	F_{\eta,\boldsymbol y}^{m:n}(\xi'')
	\right\|
	\leq
	C_{6}\gamma_{3}^{n-m}
	\left\|\xi'-\xi'' \right\|
\end{align}
for all $\eta\in V_{\delta_{5}}(\Theta)$,
$\xi',\xi''\in V_{\delta_{6}}({\cal P}({\cal X}) )$,
$n\geq m\geq 0$ and any sequence
$\boldsymbol y = \{y_{n} \}_{n\geq 1}$ in ${\cal Y}$.
\end{lemma}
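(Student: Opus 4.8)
For part (i) the plan is to bootstrap the local (horizon $\le n_{0}$) estimates of Lemma~\ref{lemma1.4} to an arbitrary horizon using the semigroup identities of Lemma~\ref{lemma1.21}. Throughout, $n_{0},\delta_{5},\delta_{6},C_{5}$ denote the quantities produced in Lemma~\ref{lemma1.4}; fix $\eta\in V_{\delta_{5}}(\Theta)$, $\xi,\xi',\xi''\in V_{\delta_{6}}({\cal P}({\cal X}))$, a sequence $\boldsymbol y$ in ${\cal Y}$, and integers $n\ge m\ge 0$, and write $n-m=jn_{0}+r$ with $j\ge 0$ and $0\le r<n_{0}$. The core is a simultaneous induction on $i\ge 0$ proving that $\langle R_{\eta,\boldsymbol y}^{m:m+in_{0}}(\xi)\rangle\ne 0$ and $F_{\eta,\boldsymbol y}^{m:m+in_{0}}(\xi)\in V_{\delta_{6}}({\cal P}({\cal X}))$. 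For $i=0$ both hold since $R_{\eta,\boldsymbol y}^{m:m}(\xi)=\xi$ and $\|\xi-\lambda\|<\delta_{6}<1$ for some $\lambda\in{\cal P}({\cal X})$. For the inductive step, Lemma~\ref{lemma1.21}(ii) with $k=m+in_{0}$ gives $\langle R_{\eta,\boldsymbol y}^{m:m+(i+1)n_{0}}(\xi)\rangle = \langle R_{\eta,\boldsymbol y}^{m+in_{0}:m+(i+1)n_{0}}(F_{\eta,\boldsymbol y}^{m:m+in_{0}}(\xi))\rangle\,\langle R_{\eta,\boldsymbol y}^{m:m+in_{0}}(\xi)\rangle$; the second factor is nonzero by hypothesis, and the first is nonzero because $F_{\eta,\boldsymbol y}^{m:m+in_{0}}(\xi)\in V_{\delta_{6}}({\cal P}({\cal X}))$ allows Lemma~\ref{lemma1.4}(i) (horizon $n_{0}$) to be invoked, which even gives ${\rm Re}\{\cdot\}>0$. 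Both normalizers in Lemma~\ref{lemma1.21}(iii) being nonzero, $F_{\eta,\boldsymbol y}^{m:m+(i+1)n_{0}}(\xi)=F_{\eta,\boldsymbol y}^{m+in_{0}:m+(i+1)n_{0}}(F_{\eta,\boldsymbol y}^{m:m+in_{0}}(\xi))$, which lies in $V_{\delta_{6}}({\cal P}({\cal X}))$ by the forward-invariance part of Lemma~\ref{lemma1.4}(i). For a general $n$ one writes $F_{\eta,\boldsymbol y}^{m:n}(\xi)=F_{\eta,\boldsymbol y}^{m+jn_{0}:n}(F_{\eta,\boldsymbol y}^{m:m+jn_{0}}(\xi))$; applying Lemma~\ref{lemma1.21}(ii) and then (iii), together with Lemma~\ref{lemma1.4}(i) over the remaining $r\le n_{0}$ steps, yields $\langle R_{\eta,\boldsymbol y}^{m:n}(\xi)\rangle\ne 0$ and $F_{\eta,\boldsymbol y}^{m:n}(\xi)\in V_{\delta_{4}}({\cal P}({\cal X}))$ (note $\delta_{6}\le\delta_{4}$ handles the case $r=0$).

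For part (ii), part (i) guarantees that all intermediate filters $F_{\eta,\boldsymbol y}^{m:m+in_{0}}(\xi')$, $F_{\eta,\boldsymbol y}^{m:m+in_{0}}(\xi'')$ stay inside $V_{\delta_{6}}({\cal P}({\cal X}))$, so the one-block contraction (\ref{l1.4.3*}) may be iterated: splitting with Lemma~\ref{lemma1.21}(iii) and applying (\ref{l1.4.3*}) to the block $[m+in_{0},m+(i+1)n_{0}]$ gives $\|F_{\eta,\boldsymbol y}^{m:m+(i+1)n_{0}}(\xi')-F_{\eta,\boldsymbol y}^{m:m+(i+1)n_{0}}(\xi'')\|\le\frac{1}{2}\|F_{\eta,\boldsymbol y}^{m:m+in_{0}}(\xi')-F_{\eta,\boldsymbol y}^{m:m+in_{0}}(\xi'')\|$, hence by induction $\|F_{\eta,\boldsymbol y}^{m:m+jn_{0}}(\xi')-F_{\eta,\boldsymbol y}^{m:m+jn_{0}}(\xi'')\|\le 2^{-j}\|\xi'-\xi''\|$. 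Splitting $F_{\eta,\boldsymbol y}^{m:n}=F_{\eta,\boldsymbol y}^{m+jn_{0}:n}\circ F_{\eta,\boldsymbol y}^{m:m+jn_{0}}$ once more and using the uniform Lipschitz bound (\ref{l1.4.1*}) for the last $r\le n_{0}$ steps gives $\|F_{\eta,\boldsymbol y}^{m:n}(\xi')-F_{\eta,\boldsymbol y}^{m:n}(\xi'')\|\le C_{5}\,2^{-j}\|\xi'-\xi''\|$. Since $j=\lfloor(n-m)/n_{0}\rfloor>(n-m)/n_{0}-1$, we have $2^{-j}<2\,(2^{-1/n_{0}})^{n-m}$, so the claim follows with $\gamma_{3}=2^{-1/n_{0}}$ and $C_{6}=2C_{5}$.

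The only delicate point is the bookkeeping that confines every intermediate filter to the vicinity $V_{\delta_{6}}({\cal P}({\cal X}))$: the contraction factor $\frac{1}{2}$ in (\ref{l1.4.3*}) and the positive lower bound on ${\rm Re}\{\langle R\rangle\}$ in Lemma~\ref{lemma1.4}(i) hold only on that vicinity, and it is precisely the forward-invariance assertion ``$F_{\eta,\boldsymbol y}^{m:m+n_{0}}(\xi)\in V_{\delta_{6}}({\cal P}({\cal X}))$'' of Lemma~\ref{lemma1.4}(i) that lets the block induction close and the semigroup identities be re-applied at the next block; the accompanying non-vanishing of the one-block normalizers is what keeps Lemma~\ref{lemma1.21}(ii)--(iii) applicable, giving $\langle R_{\eta,\boldsymbol y}^{m:n}(\xi)\rangle\ne 0$ as a product of nonzero factors. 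Reducing a general horizon to a multiple of $n_{0}$ plus a bounded remainder, and converting geometric decay in the number of blocks into geometric decay in $n-m$, are then routine.
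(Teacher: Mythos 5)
Your proposal is correct and follows essentially the same route as the paper: a block induction over multiples of $n_{0}$ using the semigroup identities of Lemma \ref{lemma1.21} together with the one-block non-degeneracy, forward-invariance and contraction statements of Lemma \ref{lemma1.4}, followed by the standard conversion of per-block halving into geometric decay in $n-m$ (the paper takes $C_{6}=C_{5}\gamma_{3}^{-n_{0}}=2C_{5}$, matching your constant). The one cosmetic discrepancy — deducing only $\langle R_{\eta,\boldsymbol y}^{m:n}(\xi)\rangle\neq 0$ from a product of nonzero factors rather than literally $\mathrm{Re}\{\cdot\}\neq 0$ — is present in the paper's own argument as well and is what is actually used downstream.
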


\begin{proof}
(i)
Let $n_{k}(m)$ be the integer defined by $n_{k}(m)=m+kn_{0}$ for $m,k\geq 0$
($n_{0}$ is specified in Lemma \ref{lemma1.4}).
Moreover, let $\boldsymbol y = \{y_{n} \}_{n\geq 1}$ be any sequence in ${\cal Y}$.

First, we show
\begin{align}
	&\label{l1.6.31'}
	\text{\rm Re}\left\{\left\langle R_{\eta,\boldsymbol y}^{m:n}(\xi) \right\rangle \right\}
	\neq 0,
	\\
	&\label{l1.6.31''}
	F_{\eta,\boldsymbol y}^{m:n}(\xi)
	\in V_{\delta_{4}}({\cal P}({\cal X}) ),
	\\
	&\label{l1.6.31'''}
	F_{\eta,\boldsymbol y}^{m:n_{k}(m)}(\xi)
	\in V_{\delta_{6}}({\cal P}({\cal X}) )
\end{align}
for each $\eta\in V_{\delta_{5}}(\Theta)$, $\xi\in V_{\delta_{6}}({\cal P}({\cal X}) )$,
$n_{k}(m)\geq n\geq m\geq 0$, $k\geq 0$.
We prove this by induction in $k$.

Since $n_{0}(m)=n=m$ when $n_{0}(m)\geq n\geq m\geq 0$,
we have
$R_{\eta,\boldsymbol y}^{m:n}(\xi) = \xi$,
$F_{\eta,\boldsymbol y}^{m:n}(\xi) = \xi$
for $\eta\in V_{\delta_{5}}(\Theta)$,
$\xi\in V_{\delta_{6}}({\cal P}({\cal X}) )$,
$n_{0}(m)\geq n\geq m\geq 0$.
Hence, (\ref{l1.6.31'}) -- (\ref{l1.6.31'''}) hold for $k=0$
and $\eta\in V_{\delta_{5}}(\Theta)$, $\xi\in V_{\delta_{6}}({\cal P}({\cal X}) )$,
$n_{k}(m)\geq n\geq m\geq 0$.
Now, the induction hypothesis is formulated:
Suppose that (\ref{l1.6.31'}) -- (\ref{l1.6.31'''}) are true for some $k\geq 0$
and any $\eta\in V_{\delta_{5}}(\Theta)$, $\xi\in V_{\delta_{6}}({\cal P}({\cal X}) )$,
$n_{k}(m)\geq n\geq m\geq 0$.
Then, to show (\ref{l1.6.31'}) -- (\ref{l1.6.31'''})
for $\eta\in V_{\delta_{5}}(\Theta)$, $\xi\in V_{\delta_{6}}({\cal P}({\cal X}) )$,
$n_{k+1}(m)\geq n\geq m\geq 0$,
it is sufficient to demonstrate (\ref{l1.6.31'}) -- (\ref{l1.6.31'''})
for $\eta\in V_{\delta_{5}}(\Theta)$, $\xi\in V_{\delta_{6}}({\cal P}({\cal X}) )$,
$n_{k+1}(m)\geq n\geq n_{k}(m)$, $m\geq 0$.

In the rest of the proof of (i), $\eta$, $\xi$ are any elements of
$V_{\delta_{5}}(\Theta)$, $V_{\delta_{6}}({\cal P}({\cal X}) )$ (respectively).
Owing to Lemma \ref{lemma1.4}, we have
$\text{\rm Re}\big\{ \big\langle R_{\eta,\boldsymbol y}^{n_{k}(m):n}(\xi)
\big\rangle \big\}>0$,
$F_{\eta,\boldsymbol y}^{n_{k}(m):n}(\xi)\in V_{\delta_{4}}({\cal P}({\cal X}) )$,
$F_{\eta,\boldsymbol y}^{n_{k}(m):n_{k+1}(m)}(\xi)\in V_{\delta_{6}}({\cal P}({\cal X}) )$
for $n_{k+1}(m)\geq n\geq n_{k}(m)$, $m\geq 0$.
Since $F_{\eta,\boldsymbol y}^{m:n_{k}(m)}(\xi)\in V_{\delta_{6}}({\cal P}({\cal X}) )$
(due to the induction hypothesis), we then get
\begin{align}
	&\label{l1.6.201'}
	\text{\rm Re}\big\{
	\big\langle
	R_{\eta,\boldsymbol y}^{n_{k}(m):n}
	\big( F_{\eta,\boldsymbol y}^{m:n_{k}(m)}(\xi) \big)
	\big\rangle
	\big\}
	>0,
	\\
	&\label{l1.6.201}
	F_{\eta,\boldsymbol y}^{n_{k}(m):n}
	\big( F_{\eta,\boldsymbol y}^{m:n_{k}(m)}(\xi) \big)
	\in V_{\delta_{4}}({\cal P}({\cal X}) ),
	\\
	&\label{l1.6.203}
	F_{\eta,\boldsymbol y}^{n_{k}(m):n_{k+1}(m)}
	\big( F_{\eta,\boldsymbol y}^{m:n_{k}(m)}(\xi) \big)
	\in V_{\delta_{6}}({\cal P}({\cal X}) )
\end{align}
for $n_{k+1}(m)\geq n\geq n_{k}(m)$, $m\geq 0$.
As $\big\langle R_{\eta,\boldsymbol y}^{m:n_{k}(m)} \left( \xi \right) \big\rangle \neq 0$
(due to the induction hypothesis),
Lemma \ref{lemma1.21} (Part (ii)) and (\ref{l1.6.201'}) imply
\begin{align}\label{l1.6.301}
	\left\langle
	R_{\eta,\boldsymbol y}^{m:n}
	\left( \xi \right)
	\right\rangle
	=&
	\big\langle
	R_{\eta,\boldsymbol y}^{m:n_{k}(m)}
	\left( \xi \right)
	\big\rangle
	\big\langle
	R_{\eta,\boldsymbol y}^{n_{k}(m):n}
	\big( F_{\eta,\boldsymbol y}^{m:n_{k}(m)}(\xi) \big)
	\big\rangle
	\nonumber\\
	\neq &
	0
\end{align}
for $n_{k+1}(m)\geq n\geq n_{k}(m)$, $m\geq 0$.
Since $\big\langle R_{\eta,\boldsymbol y}^{m:n_{k}(m)} \left( \xi \right) \big\rangle \neq 0$,
$\big\langle R_{\eta,\boldsymbol y}^{m:n_{k+1}(m)} \left( \xi \right) \big\rangle \neq 0$,
$F_{\eta,\boldsymbol y}^{m:n_{k}(m)}(\xi)\in V_{\delta_{6}}({\cal P}({\cal X}) )$
(due to the induction hypothesis and (\ref{l1.6.301})),
Lemma \ref{lemma1.21} (Part (iii)) and (\ref{l1.6.201}), (\ref{l1.6.203}) yield
\begin{align}
	&\label{l1.6.303}
	F_{\eta,\boldsymbol y}^{m:n}(\xi)
	=
	F_{\eta,\boldsymbol y}^{n_{k}(m):n}
	\big( F_{\eta,\boldsymbol y}^{m:n_{k}(m)}(\xi) \big)
	\in V_{\delta_{4}}({\cal P}({\cal X}) ),
	\\
	&\label{l1.6.305}
	\begin{aligned}[b]
	F_{\eta,\boldsymbol y}^{m:n_{k+1}(m)}
	\left( \xi \right)
	=&
	F_{\eta,\boldsymbol y}^{n_{k}(m):n_{k+1}(m)}
	\big( F_{\eta,\boldsymbol y}^{m:n_{k}(m)}(\xi) \big)
	\\
	\in&
	V_{\delta_{6}}({\cal P}({\cal X}) )
	\end{aligned}
\end{align}
for $n_{k+1}(m)\geq n\geq n_{k}(m)$, $m\geq 0$.
Combining (\ref{l1.6.301}) -- (\ref{l1.6.305}) with the induction hypothesis,
we deduce that (\ref{l1.6.31'}) -- (\ref{l1.6.31'''}) hold
for $n_{k+1}(m)\geq n\geq m$, $m\geq 0$.
Then, relying on the principle of mathematical induction, we conclude
that (\ref{l1.6.31'}) -- (\ref{l1.6.31'''}) are satisfied
for each $\eta\in V_{\delta_{5}}(\Theta)$, $\xi\in V_{\delta_{6}}({\cal P}({\cal X}) )$,
$n_{k}(m)\geq n\geq m\geq 0$, $k\geq 0$.
As a direct consequence of this, we have that (i) is true.

(ii) Let $\gamma_{3}$, $C_{6}$ be the real numbers defined by $\gamma_{3}=2^{-1/n_{0}}$, $C_{6}=C_{5}\gamma_{3}^{-n_{0}}$
($C_{5}$, $n_{0}$ are specified in Lemma \ref{lemma1.4}),
while $n_{k}(m)$, $\boldsymbol y$ have the same meaning as in (i).
Moreover, let $\eta$ be any element of $V_{\delta_{6}}(\Theta)$,
while $\xi'$, $\xi''$ are any elements in $V_{\delta_{6}}({\cal P}({\cal X}) )$.

Owing to Lemmas \ref{lemma1.21}, \ref{lemma1.4} and (\ref{l1.6.31'}) -- (\ref{l1.6.31'''}), we have
\begin{align*}
	&
	\left\|
	F_{\eta,\boldsymbol y}^{m:n_{k+1}(m)}(\xi')
	-
	F_{\eta,\boldsymbol y}^{m:n_{k+1}(m)}(\xi'')
	\right\|
	\\
	&
	\begin{aligned}
	=
	&
	\left\|
	F_{\eta,\boldsymbol y}^{n_{k}(m):n_{k+1}(m)}
	\!\big(F_{\eta,\boldsymbol y}^{m:n_{k}(m)}(\xi') \big)
	\right.
	\\
	&
	\left.
	\;-
	F_{\eta,\boldsymbol y}^{n_{k}(m):n_{k+1}(m)}
	\!\big(F_{\eta,\boldsymbol y}^{m:n_{k}(m)}(\xi'') \big)
	\right\|
	\end{aligned}
	\\
	&\leq
	\frac{1}{2}
	\left\|
	F_{\eta,\boldsymbol y}^{m:n_{k}(m)}(\xi')
	-
	F_{\eta,\boldsymbol y}^{m:n_{k}(m)}(\xi'')
	\right\|
\end{align*}
for $m,k\geq 0$.
Consequently, we have
\begin{align*}
	&
	\left\|
	F_{\eta,\boldsymbol y}^{m:n_{k}(m)}(\xi')
	-
	F_{\eta,\boldsymbol y}^{m:n_{k}(m)}(\xi'')
	\right\|
	\\
	&\leq
	\frac{1}{2^{k}}
	\left\|
	F_{\eta,\boldsymbol y}^{m:m}(\xi')
	-
	F_{\eta,\boldsymbol y}^{m:m}(\xi'')
	\right\|
	=
	\gamma_{3}^{n_{k}(m)-m} \|\xi'-\xi''\|. 
\end{align*}
Combining this with Lemma \ref{lemma1.4} and (\ref{l1.6.31'}) -- (\ref{l1.6.31'''}), we get
\begin{align}\label{l1.6.41}
	&
	\left\|
	F_{\eta,\boldsymbol y}^{m:n}(\xi')
	-
	F_{\eta,\boldsymbol y}^{m:n}(\xi'')
	\right\|
	\nonumber\\
	&
	=
	\left\|
	F_{\eta,\boldsymbol y}^{n_{k}(m):n}\!
	\big(F_{\eta,\boldsymbol y}^{m:n_{k}(m)}(\xi') \big)
	\!-\!
	F_{\eta,\boldsymbol y}^{n_{k}(m):n}\!
	\big(F_{\eta,\boldsymbol y}^{m:n_{k}(m)}(\xi'') \big)
	\right\|
	\nonumber\\
	&\leq
	C_{5}
	\left\|
	F_{\eta,\boldsymbol y}^{m:n_{k}(m)}(\xi')
	-
	F_{\eta,\boldsymbol y}^{m:n_{k}(m)}(\xi'')
	\right\|
	\nonumber\\
	&\leq
	C_{5} \gamma_{3}^{n_{k}(m)-m} \|\xi'-\xi''\|
	\leq 
	C_{6} \gamma_{3}^{n-m} \|\xi'-\xi''\|
\end{align}
for $n_{k+1}(m)> n \geq n_{k}(m)$, $m,k\geq 0$. 
Here, we also use relations
$C_{5}\gamma_{3}^{n_{k}(m)-m}
=(C_{5}\gamma_{3}^{n_{k}(m)-n})\gamma_{3}^{n-m}
\leq (C_{5}\gamma_{3}^{-n_{0}} )\gamma_{3}^{n-m}
=C_{6}\gamma_{3}^{n-m}$.
Setting $k=\lfloor (n-m)/n_{0}\rfloor$ in (\ref{l1.6.41}), 
we conclude that (\ref{l1.6.1*}) holds for each $n\geq m\geq 0$ by.
\end{proof}

\section{Proof of Main Results }\label{section3*}

In this section, Theorems \ref{theorem1} and \ref{theorem2} are proved.
The proofs of these theorems crucially depend on
the results related to the kernels $S(z,{\rm d}z')$, $S_{\eta}(z,{\rm d}z')$
and the optimal filter $F_{\eta,\boldsymbol y}^{m:n}(\xi)$
(i.e., on Lemmas \ref{lemma2.3}, \ref{lemma2.1}, \ref{lemma1.6}).
As the properties of $S(z,{\rm d}z')$, $S_{\eta}(z,{\rm d}z')$
are very similar, the proofs of Theorems \ref{theorem1} and \ref{theorem2} have many elements in common.
In order not to consider these elements twice
(and to prove Theorems \ref{theorem1} and \ref{theorem2} as efficiently as
possible), we introduce a new kernel $T_{\eta}(z,{\rm d}z')$,
where $\eta\in\mathbb{C}^{d}$, $z\in{\cal Z}$.\footnote
{$T_{\eta}(z,{\rm d}z')$ can be considered as a mapping with the following
properties: (i) $T_{\eta}(z,B)$ maps $\eta\in\mathbb{C}^{d}$, $z\in{\cal Z}$,
$B\in{\cal B}({\cal Z})$ to $\mathbb{C}$,
(ii) $T_{\eta}(z,B)$ is measurable in $(\eta,z)$ for each $B\in{\cal B}({\cal Z})$,
and (iii) $T_{\eta}(z,B)$ is a complex measure in $B$
for each $\eta\in\mathbb{C}^{d}$, $z\in{\cal Z}$. }
Its purpose is to capture all common features of $S(z,{\rm d}z')$, $S_{\eta}(z,{\rm d}z')$
which are relevant for the proof of Theorems \ref{theorem1} and \ref{theorem2}.
Using $T_{\eta}(z,{\rm d}z')$, we recursively define kernels
$\left\{ T_{\eta}^{n}(z,{\rm d}z') \right\}_{n\geq 0}$ by
$T_{\eta}^{0}(z,B)=\delta_{z}(B)$ and
\begin{align*}
	T_{\eta}^{n+1}(z,B)
	=
	\int T_{\eta}^{n}(z',B) T_{\eta}(z,{\rm d}z'),
\end{align*}
where
$B\in{\cal B}({\cal Z})$.

Regarding $T_{\eta}(z,{\rm d}z')$, we assume the following.

\begin{assumption}\label{a4}
For each $\theta\in\Theta$, $z\in{\cal Z}$,
$T_{\theta}(z,{\rm d}z')$ is a probability measure.
\end{assumption}

\begin{assumption}\label{a5}
(i) There exist real numbers $\alpha\in(0,\delta]$, $L\in[1,\infty)$
such that
\begin{align*}
	&
	\left|
	T_{\eta'}
	-
	T_{\eta''}
	\right|(z,B)
	\leq
	L\|\eta'-\eta''\|,
	\\
	&
	\int \tilde{\psi}(z') |T_{\eta}|(z,{\rm d}z')
	\leq
	L
\end{align*}
for all $\eta,\eta',\eta''\in V_{\alpha}(\Theta)$, $z\in{\cal Z}$, $B\in{\cal B}({\cal Z})$
(here, $|T_{\eta'}-T_{\eta''}|(z,{\rm d}z')$ denotes the total variation of
$T_{\eta'}(z,{\rm d}z')-T_{\eta''}(z,{\rm d}z')$,
while $\delta$, $\tilde{\psi}(z)$ are specified in Assumption \ref{a1} and (\ref{5.5})).

(ii) For each $\eta\in V_{\alpha}(\Theta)$,
there exists a complex measure $\tau_{\eta}({\rm d}z)$
such that
$\lim_{n\rightarrow\infty}T_{\eta}^{n}(z,B)=\tau_{\eta}(B)$
for all $z\in{\cal Z}$, $B\in{\cal B}({\cal Z})$.

(iii) There exists a real number
$\beta\in(0,1)$ such that
\begin{align*}
	\left|
	T_{\eta}^{n} - \tau_{\eta}
	\right|(z,B)
	\leq
	L\beta^{n}
\end{align*}
for all $\eta\in V_{\alpha}(\Theta)$, $z\in{\cal Z}$, $B\in{\cal B}({\cal Z})$,
$n\geq 1$
(here, $\left|T_{\eta}^{n} - \tau_{\eta} \right|(z,{\rm d}z')$
stands for the total variation of
$T_{\eta}^{n}(z,{\rm d}z') - \tau_{\eta}({\rm d}z')$).
\end{assumption}

\begin{remark}
According to Lemmas \ref{lemma2.3} and \ref{lemma2.1},
both kernels $S(z,{\rm d}z')$, $S_{\eta}(z,{\rm d}z')$ satisfy Assumptions \ref{a4} and \ref{a5}.
These assumptions capture all common properties
of $S(z,{\rm d}z')$, $S_{\eta}(z,{\rm d}z')$ relevant for the proof of Theorems \ref{theorem1} and \ref{theorem2}.
\end{remark}

Besides the notations introduced in the previous sections,
we rely here on the following notations, too.
$u_{\eta}^{n}(z_{0:n})$ and
$F_{\eta}^{n}(\xi,z_{1:n})$
are (respectively) the function and the complex measure defined by
\begin{align*}
	&
	u_{\eta}^{n}(z_{0:n})
	=
	u_{\eta}^{n}(x_{0:n},y_{1:n}),
	\\
	&
	F_{\eta}^{n}(\xi,z_{1:n})
	=
	F_{\eta,\boldsymbol y}^{0:n}(\xi)
\end{align*}
for $\eta\in\mathbb{C}^{d}$, $\xi\in{\cal M}_{c}({\cal X})$,
$x_{0},\dots,x_{n}\in{\cal X}$, $y_{0},\dots,y_{n}\in{\cal Y}$, $n\geq 0$
and $z_{0}=(y_{0},x_{0}),\dots,z_{n}=(y_{n},x_{n})$,
where $\boldsymbol y = \{y'_{n} \}_{n\geq 1}$ is any sequence in ${\cal Y}$ satisfying
$y'_{k}=y_{k}$ for $n\geq k\geq 1$.\footnote
{Symbols $y_{1:0}$, $z_{1:0}$ denote empty sequences (i.e., sequences without any element).
$u_{\eta}^{n}(x_{0:n},y_{1:n})$, $F_{\eta,\boldsymbol y}^{0:n}(\xi)$ are specified
in (\ref{5.3}),(\ref{2.5*}).
$F_{\eta,\boldsymbol y}^{0:n}(\xi)$ depends only on
$y'_{1},\dots,y'_{n}$ and is independent of other elements of $\boldsymbol y$. }
$\Phi_{\eta}(\xi,z)$ is the function defined by
\begin{align*}
	\Phi_{\eta}(\xi,z)
	=
	\Phi_{\eta,y}(\xi),
\end{align*}
where $x\in{\cal X}$, $y\in{\cal Y}$
and $z=(y,x)$
($\Phi_{\eta,y}(\xi)$ is specified in (\ref{2.7*})).\footnote
{Functions $u_{\eta}^{n}(z_{0:n})$, $F_{\eta}^{n}(\xi,z_{1:n})$,
$\Phi_{\eta}(\xi,z)$ are just another notations
for $u_{\eta}^{n}(x_{0:n},y_{1:n})$,
$F_{\eta,\boldsymbol y}^{0:n}(\xi)$, $\Phi_{\eta,y}(\xi)$.
However, notations $u_{\eta}^{n}(z_{0:n})$, $F_{\eta}^{n}(\xi,z_{1:n})$,
$\Phi_{\eta}(\xi,z)$ are more suitable (than the original one)
for measure-theoretic arguments
which the analysis carried out in this section is based on.}
$\Phi_{\eta}^{n}(\xi,z)$ is the function defined by
\begin{align}\label{3.7*}
	\Phi_{\eta}^{n}(\xi,z)
	=&
	\int\cdots\int
	\Phi_{\eta}\left(
	F_{\eta}^{n}(\xi,z_{1:n}), z_{n+1}
	\right)
	\nonumber\\
	&\cdot
	T_{\eta}(z_{n},{\rm d}z_{n+1})\cdots T_{\eta}(z,{\rm d}z_{1})
\end{align}
for $n\geq 0$.
$\bar{A}_{\eta}^{n}(\xi)$,
$A_{\eta}^{k,n}(\xi,z)$, $B_{\eta}^{n}(\xi,z)$ are the functions defined by
\begin{align*}
	&
	\begin{aligned}
	\bar{A}_{\eta}^{n}(\xi)
	=&
	\int\cdots\int\int
	\begin{aligned}[t]
	&\left(
	\Phi_{\eta}\left(
	F_{\eta}^{n}(\xi,z_{1:n}),z_{n+1}
	\right)
	\right.
	\\
	&
	\;-
	\left.
	\Phi_{\eta}\left(
	F_{\eta}^{n-1}(\xi,z_{2:n}),z_{n+1}
	\right)
	\right)
	\end{aligned}
	\\
	&\cdot
	T_{\eta}(z_{n},{\rm d}z_{n+1})\cdots T_{\eta}(z_{0},{\rm d}z_{1})\tau_{\eta}({\rm d}z_{0}),
	\end{aligned}
	\\
	&
	\begin{aligned}
	A_{\eta}^{k,n}(\xi,z)
	\!=\!
	&
	\int\cdots\int\int
	\begin{aligned}[t]
	&
	\left(
	\Phi_{\eta}\left(
	F_{\eta}^{n-k+1}(\xi,z_{k:n}),z_{n+1}
	\right)
	\right.
	\\
	&
	\;-
	\left.
	\Phi_{\eta}\left(
	F_{\eta}^{n-k}(\xi,z_{k+1:n}),z_{n+1}
	\right)
	\right)
	\end{aligned}
	\\
	&\cdot
	T_{\eta}(z_{n},\!{\rm d}z_{n+1}\!)\cdots T_{\eta}(z_{k},\!{\rm d}z_{k+1}\!)
	(T_{\eta}^{k}\!\!-\!\tau_{\eta})(z,\!{\rm d}z_{k}),
	\end{aligned}
	\\
	&
	\begin{aligned}
	B_{\eta}^{n}(\xi,z)
	=
	\int
	\Phi_{\eta}\left(\xi,z'\right)
	(T_{\eta}^{n+1}-\tau_{\eta})(z,{\rm d}z')
	\end{aligned}
\end{align*}
for $n\geq k\geq 1$.

Under the notations introduced above, we have
\begin{align}\label{3.1*}
	\log q_{\theta}^{n}(y_{1:n}|\lambda)
	=
	\sum_{k=0}^{n-1}
	\Phi_{\theta}\left(
	F_{\theta}^{k}(\lambda,z_{1:k}), z_{k+1}
	\right)
\end{align}
for $\theta\in\Theta$, $\lambda\in{\cal P}({\cal X})$,
$x_{1},\dots,x_{n}\in{\cal X}$, $y_{1},\dots,y_{n}\in{\cal Y}$, $n\geq 1$ and
$z_{1}=(y_{1},x_{1}),\dots,z_{n}=(y_{n},x_{n})$.
We also have
\begin{align}
	\label{3.3*}
	\Phi_{\eta}^{n}(\xi,z') - \Phi_{\eta}^{n}(\xi,z'')
	=&
	\sum_{k=1}^{n}\left(
	A_{\eta}^{k,n}(\xi,z') - A_{\eta}^{k,n}(\xi,z'')
	\right)
	\nonumber\\
	&
	+
	B_{\eta}^{n}(\xi,z') - B_{\eta}^{n}(\xi,z''),
	\\
	\label{3.5*}
	\Phi_{\eta}^{n+1}(\xi,z) - \Phi_{\eta}^{n}(\xi,z)
	=&
	\sum_{k=1}^{n+1}
	A_{\eta}^{k,n+1}(\xi,z)
	-
	\sum_{k=1}^{n}
	A_{\eta}^{k,n}(\xi,z)
	\nonumber\\
	&
	+\!
	\bar{A}_{\eta}^{n+1}(\xi)
	+\!
	B_{\eta}^{n+1}(\xi,z) \!-\! B_{\eta}^{n}(\xi,z)
\end{align}
for $\eta\in V_{\alpha}(\Theta)$, $\xi\in{\cal M}_{c}({\cal X})$,
$z,z',z''\in{\cal Z}$, $n\geq 1$.

\begin{lemma}\label{lemma3.1}
Let Assumptions \ref{a11} -- \ref{a12}, \ref{a4} and \ref{a5} hold.
Then, there exist a function $\phi_{\eta}$
mapping $\eta\in\mathbb{C}^{d}$ to $\mathbb{C}$
and real numbers $\delta_{9},\gamma_{4}\in(0,1)$,
$C_{7}\in[1,\infty)$ such that
\begin{align}\label{l3.1.1*}
	\left|
	\Phi_{\eta}^{n}(\xi,z) - \phi_{\eta}
	\right|
	\leq
	C_{7}n\gamma_{4}^{n}
\end{align}
for all $\eta\in V_{\delta_{9}}(\Theta)$, $\xi\in V_{\delta_{9}}({\cal P}({\cal X}) )$,
$z\in{\cal Z}$, $n\geq 1$.
\end{lemma}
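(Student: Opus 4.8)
The plan is to show that, for $\eta$ in a sufficiently small complex neighbourhood $V_{\delta_{9}}(\Theta)$ and $\xi\in V_{\delta_{9}}({\cal P}({\cal X}))$, the sequence $\{\Phi_{\eta}^{n}(\xi,z)\}_{n\ge 1}$ is Cauchy \emph{uniformly} in $(\xi,z)$ at rate $n\gamma_{4}^{n}$; its limit then serves as $\phi_{\eta}$, and two further uniform estimates force that limit to be independent of $z$ (via \eqref{3.3*}) and of $\xi$ (via Lemma \ref{lemma1.6}). First I would fix $\delta_{9}$ small enough that Lemmas \ref{lemma1.3}, \ref{lemma1.4}, \ref{lemma1.6} and Assumption \ref{a5} all apply; in particular every filter measure occurring below (the $F_{\eta,\boldsymbol y}^{m:n}(\xi)$, including the one-step update $F_{\eta,\boldsymbol y}^{0:1}(\xi)$) lies in $V_{\delta_{4}}({\cal P}({\cal X}))$, so Lemma \ref{lemma1.3}(ii) bounds the relevant values of $\Phi_{\eta}$ by $C_{4}(1+\psi(y))$. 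I would also record two elementary consequences of Assumption \ref{a5}: that $\tau_{\eta}$ is $T_{\eta}$-invariant, and that the iterated kernels have total variation bounded uniformly in $n,z,\eta$, namely $\|T_{\eta}^{n}(z,\cdot)\|\le 3L$ — because $\|T_{\eta}(z,\cdot)\|\le\int\tilde{\psi}\,d|T_{\eta}|(z,\cdot)\le L$ gives $\|\tau_{\eta}\|\le 2L$ via \ref{a5}(iii) at $n=1$, whence $\|T_{\eta}^{n}(z,\cdot)\|\le\|\tau_{\eta}\|+L\beta^{n}$. With these, $\Phi_{\eta}^{n}(\xi,z)$ is finite: integrating the last variable first, $|\int\Phi_{\eta}(F_{\eta}^{n}(\xi,z_{1:n}),z_{n+1})T_{\eta}(z_{n},{\rm d}z_{n+1})|\le C_{4}\int\tilde{\psi}(z_{n+1})|T_{\eta}|(z_{n},{\rm d}z_{n+1})\le C_{4}L$.

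The heart of the proof is to bound, uniformly in the indices, the three families entering \eqref{3.3*} and \eqref{3.5*}. For $A_{\eta}^{k,n}(\xi,z)$: by Lemma \ref{lemma1.21} the two filters inside the difference $\Phi_{\eta}(F_{\eta}^{n-k+1}(\xi,z_{k:n}),z_{n+1})-\Phi_{\eta}(F_{\eta}^{n-k}(\xi,z_{k+1:n}),z_{n+1})$ process the \emph{same} last $n-k$ observations, started respectively from $\xi$ and from its one-step update, so Lemma \ref{lemma1.6}(ii) bounds their discrepancy by $C_{6}\gamma_{3}^{n-k}\|F_{\eta,\boldsymbol y}^{0:1}(\xi)-\xi\|\le 4C_{6}\gamma_{3}^{n-k}$, uniformly over the observation string; with Lemma \ref{lemma1.3}(ii) this gives $|\Delta\Phi|\le 4C_{4}C_{6}\gamma_{3}^{n-k}(1+\psi(y_{n+1}))$. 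Absorbing the weight $(1+\psi(y_{n+1}))$ against the innermost kernel (using \ref{a5}(i)) and treating the factor $(T_{\eta}^{k}-\tau_{\eta})(z,{\rm d}z_{k})$ through \ref{a5}(iii) then yields $|A_{\eta}^{k,n}(\xi,z)|\le C\beta^{k}\gamma_{3}^{n-k}$; the same computation with $\tau_{\eta}$ in place of the chain law (and no $\beta^{k}$ factor) gives $|\bar{A}_{\eta}^{n}(\xi)|\le C\gamma_{3}^{n-1}$. For $B_{\eta}^{n}$ the clean device is the identity $(T_{\eta}^{n+1}-\tau_{\eta})(z,\cdot)=\int T_{\eta}(z',\cdot)\,(T_{\eta}^{n}-\tau_{\eta})(z,{\rm d}z')$ (invariance of $\tau_{\eta}$): then $\Phi_{\eta}(\xi,\cdot)$ is integrated against a single $T_{\eta}$-step, bounded by $C_{4}L$ via \ref{a5}(i), so $|B_{\eta}^{n}(\xi,z)|\le C_{4}L\,\|T_{\eta}^{n}(z,\cdot)-\tau_{\eta}\|\le C_{4}L^{2}\beta^{n}$.

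Substituting these bounds into \eqref{3.5*} and setting $\gamma_{4}:=\max\{\beta,\gamma_{3}\}\in(0,1)$, the elementary estimate $\sum_{k=1}^{m}\beta^{k}\gamma_{3}^{m-k}\le m\gamma_{4}^{m}$ gives $|\Phi_{\eta}^{n+1}(\xi,z)-\Phi_{\eta}^{n}(\xi,z)|\le C(n+1)\gamma_{4}^{n}$; summing the tail shows $\{\Phi_{\eta}^{n}(\xi,z)\}_{n}$ is Cauchy uniformly in $(\xi,z)$, and its limit $\phi_{\eta}$ satisfies \eqref{l3.1.1*} after enlarging the constant to some $C_{7}\in[1,\infty)$ (put $\phi_{\eta}:=0$ for $\eta\notin V_{\delta_{9}}(\Theta)$). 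That $\phi_{\eta}$ is independent of $z$ follows from \eqref{3.3*} together with the bounds just obtained on $A_{\eta}^{k,n}$ and $B_{\eta}^{n}$, which force $|\Phi_{\eta}^{n}(\xi,z')-\Phi_{\eta}^{n}(\xi,z'')|\to 0$; that it is independent of $\xi$ follows from Lemmas \ref{lemma1.3}(ii) and \ref{lemma1.6}(ii), which by an estimate of the same type give $|\Phi_{\eta}^{n}(\xi',z)-\Phi_{\eta}^{n}(\xi'',z)|\to 0$.

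The step I expect to be the main obstacle is precisely the suite of uniform-in-$(k,n)$ bounds $|A_{\eta}^{k,n}(\xi,z)|\le C\beta^{k}\gamma_{3}^{n-k}$ and its companions. Three features pull against one another and must be reconciled: the weight $\psi$ in Lemma \ref{lemma1.3} is unbounded, so it can be disposed of only by integrating it against a single genuine kernel step (Assumption \ref{a5}(i)); the complex continuations $T_{\eta}$ are not sub-probabilistic, so a composed kernel must never be controlled by the naive product $\prod_{j}\|T_{\eta}(z_{j},\cdot)\|$ (which would grow like $L^{n-k}$) but always through the geometric-ergodicity bound \ref{a5}(iii); and the filter discrepancy inside $\Delta\Phi$ must be shown to decay like $\gamma_{3}^{n-k}$ with a constant independent of the observation string, which is exactly what the cocycle identity of Lemma \ref{lemma1.21} combined with Lemma \ref{lemma1.6} supplies. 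Making the bookkeeping of absolute values, total-variation norms and the $\psi$-weight cooperate, so that the two independent geometric rates $\beta$ (kernel mixing) and $\gamma_{3}$ (filter forgetting) both survive intact, is the delicate part; once it is done, what remains is the convolution estimate above and a routine Cauchy argument.
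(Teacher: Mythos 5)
Your skeleton matches the paper's: the same decomposition of $\Phi_{\eta}^{n}$ via \eqref{3.3*} and \eqref{3.5*}, the same pointwise estimate on the difference of $\Phi_{\eta}$-values obtained by combining the cocycle identity of Lemma \ref{lemma1.21} with Lemmas \ref{lemma1.3} and \ref{lemma1.6}, the same telescoping/Cauchy argument, and the same two uniform estimates to kill the dependence on $z$ and on $\xi$. Your treatment of $B_{\eta}^{n}$ (peeling off one kernel step via the invariance of $\tau_{\eta}$) is correct and essentially what the paper does.

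However, there is a genuine gap at exactly the step you flag as the main obstacle. In bounding $|A_{\eta}^{k,n}(\xi,z)|$ you account for the innermost kernel (which absorbs the weight $1+\psi(y_{n+1})$ at cost $L$, by Assumption \ref{a5}(i)) and for the outermost factor $(T_{\eta}^{k}-\tau_{\eta})(z,{\rm d}z_{k})$ (cost $L\beta^{k}$, by Assumption \ref{a5}(iii)), but the remaining $n-k$ kernels $T_{\eta}(z_{j},{\rm d}z_{j+1})$, $j=k,\dots,n-1$, are left uncontrolled. Your proposed remedy --- route everything through the geometric-ergodicity bound \ref{a5}(iii), or through the uniform bound $\|T_{\eta}^{m}(z,\cdot)\|\le 3L$ --- does not apply here: since the integrand $\Delta\Phi$ depends on \emph{all} intermediate variables $z_{k},\dots,z_{n+1}$ (through the filters), bounding $|A_{\eta}^{k,n}|$ forces you to integrate $|\Delta\Phi|$ against the $(n-k+1)$-fold composition of the \emph{variation} kernels $|T_{\eta}|$, and this composition is not the total variation of $T_{\eta}^{n-k+1}$; Assumption \ref{a5} gives no better bound for it than $L^{n-k}$, which destroys the claimed $C\beta^{k}\gamma_{3}^{n-k}$ with $C$ independent of $n-k$. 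The same problem infects $\bar{A}_{\eta}^{n}$ and the estimate of $|\Phi_{\eta}^{n}(\xi',z)-\Phi_{\eta}^{n}(\xi'',z)|$. The device you are missing is the one the paper uses: set $\gamma_{4}:=\max\{\beta^{1/2},\gamma_{3}^{1/2}\}$ and shrink $\delta_{9}\le(1-\gamma_{4})/L$ so that, by Assumption \ref{a4} and \ref{a5}(i), each single kernel satisfies $\|T_{\eta}(z,\cdot)\|\le 1+L\delta_{9}\le 1/\gamma_{4}$; the middle chain then costs only $\gamma_{4}^{-(n-k)}$, which is paid for by the decay $\gamma_{4}^{2(n-k)}\le\gamma_{3}^{n-k}$ of the filter-forgetting factor, leaving a net rate $\gamma_{4}^{n-k}$. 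Without this (or an equivalent) mechanism the central uniform bounds, and hence the lemma, are not established.
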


\begin{proof}
Throughout the proof, the following notations is used.
$\gamma_{4}$, $\delta_{9}$ are the real numbers defined by
$\gamma_{4}=\max\{\beta^{1/2},\gamma_{3}^{1/2} \}$,
$\delta_{9}=\min\{\delta_{7},\delta_{8},(1-\gamma_{4})/L\}$
($\beta$, $\delta_{7}$, $\delta_{8}$, $\gamma_{3}$, $L$ are specified in
Assumption \ref{a5} and Lemmas \ref{lemma1.4}, \ref{lemma1.6}).
$\eta$ is any element in $V_{\delta_{9}}(\Theta)$, while
$\theta$ is any element of $\Theta$ satisfying
$\|\eta-\theta\|<\delta_{9}$.
$\xi$, $\xi'$, $\xi''$ are any elements of $V_{\delta_{9}}({\cal P}({\cal X}) )$,
while $z$, $z'$, $z''$ are any elements in ${\cal Z}$.
$B$ is any element of ${\cal B}({\cal Z})$.
$n$, $k$ are any integers satisfying $n\geq k\geq 1$.

Owing to Assumptions \ref{a4} and \ref{a5}, we have
\begin{align}\label{l3.1.101}
	|T_{\eta}|(z,B)
	\leq &
	T_{\theta}(z,B)
	+
	|T_{\eta}-T_{\theta}|(z,B)
	\nonumber\\
	\leq &
	1 + L\|\eta-\theta\|
	\nonumber\\
	< &
	1+L\delta_{9}
	\leq
	\frac{1}{\gamma_{4}}
\end{align}
as $L\delta_{9}\leq 1-\gamma_{4}\leq 1/\gamma_{4}-1$.
Consequently, Assumption \ref{a4} yields
\begin{align}\label{l3.1.103}
	|\tau_{\eta}|(B)
	\leq
	|T_{\eta}-\tau_{\eta}|(z,B)
	+
	|T_{\eta}|(z,B)
	\leq
	L+\frac{1}{\gamma_{4}}.
\end{align}

Let $\tilde{C}_{1}=4C_{4}C_{6}$ where
$C_{4}$, $C_{6}$ are specified in Lemmas \ref{lemma1.3}, \ref{lemma1.6}.
Then, due to to Lemmas \ref{lemma1.21}, \ref{lemma1.3}, \ref{lemma1.6}, we have
\begin{align}\label{l3.1.3}
	&
	\left|
	\Phi_{\eta}\left(
	F_{\eta}^{n-k+1}(\xi,z_{k:n}),z_{n+1}
	\right)
	\!-\!
	\Phi_{\eta}\left(
	F_{\eta}^{n-k}(\xi,z_{k+1:n}),z_{n+1}
	\right)
	\right|
	\nonumber\\
	&\leq \!
	C_{4}\tilde{\psi}(z_{n+1} )
	\left\|
	F_{\eta}^{n-k+1}(\xi,z_{k:n})
	-
	F_{\eta}^{n-k}(\xi,z_{k+1:n})
	\right\|
	\nonumber\\
	&=\!
	C_{4}\tilde{\psi}(z_{n+1} )
	\left\|
	F_{\eta}^{n-k}\!\left(F_{\eta}^{1}(\xi,z_{k}),z_{k+1:n}\right)
	\!-\!
	F_{\eta}^{n-k}(\xi,z_{k+1:n})
	\right\|
	\nonumber\\
	&\leq \!
	C_{4}C_{6}\gamma_{3}^{n-k}\tilde{\psi}(z_{n+1} )
	\left\|F_{\eta}^{1}(\xi,z_{k})-\xi\right\|
	\nonumber\\
	&\leq
	\tilde{C}_{1}\gamma_{4}^{2(n-k)}\tilde{\psi}(z_{n+1} )
\end{align}
for $z_{1},\dots,z_{n+1}\in{\cal Z}$.\footnote
{To get the first two relations in (\ref{l3.1.3}),
use Lemmas \ref{lemma1.21}, \ref{lemma1.3}, and
notice that inclusions $\eta\in V_{\delta_{4}}(\Theta)$,
$F_{\eta}^{n-k+1}(\xi',z_{k:n})\in V_{\delta_{4}}({\cal P}({\cal X}) )$,
$F_{\eta}^{n-k}(\xi'',z_{k+1:n})\in V_{\delta_{4}}({\cal P}({\cal X}) )$
follow from Lemma \ref{lemma1.6} and
$\eta\in V_{\delta_{9}}(\Theta)\subseteq V_{\delta_{5}}(\Theta)$,
$\xi',\xi''\in V_{\delta_{9}}({\cal P}({\cal X}) )
\subseteq V_{\delta_{6}}({\cal P}({\cal X}) )$.
To get the third relation in (\ref{l3.1.3}),
use Lemma \ref{lemma1.6} and notice that
$F_{\eta}^{1}(\xi,z_{k})\in V_{\delta_{6}}({\cal P}({\cal X}) )$
follows from Lemma \ref{lemma1.3} and
$\eta\in V_{\delta_{9}}(\Theta)\subseteq V_{\delta_{7}}(\Theta)$,
$\xi\in V_{\delta_{9}}({\cal P}({\cal X}) )
\subseteq V_{\delta_{8}}({\cal P}({\cal X}) )$.
To get the last relation in (\ref{l3.1.3}),
use inequality
$\|F_{\eta}^{1}(\xi,z_{k})-\xi\|\leq \|F_{\eta}^{1}(\xi,z_{k})\|+\|\xi\|
\leq 2+\delta_{6}+\delta_{9} \leq 4$. }
Similarly, owing to Lemmas \ref{lemma1.3}, \ref{lemma1.6}, we have
\begin{align}\label{l3.1.1}
	&
	\left|
	\Phi_{\eta}\left(
	F_{\eta}^{n}(\xi',z_{1:n}),z_{n+1}
	\right)
	-
	\Phi_{\eta}\left(
	F_{\eta}^{n}(\xi'',z_{1:n}),z_{n+1}
	\right)
	\right|
	\nonumber\\
	&\leq
	C_{4}\tilde{\psi}(z_{n+1} )
	\left\|
	F_{\eta}^{n}(\xi',z_{1:n})
	-
	F_{\eta}^{n}(\xi'',z_{1:n})
	\right\|
	\nonumber\\
	&\leq
	C_{4}C_{6}\gamma_{3}^{n}\tilde{\psi}(z_{n+1} )\|\xi'-\xi''\|
	\nonumber\\
	&\leq
	\tilde{C}_{1}\gamma_{4}^{2n}\tilde{\psi}(z_{n+1} ).
\end{align}

Let $\tilde{C}_{2}=2\tilde{C}_{1}L^{2}/\gamma_{4}^{3}$.
Then, using Assumption \ref{a5} and (\ref{l3.1.101}), (\ref{l3.1.1}), we conclude
\begin{align}\label{l3.1.5}
	\left|
	\Phi_{\eta}^{n}(\xi',z)
	-
	\Phi_{\eta}^{n}(\xi'',z)
	\right|
	\leq &
	\begin{aligned}[t]
	&
	\tilde{C}_{1} \gamma_{4}^{2n}
	\int\cdots\int \tilde{\psi}(z_{n+1} )
	\\
	&\cdot |T_{\eta}|(z_{n},{\rm d}z_{n+1})\cdots|T_{\eta}|(z,{\rm d}z_{1})
	\end{aligned}
	\nonumber\\
	\leq &
	\tilde{C}_{1}L\gamma_{4}^{n}
	\leq
	\tilde{C}_{2}\gamma_{4}^{n}.
\end{align}
Similarly, relying on Assumption \ref{a5} and
(\ref{l3.1.101}), (\ref{l3.1.3}), we deduce
\begin{align}\label{l3.1.7}
	\left|
	A_{\eta}^{k,n}(\xi,z)
	\right|
	\leq &
	\begin{aligned}[t]
	&
	\tilde{C}_{1} \gamma_{4}^{2(n-k)}
	\int\cdots\int\int \tilde{\psi}(z_{n+1} )
	\\
	&\cdot
	|T_{\eta}|(z_{n},{\rm d}z_{n+1})\cdots|T_{\eta}|(z_{k},{\rm d}z_{k+1})
	\\
	&\cdot
	|T_{\eta}^{k}-\tau_{\eta}|(z,{\rm d}z_{k})
	\end{aligned}
	\nonumber\\
	\leq &
	\tilde{C}_{1}L^{2}\beta^{k}\gamma_{4}^{n-k}
	\leq
	\tilde{C}_{2}\gamma_{4}^{n}.
\end{align}
Moreover, using Assumption \ref{a5} and (\ref{l3.1.103}), (\ref{l3.1.3}), we get
\begin{align}\label{l3.1.9}
	\left|
	\bar{A}_{\eta}^{n}(\xi)
	\right|
	\leq &
	\begin{aligned}[t]
	&
	\tilde{C}_{1} \gamma_{4}^{2(n-1)}
	\int\cdots\int\int \tilde{\psi}(z_{n+1})
	\\
	&\cdot |T_{\eta}|(z_{n},{\rm d}z_{n+1})\cdots|T_{\eta}|(z_{0},{\rm d}z_{1})
	|\tau_{\eta}|({\rm d}z_{0})
	\end{aligned}
	\nonumber\\
	\leq &
	\tilde{C}_{1}L\left(L+\frac{1}{\gamma_{4}} \right)\gamma_{4}^{n-2}
	\leq
	\tilde{C}_{2}\gamma_{4}^{n}.
\end{align}

Let $\tilde{C}_{3}=C_{4}L^{2}$, $\tilde{C}_{4}=4(\tilde{C}_{2}+\tilde{C}_{3} )$.
Then, owing to Assumption \ref{a5} and Lemma \ref{lemma1.3}, we have
\begin{align}\label{l3.1.23}
	\left|
	B_{\eta}^{n}(\xi,z)
	\right|
	\leq &
	C_{4}
	\int\int \tilde{\psi}(z'') |T_{\eta}|(z',{\rm d}z'')
	|T_{\eta}^{n}-\tau_{\eta}|(z,{\rm d}z')
	\nonumber\\
	\leq &
	C_{4}L^{2}\beta^{n}
	\leq
	\tilde{C}_{3} \gamma_{4}^{n}.
\end{align}
Consequently, (\ref{3.5*}), (\ref{l3.1.7}), (\ref{l3.1.9}) yield
\begin{align}\label{l3.1.31}
	&
	\left|
	\Phi_{\eta}^{n+1}(\xi,z)
	-
	\Phi_{\eta}^{n}(\xi,z)
	\right|
	\nonumber\\
	&
	\begin{aligned}[t]
	\leq &
	\sum_{k=1}^{n+1}
	\left|
	A_{\eta}^{k,n+1}(\xi,z)
	\right|
	+
	\sum_{k=1}^{n}
	\left|
	A_{\eta}^{k,n}(\xi,z)
	\right|
	\\
	&+
	\left| \bar{A}_{\eta}^{n+1}(\xi) \right|
	+
	\left| B_{\eta}^{n+1}(\xi,z) \right|
	+
	\left| B_{\eta}^{n}(\xi,z) \right|
	\end{aligned}
	\nonumber\\
	&\leq
	2\tilde{C}_{2} (n+1) \gamma_{4}^{n} + 2\tilde{C}_{3}\gamma_{4}^{n}
	\leq 
	\tilde{C}_{4} n \gamma_{4}^{n}.
\end{align}
Hence, we have
\begin{align}\label{l3.1.33}
	&
	\sum_{n=1}^{\infty}
	\left|
	\Phi_{\eta}^{n+1}(\xi,z)
	-
	\Phi_{\eta}^{n}(\xi,z)
	\right|
	\nonumber\\
	&\leq
	\tilde{C}_{4} \sum_{n=1}^{\infty} n\gamma_{4}^{n}
	\leq 
	\frac{\tilde{C}_{4} }{(1-\gamma_{4} )^{2} }
	<
	\infty.
\end{align}
Now, combining (\ref{3.3*}), (\ref{l3.1.7}), (\ref{l3.1.23}), we get
\begin{align*}
	\left|
	\Phi_{\eta}^{n}(\xi,z')
	-
	\Phi_{\eta}^{n}(\xi,z'')
	\right|
	\!\leq\! &
	\sum_{k=1}^{n}\!
	\left|
	A_{\eta}^{k,n}(\xi,z')
	\right|
	\!+\!
	\sum_{k=1}^{n}\!
	\left|
	A_{\eta}^{k,n}(\xi,z'')
	\right|
	\\
	&
	+
	\left|
	B_{\eta}^{n}(\xi,z')
	\right|
	+
	\left|
	B_{\eta}^{n}(\xi,z'')
	\right|
	\\
	\leq &
	2\tilde{C}_{2}n\gamma_{4}^{n} + 2\tilde{C}_{3}\gamma_{4}^{n}.
\end{align*}
Then, (\ref{l3.1.5}) implies
\begin{align}\label{l3.1.35}
	\left|
	\Phi_{\eta}^{n}(\xi',z')
	\!-\!
	\Phi_{\eta}^{n}(\xi'',z'')
	\right|
	\leq &
	\left|
	\Phi_{\eta}^{n}(\xi',z')
	-
	\Phi_{\eta}^{n}(\xi'',z')
	\right|
	\nonumber\\
	&+
	\left|
	\Phi_{\eta}^{n}(\xi'',z')
	-
	\Phi_{\eta}^{n}(\xi'',z'')
	\right|
	\nonumber\\
	\leq &
	\tilde{C}_{2}(2n+1)\gamma_{4}^{n} \!+\! 2\tilde{C}_{3}\gamma_{4}^{n}
	\leq\! 
	\tilde{C}_{4}n\gamma_{4}^{n}.
\end{align}

Let $C_{7}=\tilde{C}_{4}/(1-\gamma_{4})^{2}$.
Moreover, let
\begin{align*}
	\phi_{\eta}(\xi,z)
	=
	\Phi_{\eta}^{0}(\xi,z)
	+
	\sum_{n=0}^{\infty}
	\left(
	\Phi_{\eta}^{n+1}(\xi,z)
	-
	\Phi_{\eta}^{n}(\xi,z)
	\right).
\end{align*}
Then, due to (\ref{l3.1.33}),
$\phi_{\eta}(\xi,z)$ is well-defined.
Now, (\ref{l3.1.31}) implies
\begin{align}\label{l3.1.37}
	\left|
	\Phi_{\eta}^{n}(\xi,z)
	-
	\phi_{\eta}(\xi,z)
	\right|
	\leq &
	\sum_{k=n}^{\infty}
	\left|
	\Phi_{\eta}^{k+1}(\xi,z)
	-
	\Phi_{\eta}^{k}(\xi,z)
	\right|
	\nonumber\\
	\leq &
	\tilde{C}_{4}
	\sum_{k=n}^{\infty} k\gamma_{4}^{k}
	\leq
	C_{7}n\gamma_{4}^{n}.
\end{align}
Consequently, (\ref{l3.1.35}) yields
\begin{align*}
	\left|
	\phi_{\eta}(\xi',z')
	-
	\phi_{\eta}(\xi'',z'')
	\right|
	\leq &
	\left|
	\Phi_{\eta}^{n}(\xi',z')
	-
	\Phi_{\eta}^{n}(\xi'',z'')
	\right|
	\\
	&+
	\left|
	\Phi_{\eta}^{n}(\xi',z')
	-
	\phi_{\eta}(\xi',z')
	\right|
	\\
	&+
	\left|
	\Phi_{\eta}^{n}(\xi'',z'')
	-
	\phi_{\eta}(\xi'',z'')
	\right|
	\\
	\leq &
	3C_{7}n\gamma_{4}^{n}.
\end{align*}
Therefore,
$\phi_{\eta}(\xi',z')=\phi_{\eta}(\xi'',z'')$
for any $\xi',\xi''\in V_{\delta_{9}}({\cal P}({\cal X}) )$,
$z',z''\in{\cal Z}$.
Hence, there exists a function $\phi_{\eta}$
which maps $\eta\in\mathbb{C}^{d}$ to $\mathbb{C}$
and satisfies
$\phi_{\eta}=\phi_{\eta}(\xi,z)$
for all $\eta\in V_{\delta_{9}}(\Theta)$,
$\xi\in V_{\delta_{9}}({\cal P}({\cal X}) )$,
$z\in{\cal Z}$.
Then, using (\ref{l3.1.37}), we conclude that
(\ref{l3.1.1*}) holds for $\eta\in V_{\delta_{9}}(\Theta)$,
$\xi\in V_{\delta_{9}}({\cal P}({\cal X}) )$,
$z\in{\cal Z}$.
\end{proof}

\begin{lemma}\label{lemma3.2}
(i) Let Assumptions \ref{a11} -- \ref{a12} and \ref{a2} hold.
Then, integral
\begin{align}\label{l3.2.1*}
	\int\cdots\int
	\Phi_{\eta}\left(
	F_{\eta}^{n}(\lambda,z_{1:n}), z_{n+1}
	\right)
	S(z_{n},{\rm d}z_{n+1})\cdots S(z,{\rm d}z_{1})
\end{align}
is analytic in $\eta$ for all $\eta\in V_{\delta_{5}}(\Theta)$, $\lambda\in{\cal P}({\cal X})$,
$z\in{\cal Z}$, $n\geq 1$
($\delta_{5}$ is specified in Lemmas \ref{lemma1.4}, \ref{lemma1.6}).

(ii) Let Assumptions \ref{a11} -- \ref{a3} hold. Then, integral
\begin{align}\label{l3.2.3*}
	\int\cdots\int
	\Phi_{\eta}\left(
	F_{\eta}^{n}(\lambda,z_{1:n}), z_{n+1}
	\right)
	S_{\eta}(z_{n},{\rm d}z_{n+1})\cdots S_{\eta}(z,{\rm d}z_{1})
\end{align}
is analytic in $\eta$ for all $\eta\in V_{\delta_{5}}(\Theta)$, $\lambda\in{\cal P}({\cal X})$,
$z\in{\cal Z}$, $n\geq 1$.
\end{lemma}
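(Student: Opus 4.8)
The plan is to prove both parts by the same two-step argument. First I would show that, for each fixed $z_1,\dots,z_{n+1}\in{\cal Z}$ and $\lambda\in{\cal P}({\cal X})$, the integrand $\eta\mapsto\Phi_\eta(F_\eta^n(\lambda,z_{1:n}),z_{n+1})$ --- multiplied, in part (ii), by the kernel density $u_\eta^{n+1}$ --- is analytic on $V_{\delta_5}(\Theta)$. Then I would exhibit a majorant of this integrand that is uniform in $\eta\in V_{\delta_5}(\Theta)$ and integrable against the relevant product of kernels, so that analyticity passes through the $(n+1)$-fold integral by the analyticity-under-the-integral lemma (Lemma \ref{lemmaa1}), applied one variable at a time.

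For the analyticity of the integrand, fix a sequence $\boldsymbol y=\{y_k\}_{k\geq 1}$ in ${\cal Y}$ with $z_k=(y_k,x_k)$ for $1\leq k\leq n+1$, so that $\Phi_\eta(F_\eta^n(\lambda,z_{1:n}),z_{n+1})=\Phi_{\eta,y_{n+1}}(F_{\eta,\boldsymbol y}^{0:n}(\lambda))$. The key step is the identity
\begin{align*}
	\Phi_{\eta,y_{n+1}}\!\left(F_{\eta,\boldsymbol y}^{0:n}(\lambda)\right)
	=
	\log\frac{\langle R_{\eta,\boldsymbol y}^{0:n+1}(\lambda)\rangle}{\langle R_{\eta,\boldsymbol y}^{0:n}(\lambda)\rangle},
	\qquad \eta\in V_{\delta_5}(\Theta),
\end{align*}
which follows from $R_{\eta,y_{n+1}}({\cal X}|\xi)=\langle R_{\eta,\boldsymbol y}^{n:n+1}(\xi)\rangle$ (a consequence of the definitions (\ref{2.51*}), (\ref{2.53*})) together with Lemma \ref{lemma1.21}(ii); the latter applies because, by Lemma \ref{lemma1.6}(i) and $\delta_5\leq\delta_4\leq\delta_1$, one has $\langle R_{\eta,\boldsymbol y}^{0:m}(\lambda)\rangle\neq 0$ for every $m\geq 0$ and $\eta\in V_{\delta_5}(\Theta)$ (here $\lambda\in{\cal P}({\cal X})\subseteq V_{\delta_1}({\cal P}({\cal X}))$). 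Now $\langle R_{\eta,\boldsymbol y}^{0:n}(\lambda)\rangle$ and $\langle R_{\eta,\boldsymbol y}^{0:n+1}(\lambda)\rangle$ are analytic on $V_{\delta_1}(\Theta)\supseteq V_{\delta_5}(\Theta)$ by Lemma \ref{lemma1.2}(i), so their ratio is analytic and non-vanishing on $V_{\delta_5}(\Theta)$; since $F_{\eta,\boldsymbol y}^{0:n}(\lambda)\in V_{\delta_4}({\cal P}({\cal X}))$ there (Lemma \ref{lemma1.6}(i)), the ratio equals $R_{\eta,y_{n+1}}({\cal X}|F_{\eta,\boldsymbol y}^{0:n}(\lambda))$ and has strictly positive real part by Lemma \ref{lemma1.3}(i). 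Hence its principal logarithm, which by (\ref{2.7*}) is precisely $\Phi_{\eta,y_{n+1}}(F_{\eta,\boldsymbol y}^{0:n}(\lambda))$ (the alternative branch in (\ref{2.7*}) never being active on $V_{\delta_5}(\Theta)$), is analytic on $V_{\delta_5}(\Theta)$. In part (ii) the extra factor $u_\eta^{n+1}(x_{0:n+1},y_{1:n+1})$ is analytic there by Lemma \ref{lemma2.2}(i), so the full integrand is analytic.

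For the majorant, Lemma \ref{lemma1.3}(ii) (with $\eta\in V_{\delta_5}(\Theta)\subseteq V_{\delta_4}(\Theta)$ and $F_{\eta,\boldsymbol y}^{0:n}(\lambda)\in V_{\delta_4}({\cal P}({\cal X}))$) gives $|\Phi_\eta(F_\eta^n(\lambda,z_{1:n}),z_{n+1})|\leq C_4\tilde{\psi}(z_{n+1})$. In part (i) the kernel $S$ is $\eta$-free and, by Lemma \ref{lemma2.3}, is a probability kernel with $\int\tilde{\psi}(z')S(z,{\rm d}z')\leq C_1$; integrating $z_{n+1}$ against $S(z_n,{\rm d}z_{n+1})$, the majorant $C_4\tilde{\psi}(z_{n+1})$ is $S(z_n,\cdot)$-integrable, so Lemma \ref{lemmaa1} yields an analytic function of $\eta$ bounded by $C_4C_1$, and the remaining $n$ integrations against the probability kernels $S$ (with the constant majorant $C_4C_1$) again preserve analyticity by Lemma \ref{lemmaa1} --- this proves (i). In part (ii), expanding $S_\eta$ via (\ref{5.1''}) (as in (\ref{5.71})) rewrites (\ref{l3.2.3*}) as
\begin{align*}
	\int\cdots\int
	\Phi_{\eta}\!\left(F_{\eta}^{n}(\lambda,z_{1:n}), z_{n+1}\right)
	u_{\eta}^{n+1}(x_{0:n+1},y_{1:n+1})
	\prod_{k=1}^{n+1}(\nu\times\mu)({\rm d}y_{k},{\rm d}x_{k})
\end{align*}
with $z=z_0=(y_0,x_0)$; combining $|\Phi_\eta|\leq C_4\tilde{\psi}(z_{n+1})$ with $|u_\eta^{n+1}|\leq K_{n+1}\prod_{k=1}^{n+1}\phi(y_k)$ (Lemma \ref{lemma2.2}(ii)) gives the majorant $C_4K_{n+1}\tilde{\psi}(z_{n+1})\prod_{k=1}^{n+1}\phi(y_k)$, which is integrable against $\prod_{k=1}^{n+1}(\nu\times\mu)$ since $\int\phi\,{\rm d}\nu<\infty$ (Assumption \ref{a1}) and $\int\psi\phi\,{\rm d}\nu<\infty$ (Assumption \ref{a3}); a final application of Lemma \ref{lemmaa1} gives (ii).

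The delicate point is the analyticity of the integrand: one must know that the optimal filter $F_{\eta,\boldsymbol y}^{0:n}(\lambda)$ remains, uniformly in $n$ and $\boldsymbol y$, inside the region $V_{\delta_4}({\cal P}({\cal X}))$ on which $R_{\eta,y}({\cal X}|\cdot)$ has positive real part, so that its logarithm is analytic with a well-controlled branch --- this is precisely why Lemmas \ref{lemma1.3} and \ref{lemma1.6} are invoked. Writing $\Phi_{\eta,y_{n+1}}(F_{\eta,\boldsymbol y}^{0:n}(\lambda))$ as the logarithm of the ratio of $\langle R_{\eta,\boldsymbol y}^{0:n+1}(\lambda)\rangle$ and $\langle R_{\eta,\boldsymbol y}^{0:n}(\lambda)\rangle$ --- both carrying the $\eta$-independent initial condition $\lambda$, hence directly covered by Lemma \ref{lemma1.2}(i) --- is the device that avoids having to differentiate the filter map itself and reduces everything else to routine dominated-convergence bookkeeping built on the uniform majorants already furnished by Lemmas \ref{lemma2.2}, \ref{lemma2.3} and \ref{lemma1.3}.
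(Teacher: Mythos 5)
Your proposal is correct and follows essentially the same route as the paper's proof: the same identity $\Phi_{\eta,y_{n+1}}(F_{\eta,\boldsymbol y}^{0:n}(\lambda))=\log\bigl(\langle R_{\eta,\boldsymbol y}^{0:n+1}(\lambda)\rangle/\langle R_{\eta,\boldsymbol y}^{0:n}(\lambda)\rangle\bigr)$ justified via Lemmas \ref{lemma1.21}, \ref{lemma1.2}, \ref{lemma1.3} and \ref{lemma1.6}, the same majorants $C_{4}\tilde{\psi}(z_{n+1})$ and $K_{n+1}\prod_{k}\phi(y_{k})$, and the same integrability checks against $S$ (via Assumption \ref{a2}) and against $(\nu\times\mu)^{n+1}$ (via Assumptions \ref{a1}, \ref{a3}) before invoking Lemma \ref{lemmaa1}. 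The only cosmetic difference is that you apply Lemma \ref{lemmaa1} one integration variable at a time in part (i), which changes nothing of substance.
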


\begin{proof}
Throughout the proof, the following notations is used.
$\tilde{\phi}(z)$ is the function defined by $\tilde{\phi}(z)=\phi(y)$
for $x\in{\cal X}$, $y\in{\cal Y}$ and $z=(y,x)$.
$\eta$ is any element of $V_{\delta_{5}}(\Theta)$,
while $\lambda$ is any element in ${\cal P}({\cal X} )$.
$\{x_{n} \}_{n\geq 0}$, $\{y_{n} \}_{n\geq 0}$ are any sequences in ${\cal X}$, ${\cal Y}$
(respectively), while $\{z_{n} \}_{n\geq 0}$ is the sequence defined by
$z_{n} = (y_{n}, x_{n} )$ for $n\geq 0$
(notice that $\{z_{n} \}_{n\geq 0}$ can be any sequence in ${\cal Z}$).
$n\geq 1$ is any integer.

Using Lemmas \ref{lemma1.21}, \ref{lemma1.6}, we conclude
\begin{align*}
	\Phi_{\eta}\left(
	F_{\eta}^{n}(\lambda,z_{1:n}), z_{n+1}
	\right)
	=&
	\Phi_{\eta,y_{n+1}}\left(
	F_{\eta,\boldsymbol y}^{0:n}(\lambda)
	\right)
	\\
	=&
	\log
	\left\langle R_{\eta,\boldsymbol y}^{n:n+1}
	\left(F_{\eta,\boldsymbol y}^{0:n}(\lambda) \right) \right\rangle
	\\
	=&
	\log
	\left(
	\frac{\left\langle R_{\eta,\boldsymbol y}^{0:n+1}
	\left(\lambda\right) \right\rangle}
	{\left\langle R_{\eta,\boldsymbol y}^{0:n}
	\left(\lambda\right) \right\rangle}
	\right),
\end{align*}
where $\boldsymbol y = \{y'_{k} \}_{k\geq 1}$ is any sequence in ${\cal Y}$
satisfying $y'_{k}=y_{k}$ for $1\leq k\leq n+1$.
Combining this with Lemmas \ref{lemma2.2}, \ref{lemma1.2}, \ref{lemma1.6},
we deduce that
$\Phi_{\eta}\left(F_{\eta}^{n}(\lambda,z_{1:n}), z_{n+1} \right)$,
$u_{\eta}^{n}(z_{0:n})$
are analytic in $\eta$ for each $\eta\in V_{\delta_{5}}(\Theta)$.
Moreover, due to Lemmas \ref{lemma2.2}, \ref{lemma1.3}, \ref{lemma1.6}, we have
\begin{align}
	&\label{l3.2.1'}
	\left|
	\Phi_{\eta}\left(
	F_{\eta}^{n}(\lambda,z_{1:n}), z_{n+1}
	\right)
	\right|
	\leq C_{4}\tilde{\psi}(z_{n+1}),
	\\
	&\label{l3.2.1''}
	\left|
	u_{\eta}^{n}(z_{0:n})
	\right|
	\leq
	K_{n}\prod_{k=1}^{n} \tilde{\phi}(z_{k})
\end{align}
($\tilde{\psi}(z)$ is specified in (\ref{5.5})).

Owing to Assumption \ref{a2}, we have
\begin{align*}
	&
	\int\cdots\int
	\tilde{\psi}(z_{n+1}) S(z_{n},{\rm d}z_{n+1})\cdots S(z_{0},{\rm d}z_{1})
	\\
	&
	\begin{aligned}
	=&
	\int\int\cdots\int
	(1+\psi(y_{n+1})) Q(x_{n+1},{\rm d}y_{n+1})
	\\
	&\cdot
	P(x_{n},{\rm d}x_{n+1})\cdots P(x_{0},{\rm d}x_{1})
	\end{aligned}
	\\
	&
	\leq
	K+1
	<\infty.
\end{align*}
Consequently, Lemma \ref{lemmaa1} (see Appendix \ref{appendix1}) and (\ref{l3.2.1'}) imply
that integral (\ref{l3.2.1*}) is analytic in $\eta$
for each $\eta\in V_{\delta_{5}}(\Theta)$.

Relying on (\ref{5.71}), it is easy to show
\begin{align*}
	&
	\int\!\cdots\!\int\!
	\Phi_{\eta}\left(
	F_{\eta}^{n}(\lambda,z_{1:n}), z_{n+1}
	\right)
	S_{\eta}(z_{n},{\rm d}z_{n+1})\cdots S_{\eta}(z_{0},{\rm d}z_{1})
	\\
	&
	\begin{aligned}
	=&
	\int\cdots\int
	\Phi_{\eta}\left(
	F_{\eta}^{n}(\lambda,z_{1:n}), z_{n+1}
	\right)
	u_{\eta}^{n+1}(z_{0:n+1} )
	\\
	&\cdot
	(\nu\times\mu)({\rm d}z_{n+1} ) \cdots (\nu\times\mu)({\rm d}z_{1} ).
	\end{aligned}
\end{align*}
Moreover, due to Assumptions \ref{a1}, \ref{a3}, we have
\begin{align*}
	&
	\begin{aligned}
	&
	\int\cdots\int
	\tilde{\psi}(z_{n+1}) \left( \prod_{k=1}^{n+1} \tilde{\phi}(z_{k}) \right)
	\\
	&\cdot
	(\nu\times\mu)({\rm d}z_{n+1})\cdots(\nu\times\mu)({\rm d}z_{1})
	\end{aligned}
	\\
	&
	\begin{aligned}
	=&
	\|\mu\|^{n+1}
	\left(
	\int (1+\psi(y_{n+1})) \phi(y_{n+1}) \nu({\rm d}y_{n+1})
	\right)
	\\
	&\cdot
	\left( \prod_{k=1}^{n} \int \phi(y_{k}) \nu({\rm d}y_{k})\right)
	\end{aligned}
	\\
	&
	<\infty.
\end{align*}
Consequently, Lemma \ref{lemmaa1} (see Appendix \ref{appendix1})
and (\ref{l3.2.1'}), (\ref{l3.2.1''}) imply
that integral (\ref{l3.2.3*}) is analytic in $\eta$
for $\eta\in V_{\delta_{5}}(\Theta)$.
\end{proof}

\begin{proof}[\rm\bf Proof of Theorem \ref{theorem1}]
Let $T_{\eta}(z,{\rm d}z')$ be the kernel defined by $T_{\eta}(z,B)=S(z,B)$
for $\eta\in\mathbb{C}^{d}$, $z\in{\cal Z}$, $B\in{\cal B}({\cal Z})$
($S(z,{\rm d}z')$ is specified in (\ref{5.1'})).
Moreover, let $T_{\eta}^{n}(z,{\rm d}z')$, $\Phi_{\eta}^{n}(\lambda,z)$ have the same meaning as in (\ref{3.7*}).
Then, owing to Lemma \ref{lemma3.2}, $\Phi_{\eta}^{n}(\lambda,z)$
is analytic in $\eta$ for each $\eta\in V_{\delta_{5}}(\Theta)$,
$\lambda\in{\cal P}({\cal X})$, $z\in{\cal Z}$, $n\geq 1$.
Moreover, due to Lemma \ref{lemma2.3}, kernel
$T_{\eta}(z,{\rm d}z')$ (defined here)
satisfies Assumptions \ref{a4}, \ref{a5}.
Combining this with Lemma \ref{lemma3.1}, we deduce that
there exist a function $\phi_{\eta}$
mapping $\eta\in\mathbb{C}^{d}$ to $\mathbb{C}$ and
real numbers $\delta_{9}\in(0,\delta_{5}]$, $\gamma_{4}\in(0,1)$,
$C_{7}\in[1,\infty)$ such that (\ref{l3.1.1*}) holds
for $\eta\in V_{\delta_{9}}(\Theta)$, $\lambda\in{\cal P}({\cal X})$,
$z\in{\cal Z}$, $n\geq 1$.
Since the limit of uniformly convergent analytic functions is also analytic
(see e.g., \cite[Theorem 2.4.1]{taylor}),
$\phi_{\eta}$ is analytic in $\eta$
for each $\eta\in V_{\delta_{9}}(\Theta)$.

In what follows in the proof, $\theta$, $\lambda$, $z$ are any elements of
$\Theta$, ${\cal P}({\cal X} )$, ${\cal Z}$ (respectively),
while $n\geq 1$ is any integer.
It is straightforward to verify
\begin{align*}
	\Phi_{\theta}^{n}(\lambda,z)
	=
	E\left(\left.
	\Phi_{\theta}\left(
	F_{\theta}^{n}(\lambda,Z_{1:n}), Z_{n+1}
	\right)
	\right|
	Z_{0}=z
	\right),
\end{align*}
where $Z_{n}=(Y_{n},X_{n})$.
Therefore, (\ref{3.1*}) yields
\begin{align*}
	E\left(
	\log q_{\theta}^{n}(Y_{1:n}|\lambda)
	\right)
	=
	\sum_{k=0}^{n-1}
	E\left( \Phi_{\theta}^{k}(\lambda,Z_{0}) \right).
\end{align*}
Then, 
Lemma \ref{lemma3.1} implies
\begin{align*}
	\left|
	E\left(\frac{1}{n} \log q_{\theta}(Y_{1:n}|\lambda) \right)
	-
	\phi_{\theta}
	\right|
	\leq &
	\frac{1}{n}
	\sum_{k=0}^{n-1}
	E\left|
	\Phi_{\theta}^{k}(\lambda,Z_{0}) - \phi_{\theta}
	\right|
	\\
	\leq &
	\frac{C_{7}}{n} \sum_{k=0}^{n-1} \gamma_{4}^{k}
	\leq 
	\frac{C_{7}}{n(1-\gamma_{4}) }.
\end{align*}
Consequently, there exists a function $l:\Theta\rightarrow\mathbb{R}$
with the properties specified in the statement of the theorem.
\end{proof}

\begin{proof}[\rm\bf Proof of Theorem \ref{theorem2}]
Let $T_{\eta}(z,{\rm d}z')$ be the kernel defined by $T_{\eta}(z,B)=S_{\eta}(z,B)$
for $\eta\in\mathbb{C}^{d}$, $z\in{\cal Z}$, $B\in{\cal B}({\cal Z})$
($S_{\eta}(z,{\rm d}z')$ is specified in (\ref{5.1''})).
Moreover, let $T_{\eta}^{n}(z,{\rm d}z')$, $\Phi_{\eta}^{n}(\lambda,z)$ have the same meaning as in (\ref{3.7*}).
Then, due to Lemma \ref{lemma3.2}, $\Phi_{\eta}^{n}(\lambda,z)$
is analytic in $\eta$ for each $\eta\in V_{\delta_{5}}(\Theta)$,
$\lambda\in{\cal P}({\cal X})$, $z\in{\cal Z}$, $n\geq 1$.
Moreover, Lemma \ref{lemma2.1} implies that Assumptions \ref{a4}, \ref{a5}
hold for kernel $T_{\eta}(z,{\rm d}z')$ (defined here).
Combining this with Lemma \ref{lemma3.1}, we conclude that
there exist a function $\phi_{\eta}$
mapping $\eta\in\mathbb{C}^{d}$ to $\mathbb{C}$ and
real numbers $\delta_{9}\in(0,\delta_{5}]$, $\gamma_{4}\in(0,1)$,
$C_{7}\in[1,\infty)$ such that (\ref{l3.1.1*}) holds
for $\eta\in V_{\delta_{9}}(\Theta)$, $\lambda\in{\cal P}({\cal X})$,
$z\in{\cal Z}$, $n\geq 1$.
As the limit of uniformly convergent analytic functions is also analytic
(see e.g., \cite[Theorem 2.4.1]{taylor}),
$\phi_{\eta}$ is analytic in $\eta$
for each $\eta\in V_{\delta_{9}}(\Theta)$.

In the rest of the proof, $\theta$, $\lambda$, $z$ are any elements of
$\Theta$, ${\cal P}({\cal X} )$, ${\cal Z}$ (respectively),
while $n\geq 1$ is any integer.
It is easy to show
\begin{align*}
	\Phi_{\theta}^{n}(\lambda,z)
	=
	E\left(\left.
	\Phi_{\theta}\left(
	F_{\theta}^{n}\big(\lambda,Z_{1:n}^{\theta,\lambda} \big), Z_{n+1}^{\theta,\lambda}
	\right)
	\right|
	Z_{0}^{\theta,\lambda}=z
	\right),
\end{align*}
where $Z_{n}^{\theta,\lambda}=
\left(Y_{n}^{\theta,\lambda},X_{n}^{\theta,\lambda}\right)$.
Then, (\ref{3.1*}) yields
\begin{align*}
	E\left(
	\log q_{\theta}^{n}\big(Y_{1:n}^{\theta,\lambda}\big|\lambda\big)
	\right)
	=
	\sum_{k=0}^{n-1}
	E\left( \Phi_{\theta}^{k}\big(\lambda,Z_{0}^{\theta,\lambda}\big) \right).
\end{align*}
Therefore, 
Lemma \ref{lemma3.1} implies
\begin{align*}
	\left|
	E\left(\frac{1}{n} \log q_{\theta}\big(Y_{1:n}^{\theta,\lambda}\big|\lambda\big) \right)
	-
	\phi_{\theta}
	\right|
	\!\leq\! &
	\frac{1}{n}
	\sum_{k=0}^{n-1}
	E\left|
	\Phi_{\theta}^{k}\big(\lambda,Z_{0}^{\theta,\lambda}\big) - \phi_{\theta}
	\right|
	\\
	\leq &
	\frac{C_{7}}{n} \sum_{k=0}^{n-1} \gamma_{4}^{k}
	\leq 
	\frac{C_{7}}{n(1-\gamma_{4}) }.
\end{align*}
Consequently, there exists a function $h:\Theta\rightarrow\mathbb{R}$
with the properties specified in the statement of the theorem.
\end{proof}

\section{Proof of Corollaries \ref{corollary1b} -- \ref{corollary2c}}\label{section4*}

\begin{proof}[\rm\bf Proof of Corollaries \ref{corollary1b} and \ref{corollary2b}]
Let $\tilde{\Theta}$ be any non-empty bounded open set satisfying
$\text{cl}\tilde{\Theta}\subset\Theta$.
As $\text{cl}\tilde{\Theta}$, ${\cal X}$ are compact sets,
Assumption \ref{b2} and Lemma \ref{lemmaa2} (see Appendix \ref{appendix1}) imply that there exist functions
$\{\hat{a}_{\eta}^{i}(x)\}_{1\leq i\leq N_{x}}$,
$\{\hat{b}_{\eta}^{j}(x) \}_{1\leq j\leq N_{y}}$ with the following properties:

(i) $\{\hat{a}_{\eta}^{i}(x) \}_{1\leq i\leq N_{x}}$,
$\{\hat{b}_{\eta}^{j}(x) \}_{1\leq j\leq N_{y}}$ map
$\eta\in\mathbb{C}^{d}$, $x\in\mathbb{C}^{d_{x}}$ to $\mathbb{C}$.

(ii) $\hat{a}_{\theta}^{i}(x)=a_{\theta}^{i}(x)$,
$\hat{b}_{\theta}^{j}(x)=b_{\theta}^{j}(x)$ for $\theta\in\tilde{\Theta}$,
$x\in{\cal X}$, $1\leq i\leq N_{x}$, $1\leq j\leq N_{y}$.

(iii) There exists a real number $\alpha_{1}\in(0,1)$ such that
$\hat{a}_{\eta}^{i}(x)$, $\hat{b}_{\eta}^{j}(x)$ are analytic in $(\eta,x)$
for $\eta\in V_{\alpha_{1}}(\tilde{\Theta})$, $x\in V_{\alpha_{1}}({\cal X})$,
$1\leq i\leq N_{x}$, $1\leq j\leq N_{y}$.

Owing to Assumption \ref{b2}, $\{\hat{a}_{\theta}^{i}(x) \}_{1\leq i\leq N_{x}}$,
$\{\hat{b}_{\theta}^{j}(x) \}_{1\leq j\leq N_{y}}$ are positive and uniformly
bounded away from zero for $\theta\in\text{cl}\tilde{\Theta}$, $x\in{\cal X}$.
Then, due to (iii), there exist real numbers $\alpha\in(0,\alpha_{1})$,
$\beta\in(0,1)$ such that
\begin{align}
	\label{c1b.1'}
	\text{Re}\left\{ \hat{a}_{\eta}^{i}(x) \right\}
	\geq \beta,
	&\;\;\;\;\;\;\;
	|\hat{a}_{\eta}^{i}(x) |\leq \frac{1}{\beta},
	\\
	\label{c1b.1''}
	\text{Re}\left\{ \hat{b}_{\eta}^{j}(x) \right\}
	\geq \beta,
	&\;\;\;\;\;\;\;
	|\hat{b}_{\eta}^{j}(x) |\leq \frac{1}{\beta}
\end{align}
for $\eta\in V_{\alpha}(\tilde{\Theta})$, $x\in V_{\alpha}({\cal X})$,
$1\leq i\leq N_{x}$, $1\leq j\leq N_{y}$.

Let $\hat{p}_{\eta}(x'|x)$, $\hat{q}_{\eta}(y|x)$ be the functions defined by
\begin{align*}
	&
	\hat{p}_{\eta}(x'|x)
	=
	\sum_{i=1}^{N_{x}} \hat{a}_{\eta}^{i}(x) v_{i}(x'),
	\\
	&
	\hat{q}_{\eta}(y|x)
	=
	\sum_{j=1}^{N_{y}} \hat{b}_{\eta}^{j}(x) w_{j}(y)
\end{align*}
for $\eta\in\mathbb{C}^{d}$, $x,x'\in{\cal X}$, $y\in{\cal Y}$,
while
\begin{align*}
	&
	r_{\theta}(y,x'|x)
	=
	q_{\theta}(y|x')p_{\theta}(x'|x),
	\\
	&
	\hat{r}_{\eta}(y,x'|x)
	=
	\hat{q}_{\eta}(y|x')\hat{p}_{\eta}(x'|x)
\end{align*}
for the same $\eta,x,x',y$ and $\theta\in\Theta$.
Then, owing to (ii), (iii),
$\hat{r}_{\eta}(y,x'|x)$ is analytic in $\eta$
for each $\eta\in V_{\alpha}(\tilde{\Theta})$,
$x,x'\in{\cal X}$, $y\in{\cal Y}$.
For similar reasons,
$\hat{r}_{\theta}(y,x'|x)=r_{\theta}(y,x'|x)$
for the same $x,x',y$ and $\theta\in\tilde{\Theta}$.
Moreover, Assumption \ref{b3} and (\ref{c1b.1'}) imply
\begin{align}
	&\label{c1b.3'}
	|\hat{p}_{\eta}(x'|x)|
	\geq
	\sum_{i=1}^{N_{x}}
	\text{Re}\left\{ \hat{a}_{\eta}^{i}(x) \right\} v_{i}(x')
	\geq
	\beta\varepsilon N_{x},
	\\
	&\label{c1b.3''}
	|\hat{p}_{\eta}(x'|x)|
	\leq
	\sum_{i=1}^{N_{x}} |\hat{a}_{\eta}^{i}(x)| v_{i}(x')
	\leq
	\frac{N_{x}}{\beta\varepsilon}
\end{align}
for $\eta\in V_{\alpha}(\tilde{\Theta})$, $x,x'\in{\cal X}$.
Similarly, (\ref{c1b.1''}) yields
\begin{align}
	&\label{c1b.5'}
	|\hat{q}_{\eta}(y|x)|
	\geq
	\sum_{j=1}^{N_{y}}
	\text{Re}\left\{ \hat{b}_{\eta}^{j}(x) \right\} w_{j}(y)
	\geq
	\beta\sum_{j=1}^{N_{y}} w_{j}(y),
	\\
	&\label{c1b.5''}
	|\hat{q}_{\eta}(y|x)|
	\leq
	\sum_{j=1}^{N_{y}} |\hat{b}_{\eta}^{j}(x)| w_{j}(y)
	\leq
	\frac{1}{\beta} \sum_{j=1}^{N_{y}} w_{j}(y)
\end{align}
for the same $\eta,x$ and $y\in{\cal Y}$.

Let $\tilde{C}_{1}=\beta^{-2}\varepsilon^{-1}N_{x}$,
$\tilde{C}_{2}=\tilde{C}_{1}N_{x}$,
$\gamma=\beta^{4}\varepsilon^{2}$.
Moreover, let $\phi(y)$, $\psi(y)$ be the functions defined by
\begin{align*}
	&
	\phi(y)
	=
	\tilde{C}_{1} \sum_{j=1}^{N_{y}} w_{j}(y),
	\\
	&
	\psi(y)
	=
	\tilde{C}_{2}\left(1 + \sum_{j=1}^{N_{y}} |\log w_{j}(y)| \right)
\end{align*}
for $y\in{\cal Y}$.
Then, combining (\ref{c1b.3'}) -- (\ref{c1b.5''}), we get
\begin{align}\label{c1b.7}
	\gamma\phi(y)\leq |\hat{r}_{\eta}(y,x'|x)| \leq \frac{\phi(y)}{\gamma}
\end{align}
for $\eta\in V_{\alpha}(\tilde{\Theta})$, $x,x'\in{\cal X}$, $y\in{\cal Y}$.
We also get
\begin{align*}
	\log\phi(y)
	\leq &
	\log(\tilde{C}_{1}N_{x} )
	+
	\max_{1\leq j\leq N_{y} }\log w_{j}(y)
	\\
	\leq &
	\tilde{C}_{1}N_{x}
	\left(1 + \sum_{j=1}^{N_{y}} |\log w_{j}(y)|  \right),
	\\
	\log\phi(y)
	\geq &
	\log(\tilde{C}_{1}N_{x} )
	+
	\min_{1\leq j\leq N_{y} }\log w_{j}(y)
	\\
	\geq &
	-\tilde{C}_{1}N_{x}
	\left(1 + \sum_{j=1}^{N_{y}} |\log w_{j}(y)|  \right).
\end{align*}
Therefore, we have
\begin{align}\label{c1b.9}
	|\log\phi(y)|\leq \psi(y).
\end{align}
Since $\int\phi(y)\nu({\rm d}y)=\tilde{C}_{1}N_{y}<\infty$,
(\ref{c1b.7}), (\ref{c1b.9}) imply that
Assumptions \ref{a11} -- \ref{a12} follow from Assumptions \ref{c1} -- \ref{c3}
when $\Theta$ is restricted to $\tilde{\Theta}$
(i.e., when $\Theta$ is replaced with $\tilde{\Theta}$).

Owing to Assumption \ref{b5}, we have
\begin{align*}
	\int\psi(y)Q(x,{\rm d}y)
	=&
	\tilde{C}_{2} + \sum_{j=1}^{N_{y}}|\log w_{j}(y)|Q(x,{\rm d}y)
	\\
	\leq &
	\tilde{C}_{2}+KN_{y}
	<\infty.
\end{align*}
Hence, Assumption \ref{a2} results from Assumption \ref{b5}.
Moreover, due to Assumption \ref{b4}, we have
\begin{align*}
	\int \psi(y) \phi(y) \nu({\rm d}y)
	= &
	\tilde{C}_{1}\tilde{C}_{2}
	\sum_{j,k=1}^{N_{y} }
	\int |\log w_{j}(y) | w_{k}(y) \nu({\rm d}y)
	\\
	&+
	\tilde{C}_{1}\tilde{C}_{2}N_{y}
	<\infty.
\end{align*}
Thus, Assumption \ref{a3} results from Assumption \ref{b4}.

Using Theorems \ref{theorem1}, \ref{theorem2}, we conclude that there exist
functions $\tilde{l},\tilde{h}:\tilde{\Theta}\rightarrow\mathbb{R}$
such that $\tilde{l}(\theta)$, $\tilde{h}(\theta)$ are real-analytic in $\theta$
and satisfy $\lim_{n\rightarrow\infty}l_{n}(\theta,\lambda)=\tilde{l}(\theta)$,
$\lim_{n\rightarrow\infty}h_{n}(\theta,\lambda)=\tilde{h}(\theta)$
for each $\theta\in\tilde{\Theta}$, $\lambda\in{\cal P}({\cal X})$
($l_{n}(\theta,\lambda)$, $h_{n}(\theta,\lambda)$ have the same meaning as in
(\ref{1.1})).
Consequently, Corollaries \ref{corollary1c}, \ref{corollary2c} hold. We use here the representation $\Theta=\bigcup_{n=1}^{\infty}\tilde{\Theta}_{n}$,
where $\{\tilde{\Theta}_{n} \}_{n\geq 1}$ is a sequence of non-empty open balls satisfying
$\text{cl}\tilde{\Theta}_{n}\subset\Theta$ for $n\geq 1$.
\end{proof}

\begin{proof}[\rm\bf Proof of Corollaries \ref{corollary1c} and \ref{corollary2c}]
Let $\tilde{\Theta}$ be a non-empty bounded open set satisfying
$\text{cl}\tilde{\Theta}\subset\Theta$.
As $\text{cl}\tilde{\Theta}$, ${\cal X}$, ${\cal Y}$ are compact
and $A_{\theta}(x)$, $B_{\theta}^{-1}(x)$, $C_{\theta}(x)$, $D_{\theta}^{-1}(x)$
are continuous in $(\theta,x)$,
it follows from Assumption \ref{c3} that
there exists a real number $r\in[1,\infty)$ such that
\begin{align}
	&\label{c1c.1'}
	\left\|
	B_{\theta}^{-1}(x) \left( x'-A_{\theta}(x) \right)
	\right\|
	\leq
	r,
	\\
	&\label{c1c.1''}
	\left\|
	D_{\theta}^{-1}(x) \left( y-C_{\theta}(x) \right)
	\right\|
	\leq
	r
\end{align}
for $\theta\in\text{cl}\tilde{\Theta}$, $x,x'\in{\cal X}$, $y\in{\cal Y}$.

Let $\tilde{\cal X}=\{x\in\mathbb{R}^{d_{x}}: \|x\|\leq r \}$,
$\tilde{\cal Y}=\{y\in\mathbb{R}^{d_{y}}: \|y\|\leq r \}$.
Since $\text{cl}\tilde{\Theta}$, ${\cal X}$, $\tilde{\cal X}$, $\tilde{\cal Y}$
are compact sets, Assumptions \ref{c2}, \ref{c3} and Lemma \ref{lemmaa2} (see Appendix \ref{appendix1})
imply that there exist functions
$\hat{A}_{\eta}(x)$, $\hat{B}_{\eta}(x)$, $\hat{C}_{\eta}(x)$, $\hat{D}_{\eta}(x)$
and $\hat{v}(x)$, $\hat{w}(y)$ with the following properties:

(i) $\hat{A}_{\eta}(x)$, $\hat{B}_{\eta}(x)$, $\hat{C}_{\eta}(x)$, $\hat{D}_{\eta}(x)$
map $\eta\in\mathbb{C}^{d}$, $x\in\mathbb{C}^{d_{x}}$ to
$\mathbb{C}^{d_{x}}$, $\mathbb{C}^{d_{x}\times d_{x}}$,
$\mathbb{C}^{d_{y}}$, $\mathbb{C}^{d_{y}\times d_{y}}$ (respectively),
while $\hat{v}(x)$, $\hat{w}(y)$ map $x\in\mathbb{C}^{d_{x}}$, $y\in\mathbb{C}^{d_{y}}$
to $\mathbb{C}$.

(ii) $\hat{A}_{\theta}(x)=A_{\theta}(x)$, $\hat{B}_{\theta}(x)=B_{\theta}(x)$,
$\hat{C}_{\theta}(x)=C_{\theta}(x)$, $\hat{D}_{\theta}(x)=D_{\theta}(x)$
for $\theta\in\tilde{\Theta}$, $x\in{\cal X}$,
and $\hat{v}(x)=v(x)$, $\hat{w}(y)=w(y)$ for $x\in\tilde{\cal X}$, $y\in\tilde{\cal Y}$.

(iii) There exists a real number $\alpha_{1}\in(0,1)$ such that
$\hat{A}_{\eta}(x)$, $\hat{B}_{\eta}(x)$, $\hat{C}_{\eta}(x)$, $\hat{D}_{\eta}(x)$
are analytic in $(\eta,x)$ for $\eta\in V_{\alpha_{1}}(\tilde{\Theta})$,
$x\in V_{\alpha_{1}}({\cal X})$.

(iv) There exists a real number $\alpha_{2}\in(0,1)$ such that
$\hat{v}(x)$, $\hat{w}(y)$
are analytic in $x$, $y$ (respectively) for $x\in V_{\alpha_{2}}(\tilde{\cal X})$,
$y\in V_{\alpha_{2}}(\tilde{\cal Y})$.

Since $|\text{det}\hat{B}_{\theta}(x)|$, $|\text{det}\hat{D}_{\theta}(x)|$ are uniformly
bounded away from zero for $\theta\in\text{cl}\tilde{\Theta}$, $x\in{\cal X}$,
Assumption \ref{c3} and (iii) imply that there exists a real number $\alpha_{3}\in(0,\alpha_{1})$
such that $\text{det}\hat{B}_{\eta}(x)\neq 0$, $\text{det}\hat{D}_{\eta}(x)\neq 0$
for $\eta\in V_{\alpha_{3}}(\tilde{\Theta})$, $x\in V_{\alpha_{3}}({\cal X})$.
Therefore,
\begin{align*}
	&\hat{B}_{\eta}^{-1}(x)\left( x'-\hat{A}_{\eta}(x) \right),
	\\
	&\hat{D}_{\eta}^{-1}(x)\left( y-\hat{C}_{\eta}(x) \right)
\end{align*}
are well-defined and analytic in $(\eta,x,x',y)$ for
$\eta\in V_{\alpha_{3}}(\tilde{\Theta})$, $x\in V_{\alpha_{3}}({\cal X})$,
$x'\in\mathbb{C}^{d_{x}}$, $y\in\mathbb{C}^{d_{y}}$.
As $\text{cl}\tilde{\Theta}$, ${\cal X}$, ${\cal Y}$ are compact sets,
it follows from (\ref{c1c.1'}), (\ref{c1c.1''}) that
there exists a real number $\alpha_{4}\in(0,\alpha_{3})$ such that
\begin{align*}
	&
	\left\|
	\text{Re}\left\{ \hat{B}_{\eta}^{-1}(x)\left( x'-\hat{A}_{\eta}(x) \right) \right\}
	\right\|
	<
	r+\frac{\alpha_{2}}{2},
	\\
	&
	\left\|
	\text{Im}\left\{ \hat{B}_{\eta}^{-1}(x)\left( x'-\hat{A}_{\eta}(x) \right) \right\}
	\right\|
	<
	\frac{\alpha_{2}}{2},
	\\
	&
	\left\|
	\text{Re}\left\{ \hat{D}_{\eta}^{-1}(x)\left( y-\hat{C}_{\eta}(x) \right) \right\}
	\right\|
	<
	r+\frac{\alpha_{2}}{2},
	\\
	&
	\left\|
	\text{Im}\left\{ \hat{D}_{\eta}^{-1}(x)\left( y-\hat{C}_{\eta}(x) \right) \right\}
	\right\|
	<
	\frac{\alpha_{2}}{2}
\end{align*}
for $\eta\in V_{\alpha_{4}}(\tilde{\Theta})$, $x,x'\in V_{\alpha_{4}}({\cal X})$,
$y\in V_{\alpha_{4}}({\cal Y})$.
Hence, we have
\begin{align*}
	&
	\hat{B}_{\eta}^{-1}(x)\left( x'-\hat{A}_{\eta}(x) \right)
	\in V_{\alpha_{2}}(\tilde{\cal X}),
	\\
	&
	\hat{D}_{\eta}^{-1}(x)\left( y-\hat{C}_{\eta}(x) \right)
	\in V_{\alpha_{2}}(\tilde{\cal Y})
\end{align*}
for the same $\eta,x,x',y$.
Consequently, (iv) yields that
\begin{align}
	&\label{c1c.3'}
	\hat{v}\left( \hat{B}_{\eta}^{-1}(x)\left( x'-\hat{A}_{\eta}(x) \right) \right),
	\\
	&\label{c1c.3''}
	\hat{w}\left( \hat{D}_{\eta}^{-1}(x)\left( y-\hat{C}_{\eta}(x) \right) \right)
\end{align}
are analytic in $(\eta,x,x',y)$
for $\eta\in V_{\alpha_{4}}(\tilde{\Theta})$, $x,x'\in V_{\alpha_{4}}({\cal X})$,
$y\in V_{\alpha_{4}}({\cal Y})$.
Since functions (\ref{c1c.3'}), (\ref{c1c.3''}) are positive and uniformly bounded away from zero
for $\eta\in\text{cl}\tilde{\Theta}$, $x,x'\in{\cal X}$, $y\in{\cal Y}$,
Assumption \ref{c2} implies that there exist real numbers
$\alpha\in(0,\alpha_{4})$, $\beta\in(0,1)$ such that
\begin{align}
	&\label{c1c.5'}
	\text{Re}\left\{ \hat{v}\left( \hat{B}_{\eta}^{-1}(x)\left( x'-\hat{A}_{\eta}(x) \right) \right) \right\}
	\geq
	\beta,
	\\
	&\label{c1c.5''}
	\left|
	\hat{v}\left( \hat{B}_{\eta}^{-1}(x)\left( x'-\hat{A}_{\eta}(x) \right) \right)
	\right|
	\leq
	\frac{1}{\beta},
	\\
	&\label{c1c.7'}
	\text{Re}\left\{ \hat{w}\left( \hat{D}_{\eta}^{-1}(x)\left( y-\hat{C}_{\eta}(x) \right) \right) \right\}
	\geq
	\beta,
	\\
	&\label{c1c.7''}
	\left|
	\hat{w}\left( \hat{D}_{\eta}^{-1}(x)\left( y-\hat{C}_{\eta}(x) \right) \right)
	\right|
	\leq
	\frac{1}{\beta}
\end{align}
for $\eta\in V_{\alpha}(\tilde{\Theta})$, $x,x'\in V_{\alpha}({\cal X})$,
$y\in V_{\alpha}({\cal Y})$.

Owing to (\ref{c1c.5'}), (\ref{c1c.5''}), we have
\begin{align}
	&\label{c1c.21}
	\left|
	\int_{\cal X} \hat{v}\left( \hat{B}_{\eta}^{-1}(x)\left( x'-\hat{A}_{\eta}(x) \right) \right) {\rm d}x'
	\right|
	\nonumber\\
	&\geq
	\int_{\cal X}
	\text{Re}\left\{ \hat{v}\left( \hat{B}_{\eta}^{-1}(x)\left( x'-\hat{A}_{\eta}(x) \right) \right) \right\}
	{\rm d}x'
	\geq
	\beta \text{m}({\cal X})
	> 0,
	\\
	&\label{c1c.23}
	\left|
	\int_{\cal X} \hat{v}\left( \hat{B}_{\eta}^{-1}(x)\left( x'-\hat{A}_{\eta}(x) \right) \right) {\rm d}x'
	\right|
	\nonumber\\
	&\leq
	\int_{\cal X}
	\left| \hat{v}\left( \hat{B}_{\eta}^{-1}(x)\left( x'-\hat{A}_{\eta}(x) \right) \right) \right|
	{\rm d}x'
	\leq
	\frac{\text{m}({\cal X})}{\beta}
\end{align}
for $\eta\in V_{\alpha}(\tilde{\Theta})$, $x\in V_{\alpha}({\cal X})$,
where $\text{m}({\cal X})$ is the Lebesgue measure of ${\cal X}$.
Similarly, due to (\ref{c1c.7'}), (\ref{c1c.7''}), we have
\begin{align}
	&\label{c1c.25}
	\left|
	\int_{\cal Y} \hat{w}\left( \hat{D}_{\eta}^{-1}(x)\left( y-\hat{C}_{\eta}(x) \right) \right) {\rm d}y
	\right|
	\nonumber\\
	&\geq
	\int_{\cal Y}
	\text{Re}\left\{ \hat{w}\left( \hat{D}_{\eta}^{-1}(x)\left( y-\hat{C}_{\eta}(x) \right) \right) \right\}
	{\rm d}y
	\geq
	\beta \text{m}({\cal Y})
	> 0,
	\\
	&\label{c1c.27}
	\left|
	\int_{\cal Y} \hat{w}\left( \hat{D}_{\eta}^{-1}(x)\left( y-\hat{C}_{\eta}(x) \right) \right) {\rm d}y
	\right|
	\nonumber\\
	&\leq
	\int_{\cal Y}
	\left| \hat{w}\left( \hat{D}_{\eta}^{-1}(x)\left( y-\hat{C}_{\eta}(x) \right) \right) \right|
	{\rm d}y
	\leq
	\frac{\text{m}({\cal Y})}{\beta}
\end{align}
for the same $\eta, x$,
where $\text{m}({\cal Y})$ is the Lebesgue measure of ${\cal Y}$.
Further to this, Lemma \ref{lemmaa1} (see Appendix \ref{appendix1})
and (\ref{c1c.5''}), (\ref{c1c.7''}) imply that
\begin{align}
	&\label{c1c.29'}
	\int_{\cal X} \hat{v}\left( \hat{B}_{\eta}^{-1}(x)\left( x'-\hat{A}_{\eta}(x) \right) \right) {\rm d}x',
	\\
	&\label{c1c.29''}
	\int_{\cal Y} \hat{w}\left( \hat{D}_{\eta}^{-1}(x)\left( y-\hat{C}_{\eta}(x) \right) \right) {\rm d}y
\end{align}
are analytic in $(\eta,x)$ for $\eta\in V_{\alpha}(\tilde{\Theta})$, $x\in V_{\alpha}({\cal X})$.

In the rest of the proof, the following notations is used.
$\hat{p}_{\eta}(x'|x)$, $\hat{q}_{\eta}(y|x)$ are the functions defined by
$\hat{p}_{\eta'}(x'|x) = 0$, $\hat{q}_{\eta'}(y|x) = 0$ and
\begin{align*}
	&
	\hat{p}_{\eta}(x'|x)
	=
	\frac{\hat{v}\left( \hat{B}_{\eta}^{-1}(x)\left( x'-\hat{A}_{\eta}(x) \right) \right) }
	{\int_{\cal X} \hat{v}\left( \hat{B}_{\eta}^{-1}(x)\left( x''-\hat{A}_{\eta}(x) \right) \right) {\rm d}x'' },
	\\
	&
	\hat{q}_{\eta}(y|x)
	=
	\frac{\hat{w}\left( \hat{D}_{\eta}^{-1}(x)\left( y-\hat{C}_{\eta}(x) \right) \right) }
	{\int_{\cal Y} \hat{w}\left( \hat{D}_{\eta}^{-1}(x)\left( y'-\hat{C}_{\eta}(x) \right) \right) {\rm d}y' }
\end{align*}
for $\eta\in V_{\alpha}(\tilde{\Theta})$, $\eta'\in \mathbb{C}^{d}\setminus V_{\alpha}(\tilde{\Theta})$,
$x,x'\in{\cal X}$, $y\in{\cal Y}$,
while
$r_{\theta}(y,x'|x)$, $\hat{r}_{\eta}(y,x'|x)$ are the functions defined by
\begin{align*}
	&
	r_{\theta}(y,x'|x)
	=
	q_{\theta}(y|x') p_{\theta}(x'|x),
	\\
	&
	\hat{r}_{\eta}(y,x'|x)
	=
	\hat{q}_{\eta}(y|x')\hat{p}_{\eta}(x'|x)
\end{align*}
for the same $x,x',y$ and $\theta\in\Theta$, $\eta\in\mathbb{C}^{d}$.

As functions (\ref{c1c.3'}), (\ref{c1c.3''}) and integrals (\ref{c1c.29'}), (\ref{c1c.29''}) are analytic in
$(\eta,x,x',y)$ for $\eta\in V_{\alpha}(\tilde{\Theta})$,
$x,x'\in V_{\alpha}({\cal X})$, $y\in V_{\alpha}({\cal Y})$,
it follows from (\ref{c1c.21}), (\ref{c1c.25}) that $\hat{r}_{\eta}(y,x'|x)$
is well-defined and analytic in $\eta$ for the same $\eta,x,x',y$.
Similarly, (\ref{c1c.21}) -- (\ref{c1c.27}) imply that there exists a real number $\gamma\in(0,1)$
such that
$\gamma \leq |\hat{r}_{\eta}(y,x'|x)| \leq 1/\gamma$
for $\eta\in V_{\alpha}(\tilde{\Theta})$,
$x,x'\in{\cal X}$, $y\in{\cal Y}$.
Further to this, (ii) yields
$\hat{r}_{\theta}(y,x'|x)=r_{\theta}(y,x'|x)$
for the same $x,x',y$ and $\theta\in\tilde{\Theta}$.
Consequently, Assumptions \ref{a11} -- \ref{a2} follow from Assumptions \ref{c1} -- \ref{c4}
when $\Theta$ is restricted to $\tilde{\Theta}$
(i.e., when $\Theta$ is replaced with $\tilde{\Theta}$).
Then, using Theorems \ref{theorem1}, \ref{theorem2}, we conclude that there exist
functions $\tilde{l},\tilde{h}:\tilde{\Theta}\rightarrow\mathbb{R}$
such that $\tilde{l}(\theta)$, $\tilde{h}(\theta)$ are real-analytic in $\theta$
and satisfy $\lim_{n\rightarrow\infty}l_{n}(\theta,\lambda)=\tilde{l}(\theta)$,
$\lim_{n\rightarrow\infty}h_{n}(\theta,\lambda)=\tilde{h}(\theta)$
for $\theta\in\tilde{\Theta}$, $\lambda\in{\cal P}({\cal X})$
($l_{n}(\theta,\lambda)$, $h_{n}(\theta,\lambda)$ have the same meaning as in
(\ref{1.1})).
Consequently, Corollaries \ref{corollary1c}, \ref{corollary2c} hold
(we use here representation $\Theta=\bigcup_{n=1}^{\infty}\tilde{\Theta}_{n}$,
where $\{\tilde{\Theta}_{n} \}_{n\geq 1}$ is a sequence of non-empty open balls satisfying
$\text{cl}\tilde{\Theta}_{n}\subset\Theta$ for $n\geq 1$).
\end{proof}

\refstepcounter{appendixcounter}\label{appendix1}
\section*{Appendix \arabic{appendixcounter} }

This section contains some auxiliary results which are relevant for the
proof of Lemmas \ref{lemma2.4}, \ref{lemma2.2}, \ref{lemma1.2}, \ref{lemma3.2}
and Corollaries \ref{corollary1b} -- \ref{corollary2c}.
Here, we rely on the following notations.
$d_{w}\geq 1$ and $d_{z}\geq 1$ are integers, while
$A$ is a bounded convex set in $\mathbb{C}^{d_{w}}$.
$F(w,z)$ is a function mapping $w\in\mathbb{C}^{d_{w}}$, $z\in\mathbb{R}^{d_{z}}$
to $\mathbb{C}$, while $\lambda({\rm d}z)$ is a measure on $\mathbb{R}^{d_{z}}$.
$f(w)$ is the function defined by
\begin{align*}
	f(w) = \int F(w,z) \lambda({\rm d}z)
\end{align*}
for $w\in\mathbb{C}^{d_{w}}$.

\begin{lemmaappendix}\label{lemmaa1}
Assume the following:

(i) There exists a real number $\delta\in(0,1)$
such that $F(w,z)$ is analytic in $w$
for each $w\in V_{\delta}(A)$, $z\in\mathbb{R}^{d_{z}}$.

(ii) There exists a function $\phi:\mathbb{R}^{d_{z}}\rightarrow[1,\infty)$
such that
$|F(w,z)|\leq\phi(z)$ for all $w\in V_{\delta}(A)$, $z\in\mathbb{R}^{d_{z}}$.
\newline
Then, we have
\begin{align*}
	|F(w',z)-F(w'',z) |
	\leq
	\frac{d_{w}\phi(z)\|w'-w''\|}{\delta}
\end{align*}
for all $w',w''\in V_{\delta}(A)$, $z\in\mathbb{R}^{d_{z}}$.
Moreover, if $\int \phi(z)\lambda({\rm d}z) < \infty$, then
$f(w)$ is well-defined and analytic for all $w\in V_{\delta}(A)$.
\end{lemmaappendix}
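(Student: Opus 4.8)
The plan is to establish the two assertions of Lemma~A\arabic{appendixcounter}.1 in turn, first the Lipschitz-type bound on $F(\cdot,z)$ for fixed $z$, and then the well-definedness and analyticity of $f$.

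For the pointwise bound, I would fix $z\in\mathbb{R}^{d_z}$ and exploit the one-variable Cauchy estimates applied coordinatewise. Given $w',w''\in V_\delta(A)$, I would write the difference as a telescoping sum over the $d_w$ coordinates, changing one coordinate at a time; each term involves a function of a single complex variable that is analytic on a disc of radius (at least) $\delta$ around the relevant point, since $V_\delta(A)$ is an open $\delta$-neighbourhood of the convex set $A$ and the segment between $w'$ and $w''$ (hence each of these one-coordinate segments) lies inside it. On such a disc the function is bounded by $\phi(z)$ by hypothesis~(ii), so the Cauchy estimate gives a bound of $\phi(z)/\delta$ on the derivative in that coordinate, and the mean value inequality along the segment yields a bound of $\phi(z)\,|w'_k-w''_k|/\delta$ for the $k$-th term. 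Summing over $k$ and using $\sum_k|w'_k-w''_k|\le \sqrt{d_w}\,\|w'-w''\|$ — or, depending on the normalisation the authors intend, $\le d_w\|w'-w''\|/\sqrt{d_w}$; in any case the stated form with the factor $d_w$ follows since $\sum_k|w'_k-w''_k|\le d_w\max_k|w'_k-w''_k|\le d_w\|w'-w''\|$ — gives the claimed inequality.

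For the second part, assume $\int\phi(z)\,\lambda(\mathrm{d}z)<\infty$. Then for each $w\in V_\delta(A)$ the integrand $F(w,z)$ is dominated in modulus by the integrable function $\phi$, so $f(w)=\int F(w,z)\,\lambda(\mathrm{d}z)$ is well-defined. For analyticity I would verify it on a neighbourhood of an arbitrary point $w_0\in V_\delta(A)$ by a Morera/Cauchy-integral-formula argument: pick $r\in(0,\delta)$ with the closed polydisc of radius $r$ around $w_0$ contained in $V_\delta(A)$, and for each coordinate direction use that $F(\cdot,z)$ is holomorphic there together with the Cauchy integral formula on circles of radius $r$; the uniform domination by $\phi$ permits interchanging the integral over $z$ with the contour integral (Fubini), so $f$ itself satisfies the Cauchy integral formula on these circles and is therefore holomorphic in each variable separately. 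By Hartogs' theorem (or simply because the estimate above already shows $f$ is continuous, so Osgood's lemma applies) separate analyticity gives joint analyticity, i.e. $f$ is analytic on $V_\delta(A)$. Alternatively, one may note that the first part of the lemma shows the family $\{F(\cdot,z)\}$ is equi-Lipschitz with integrable constant $\phi(z)$, so partial difference quotients of $f$ converge by dominated convergence to $\int\partial_{w_k}F(w,z)\,\lambda(\mathrm{d}z)$, identifying the partials and giving the Cauchy–Riemann equations directly.

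The only genuinely delicate point is the interchange of integration and differentiation (equivalently, of the $z$-integral with the Cauchy contour integral); everything else is routine once the convexity of $A$ is used to guarantee that segments and small polydiscs stay inside $V_\delta(A)$ where the bound $|F|\le\phi$ is available. That interchange is justified by the uniform bound $|F(w,z)|\le\phi(z)$ on $V_\delta(A)$ together with $\int\phi\,\mathrm{d}\lambda<\infty$, via the dominated convergence theorem (for difference quotients) or Fubini's theorem (for the contour-integral route). I expect no serious obstacle, only the need to state these domination hypotheses carefully.
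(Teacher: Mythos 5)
Your overall strategy coincides with the paper's: a Cauchy-type estimate on the derivative of $F(\cdot,z)$, integrated along a path between $w'$ and $w''$, for the Lipschitz bound; and domination by the integrable function $\phi$ (dominated convergence on difference quotients) for the well-definedness and analyticity of $f$. Indeed, your ``alternative'' argument for the second part --- partial difference quotients of $f$ converging by dominated convergence, using the equi-Lipschitz bound from the first part --- is exactly what the paper does, so that half is fine; the Morera/Fubini/Hartogs route you propose as your primary argument is also valid but heavier than necessary.

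There is, however, one step in your first part that would fail as written: the telescoping decomposition over coordinates. The intermediate points of that decomposition are corners of the coordinate box spanned by $w'$ and $w''$, not points of the straight segment joining them, and they need not lie in $V_{\delta}(A)$ (take $A$ a thin diagonal segment in $\mathbb{R}^{2}$ and $\delta$ small: the mixed-coordinate point $(w'_{1},w''_{2})$ can be far from the tube $V_{\delta}(A)$). Outside $V_{\delta}(A)$ neither hypothesis (i) nor the bound $|F|\leq\phi$ is available, so the one-variable Cauchy estimates and the mean value inequality on those one-coordinate segments are not justified. Your parenthetical claim that each one-coordinate segment lies inside $V_{\delta}(A)$ because the straight segment does is the precise point where the argument breaks. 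The paper avoids this by bounding the full gradient, $\|\nabla_{w}F(w,z)\|\leq d_{w}\phi(z)/\delta$, via Cauchy's inequality at every point $w\in V_{\delta}(A)$, and then writing $F(w',z)-F(w'',z)=\int_{0}^{1}(\nabla_{w}F(tw'+(1-t)w'',z))^{T}(w'-w'')\,{\rm d}t$; the points $tw'+(1-t)w''$ do stay in $V_{\delta}(A)$ because $V_{\delta}(A)$ inherits convexity from $A$. Replacing your telescoping path by this straight-segment gradient bound repairs the argument and recovers the stated constant $d_{w}\phi(z)/\delta$.
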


\begin{proof}
Owing to Cauchy's inequality (see e.g., \cite[Proposition 2.1.3]{taylor}) and (i), (ii),
we have
\begin{align}\label{la1.1}
	\|\nabla_{x} F(w,z) \|
	\leq
	\frac{d_{w}\phi(z)}{\delta}
\end{align}
for $w\in V_{\delta}(A)$, $z\in\mathbb{R}^{d_{z}}$.
Consequently, we get
\begin{align*}
	&
	|F(w',z) - F(w'',z) |
	\\
	&=
	\left|
	\int_{0}^{1}
	\left( \nabla_{w} F(tw'+(1-t)w'',z ) \right)^{T} (w'-w'') {\rm d}t
	\right|
	\\
	&\leq
	\int_{0}^{1}
	\left\| \nabla_{w} F(tw'+(1-t)w'',z ) \right\| \|w'-w'' \| {\rm d}t
	\\
	&\leq
	\frac{d_{w}\phi(z)\|w'-w''\|}{\delta}
\end{align*}
for $w',w''\in V_{\delta}(A)$, $z\in\mathbb{R}^{d_{z}}$, 
as $tw'+(1-t)w''\in V_{\delta}(A)$ results from $t\in[0,1]$ and
the convexity of $V_{\delta}(A)$.
Moreover, if $\int\phi(z)\lambda({\rm d}z)<\infty$,
the dominated convergence theorem and (\ref{la1.1}) imply that
$f(w)$ is well-defined and differentiable for $w\in V_{\delta}(A)$.
Consequently, $f(w)$ is analytic for $w\in V_{\delta}(A)$.
\end{proof}

In the rest of this appendix, we use the following notations.
$B$ is a compact set in $\mathbb{R}^{d_{w}}$,
while $g(w)$ is a function mapping $w\in\mathbb{R}^{d_{w}}$ to $\mathbb{R}$
($d_{w}$ is specified at the beginning in the appendix).

\begin{lemmaappendix}\label{lemmaa2}
Assume that there exists an open set $C$ in $\mathbb{R}^{d_{w}}$
such that $B\subset C$ and $g(w)$ is real-analytic on $C$.
Then, there exists a function $\hat{g}(w)$ with the following properties:

(i) $\hat{g}(w)$ maps $w\in\mathbb{C}^{d_{w}}$ to $\mathbb{C}$.

(ii) $\hat{g}(w)=g(w)$ for all $w\in B$.

(iii) There exists a real number $\delta\in(0,1)$ such that $\hat{g}(w)$ is analytic
on $V_{\delta}(B)$.
\end{lemmaappendix}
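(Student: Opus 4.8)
\textbf{Proof proposal for Lemma A\arabic{appendixcounter}.2.}
The plan is to first extend $g$ to a holomorphic function on a full complex neighbourhood of the open set $C$, and then restrict that extension to a neighbourhood of the compact set $B$. First I would fix an arbitrary $w_{0}\in C$ and use real-analyticity: there is $\varrho_{0}\in(0,1)$ with $\{w\in\mathbb{R}^{d_{w}}:\|w-w_{0}\|<\varrho_{0}\}\subset C$ on which $g$ equals the sum of its Taylor series $\sum_{\alpha}c_{\alpha}(w-w_{0})^{\alpha}$. Standard coefficient estimates for a convergent multivariate power series give, for each $\sigma_{0}\in(0,\varrho_{0})$, a bound $|c_{\alpha}|\le M_{0}\sigma_{0}^{-|\alpha|}$, so the same series viewed in the complex variable $w-w_{0}\in\mathbb{C}^{d_{w}}$ converges absolutely and uniformly on compact subsets of the complex ball $U_{w_{0}}=\{w\in\mathbb{C}^{d_{w}}:\|w-w_{0}\|<\sigma_{0}\}$ and defines a function $g_{w_{0}}$ holomorphic on $U_{w_{0}}$ with $g_{w_{0}}=g$ on $U_{w_{0}}\cap\mathbb{R}^{d_{w}}$. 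Carrying this out for every $w_{0}\in C$ yields an open cover $\{U_{w_{0}}\}_{w_{0}\in C}$ of $C$ inside $\mathbb{C}^{d_{w}}$, together with local holomorphic extensions $g_{w_{0}}$ of $g$.

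Next I would glue these local extensions into a single one. The key observation is that two such complex balls $U_{w_{0}}$, $U_{w_{1}}$ (both centred at real points) intersect if and only if $\|w_{0}-w_{1}\|<\sigma_{0}+\sigma_{1}$, and in that case a point on the real segment $[w_{0},w_{1}]$ lies in both, so $U_{w_{0}}\cap U_{w_{1}}$ is convex (hence connected) and meets $\mathbb{R}^{d_{w}}$ in a non-empty relatively open set on which $g_{w_{0}}=g=g_{w_{1}}$. By the identity theorem for holomorphic functions of several variables (real points of a connected open set form a set of uniqueness), this forces $g_{w_{0}}=g_{w_{1}}$ on all of $U_{w_{0}}\cap U_{w_{1}}$. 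Therefore the $g_{w_{0}}$ patch together consistently into a well-defined function $\tilde{g}$, holomorphic on the open set $U=\bigcup_{w_{0}\in C}U_{w_{0}}\supseteq C$, with $\tilde{g}=g$ on $C$.

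Finally, since $B$ is compact, $U$ is open, and $B\subset C\subset U$, the distance from $B$ to $\mathbb{C}^{d_{w}}\setminus U$ is strictly positive; I would pick $\delta\in(0,1)$ smaller than this distance, so that $V_{\delta}(B)\subseteq U$. Then I would define $\hat{g}(w)=\tilde{g}(w)$ for $w\in V_{\delta}(B)$ and $\hat{g}(w)=0$ otherwise. Property (i) holds by construction; property (ii) holds because $B\subseteq V_{\delta}(B)$ and $\tilde{g}|_{B}=g|_{B}$; property (iii) holds because $\tilde{g}$ is analytic on $U\supseteq V_{\delta}(B)$.

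The \emph{genuinely delicate} step is the gluing argument of the second paragraph — namely checking that overlapping complex balls always intersect $\mathbb{R}^{d_{w}}$ and then invoking the identity theorem to obtain consistency of the local extensions; the construction of each local complex power-series extension and the compactness argument producing $\delta$ are routine.
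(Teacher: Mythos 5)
Your proposal is correct, and its overall strategy coincides with the paper's: construct local holomorphic extensions of $g$ around real points, glue them via the uniqueness of analytic continuation, and use compactness of $B$ to extract a uniform $\delta$ with $V_{\delta}(B)$ inside the glued domain. Where you differ is in the organization of the gluing. The paper first reduces to the case of connected $B$, takes a finite subcover by balls centred in $B$, chains pairwise identifications along overlaps that meet $\mathbb{R}^{d_{w}}$, and then handles general $B$ by splitting into finitely many connected components with disjoint complex neighbourhoods. You instead glue over the entire (infinite) family of balls centred at points of $C$ in one step, and the observation that makes this work --- that two complex Euclidean balls centred at \emph{real} points intersect if and only if they contain a common point of the real segment joining their centres, so every nonempty overlap is convex and meets $\mathbb{R}^{d_{w}}$ --- is exactly the geometric fact the paper leaves implicit. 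This buys you a cleaner argument: pairwise consistency holds on \emph{every} overlap, so the patched function is well defined on $\bigcup_{w_{0}}U_{w_{0}}$ without any discussion of connected components or chaining, and the final restriction to $V_{\delta}(B)$ (setting $\hat{g}=0$ elsewhere to satisfy property (i)) is routine. The one step you compress --- passing from real convergence of the Taylor series on a real ball to absolute convergence on a complex ball via coefficient estimates --- is the standard local complexification of real-analytic functions (as in the Krantz--Parks reference the paper cites) and is acceptable at this level of detail.
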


\begin{proof}
First, we assume that $B$ is connected (latter, this assumption is dropped).
As $g(w)$ is real-analytic on $C$,
$g(w)$ has an analytic continuation in an open vicinity of any point in $C$.
Hence, there exist functions $\hat{g}(w,v)$, $\delta(v)$ with the following properties:

(iv) $\hat{g}(w,v)$, $\delta(v)$ map $w\in\mathbb{C}^{d_{w}}$, $v\in C$
to $\mathbb{C}$, $(0,1)$ (respectively).

(v) $\hat{g}(w,v)=g(w)$ for $w\in V_{\delta(v)}(v) \cap \mathbb{R}^{d_{w}}$, $v\in C$.

(vi) $\hat{g}(w,v)$ is analytic in $w$ for $w\in V_{\delta(v)}(v)$, $v\in C$.
\newline
Since $B$ is compact, there exist an integer $M\geq 1$ and points $\{v_{i} \}_{1\leq i\leq M}$
such that $v_{i}\in B$ for $1\leq i\leq M$
and $B\subset \bigcup_{i=1}^{M} V_{\delta(v_{i})}(v_{i})$.
Let $\hat{g}_{i}(w)=\hat{g}(w,v_{i})$, $V_{i}=V_{\delta(v_{i})}(v_{i})$
for $w\in\mathbb{C}^{d_{w}}$, $1\leq i\leq M$.
As $B$ is connected, for each $1\leq i\leq M$, there exists $1\leq j\leq M$, $j\neq i$
such that
$V_{i}\cap V_{j}\cap \mathbb{R}^{d_{w}} \neq \emptyset$.
Moreover,
if $V_{i}\cap V_{j}\cap \mathbb{R}^{d_{w}} \neq \emptyset$,
then $V_{i}\cap V_{j}\cap \mathbb{R}^{d_{w}}$ is a non-empty open set
and $\hat{g}_{i}(w)=\hat{g}_{j}(w)=g(w)$ for $w\in V_{i}\cap V_{j}\cap \mathbb{R}^{d_{w}}$.
Then, by the uniqueness of analytic continuation
(see e.g., \cite[Corollary 1.2.6]{krantz&park}),
for each $1\leq i\leq M$,
there exist $1\leq j\leq M$, $j\neq i$
and a function $\hat{g}_{ij}(w)$ with the following properties:

(vii) $\hat{g}_{ij}(w)$ maps $w\in\mathbb{C}^{d_{w}}$ to $\mathbb{C}$.

(viii) $\hat{g}_{ij}(w)$ is analytic on $V_{i}\cup V_{j}$.

(ix) $\hat{g}_{ij}(w)=\hat{g}_{i}(w)$ for $w\in V_{i}$
and $\hat{g}_{ij}(w)=\hat{g}_{j}(w)$ for $w\in V_{j}$.
\newline
Following these arguments, we conclude that there exists a function $\hat{g}(w)$
with the following properties:

(x) $\hat{g}(w)$ maps $w\in\mathbb{C}^{d_{w}}$ to $\mathbb{C}$.

(xi) $\hat{g}(w)$ is analytic on $\bigcup_{i=1}^{M} V_{i}$.

(xii) $\hat{g}(w)=\hat{g}_{i}(w)$ for $w\in V_{i}$, $1\leq i\leq M$.
\newline
Now, we drop the assumption that $B$ is connected
(i.e., $B$ is any compact set in $\mathbb{R}^{d_{w}}$).
Since $B$ is compact, there exist an integer $N\geq 1$ and open sets
$\{W_{i} \}_{1\leq i\leq N}$ in $\mathbb{R}^{d_{w}}$
such that $W_{i}\subseteq C$, $B\cap W_{i}\neq \emptyset$, $W_{i}\cap W_{j} = \emptyset$
for $1\leq i,j\leq N$, $i\neq j$ and $B\subset \bigcup_{i=1}^{N} W_{i}$.
Let $B_{i}=B\cap W_{i}$ for $1\leq i\leq N$.
Hence, $\{B_{i} \}_{1\leq i\leq N}$ are connected components of $B$,
and thus, $\{B_{i} \}_{1\leq i\leq N}$ are compact and disjoint.
Then, according to what has already been shown,
there exist open sets $\{U_{i} \}_{1\leq i\leq N}$ in $\mathbb{C}^{d_{w} }$
and functions $\{\hat{g}_{i}(w) \}_{1\leq i\leq N}$
with the following properties:

(xiii) $B_{i}\subset U_{i}$, $U_{i}\cap U_{j} = \emptyset$ for $1\leq i,j\leq N$, $i\neq j$.

(xiv) $\hat{g}_{i}(w)$ maps $w\in\mathbb{C}^{d_{w}}$ to $\mathbb{C}$ for $1\leq i\leq N$.

(xv) $\hat{g}_{i}(w)=g(w)$ for $w\in B_{i}$, $1\leq i\leq N$.

(xvi) $\hat{g}_{i}(w)$ is analytic on $U_{i}$ for $1\leq i\leq N$.
\newline
Let $\hat{g}(w)$ be the function defined by
$\hat{g}(w)=\hat{g}_{i}(w)$ for $w\in U_{i}$, $1\leq i\leq N$
and $\hat{g}(w)=0$ for $w\not\in \bigcup_{i=1}^{N} U_{i}$.
Due to (xiii), $\hat{g}(w)$ is well-defined.
As $B$ is compact and $B\subset \bigcup_{i=1}^{N} U_{i}$ (owing to (xiii)),
there exists a real number $\delta\in(0,1)$
such that $B\subset V_{\delta}(B) \subset \bigcup_{i=1}^{N} U_{i}$.
Then, (xv), (xvi) imply that $\hat{g}(w)$ is analytic on $V_{\delta}(B)$
and satisfies $\hat{g}(w)=g(w)$ for $w\in B$.
\end{proof}

\refstepcounter{appendixcounter}\label{appendix2}
\section*{Appendix \arabic{appendixcounter} }

In this section, we show how Theorem \ref{theorem2} can be applied to
finite-state hidden Markov models.
We also provide a link between Theorem \ref{theorem2} and the results of
\cite{han&marcus3}.
Here, we assume that ${\cal X}$ has a finite number of elements.
We also assume ${\cal X}=\{1,\dots,N\}$
and $\mu(x)=1$ for each $x\in{\cal X}$
(in this case, $p_{\theta}(x'|x)$ is the conditional probability
of $X_{n+1}^{\theta,\lambda}=x'$ given $X_{n}^{\theta,\lambda}=x$).
Further to this, we introduce the following assumptions.

\begin{assumptionappendix}\label{ab1}
$p_{\theta}(x'|x)$ and $q_{\theta}(y|x)$ are real-analytic in $\theta$
for each $\theta\in\Theta$, $x,x'\in{\cal X}$, $y\in{\cal Y}$.
Moreover, $p_{\theta}(x'|x)$ and $q_{\theta}(y|x)$ have complex-valued continuations
$\hat{p}_{\eta}(x'|x)$ and $\hat{q}_{\eta}(y|x)$ with the following properties:

(i) $\hat{p}_{\eta}(x'|x)$ and $\hat{q}_{\eta}(y|x)$ map $\eta\in\mathbb{C}^{d}$,
$x,x'\in{\cal X}$, $y\in{\cal Y}$ to $\mathbb{C}$.

(ii) $\hat{p}_{\theta}(x'|x)=p_{\theta}(x'|x)$ and $\hat{q}_{\theta}(y|x)=q_{\theta}(y|x)$
for all $\theta\in\Theta$, $x,x'\in{\cal X}$, $y\in{\cal Y}$.

(iii) There exists a real number $\delta\in(0,1)$ such that
$\hat{p}_{\eta}(x'|x)$ and $\hat{q}_{\eta}(y|x)$ are analytic in $\eta$
for each $\eta\in V_{\delta}(\Theta)$,
$x,x'\in{\cal X}$, $y\in{\cal Y}$.

(iv) There exists a real number $\varepsilon\in(0,1)$ such that
$\varepsilon\leq|\hat{p}_{\eta}(x'|x)|\leq 1/\varepsilon$
for all $\eta\in V_{\delta}(\Theta)$,
$x,x'\in{\cal X}$, $y\in{\cal Y}$.
\end{assumptionappendix}

\begin{assumptionappendix}\label{ab2}
There exists a real number $\alpha\in(0,1)$ and a vector $\hat{\theta}\in\Theta$
with the following properties:

(i) $q_{\hat{\theta}}(y|x)\neq 0$,
$q_{\theta}(y|x)/q_{\hat{\theta}}(y|x)\geq\alpha$
and $|\hat{q}_{\eta}(y|x)/q_{\hat{\theta}}(y|x)|\leq 1/\alpha$
for all $\theta\in\Theta$, $\eta\in V_{\delta}(\Theta)$,
$x\in{\cal X}$, $y\in{\cal Y}$.

(ii) $\int |\log q_{\hat{\theta}}(y|x')| \, q_{\hat{\theta}}(y|x) \nu({\rm d}y) < \infty$
for all $x,x'\in{\cal X}$.
\end{assumptionappendix}

\begin{assumptionappendix}\label{ab3}
There exists a real number $\beta\in(0,1)$, a vector $\hat{x}\in{\cal X}$
and functions $\tilde{\phi},\tilde{\psi}:{\cal Y}\rightarrow(0,\infty)$
with the following properties:

(i) $\hat{q}_{\eta}(y|\hat{x})\neq 0$,
$|\hat{q}_{\eta}(y|x)/\hat{q}_{\eta}(y|\hat{x})|\leq 1/\beta$
for all $\eta\in V_{\delta}(\Theta)$,
$x\in{\cal X}$, $y\in{\cal Y}$.

(ii) $|\hat{q}_{\eta}(y|\hat{x})|\leq\tilde{\phi}(y)$
and $|\log|\hat{q}_{\eta}(y|\hat{x})||\leq\tilde{\psi}(y)$
for all $\eta\in V_{\delta}(\Theta)$, $y\in{\cal Y}$.

(iii) $\int \tilde{\phi}(y) \nu({\rm d}y) < \infty$ and $\int \tilde{\psi}(y) \tilde{\phi}(y) \nu({\rm d}y) < \infty$.
\end{assumptionappendix}

Assumptions \ref{ab1} -- \ref{ab3} are a particular case of Assumptions \ref{a11} -- \ref{a3}
(see Corollary \ref{corollarya2.1} and its proof).
At the same time, Assumptions \ref{ab1} -- \ref{ab3} include, as a special case,
all conditions which the results of \cite{han&marcus3} are based on.\footnote
{Assumptions \ref{ab1} and \ref{ab2} follow (respectively)
from \cite[Conditions (a), (c.i)]{han&marcus3}
and \cite[Conditions (b), (c.iii), Equation (11)]{han&marcus3},
while Assumption \ref{ab3} results from
one of \cite[Conditions (c.iii), (d.i), Equation (7)]{han&marcus3}
and \cite[Conditions (c.ii), (c.iii), (d.i), Equation (8)]{han&marcus3}. }
Further to this, Assumptions \ref{ab1} -- \ref{ab3} considerably simplify the
conditions adopted in \cite{han&marcus3}.

\begin{corollaryappendix}\label{corollarya2.1}
Let Assumption \ref{ab1} and one of Assumptions \ref{ab2}, \ref{ab3} hold.
Then, all conclusions of Theorem \ref{theorem2} are true.
\end{corollaryappendix}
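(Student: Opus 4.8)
The plan is to derive Assumptions~\ref{a11} -- \ref{a3} from Assumption~\ref{ab1} together with whichever of Assumptions~\ref{ab2}, \ref{ab3} is in force, and then to invoke Theorem~\ref{theorem2}. Since ${\cal X}=\{1,\dots,N\}$ carries the counting measure, I would work with $r_{\theta}(y,x'|x)=q_{\theta}(y|x')p_{\theta}(x'|x)$ and its continuation $\hat{r}_{\eta}(y,x'|x)=\hat{q}_{\eta}(y|x')\hat{p}_{\eta}(x'|x)$; parts (i)--(iii) of Assumption~\ref{a1} are then immediate from Assumption~\ref{ab1}, and Assumption~\ref{a11} follows with $\lambda_{\theta}(B|y)=\sum_{x'\in B}q_{\theta}(y|x')$ (a finite measure on the finite set ${\cal X}$) and the $\varepsilon$ of Assumption~\ref{ab1}(iv), because $\sum_{x'\in B}q_{\theta}(y|x')p_{\theta}(x'|x)$ lies between $\varepsilon\lambda_{\theta}(B|y)$ and $\varepsilon^{-1}\lambda_{\theta}(B|y)$. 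What remains is to exhibit the function $\varphi_{\eta}(y)$ and the envelopes $\phi(y),\psi(y)$ of Assumption~\ref{a1}(iv)--(v) and to check Assumptions~\ref{a12} and \ref{a3}.

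Under Assumption~\ref{ab3} this is the easy case: I would take $\varphi_{\eta}(y)=(\beta\varepsilon)^{-1}\hat{q}_{\eta}(y|\hat{x})$, which is analytic and nonvanishing on $V_{\delta}(\Theta)$ by Assumptions~\ref{ab1}(iii) and \ref{ab3}(i), and for which $|\hat{r}_{\eta}(y,x'|x)|\le\beta^{-1}\varepsilon^{-1}|\hat{q}_{\eta}(y|\hat{x})|=|\varphi_{\eta}(y)|$ by Assumptions~\ref{ab1}(iv) and \ref{ab3}(i). Then $\phi(y)=(\beta\varepsilon)^{-1}\tilde{\phi}(y)$ and $\psi(y)=\tilde{\psi}(y)+|\log(\beta\varepsilon)|$ satisfy Assumption~\ref{a1}(v) by Assumption~\ref{ab3}(ii), the integrability requirements in Assumptions~\ref{a1}(v) and \ref{a3} follow from Assumption~\ref{ab3}(iii), and Assumption~\ref{a12} holds with $\gamma=\beta\varepsilon^{2}$ since $\int r_{\theta}(y,x'|x)\mu({\rm d}x')\ge\varepsilon\,q_{\theta}(y|\hat{x})=\beta\varepsilon^{2}|\varphi_{\theta}(y)|$.

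Under Assumption~\ref{ab2} there is no distinguished state, and the main obstacle is to guarantee that a suitable $\varphi_{\eta}$ stays away from zero on a complex neighbourhood of $\Theta$. The plan here is first to upgrade the bounds of Assumption~\ref{ab2} to the strict positivity $\mathrm{Re}\,\hat{q}_{\eta}(y|x)\ge(\alpha/2)\,q_{\hat{\theta}}(y|x)>0$ on some $V_{\delta_{1}}(\Theta)$ with $\delta_{1}\in(0,\delta]$: Cauchy's inequality (as used in Lemma~\ref{lemmaa1}), applied on balls around points of $\Theta$ and using $|\hat{q}_{\eta}(y|x)|\le\alpha^{-1}q_{\hat{\theta}}(y|x)$, controls $|\hat{q}_{\eta}(y|x)-q_{\theta}(y|x)|$ by a fixed constant times $q_{\hat{\theta}}(y|x)\|\eta-\theta\|$, and combined with $q_{\theta}(y|x)\ge\alpha\,q_{\hat{\theta}}(y|x)$ on $\Theta$ this gives the claim for $\delta_{1}$ small; if $\Theta$ is not bounded or convex one first passes to balls $\tilde{\Theta}$ with $\mathrm{cl}\,\tilde{\Theta}\subset\Theta$ and re-assembles via $\Theta=\bigcup_{n}\tilde{\Theta}_{n}$, exactly as in the proof of Corollaries~\ref{corollary1b} -- \ref{corollary2c}. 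Writing $S(y)=\sum_{x'}q_{\hat{\theta}}(y|x')>0$, I would then set $\varphi_{\eta}(y)=\tfrac{2}{\alpha^{2}\varepsilon}\sum_{x'}\hat{q}_{\eta}(y|x')$ (analytic on $V_{\delta_{1}}(\Theta)$, with $|\varphi_{\eta}(y)|\ge\mathrm{Re}\,\varphi_{\eta}(y)\ge\tfrac{1}{\alpha\varepsilon}S(y)>0$ and $|\hat{r}_{\eta}(y,x'|x)|\le\tfrac{1}{\alpha\varepsilon}q_{\hat{\theta}}(y|x')\le|\varphi_{\eta}(y)|$), $\phi(y)=\tfrac{2}{\alpha^{3}\varepsilon}S(y)$, and $\psi(y)=c+\sum_{x}|\log q_{\hat{\theta}}(y|x)|$ with $c=c(\alpha,\varepsilon,N)$ absorbing the scaling constants; then $\int\phi\,{\rm d}\nu=\tfrac{2N}{\alpha^{3}\varepsilon}<\infty$, $\int\psi\phi\,{\rm d}\nu<\infty$ follows from Assumption~\ref{ab2}(ii), and Assumption~\ref{a12} holds with $\gamma=\tfrac{\alpha^{4}\varepsilon^{2}}{2}$ because $\int r_{\theta}(y,x'|x)\mu({\rm d}x')\ge\varepsilon\alpha\,S(y)\ge\gamma|\varphi_{\theta}(y)|$. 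With Assumptions~\ref{a11} -- \ref{a3} verified in both cases, Theorem~\ref{theorem2} yields the real-analytic $h:\Theta\to\mathbb{R}$ with $h(\theta)=\lim_{n\to\infty}h_{n}(\theta,\lambda)$, which is the assertion of the corollary.
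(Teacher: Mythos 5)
Your proposal is correct, and for Assumption \ref{a11} and for the case of Assumption \ref{ab3} it coincides with the paper's argument (the paper takes $\varphi_{\eta}(y)=\tilde{C}\hat{q}_{\eta}(y|\hat{x})$ with $\tilde{C}=\beta^{-1}\varepsilon^{-1}$, exactly your choice up to the name of the constant). The one place where you genuinely diverge is the case of Assumption \ref{ab2}: you set $\varphi_{\eta}(y)=\tfrac{2}{\alpha^{2}\varepsilon}\sum_{x'}\hat{q}_{\eta}(y|x')$, which depends on $\eta$ through the complex continuation, and you therefore have to do extra work --- a Cauchy-estimate argument to upgrade $q_{\theta}(y|x)\geq\alpha q_{\hat{\theta}}(y|x)$ to $\mathrm{Re}\,\hat{q}_{\eta}(y|x)\geq(\alpha/2)q_{\hat{\theta}}(y|x)$ on a shrunken complex neighbourhood, together with the pass-to-balls-and-reassemble step --- just to guarantee $\varphi_{\eta}(y)\neq 0$. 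The paper instead observes that nothing in Assumption \ref{a1}(iv) forces $\varphi_{\eta}$ to vary with $\eta$, and simply takes the constant-in-$\eta$ choice $\varphi_{\eta}(y)=\alpha^{-1}\varepsilon^{-1}\sum_{x}q_{\hat{\theta}}(y|x)$; non-vanishing and analyticity are then immediate from $q_{\hat{\theta}}(y|x)>0$, the envelope bound $|\hat{r}_{\eta}(y,x'|x)|\leq\varepsilon^{-1}\alpha^{-1}q_{\hat{\theta}}(y|x')\leq|\varphi_{\eta}(y)|$ follows directly from Assumptions \ref{ab1}(iv) and \ref{ab2}(i), and Assumption \ref{a12} holds with $\gamma=\alpha^{2}\varepsilon^{2}$. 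Both routes are sound and yield the same integrability checks via Assumption \ref{ab2}(ii); the paper's choice buys a noticeably shorter verification by exploiting the freedom in picking the dominating function, while yours has the mild virtue of producing a $\varphi_{\eta}$ that tracks the actual continuation $\hat{q}_{\eta}$.
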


\begin{proof}
It is sufficient to show that Assumptions \ref{a11} -- \ref{a3} follow from
Assumption \ref{ab1} and one of Assumptions \ref{ab2}, \ref{ab3}.

(i) In this part of the proof, we demonstrate that Assumption \ref{a11} holds under
Assumption \ref{ab1}.
Let $\lambda_{\theta}({\rm d}x|y)$ be the measure on ${\cal X}$ defined by
\begin{align*}
	\lambda_{\theta}(B|y)
	=
	\sum_{x\in{\cal X}} q_{\theta}(y|x)I_{B}(x)\mu(x)
\end{align*}
for $\theta\in\Theta$, $y\in{\cal Y}$, $B\subseteq{\cal X}$.
Then, Assumption \ref{ab1} implies
\begin{align*}
	\sum_{x'\in B} r_{\theta}(y,x'|x)I_{B}(x')\mu(x')
	\geq &
	\varepsilon\sum_{x'\in B} q_{\theta}(y|x')I_{B}(x')\mu(x')
	\\
	=&
	\lambda_{\theta}(B|y),
	\\
	\sum_{x'\in B} r_{\theta}(y,x'|x)I_{B}(x')\mu(x')
	\leq &
	\frac{1}{\varepsilon}\sum_{x'\in B} q_{\theta}(y|x')I_{B}(x')\mu(x')
	\\
	=&
	\frac{\lambda_{\theta}(B|y) }{\varepsilon}
\end{align*}
for the same $\theta,y,B$ and $x\in{\cal X}$.
Hence, Assumption \ref{a11} holds.

(ii) In the next part of the proof, we show that Assumptions \ref{a1} -- \ref{a3}
follow from Assumptions \ref{ab1}, \ref{ab2}.
Let $\tilde{C}_{1}=\alpha^{-1}\varepsilon^{-1}$,
$\gamma=\alpha^{2}\varepsilon^{2}$,
while $\hat{r}_{\eta}(y,x'|x)$ and $\varphi_{\eta}(y)$ are the functions defined by
\begin{align*}
	&\hat{r}_{\eta}(y,x'|x)
	=
	\hat{q}_{\eta}(y|x')\hat{p}_{\eta}(x'|x),
	\\
	&\varphi_{\eta}(y)
	=
	\tilde{C}_{1}\sum_{x\in{\cal X}}q_{\hat{\theta}}(y|x)
\end{align*}
for $\eta\in\mathbb{C}^{d}$, $x,x'\in{\cal X}$, $y\in{\cal Y}$.
Since $\varphi_{\eta}(y)$ is constant in $\eta$,
it follows from Assumption \ref{ab1} that
$\hat{r}_{\eta}(y,x'|x)$ and $\varphi_{\eta}(y)$ are analytic in $\eta$
for each $\eta\in V_{\delta}(\Theta)$, $x,x'\in{\cal X}$, $y\in{\cal Y}$.
Moreover, Assumptions \ref{ab1}, \ref{ab2} yield
$\varphi_{\eta}(y)\neq 0$ and
\begin{align}\label{ca2.1.21}
	|\hat{r}_{\eta}(y,x'|x)|
	\leq
	\frac{|\hat{q}_{\eta}(y|x')|}{\varepsilon}
	\leq
	\frac{q_{\hat{\theta}}(y|x')|}{\alpha\varepsilon}
	=
	|\varphi_{\eta}(y)|
\end{align}
for the same $\eta,x,x',y$.
Assumptions \ref{ab1}, \ref{ab2} also imply
\begin{align}\label{ca2.1.23}
	\sum_{x'\in{\cal X}} r_{\theta}(y,x'|x) \mu(x')
	\geq
	\varepsilon
	\sum_{x'\in{\cal X}} q_{\theta}(y|x')
	\geq &
	\alpha\varepsilon
	\sum_{x'\in{\cal X}} q_{\hat{\theta}}(y|x')
	\nonumber\\
	= &
	\gamma|\varphi_{\theta}(y)|
\end{align}
for $\theta\in\Theta$, $x\in{\cal X}$, $y\in{\cal Y}$.

Let $\tilde{C}_{2}=\tilde{C}_{1}N$, while $\phi(y)$ and $\psi(y)$ are the functions defined by
\begin{align*}
	&
	\phi(y)=\tilde{C}_{1} \sum_{x\in{\cal X}} q_{\hat{\theta}}(y|x),
	\\
	&
	\psi(y)=\tilde{C}_{2}\left(1 + \sum_{x\in{\cal X}} |\log q_{\hat{\theta}}(y|x) | \right)
\end{align*}
for $y\in{\cal Y}$.
Then, due to Part (ii) of Assumption \ref{ab2}, we have
$\int \phi(y) \nu({\rm d}y) = \tilde{C}_{1}N < \infty$ and
\begin{align}\label{ca2.1.25}
	\int\! \psi(y) \phi(y) \nu({\rm d}y)
	\!=&
	\tilde{C}_{1}\tilde{C}_{2} \!\!
	\sum_{x,x'\in{\cal X}}
	\int |\log q_{\hat{\theta}}(y|x')| q_{\hat{\theta}}(y|x) \nu({\rm d}y)
	\nonumber\\
	&
	+
	\tilde{C}_{1}\tilde{C}_{2}N
	<\infty.
\end{align}
We also have
\begin{align*}
	\log|\varphi_{\eta}(y)|
	\leq &
	\log(\tilde{C}_{1}N)
	+
	\max_{x\in{\cal X}} \log q_{\hat{\theta}}(y|x)
	\\
	\leq &
	\tilde{C}_{1}N
	\left(1 + \sum_{x\in{\cal X}} |\log q_{\hat{\theta}}(y|x)| \right),
	\\
	\log|\varphi_{\eta}(y)|
	\geq &
	\log(\tilde{C}_{1}N)
	+
	\min_{x\in{\cal X}} \log q_{\hat{\theta}}(y|x)
	\\
	\geq &
	-\tilde{C}_{1}N
	\left(1 + \sum_{x\in{\cal X}} |\log q_{\hat{\theta}}(y|x)| \right)
\end{align*}
for $\eta\in V_{\delta}(\Theta)$, $y\in{\cal Y}$.
Consequently, we get
\begin{align}\label{ca2.1.27}
	|\varphi_{\eta}(y)|
	\leq
	\phi(y),
	\;\;\;\;\;\;\;
	|\log|\varphi_{\eta}(y)||
	\leq
	\psi(y)
\end{align}
for the same $\eta$, $y$.
Then, using (\ref{ca2.1.21}) -- (\ref{ca2.1.27}),
we conclude that Assumptions \ref{a1} -- \ref{a3} hold.

(iii) In this part of the proof, we show that Assumptions \ref{a1} -- \ref{a3}
follow from Assumptions \ref{ab1}, \ref{ab3}.
Let $\tilde{C}=\beta^{-1}\varepsilon^{-1}$,
$\gamma=\beta\varepsilon^{2}$,
while $\hat{r}_{\eta}(y,x'|x)$ and $\varphi_{\eta}(y)$ are the functions defined by
\begin{align*}
	&\hat{r}_{\eta}(y,x'|x)
	=
	\hat{q}_{\eta}(y|x')\hat{p}_{\eta}(x'|x),
	\\
	&\varphi_{\eta}(y)
	=
	\tilde{C}\hat{q}_{\eta}(y|\hat{x} )
\end{align*}
for $\eta\in\mathbb{C}^{d}$, $x,x'\in{\cal X}$, $y\in{\cal Y}$.
Then, due to Assumption \ref{ab1},
$\hat{r}_{\eta}(y,x'|x)$ and $\varphi_{\eta}(y)$ are analytic in $\eta$
for each $\eta\in V_{\delta}(\Theta)$, $x,x'\in{\cal X}$, $y\in{\cal Y}$.
Moreover, Assumptions \ref{ab1}, \ref{ab3} yield
$\varphi_{\eta}(y)\neq 0$ and
\begin{align}\label{ca2.1.1}
	|\hat{r}_{\eta}(y,x'|x)|
	\leq
	\frac{|\hat{q}_{\eta}(y|x')|}{\varepsilon}
	\leq
	\frac{|\hat{q}_{\eta}(y|\hat{x})|}{\beta\varepsilon}
	=
	|\varphi_{\eta}(y)|
\end{align}
for $\eta,x,x',y$.
Assumptions \ref{ab1}, \ref{ab3} also imply
\begin{align}\label{ca2.1.3}
	\sum_{x'\in{\cal X}} r_{\theta}(y,x'|x) \mu(x')
	\geq
	q_{\theta}(y|\hat{x}) p_{\theta}(\hat{x}|x)
	\geq &
	\varepsilon q_{\theta}(y|\hat{x})
	\nonumber\\
	=&
	\gamma|\varphi_{\theta}(y)|
\end{align}
for $\theta\in\Theta$, $x\in{\cal X}$, $y\in{\cal Y}$.

Let $\phi(y)$ and $\psi(y)$ be the functions defined by
\begin{align*}
	\phi(y)=\tilde{C}\tilde{\phi}(y),
	\;\;\;\;\;\;\;
	\psi(y)=\tilde{C}\big(1+\tilde{\psi}(y) \big)
\end{align*}
for $y\in{\cal Y}$.
Then, Assumption \ref{ab3} yields
\begin{align}\label{ca2.1.5}
	|\varphi_{\eta}(y)|
	=
	\tilde{C}|\hat{q}_{\eta}(y|\hat{x})|
	\leq
	\tilde{C}\tilde{\phi}(y)
	=
	\phi(y)
\end{align}
for $\eta\in V_{\delta}(\Theta)$, $y\in{\cal Y}$.
Assumption \ref{ab3} also implies
\begin{align}\label{ca2.1.7}
	|\log|\varphi_{\eta}(y)||
	\leq
	\log\tilde{C}
	+
	|\log|\hat{q}_{\eta}(y|\hat{x})||
	\leq &
	\tilde{C}(1+\tilde{\psi}(y) )
	\nonumber\\
	=&
	\psi(y)
\end{align}
for the same $\eta,y$.
Then, using Part (iii) of Assumption \ref{ab3},
and (\ref{ca2.1.1}) -- (\ref{ca2.1.7}),
we conclude that Assumptions \ref{a1} -- \ref{a3} hold.
\end{proof}

In rest of the section, we explain how Theorem \ref{theorem2} can further be extended
in the context of finite-state hidden Markov models.
Here, we rely on the following notations.
${\cal P}^{N}$ is the set of $N$-dimensional probability vectors,
while $e$ is the $N$-dimensional vector whose all elements are one.
For $\theta\in\Theta$, $y\in{\cal Y}$,
$R_{\theta}(y)$ is the $N\times N$ matrix whose $(x',x)$-entry is $r_{\theta}(y,x'|x)$,
where $r_{\theta}(y,x'|x)$ has the same meaning as in Section \ref{section1}.
$G_{\theta}(\lambda,y)$ and $h_{\theta}(\lambda,y)$ are the functions defined by
\begin{align*}
	G_{\theta}(\lambda,y)
	=
	\frac{R_{\theta}(y)\lambda}{e^{T}R_{\theta}(y)\lambda},
	\;\;\;\;\;\;\;
	h_{\theta}(\lambda,y)
	=
	\log\left(e^{T}R_{\theta}(y)\lambda \right)
\end{align*}
for $\theta\in\Theta$, $\lambda\in{\cal P}^{N}$, $y\in{\cal Y}$.
Regarding functions $r_{\theta}(y,x'|x)$, $G_{\theta}(\lambda,y)$ and $h_{\theta}(\lambda,y)$,
we assume the following.

\begin{assumptionappendix}\label{aa1}
There exist a real number $\varepsilon\in(0,1)$ and a function $s_{\theta}(y,x)$
mapping $\theta\in\Theta$, $x\in{\cal X}$, $y\in{\cal Y}$ to $[0,\infty)$
such that
\begin{align*}
	\varepsilon s_{\theta}(y,x')
	\leq
	r_{\theta}(y,x'|x)
	\leq
	\frac{s_{\theta}(y,x')}{\varepsilon}
\end{align*}
for all $\theta\in\Theta$, $x,x'\in{\cal X}$, $y\in{\cal Y}$.
\end{assumptionappendix}

\begin{assumptionappendix}\label{aa2}
$G_{\theta}(\lambda,y)$ and $h_{\theta}(\lambda,y)$ are real-analytic in $(\theta,\lambda)$
for all $\theta\in\Theta$, $\lambda\in{\cal P}^{N}$, $y\in{\cal Y}$.
Moreover, $G_{\theta}(\lambda,y)$ and $h_{\theta}(\lambda,y)$
have complex-valued continuations $\hat{G}_{\eta}(\xi,y)$ and $\hat{h}_{\eta}(\xi,y)$
with the following properties:

(i) $\hat{G}_{\eta}(\xi,y)$ and $\hat{h}_{\eta}(\xi,y)$ map $\eta\in\mathbb{C}^{d}$,
$\xi\in\mathbb{C}^{N}$, $y\in{\cal Y}$ to $\mathbb{C}^{N}$ and $\mathbb{C}$ (respectively).

(ii) $\hat{G}_{\theta}(\lambda,y)=G_{\theta}(\lambda,y)$
and $\hat{h}_{\theta}(\lambda,y)=h_{\theta}(\lambda,y)$
for all $\theta\in\Theta$, $\lambda\in{\cal P}^{N}$, $y\in{\cal Y}$.

(iii) There exists a real number $\delta\in(0,1)$ such that
$\hat{G}_{\eta}(\xi,y)$ and $\hat{h}_{\eta}(\xi,y)$ are analytic in $(\eta,\xi)$
for each $\eta\in V_{\delta}(\Theta)$, $\xi\in V_{\delta}({\cal P}^{N})$, $y\in{\cal Y}$.

(iv) There exist a real number $K\in[1,\infty)$ and a function $\tilde{\psi}:{\cal Y}\rightarrow[1,\infty)$
such that
$\int\!\exp\big(\tilde{\psi}(y)\big)\tilde{\psi}(y)\nu({\rm d}y)$ $<\infty$ and
\begin{align*}
	\|\hat{G}_{\eta}(\xi,y)\|
	\leq
	K,
	\;\;\;\;\;\;\;
	|\hat{h}_{\eta}(\xi,y)|
	\leq
	\tilde{\psi}(y)
\end{align*}
for all $\eta\in V_{\delta}(\Theta)$, $\xi\in V_{\delta}({\cal P}^{N})$,
$y\in{\cal Y}$.
\end{assumptionappendix}

Assumption \ref{aa1} corresponds to the stability of the hidden Markov model
$\left\{(X_{n}^{\theta,\lambda},Y_{n}^{\theta,\lambda})\right\}_{n\geq 0}$
and its optimal filter,
while Assumption \ref{aa2} is related to the parameterization
of the model $\left\{(X_{n}^{\theta,\lambda},Y_{n}^{\theta,\lambda})\right\}_{n\geq 0}$.
Assumptions \ref{aa1} and \ref{aa2} are the same as the (corresponding) assumptions
adopted in \cite{tadic2}.
Further to this, Assumptions \ref{aa1} and \ref{aa2} include, as a particular case,
all conditions which the results of \cite{han&marcus3} are based on.

\begin{theoremappendix}
Let Assumptions \ref{aa1} and \ref{aa2} hold.
Then, all conclusions of Theorem \ref{theorem2} are true.
\end{theoremappendix}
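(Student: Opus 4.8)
The plan is to transcribe the proof of Theorem~\ref{theorem2} from Sections~\ref{section2*}--\ref{section3*} to the present finite-state setting, the only genuinely new feature being that an analytic continuation of the optimal filter and of the one-step log-likelihood increments is now postulated (Assumption~\ref{aa2}) rather than constructed from a continuation of the transition density. As in the Remark following Theorems~\ref{theorem1} and~\ref{theorem2}, it suffices to treat a bounded convex $\Theta$, which I shall assume; the general case follows by an exhaustion by balls, exactly as in the proofs of Corollaries~\ref{corollary1b}--\ref{corollary2c}.

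First I would use Assumption~\ref{aa1} to produce the real ingredients. The minorization $r_{\theta}(y,x'|x)\geq\varepsilon s_{\theta}(y,x')$ yields, on the one hand, a Doeblin minorization of the transition kernel $S(z,{\rm d}z')$ of $\{(X_{n}^{\theta,\lambda},Y_{n}^{\theta,\lambda})\}_{n\geq 0}$ that is uniform in $\theta$, hence geometric ergodicity; and, on the other hand, a projective (Birkhoff/Hilbert-metric) contraction for the matrices $R_{\theta}(y)$ that is uniform in $\theta,y$, hence the exponential forgetting of $G_{\theta}(\lambda,y)$ uniformly in $\theta$ and in the observation sequence, exactly as in Lemma~\ref{lemma1.1}. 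For the complex continuation of the kernel I would set $\hat{r}_{\eta}(y,x'|x):=e^{\hat{h}_{\eta}(\delta_{x},y)}[\hat{G}_{\eta}(\delta_{x},y)]_{x'}$, where $\delta_{x}\in{\cal P}^{N}$ is the probability vector concentrated at $x$; since $R_{\theta}(y)\delta_{x}$ has $x'$-th component $r_{\theta}(y,x'|x)$, this agrees with $r_{\theta}(y,x'|x)$ for real $\eta$, and Assumption~\ref{aa2}(iii),(iv) with Cauchy's inequality (Lemma~\ref{lemmaa1}) give that $\hat{r}_{\eta}(y,x'|x)$ is analytic in $\eta$ with $|\hat{r}_{\eta}(y,x'|x)|\leq K\exp(\tilde{\psi}(y))$ and a matching Lipschitz bound. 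Defining $S_{\eta}(z,{\rm d}z')$ from $\hat{r}_{\eta}$ as in (\ref{5.1''}), and noting that the relations $e^{T}\hat{G}_{\eta}(\xi,y)=1$ and $\int e^{\hat{h}_{\eta}(\delta_{x},y)}\nu({\rm d}y)=1$ hold on the real locus and hence, by the identity principle, on the connected complex vicinity, the perturbation argument of Lemma~\ref{lemma2.1} upgrades the real ergodicity to the complex one. This gives Assumptions~\ref{a4} and~\ref{a5} for $T_{\eta}=S_{\eta}$, the required integrability $\int(1+\tilde{\psi}(y))\exp(\tilde{\psi}(y))\,\nu({\rm d}y)<\infty$ being a consequence of Assumption~\ref{aa2}(iv).

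Next I would establish the analogues of Lemmas~\ref{lemma1.3}, \ref{lemma1.4} and~\ref{lemma1.6} for the postulated continuations, taking $\Phi_{\eta}(\xi,z):=\hat{h}_{\eta}(\xi,y)$ (with $z=(y,x)$) and $F_{\eta}^{n}(\xi,z_{1:n})$ to be the $n$-fold iterate of $\hat{G}_{\eta}$. The bound $|\Phi_{\eta}(\xi,z)|\leq\tilde{\psi}(y)$ and the joint $(\eta,\xi)$-Lipschitz continuity of $\Phi_{\eta}$ are immediate from Assumption~\ref{aa2}(iv) and Lemma~\ref{lemmaa1}. The key point is the exponential forgetting $\|F_{\eta}^{n}(\xi')-F_{\eta}^{n}(\xi'')\|\leq C\gamma^{n}\|\xi'-\xi''\|$ on a small complex vicinity $V_{\delta_{5}}(\Theta)\times V_{\delta_{6}}({\cal P}^{N})$, together with the invariance of that vicinity: this is obtained exactly as in Lemmas~\ref{lemma1.4} and~\ref{lemma1.6}, from the real forgetting just established, the uniform bound $\|\hat{G}_{\eta}(\xi,y)\|\leq K$, and analyticity --- one first shows that the $n_{0}$-step map is a $1/2$-contraction near the real axis and then iterates, tracking the vicinity, the semigroup property of the iterates playing the role of Lemma~\ref{lemma1.21}. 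Analyticity in $\eta$ of $F_{\eta}^{n}(\xi,z_{1:n})$, and hence of $\Phi_{\eta}(F_{\eta}^{n}(\xi,z_{1:n}),z_{n+1})$ and of $u_{\eta}^{n}$, then follows by induction from Assumption~\ref{aa2}(iii) once the vicinity is known to be invariant.

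With these ingredients, Lemmas~\ref{lemma3.1} and~\ref{lemma3.2} and the proof of Theorem~\ref{theorem2} carry over essentially verbatim: the representation (\ref{3.1*}) becomes $\log q_{\theta}^{n}(y_{1:n}|\lambda)=\sum_{k=0}^{n-1}h_{\theta}(G_{\theta}^{(k)}(\lambda,y_{1:k}),y_{k+1})$, which is just the forward recursion for $e^{T}R_{\theta}(y_{n})\cdots R_{\theta}(y_{1})\lambda$; Lemma~\ref{lemma3.1} yields an analytic limit $\phi_{\eta}$ on a vicinity $V_{\delta_{9}}(\Theta)$ with $|\Phi_{\eta}^{n}(\xi,z)-\phi_{\eta}|\leq C_{7}n\gamma_{4}^{n}$, Lemma~\ref{lemma3.2} gives the analyticity of the integrals defining $\Phi_{\eta}^{n}$, and the argument of the proof of Theorem~\ref{theorem2} shows $h_{n}(\theta,\lambda)\to h(\theta):=-\phi_{\theta}$ uniformly; since a locally uniform limit of analytic functions is analytic, $h$ is real-analytic and independent of $\lambda$. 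The main obstacle lies in the construction of the previous two steps: re-running the delicate perturbation estimates of Lemmas~\ref{lemma1.4}, \ref{lemma1.6} and~\ref{lemma2.1} when the filter map $\hat{G}_{\eta}(\cdot,y)$ is no longer affine in its measure argument, so that the kernel continuation has to be read off the simplex vertices and then reconciled with geometric ergodicity and mass normalization. This detour is genuinely needed, since Assumptions~\ref{aa1}--\ref{aa2} do not imply Assumptions~\ref{a11}--\ref{a3}: the dominating function $\varphi_{\eta}(y)$ of Assumption~\ref{a1}(iv),(v) would have to majorize $|\hat{r}_{\eta}(y,x'|x)|$ and, through Assumption~\ref{a12}, be minorized by $\sum_{x'}r_{\theta}(y,x'|x)$, and the gap between these two quantities is not controlled under Assumption~\ref{aa2} alone.
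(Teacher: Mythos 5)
Your proposal follows essentially the same route as the paper's proof: the same construction $\hat{r}_{\eta}(y,x'|x)=e_{x'}^{T}\hat{G}_{\eta}(e_{x},y)\exp\bigl(\hat{h}_{\eta}(e_{x},y)\bigr)$ read off at the simplex vertices, verification of Assumptions \ref{a11}, \ref{a1} and \ref{a3} (but deliberately not \ref{a12}) so that Lemma \ref{lemma2.1} delivers Assumptions \ref{a4}--\ref{a5}, and direct use of the postulated continuations $\hat{G}_{\eta}$, $\hat{h}_{\eta}$ as the filter iterate and log-likelihood increment, with the exponential forgetting of the complex filter obtained by perturbing the real Birkhoff contraction coming from Assumption \ref{aa1}. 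The only difference is cosmetic: the paper imports that forgetting estimate from \cite[Lemma 3]{tadic2} instead of re-running the arguments of Lemmas \ref{lemma1.4} and \ref{lemma1.6}, and your closing observation that Assumption \ref{a12} cannot be verified and must be bypassed is precisely why the paper takes that detour.
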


\begin{proof}
Let $e_{i}$ be the $i$-th standard unit vector in $\mathbb{R}^{N}$,
where $1\leq i\leq N$.
Moreover, let $\hat{r}_{\eta}(y,x'|x)$ be the function defined by
\begin{align*}
	\hat{r}_{\eta}(y,x'|x)
	=
	e_{x'}^{T} \hat{G}_{\eta}(e_{x},y) \exp\big(\hat{h}_{\eta}(e_{x},y)\big)
\end{align*}
for $\eta\in\mathbb{C}^{d}$, $x,x'\in{\cal X}$, $y\in{\cal Y}$,
while $\varphi_{\eta}(y)$, $\phi(y)$ and $\psi(y)$ are the functions defined by
\begin{align*}
	\varphi_{\eta}(y)
	=
	\phi(y)
	=
	K\exp\big(\tilde{\psi}(y)\big),
	\;\;\;\;\;\;\;
	\psi(y)
	=
	2K\tilde{\psi}(y)
\end{align*}
for the same $\eta,y$.
Then, it is straightforward to demonstrate that Assumptions \ref{a11}, \ref{a1} and \ref{a3} hold.

Let $\hat{s}_{\eta}(x)$ be the function defined by
\begin{align*}
	\hat{s}_{\eta}(x)
	=
	\sum_{x'\in{\cal X}}\int\hat{r}_{\eta}(y,x'|x)\nu({\rm d}y)
\end{align*}
for $\eta\in\mathbb{C}^{d}$, $x\in{\cal X}$,
while $\tilde{r}_{\eta}(y,x'|x)$ is the function be defined by
\begin{align*}
	\tilde{r}_{\eta}(y,x'|x)
	=
	\begin{cases}
	\hat{r}_{\eta}(y,x'|x) /\hat{s}_{\eta}(x),
	&\text{ if } \hat{s}_{\eta}(x)\neq 0
	\\
	0, &\text{ otherwise }
	\end{cases}
\end{align*}
for the same $\eta,x$ and $x'\in{\cal X}$, $y\in{\cal Y}$.
Moreover, let $T_{\eta}(z,B)$ be the kernel defined by
\begin{align*}
	T_{\eta}(z,B)
	=
	\sum_{x'\in{\cal X}}\int I_{B}(y',x') \tilde{r}_{\eta}(y',x'|x) \nu({\rm d}y')
\end{align*}
for $\eta\in\mathbb{C}^{d}$, $x\in{\cal X}$, $y\in{\cal Y}$, a Borel-set $B\subseteq{\cal Y}\times{\cal X}$
and $z=(y,x)$.
Since Assumptions \ref{a11}, \ref{a1} and \ref{a3} hold,
Lemma \ref{lemma2.1} implies that Assumptions \ref{a4}, \ref{a5} hold, too.

Let $\Phi_{\eta}(\xi,z)$ be the function defined by
\begin{align*}
	\Phi_{\eta}(\xi,z)
	=
	\hat{h}_{\eta}(\xi,y)
\end{align*}
for $\eta\in\mathbb{C}^{d}$, $\xi\in{C}^{N}$, $x\in{\cal X}$, $y\in{\cal Y}$
and $z=(y,x)$,
while $\left\{ F_{\eta,\boldsymbol y}^{m:n}(\xi) \right\}_{n\geq m\geq 0}$
are the functions recursively defined by
$F_{\eta,\boldsymbol y}^{m:m}(\xi)=\xi$ and
\begin{align*}
	F_{\eta,\boldsymbol y}^{m:n+1}(\xi)
	=
	\hat{G}_{\eta}\left(F_{\eta,\boldsymbol y}^{m:n}(\xi), y_{n+1} \right)
\end{align*}
for the same $\eta,\xi$
and a sequence $\boldsymbol y = \{y_{n} \}_{n\geq 1}$ in ${\cal Y}$.
Then, owing to Lemma \ref{lemmaa1}, the conclusions of Lemma \ref{lemma1.3} hold.
Moreover, due to \cite[Lemma 3]{tadic2},
the conclusions of Lemma \ref{lemma1.6} also hold
provided that elements of $\mathbb{C}^{N}$ are interpreted as complex measures on ${\cal X}$.
Combining Assumptions \ref{a4}, \ref{a5} and the conclusions of Lemmas \ref{lemma1.3}, \ref{lemma1.6},
we get the conclusions of Lemmas \ref{lemma3.1}, \ref{lemma3.2}.
Then, as a direct consequence of the conclusions of Lemmas \ref{lemma3.1}, \ref{lemma3.2},
we get the conclusions of Theorem \ref{theorem2}.
\end{proof}

\end{document}